\newcommand{\rname}[1]{[\textsf{#1}]}
\newcommand{\jrhol}[7]{#1 \mid #2 \vdash #3 : #4 \sim #5 : #6 \mid #7}
\newcommand{\jrholu}[5]{#1 \mid #2 \vdash #3 \sim #4 \mid #5}
\newcommand{\sjrhol}[7]{#2 \vdash #3 \sim #5 \mid #7}
\newcommand{\juhol}[5]{#1 \mid #2 \vdash #3 : #4 \mid #5}
\newcommand{\jhol}[3]{#1 \mid #2 \vdash #3}
\newcommand{\jholn}[3]{#1 \mid #2 \vdash_{\sf HOL} #3}
\newcommand{\jlc}[3]{#1 \vdash #2 : #3}
\newcommand{\subst}[2]{[#2/ #1]}
\newcommand{\ltag}{_1}
\newcommand{\rtag}{_2}
\newcommand{\res}{\mathbf{r}}
\newcommand{\sem}[1]{\llbracket #1 \rrbracket}
\newcommand{\semv}[2]{\llbracket #1 \rrbracket_{#2} }
\newcommand{\semexpv}[2]{\llparenthesis #1 \rrparenthesis_{#2} }
\newcommand{\pair}[2]{\langle #1, #2 \rangle}
\newcommand{\nat}{\mathbb{N}}
\newcommand{\listt}[1]{\mathsf{list}_{#1}}
\newcommand{\bool}{\mathbb{B}}
\newcommand{\tbool}{\mathsf{tt}}
\newcommand{\fbool}{\mathsf{ff}}
\newcommand{\nil}{[]}
\newcommand{\cons}[2]{#1 :: #2}
\newcommand{\casebool}[3]{{\rm case}\ #1\ {\sf of}\ \tbool \mapsto #2 ; \fbool \mapsto #3}
\newcommand{\casenat}[3]{{\rm case}\ #1\ {\sf of}\ 0 \mapsto #2 ; S \mapsto #3}
\newcommand{\caselist}[3]{{\rm case}\ #1\ {\rm of}\ \nil \mapsto #2 ; \cons{\_}{\_} \mapsto #3}  
\newcommand{\defsubst}[2]{\subst{\res\ltag}{#1}\subst{\res\rtag}{#2}}
\newcommand{\wdef}[3]{\mathcal{D}\hspace{-0.2mm}\mathit{ef}(#1,#2,#3)}
\newcommand{\letrec}[3]{{\rm letrec}\ {\mathit #1}\ #2 = #3}
\newcommand{\cost}[1]{\llparenthesis #1 \rrparenthesis}
\newcommand{\costt}[1]{\lfloor #1 \rfloor}
\newcommand{\rcost}[1]{\| #1 \|}
\newcommand{\rcostt}[1]{\llfloor #1 \rrfloor}
\newcommand{\defeq}{\triangleq}
\newcommand{\erase}[1]{| #1 |}
\newcommand{\reftype}[3]{\ifthenelse{\isempty{#1}}{\{\res : #2 \mid #3\}}{\{#1 : #2 \mid #3\}}}
\newcommand{\ertype}[2]{\ifthenelse{\isempty{#2}}{\lfloor #1 \rfloor}{\lfloor #1 \rfloor(#2)}}
\newcommand{\errtype}[2]{\ifthenelse{\isempty{#2}}{\llfloor #1 \rrfloor}{\llfloor #1 \rrfloor(#2)}}
\newcommand{\erfor}[1]{| #1 |}
\newcommand{\emfor}[2]{\ifthenelse{\isempty{#2}}{#1}{#1\subst{\res}{#2}}}
\newcommand{\emterm}[2]{\ifthenelse{\isempty{#2}}{#1}{#1\subst{\res}{#2}}}
\newcommand{\emtyj}[3]{\juhol{\erfor{#1}}{\ertype{#1}{}}{#2}{\erfor{#3}}{\ertype{#3}{\res}}}
\newcommand{\jreft}[3]{#1 \vdash #2 : #3}
\newcommand{\pitype}[3]{\Pi(#1:#2).#3}
\newcommand{\ttmonad}[1]{\mathbb{T}_{#1}}
\newcommand{\eret}[1]{\eta_{#1}}
\newcommand{\ebind}{\mathsf{bind}}
\newcommand{\dccform}[2]{\lfloor #2 \rfloor_{#1}}
\newcommand{\adversary}{a}
\newcommand{\wexec}{\mbox{\scriptsize exec}}
\newcommand{\wdiff}{\mbox{\scriptsize diff}}
\newcommand{\ldiff}{\lesssim}
\newcommand{\uarr}[2]{\mathrel{\xrightarrow[]{\wexec(#1,#2)}}} 
\newcommand{\tforall}[3]{\forall#1\overset{\wexec(#2,#3)}{::}S.\,}
\newcommand{\tarrd}[1]{\mathrel{\xrightarrow{\wdiff(#1)}}}
\newcommand{\tforalld}[2]{\forall#1\overset{\wdiff(#2)}{::}S.\,}
\newcommand{\tbox}[1]{\square\,#1}
\newcommand{\jtype}[4]{\mathrel{\vdash_{#1}^{#2} {#3} : #4}}
\newcommand{\jtypediff}[4]{\mathrel{\vdash
    {#2} \ominus {#3} \ldiff #1 : {#4}}}
\newcommand{\econs}{\mbox{cons}}
\newcommand{\ctx}{\Delta; \Phi_a; \Gamma}
\newcommand{\kw}[1]{\mathtt{#1}}
\newcommand{\econst}{\kw{n}}
\newcommand{\trint}{\mbox{int}_r}
\newcommand{\tlist}[3]{\mbox{list}[#1]^{#2}\,#3}
\newcommand{\ecase}{\mbox{\;case\;}} 
\newcommand{\eof}{\mbox{\;of\;}}
\newcommand{\enil}{\mbox{nil\;}} 
\newcommand{\fiv}[1]{\text{FIV}(#1)}
\newcommand{\eLam}{ \Lambda}
\newcommand{\eApp}{ [\,]\,}
\newcommand{\sat}[1]{\models#1}
\newcommand{\jsubtype}[2]{\sat#1\sqsubseteq#2}
\newcommand{\grt}{A}
\newcommand{\tcho}[1]{U\,#1} 
\newcommand{\octx}{\Delta; \Phi_a; \Omega}
\newcommand{\tint}{\mbox{int}}
\newcommand{\kapp}{c_{app}}
\def\tyle{\preceq}
\theoremstyle{definition}
\newtheorem{thm}{Theorem}
\newtheorem{lem}[thm]{Lemma}
\newtheorem{cor}[thm]{Corollary}
\newcommand{\vbar}[0]{\mathrel{|}}
\newcommand{\rtprod}[3]{\Pi ({#1} :: {#2}) .\, {#3}}
\newcommand{\rtref}[3]{\{ {#1} :: {#2} \vbar {#3} \}}
\newcommand{\rinterp}[2]{\llparenthesis {#2} \rrparenthesis_{#1}}
\begin{document}

\title{A Relational Logic for Higher-Order Programs}
\author{Alejandro Aguirre}
\affiliation{IMDEA Software Institute}
\author{Gilles Barthe}
\affiliation{IMDEA Software Institute}
\author{Marco Gaboardi}
\affiliation{University at Buffalo, SUNY}
\author{Deepak Garg}
\affiliation{MPI-SWS}
\author{Pierre-Yves Strub}
\affiliation{\=Ecole Polytechnique}
\acmYear{2017}
\acmMonth{2}
\setcopyright{none}


\begin{abstract}
Relational program verification is a variant of program verification
where one can reason about two programs and as a special case about
two executions of a single program on different inputs. Relational
program verification can be used for reasoning about a broad range of
properties, including equivalence and refinement, and specialized
notions such as continuity, information flow security or relative
cost. In a higher-order setting, relational program verification can
be achieved using relational refinement type systems, a form of
refinement types where assertions have a relational interpretation.
Relational refinement type systems excel at relating structurally
equivalent terms but provide limited support for relating terms with
very different structures.

We present a logic, called Relational Higher Order Logic (RHOL), for
proving relational properties of a simply typed $\lambda$-calculus
with inductive types and recursive definitions. RHOL retains the
type-directed flavour of relational refinement type systems but
achieves greater expressivity through rules which simultaneously
reason about the two terms as well as rules which only contemplate one
of the two terms. We show that RHOL has strong foundations, by proving
an equivalence with higher-order logic (HOL), and leverage this
equivalence to derive key meta-theoretical properties: subject
reduction, admissibility of a transitivity rule and set-theoretical
soundness.  Moreover, we define sound embeddings for several existing
relational type systems such as relational refinement types and type
systems for dependency analysis and relative cost, and we verify
examples that were out of reach of prior work.
\end{abstract}

\maketitle

\section{Introduction}
Many important aspects of program behavior go beyond the traditional
characterization of program properties as sets of
traces~\cite{AlpernS85}. Hyperproperties~\citep{ClarksonS08}
generalize properties and capture a larger class of program behaviors,
by focusing on sets of sets of traces. As an intermediate point in
this space, relational properties are sets of pairs of
traces. Relational properties encompass many properties of interest,
including program equivalence and refinement, as well as more specific
notions such as non-interference and continuity.

Relational verification is an instance of program verification that
targets relational properties. Expectedly, standard verification
methods such as type systems, program logics, and program analyses can
be lifted to a relational setting. However, it remains a challenge to
devise sufficiently powerful methods that can be used to verify a
broad range of examples. In effect, most existing relational
verification methods are limited in the examples that they can
naturally verify, due to the fundamental tension between the
syntax-directed nature of program verification, and the need to relate
structurally different programs. Moreover, approaches to resolve this
tension highly depend on the programming paradigm, on the class of
program properties considered, and on the verification method. In the
(arguably simplest) case of deductive verification of general
properties of imperative programs, one approach to reduce this tension
is to use self-composition~\cite{BartheDR04}, which reduces relational
verification to standard verification. However, reasoning about
self-composed programs might be cumbersome. Alternatively, there exist
expressive relational program logics that rely on an intricate set of
rules to reason about a pair of programs. These logics combine
two-sided rules, in which the two programs have the same top-level
structure, and one-sided rules, which operate on a single
program. Rules for loops are further divided into synchronous, in
which both programs perform the same number of iterations, and
asynchronous rules, that do not have this restriction but introduce
more complexity~\cite{Benton04,BartheGHS17}.

In contrast, deductive verification of general properties of (pure)
higher-order programs is less developed. One potential approach to
solve the tension between the syntax-directedness, and the need to
relate structurally different programs, is to reduce relational
verification of pure higher-order programs to proofs in higher-order
logic. There are strong similarities between this approach and
self-composition: it reduces relational verification to standard
verification, but this approach is very difficult to use in
practice. A better alternative is to use relational refinement types
such as $\mathrm{rF}^*$~\citep{BFGSSZ14},
$\mathrm{HOARe}^2$~\citep{BGGHRS15}, DFuzz~\citep{GaboardiHHNP13} or
RelCost~\citep{CBGGH17}. Informally, relational refinement type
systems use assertions to capture relationships between inputs and
outputs of two higher-order programs. They are appealing for two
reasons:
\begin{itemize}
\item They capture many important properties of programs in a direct
  and intuitive manner. For instance, the type $\rtref{x}{\nat}{x\ltag
    \leq x\rtag}\rightarrow \rtref{y}{\nat}{y\ltag \leq y\rtag}$
  captures the set of pairs of functions that preserve the natural
  order on natural numbers, i.e.\, pairs of functions $f_1, f_2: \nat
  \rightarrow \nat$ such that for every $x_1,x_2\in\nat$, $x_1\leq
  x_2$ implies $f_1(x_1)\leq f_2(x_2)$. (The subscripts $1$ and $2$ on
  a variable refer to its values in the two runs.)

\item They can potentially benefit from a long and successful line of
  foundational~\citep{FreemanP91,XiP99,DunfieldP04,MelliesZ15} and
  practical~\citep{VazouSJVJ14,Swamy+16} research on refinement types.
\end{itemize}
Unfortunately, existing relational refinement type systems fail to
support the verification of several examples. Broadly speaking, the
two programs in a relational judgment may be required to have the same
type and the same control flow; moreover, this requirement must be
satisfied by their subprograms: if the two programs are applications,
then the two sub-programs in argument position (resp. in function
position) must have the same type and the same control flow; if the
two programs are case expressions, they must enter the same branch,
and their branches must themselves have the same control flow; if the
two programs are recursive definitions, then their bodies must perform
the same sequence of recursive calls; etc. This restriction, which can
be found in more or less strict forms in the different relational type
systems, limits the ability to carry fine-grained reasoning about
terms that are structurally different.  This raises the question
whether the type-directed form of reasoning purported by refinement
types can be reconciled with an expressive relational verification of
higher-order programs. We provide a positive answer for pure
higher-order programs; extending our results to effectful programs is
an important goal, but we leave it for future work.

Our starting point is the observation that relational refinement type
systems are inherently restricted to reasoning about two structurally
similar programs, because relational assertions are embedded into
types. In order to provide broad support for one-sided rules (i.e.,
rules that contemplate only one of the two expressions), it is
therefore necessary to consider relational assertions at the
top-level, since one-sided rules have a natural formulation in this
setting.  Considering relational assertions at the top-level can be
done in two different ways: either by supporting a rich theory of
subtyping for relational refinement types, in such a way that each
type admits a normal form where refinements only arise at the
top-level, or simply by adapting the definitions and rules of
refinement type systems so that only the top-level refinements are
considered. Although both approaches are feasible, we believe that the
second approach is more streamlined and leads to friendlier
verification environments.

\subsubsection*{Contributions}
We present a new logic, called Relational Higher Order Logic (RHOL,
\S~\ref{sec:rhol}), for reasoning about relational properties of
higher-order programs written in a variant of Plotkin's PCF
(\S~\ref{sec:pcf}). The logic manipulates judgments of the form:
\[\jrhol{\Gamma}{\Psi}{t_1}{\sigma_1}{t_2}{\sigma_2}{\phi}\]
where $\Gamma$ is a simply typed context, $\sigma_1$ and $\sigma_2$
are (possibly different) simple types, $t_1$ and $t_2$ are terms,
$\Psi$ is a set of assertions, and $\phi$ is an assertion. Our logic
retains the type-directed nature of (relational) refinement type
systems, and features typing rules for reasoning about structurally
similar terms. However, disentangling types from assertions also makes
it possible to define type-directed rules operating on a single term
(left or right) of the judgment. This confers great expressivity to
the logic, without significantly affecting its type-directed nature,
and opens the possibility to alternate freely between two-sided and
one-sided reasoning, as done in first-order imperative languages.

The validity of judgments is expressed relative to a set-theoretical
semantics---our variant of PCF is restricted to terms which admit a
set-theoretical semantics, including strongly normalizing terms. More
precisely, a judgment
$\jrhol{\Gamma}{\Psi}{t_1}{\sigma_1}{t_2}{\sigma_2}{\phi}$ is valid if
for every valuation $\rho$ (mapping variables in the context $\Gamma$
to elements in the interpretation of their types), the interpretation
of $\phi$ is true whenever the interpretation of (all the assertions
in) $\Psi$ is true. Soundness of the logic can be proved through a
standard model-theoretic argument; however, we provide an alternative
proof based on a sound and complete embedding into Higher-Order Logic
(HOL, \S~\ref{sec:hol}). We leverage this equivalence to establish
several meta-theoretical properties of the logic, notably subject
reduction.
 
Moreover, we demonstrate that RHOL can be used as a general framework,
by defining sound embedding for several relational type systems:
relational refinement types (\S~\ref{ssec:emb-rrt}), the Dependency
Core Calculus (DCC) for many dependency analyses, including those for
information flow security (\S~\ref{sec:dcc}), and the RelCost
(\S~\ref{ssec:emb-rc}) type system for relative cost. The embedding of
RelCost is particularly interesting, since it exercises the ability of
our logic to alternate between synchronous and asynchronous
reasoning. Finally, we verify several examples that go beyond the
capabilities of previous systems (\S~\ref{sec:ex}).

\subsubsection*{Related work}
While dependent type theory is the prevailing approach to reason about
(pure) higher-order programs, several authors have explored another
approach, which is crisply summarized by~\citet{JacobsCLTT}: \lq\lq A
logic is always a logic over a type theory\rq\rq. Formalisms following
this approach are defined in two stages; the first stage introduces a
(dependent) type theory for writing programs, and the second stage
introduces a predicate logic to reason about programs. This approach
has been pursued independently in a series of works on logic-enriched
type theories~\cite{Dybjer85,AczelG00,AczelG06,Belo07,AdamsL10}, and
on refinement types~\citet{Pfenning08andrews,Zeilberger16}. In the
latter line of work, programs are written in an intrinsically typed
$\lambda$-calculus \`a la Church; then, a system of sorts (a.k.a.\,
refinements) is used to establish properties of programs typable in
the first system. Our approach is similar; however, these works are
developed in a unary setting, and do not consider the problem of
relational verification.

Moreover, there is a large body of work on relational verification; we
focus on type-based methods and deductive methods. Relational Hoare
Logic~\citep{Benton04} and Relational Separation Logic~\citep{Yang07}
are two program logics, respectively based on Hoare Logic and
Separation Logic, for reasoning about relational properties of
(first-order) imperative programs. These logics have been used for a
broad range of examples and applications, ranging from program
equivalence to compiler verification and information flow
analysis. Moreover, they have been extended in several directions. For
example, Probabilistic Relational Hoare Logic~\citep{BartheGB09} and
approximate probabilistic Relational Hoare Logic~\citep{BartheKOB12}
are generalizations of Relational Hoare logic for reasoning about
relational properties of (first-order) probabilistic programs. These
logics have been used for a broad range of applications, including
probabilistic information flow, side-channel security, proofs of
cryptographic strength (reductionist security) and differential
privacy. Cartesian Hoare Logic~\citep{SousaD16} is also a recent
generalization of Relational Hoare Logic for reasoning about bounded
safety (i.e.\, $k$-safety for arbitrary but fixed $k$) properties of
(first-order) imperative programs. This logic has been used for
analyzing standard libraries. Experiments have demonstrated that such
logics can be very effective in practice. Our formalism can be seen as
a proposal to adapt their flexibility to pure higher-order programs.

Product programs~\citep{BartheDR04,TerauchiA05,ZaksP08,BartheCK11} are
a general class of constructions that emulate the behavior of two
programs and can be used for reducing relational verification to
standard verification.  While product programs naturally achieve
(relative) completeness, they are often difficult to use since they
require global reasoning on the obtained program---however recent
works~\citep{BlatterKGP16} show how this approach can be automated in
specific settings. Building product programs for (pure) higher-order
languages is an intriguing possibility, and it might be possible to
instrument RHOL using ideas from~\citet{BartheGHS17} to this effect;
however, the product programs constructed in~\citep{BartheGHS17} are a
consequence, rather than a means, of relational verification.

Several type systems have been designed to support formal reasoning
about relational properties for functional programs. Some of the
earlier works in this direction have focused on the semantics
foundations of parametricity, like the work by \citet{AbadiCC93} on
System R, a relational version of System F. The recent work by
\citet{GhaniFS16} has further extended this approach to give better
foundations to a combination of relational parametricity and
impredicative polymorphism.  Interestingly, similarly to RHOL, System
R also supports relations between expressions at different types,
although, since System R does not support refinement types, the only
relations that System R can support are the parametric ones on
polymorphic terms. In RHOL, we do not support parametric polymorphism
\`a la System F currently but the relations that we support are more
general. Adding parametric polymorphism will require foregoing the
set-theoretical semantics, but it should still be possible to prove
equivalence with a polymorphic variant of higher-order logic.

Several type systems have been proposed to reason about information
flow security, a prime example of a relational property. Some examples
include SLAM~\citep{HeintzeR98}, the type system underlying Flow
Caml~\citep{PottierS02} and DCC~\citep{AbadiBHR99}.  Most of these
type systems consider only one expression but they allow the use of
information flow labels to specify relations between two different
executions of the expression. As we show in this paper, this approach
can also be implemented in RHOL. We show how to translate DCC since it
is one of the most general type systems; however, similar translations
can also be provided for the other type systems.

Relational Hoare Type Theory (RHTT)~\citep{NanevskiBG13,StewartBN13}
is a formalism for relational reasoning about stateful higher-order
programs. RHTT was designed to verify security properties like
authorization and information flow policies but was used for the
verification of hetergenous pointer data structures as well. RHTT uses
a monad to separate stateful computations and relational refinements
on the monadic type express relational pre- and post-conditions. RHTT
supports reasoning about two different programs but the programs must
have the same types at the top-level. RHTT's rules support both two-
and one-sided reasoning similar to RHOL, but the focus of RHTT is on
verifying properties of the program state. In particular, examples
such as those in \S\ref{sec:ex} or embeddings such as those in
\S\ref{sec:emb} were not considered in RHTT. RHTT is proved sound over
a domain-theoretic model and continuity must be proven explicitily
during the verification of recursive functions (rules are provided to
prove continuity in many cases). In contrast, RHOL's set-theoretic
model is simpler, but admits only those recursive functions that have
a unique interpretation in set-theory.

Logical relations~\citep{plotkin1973lambda,statman1985logical,Tait67}
provide a fundamental tool for reasoning about programs. They have
been used for a broad range of purposes, including proving unary
properties (for instance strong normalization or complexity) and
relational properties (for instance equivalence or information flow
security). Our work can be understood as an attempt to internalize the
versatility of relational logical relations in a syntactic framework. 
There is a large body of works on logic for logical relations, from
the early works by~\citet{PlotkinA93} to more recent works on logics
for reasoning about states and concurrency by Ahmed, Birkedal, Dreyer,
and collaborators among
others~\citep{DreyerAB11,DreyerNRB10,JungSSSTBD15,Krogh-Jespersen17}. In
particular, the IRIS logic~\citep{JungSSSTBD15} can be seen as a
powerful reasoning framework for logical relations, as shown recently
by~\citet{Krogh-Jespersen17} . Even if we also aim at internalize the
logical relations, the goal of RHOL differ from the one of IRIS in the
fact that we aim for syntax-driven relational verification.

We have already mentioned the works on relational refinement type
systems for verifying cryptographic  constructions~\citep{BFGSSZ14}, for
differential privacy~\citep{BGGHRS15,GaboardiHHNP13} and for
relational cost analysis~\citep{CBGGH17}. This line of works is probably the
most related to our work, however RHOL improves over all of them, as
also shown by some of the embedding we give in Section~\ref{sec:emb}. 
Another work related to this direction is the one
by~\citet{asada2016verifying}. This work proposes a technique to
reduce relational refinement to standard first order
refinements. Their technique is incomplete but it works well on some
concrete examples. As we discussed before, we believe that some
technique of this kind can be applied also to RHOL however this is
orthogonal to our goal and we leave it for future investigations.

\section{(A variant of) PCF}
\label{sec:pcf}

We consider a variant of PCF~\citep{plotkin1977lcf} with booleans,
natural numbers, lists and recursion, and recursive definitions. For
the latter, we require that all recursive calls are performed on
stricly smaller elements---as a consequence, the fixpoint equation
derived from the definition has a unique set-theoretical solution. The
precise method to enforce this requirement is orthogonal to our
purposes, and could for instance be based on a syntactic guard
predicate, or on sized types.

Types are defined by the grammar:
\[\tau, \sigma ::= \bool \mid \nat \mid \listt{\tau} \mid \tau\times \tau \mid
\tau \rightarrow \tau\]
We let $I$ range over inductive types.

Terms of the language are defined by the grammar:
\begin{align*}
  t &::= x \mid \pair{t}{t} \mid \pi_1\ t \mid \pi_2\ t \mid t\ t\mid
  \lambda x:\tau.t \mid c \mid S~t \mid \cons{t}{t} \mid \casenat{t}{t}{t} \mid \casebool{t}{t}{t} \\
  & \mid  \caselist{t}{t}{t} \mid \letrec{f}{x}{t}
\end{align*}
where $x$ ranges over a set $V$ of variables, $c$ ranges over the set
$\{ \tbool, \fbool, 0, []\}$ of constants, and $\lambda$-abstractions are \emph{\`a la} Church. The operational behavior of
terms is captured by $\beta\iota\mu$-reduction $\rightarrow_{\beta\iota\mu}
=\rightarrow_{\beta}\cup \rightarrow_{\iota} \cup \rightarrow_{\mu}$, where $\beta$-reduction,
$\iota$-reduction and $\mu$-reduction are defined as the contextual closure of:
\[\begin{array}{rcllcl}
  (\lambda x.t)\ u & \rightarrow_\beta & t\subst{x}{u}\quad\:   &\casebool{\fbool}{u}{v} & \rightarrow_\iota & v \\
  \pi_i \pair{t_1}{t_2} & \rightarrow_\beta & t_i \quad\:        &\caselist{\nil}{u}{v} & \rightarrow_\iota & u \\
  \casenat{0}{u}{v} & \rightarrow_\iota & u  \quad\:            & \caselist{\cons{h}{t}}{u}{v} & \rightarrow_\iota & (v\ h\ t)\\
  \casenat{S t}{u}{v} & \rightarrow_\iota & (v\ t)  \quad\:     & (\letrec{f}{x}{t})\ (C\ \vec{t}) & \rightarrow_\mu & t\subst{x}{C\ \vec{t}}\subst{f}{\letrec{f}{x}{t}} \\
  \casebool{\tbool}{u}{v} & \rightarrow_\iota & u \\
  \end{array}\]
where $t\subst{x}{u}$ denotes the usual (capture-free) notion of
substitution on terms (replace $x$ by $u$ in $t$).  As usual, we let
$=_{\beta\iota\mu}$ denote the reflexive-symmetric-transitive closure
of $\rightarrow_{\beta\iota\mu}$. In particular, we only
allow reduction of ${\rm letrec}$ when the argument has a constructor
$C\in\{\tbool, \fbool, 0, S, \nil, ::\}$ in head position.

Judgments are of the form $\Gamma\vdash t:\tau$, where $\Gamma$ is a
set of typing declarations of the form $x:\sigma$, such that variables
are declared at most once. The typing rules are standard, except for
recursive functions. In this case, the typing rule requires that the
domain of the recursive function is an inductive type (booleans,
naturals, or lists here) and that the body of the recursive definition
$\letrec{f}{x}{e}$ satisfies a predicate $\wdef{f}{x}{e}$ which
ensures that
all recursive calls are performed on smaller arguments. The typing
rule for recursive definitions is thus:
$$\inferrule*{\Gamma, f:I\to{\sigma},x:I \vdash e:\sigma \qquad
  \wdef{f}{x}{e} \qquad I \in \{\nat,\listt{\tau}\}}{\Gamma \vdash
  {\rm letrec}\ f\ x\ = e : I\to \sigma}$$
The other rules are standard.
We give set-theoretical semantics to this system. For each type
$\tau$, its interpretation $\sem{\tau}$ is the set of its values:
$$\begin{array}{r@{}c@{}lr@{}c@{}lr@{}c@{}lr@{}c@{}l}
  \sem{\bool} & \:\defeq\: & \bool &
  \sem{\nat} & \:\defeq\: & \nat &
  \sem{\listt{\tau}} & \:\defeq\: & \listt{\sem{\tau}} &
  \sem{\sigma\rightarrow\tau} & \:\defeq\: & \sem{\sigma}\rightarrow\sem{\tau}
\end{array}$$
where $\sem{\sigma}\rightarrow\sem{\tau}$ is the set of total
functions with domain $\sem{\sigma}$ and codomain $\sem{\tau}$.

A valuation $\rho$ for a context $\Gamma$ (written $\rho\models
\Gamma$) is a partial map such that $\rho(x)\in\sem{\tau}$ whenever
$(x:\tau)\in \Gamma$. Given a valuation $\rho$ for $\Gamma$, every
term $t$ such that $\Gamma\vdash t:\tau$ has an interpretation
$\semexpv{t}{\rho}$:

\begin{mathpar}
\semexpv{x}{\rho} \defeq \rho(x) \and \semexpv{\pair{t}{u}}{\rho} \defeq \pair{\semexpv{t}{\rho}}{\semexpv{u}{\rho}} \and
\semexpv{\pi_i\ t}{\rho} \defeq \pi_i(\semexpv{t}{\rho}) \and
\semexpv{\lambda x : \tau. t} \defeq \lambda v : \sem{\tau}. \semexpv{x}{\rho\subst{v}{\semexpv{v}{\rho}}} \and
\semexpv{c}{\rho} \defeq c \and \semexpv{S~t}{\rho} \defeq S~\semexpv{t}{\rho} \and
\semexpv{\cons{t}{u}}{\rho} \defeq \cons{\semexpv{t}{\rho}}{\semexpv{u}{\rho}} \and
\semexpv{\caselist{t}{u}{v}}{\rho} \defeq  \caselist{\semexpv{t}{\rho}}{\semexpv{u}{\rho}}{\semexpv{v}{\rho}} \and
\semexpv{\letrec{f}{x}{t}}{\rho} \defeq F

\end{mathpar}
In the case of $\letrec{f}{x}{e}$, we require that $F$ is the unique
solution of the fixpoint equation extracted from the recursive
definition---existence and unicity of the solution follows from the
validity of the $\wdef{f}{x}{e}$ predicate.

The interpretation of well-typed terms is sound. Moreover, the
interpretation equates convertible terms. (This extends to
$\eta$-conversion.)
\begin{theorem}[Soundness of set-theoretic semantics]\mbox{}
\begin{itemize}
\item If $\Gamma\vdash t:\tau$ and $\rho \models \Gamma$, then
  $\semexpv{t}{\rho} \in \sem{\tau}$.
\item If $\Gamma\vdash t:\tau$ and $\Gamma\vdash u:\tau$ and
  $t=_{\beta\iota\mu} u$ and $\rho \models \Gamma$, then
  $\semexpv{t}{\rho}=\semexpv{u}{\rho}$.
\end{itemize}
\end{theorem}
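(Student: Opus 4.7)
The plan is to prove the two clauses by induction: the first by induction on the typing derivation $\Gamma\vdash t:\tau$, and the second by induction on the derivation of $t=_{\beta\iota\mu} u$. For clause~(1), each case corresponds to a typing rule, and the inductive hypotheses combined with the definition of $\semexpv{-}{\rho}$ on that term constructor directly yield an element of $\sem{\tau}$; for instance, in the application case the hypotheses give $\semexpv{t}{\rho}\in\sem{\sigma\to\tau}$ and $\semexpv{u}{\rho}\in\sem{\sigma}$, so by the definition of $\sem{\sigma\to\tau}$ as the set of total functions we have $\semexpv{t}{\rho}(\semexpv{u}{\rho})\in\sem{\tau}$. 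The only subtle case is $\letrec{f}{x}{t}$, where the side condition $\wdef{f}{x}{t}$ supplied by the typing rule is precisely what ensures the existence and uniqueness of a solution $F\in\sem{I\to\sigma}$ to the fixpoint equation prescribed by the semantic clause.

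For clause~(2), I would first establish the standard substitution lemma $\semexpv{t\subst{x}{u}}{\rho}=\semexpv{t}{\rho[x\mapsto\semexpv{u}{\rho}]}$ by induction on $t$, using the assumption that the interpretation is only defined on well-typed terms. It then suffices to prove that a single reduction step preserves the interpretation, since the result extends to $=_{\beta\iota\mu}$ by routine induction on the reflexive-symmetric-transitive closure. The $\beta$-cases for $\lambda$ and $\pi_i$ follow from the substitution lemma together with the corresponding semantic clauses; the $\iota$-cases are handled by unfolding the semantic clauses for $\casenat{-}{-}{-}$, $\casebool{-}{-}{-}$, and $\caselist{-}{-}{-}$ on a head constructor; and congruence under evaluation contexts follows from a standard compositionality argument on $\semexpv{-}{\rho}$.

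The main obstacle is the $\mu$-reduction case, where for each constructor $C$ of the inductive type $I$ I must show
\[
\semexpv{(\letrec{f}{x}{t})(C\,\vec{u})}{\rho} = \semexpv{t\subst{x}{C\,\vec{u}}\subst{f}{\letrec{f}{x}{t}}}{\rho}.
\]
By the semantic clause for $\letrec{f}{x}{t}$, the left-hand side is $F(\semexpv{C\,\vec{u}}{\rho})$, where $F$ is the unique solution of the fixpoint equation. Applying the substitution lemma twice, the right-hand side equals $\semexpv{t}{\rho[x\mapsto\semexpv{C\,\vec{u}}{\rho}][f\mapsto F]}$. The required equality is then exactly the statement that $F$ satisfies the fixpoint equation at an argument with head constructor $C$, which holds by the defining property of $F$. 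The role of the $\wdef{f}{x}{t}$ predicate is crucial here: it guarantees that recursive calls in $t$ are performed on structurally smaller arguments, so the fixpoint equation is well-founded and the value $F(C\,\vec{u})$ is determined by the values of $F$ on strictly smaller inputs, making both the existence of $F$ and this case-by-case equality unambiguous.
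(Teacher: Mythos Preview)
Your proposal is correct and follows the standard textbook approach: induction on the typing derivation for clause~(1), and a substitution lemma plus preservation under single reduction steps for clause~(2), with the $\mu$-case discharged via the defining fixpoint property of $F$ guaranteed by $\wdef{f}{x}{e}$.

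The paper itself does not give a proof of this theorem; it is stated as a standard result about set-theoretic semantics of a simply-typed $\lambda$-calculus with well-founded recursion, and immediately followed by the next section. So there is no alternative argument to compare against. Your outline is exactly the expected one, and the only point worth flagging is that in clause~(2) you also need congruence of $\semexpv{-}{\rho}$ under arbitrary term contexts (not just evaluation contexts), since $\rightarrow_{\beta\iota\mu}$ is defined as the \emph{contextual} closure of the displayed redex rules; this is routine by induction on the context, but it is worth stating explicitly.
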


\section{Higher-Order Logic}
\label{sec:hol}
Higher-Order Logic is defined as a calculus in natural deduction for a predicate
logic over simply-typed terms. More specifically, its assertions are
formulae over typed terms, and are defined by the following grammar:
\[\phi::= P(t_1,\ldots, t_n) \mid \top \mid \bot \mid \phi \land
\phi \mid \phi \lor \phi \mid \phi \Rightarrow \phi \mid \forall
x:\tau. \phi \mid \exists x:\tau.\phi
\]
where $P$ ranges over basic predicates (as usual, we will often omit
the types of bound variables, when clear from the context). We assume
that predicates come equipped with an axiomatization.
For instance, the predicate $\mathsf{All}(l, \lambda x.\phi)$ is
defined to capture lists whose elements 
satisfies $\phi$.  This can be defined axiomatically:
\begin{mathpar}
{\rm All}([], \lambda x.\phi) \and
\forall h t. {\rm All}(t, \lambda x.\phi) \Rightarrow \phi(h) \Rightarrow {\rm All}(\cons{h}{t}, \lambda x. \phi)
\end{mathpar}
We use the notation $\lambda x.\phi$ for simplicity, although we have
not introduced formally a type for propositions---adding such a type
is straightforward and orthogonal to our work: another alternative would be to use axiom scheme.

We define well-typed assertions using a judgment of the form $\Gamma
\vdash \phi$. The typing rules are standard.
%
%
A HOL judgment is then of the form $\Gamma \mid \Psi \vdash \phi$,
where $\Gamma$ is a simply typed context, $\Psi$ is a set of
assertions, and $\phi$ is an assertion, and such that $\Gamma\vdash \psi$
for every $\psi \in \Psi$, and $\Gamma\vdash\phi$. The rules of the
logic are given in Figure~\ref{fig:hol}, where the notation
$\phi\subst{x}{t}$ denotes the (capture-free) substitution of $x$ by
$t$ in $\phi$. In addition to the usual rules for equality,
implication and universal quantification, there are rules for
inductive types (only the rules for lists are displayed; similar rules
exist for booleans and natural numbers): the rule \rname{LIST} models
the induction principle for lists; the rules \rname{NC} and
\rname{CONS} formalise injectivity and non overlap of constructors.
A rule for strong induction \rname{SLIST} can be considered as well,
and is in fact derivable from simple induction.
\begin{figure}[h]
\centering
\begin{mdframed}
\begin{mathpar}
\infer[\sf AX]
       {\jhol{\Gamma}{\Psi}{\phi}}
       {\phi \in \Psi}
\and       
\infer[\sf CONV]
      {\jhol{\Gamma}{\Psi}{t=t'}}
      {\Gamma \vdash t:\tau & \Gamma \vdash t':\tau & t=_{\beta\iota\mu} t'}
\and
\infer[\sf SUBST]
       {\jhol{\Gamma}{\Psi}{\phi\subst{x}{u}}}
       {\jhol{\Gamma}{\Psi}{\phi\subst{x}{t}} & \jhol{\Gamma}{\Psi}{t = u}}
\and
\infer[\sf \Rightarrow_I]
        {\jhol{\Gamma}{\Psi}{\psi \Rightarrow \phi}}
        {\jhol{\Gamma}{\Psi,\psi}{\phi}}
\and
\infer[\sf \Rightarrow_E]
        {\jhol{\Gamma}{\Psi}{\phi}}
        {\jhol{\Gamma}{\Psi}{\psi\Rightarrow\phi} & \jhol{\Gamma}{\Psi}{\psi}}
\and
\infer[\sf \forall_I]
        {\jhol{\Gamma}{\Psi}{\forall x:\sigma. \phi}}
        {\jhol{\Gamma, x:\sigma}{\Psi}{\phi}}
\and
\infer[\sf \forall_E]
        {\jhol{\Gamma}{\Psi}{\phi\subst{x}{t}}}
        {\jhol{\Gamma}{\Psi}{\forall x:\sigma. \phi} & \Gamma \vdash t : \sigma}
\and
\infer[\sf \top_I]
        {\jhol{\Gamma}{\Psi}{\top}}
        {}
\and
\infer[\sf \bot_E]
        {\jhol{\Gamma}{\Psi}{\phi}}
        {\jhol{\Gamma}{\Psi}{\bot} & \Gamma \vdash \phi}
\and
\infer[\sf LIST]
       {\jhol{\Gamma}{\Psi}{\forall t:\listt{\sigma}. \phi}}
       {\jhol{\Gamma}{\Psi}{\phi\subst{l}{[]}} & \jhol{\Gamma,h:\tau,t:\listt{\tau}}{\Psi,\phi}{
           \phi\subst{t}{\cons{h}{t}}}}
\and
\infer[\sf NC]
        {\jhol{\Gamma}{\emptyset}{[]\!\neq\!\cons{h}{t}}}{
        \Gamma \vdash \cons{h}{t}:\listt{\tau}}
\and
\infer[\sf CONS_i]
        {\jhol{\Gamma}{\Psi}{t_i=t'_i}}
        {\jhol{\Gamma}{\Psi}{\cons{t_1}{t_2} \!=\!\cons{t'_1}{t'_2}}}
\and
\infer[\sf SLIST]
  {\jhol{\Gamma}{\Psi}{\forall t : \listt{\tau}. \phi}}
  {\jhol{\Gamma, t:\listt{\tau}}{\Psi, \forall u : \listt{\tau}. |u| < |t| \Rightarrow \phi\subst{t}{u}}{\phi}}
\end{mathpar}
\end{mdframed}
\caption{Selected rules for HOL}\label{fig:hol}
\end{figure}

Higher-Order Logic inherits a set-theoretical interpretation from its
underlying simply-typed $\lambda$-calculus. We assume given for each
predicate $P$ an interpretation $\sem{P}$ which is compatible with the
type of $P$ and its axioms. The interpretation of assertions is then
defined in the usual way. Specifically, the interpretation
$\semexpv{\phi}{\rho}$ of an assertion $\phi$ w.r.t.\, a valuation
$\rho$ includes the clauses:
\begin{mathpar}
\semexpv{P(t_1,\dots,t_n)}{\rho} \defeq (\semv{t_1}{\rho},\dots,\semv{t_n}{\rho}) \in \sem{P} \and
\semexpv{\top}{\rho} \defeq \tilde\top \and
\semexpv{\bot}{\rho} \defeq \tilde\bot \and
\semexpv{\phi_1 \wedge \phi_2}{\rho}  \defeq \semexpv{\phi_1}{\rho} \:\tilde\wedge\: \semexpv{\phi_2}{\rho} \and
\semexpv{\phi_1 \Rightarrow \phi_2}{\rho} \defeq \semexpv{\phi_1}{\rho} \:\tilde{\Rightarrow}\: \semexpv{\phi_2}{\rho} \and
\semexpv{\forall x:\tau. \phi}{\rho} \defeq \tilde{\forall} v. v \in \sem{\tau} \:\tilde{\Rightarrow}\: \semexpv{\phi}{\rho\subst{x}{v}}
\end{mathpar}
Higher-order logic is sound with respect to this semantics.
\begin{theorem}[Soundness of set-theoretical semantics]
If $\jhol{\Gamma}{\Psi}{\phi}$, then for every valuation
$\rho\models\Gamma$, $\bigwedge_{\psi\in\Psi} \semexpv{\psi}{\rho}$
implies $\semexpv{\phi}{\rho}$.
\end{theorem}
In particular, higher-order logic is consistent, i.e.\ there is no
derivation of $\jhol{\Gamma}{\emptyset}{\bot}$ for any $\Gamma$.

\section{Unary Higher-Order Logic}
\label{sec:uhol}

As a stepping stone towards Relational Higher-Order Logic, we define
Unary Higher-Order Logic (UHOL). UHOL retains the flavor of refinement
types, but dissociates typing from assertions; judgments of UHOL are
of the form:
$$\juhol{\Gamma}{\Psi}{t}{\tau}{\phi}$$
where a distinguished variable $\res$, which doesn't appear in $\Gamma$, may appear
(free) in $\phi$ as a synonym of $t$. A judgment is well-formed if $t$
has type $\tau$, $\Psi$ is a valid set of assertions in the context
$\Gamma$, and $\phi$ is a valid assertion in the context
$\Gamma,\res:\tau$. Figure~\ref{fig:uhol} presents selected typing
rules. The \rname{ABS} rule allows proving formulas that refer to
$\lambda$-abstractions, expressing that if the argument satisfies a
precondition $\phi'$, then the result satisfies a postcondition
$\phi$.  The \rname{APP} rule, dually, proves a condition $\phi$ on an
application $t\ u$ provided that the argument $u$ satisfies the
precondition $\phi'$ of the function $t$. The \rname{VAR} rule
introduces a variable from the context with a formula proven in
HOL. Rules for constants (e.g. \rname{NIL}) work in the same way.
Rule \rname{CONS} proves a formula $\phi$ for a non-empty list,
provided that $\phi$ is a consequence of some conditions
$\phi',\phi''$ on its head and its tail.  Rule \rname{PAIR} allows the
construction of judgments about pairs in a similar manner. The rules
\rname{PROJ$_i$} prove judgments about the projections of a pair.  The
rule \rname{SUB} (subsumption) allows strengthening the assumed
assertions $\Psi$ and weakening the concluding assertion $\phi$. It
generates a HOL proof obligation. The rule \rname{CASE} can be used
for a case analysis over the constructor of a term. Finally, the rule
\rname{LETREC} supports inductive reasoning about recursive
definitions.  Recall that the domain of a recursive definition is an
inductive type, for which a natural notion of size exists. If,
assuming that a proposition holds for all elements smaller than the
argument, we can prove that the proposition holds for the body, then
the proposition must hold as well for the function. Furthermore, we
require that the function we are verifying satisfies the predicate
$\wdef{f}{x}{i}$, as was the case in HOL. The induction is performed
over the $<$ order, which varies depending on the type of the
argument.

\begin{figure*}
\begin{mdframed}
\small
\begin{mathpar}
\infer[\sf VAR]{\juhol{\Gamma}{\Psi}{x}{\sigma}{\phi}}
      {\jlc{\Gamma}{x}{\sigma} \\ \jhol{\Gamma}{\Psi}{\phi\subst{\res}{x}}}
\and
\infer[\sf ABS]
      {\juhol{\Gamma}{\Psi}{\lambda x:\tau. t}{\tau \to \sigma}{\forall x. \phi'
      \Rightarrow \phi\subst{\res}{\res\ x}}}{\juhol{\Gamma,x:\tau}{\Psi,\phi'}{t}{\sigma}{\phi}}
\and
\infer[\sf APP]{\juhol{\Gamma}{\Psi}{t\ u}{\sigma}{\phi\subst{x}{u}}}
      {\juhol{\Gamma}{\Psi}{t}{\tau\to \sigma}{\forall x.
          \phi'\subst{\res}{x}\Rightarrow\phi\subst{\res}{\res\ x}} \qquad
\juhol{\Gamma}{\Psi}{u}{\tau}{\phi'}}
\and
\infer[\sf NIL]
       {\juhol{\Gamma}{\Psi}{[]}{\listt{\sigma}}{\phi}}
       {
        \jholn{\Gamma}{\Psi}{\phi\subst{\res}{[]}}
        }
\and
\infer[\sf CONS]
       {\juhol{\Gamma}{\Psi}{\cons{h}{t}}{\listt{\sigma}}{\phi}}
       {\begin{array}{c}
        \juhol{\Gamma}{\Psi}{h}{\sigma}{\phi'} \hspace{2cm}
        \juhol{\Gamma}{\Psi}{t}{\listt{\sigma}}{\phi''} \\
        \jholn{\Gamma}{\Psi}{\forall x y. \phi'\subst{\res}{x}
          \Rightarrow \phi''\subst{\res}{y}
             \Rightarrow \phi\subst{\res}{\cons{x}{y}}}
       \end{array}}
\and
\infer[\sf PROJ_i]
       {\juhol{\Gamma}{\Psi}{\pi_i(t)}{\sigma}{\phi}}
       {\juhol{\Gamma}{\Psi}{t}{\sigma\times\tau}
       {\phi\subst{\res}{\pi_i(\res)}}}
\and
\infer[\sf PAIR]
       {\juhol{\Gamma}{\Psi}{\pair{t}{u}}{\sigma\times\tau}{\phi}}
       {\begin{array}{c}
        \juhol{\Gamma}{\Psi}{t}{\sigma}{\phi'} \hspace{2cm}
        \juhol{\Gamma}{\Psi}{u}{\tau}{\phi''} \\
        \jholn{\Gamma}{\Psi}{\forall x y. \phi'\subst{\res}{x}
          \Rightarrow \phi''\subst{\res}{y}
          \Rightarrow \phi\subst{\res}{\pair{x}{y}}}
        \end{array}}
\and
\infer[\sf SUB]{\juhol{\Gamma}{\Psi}{t}{\sigma}{\phi}}
      {\juhol{\Gamma}{\Psi}{t}{\sigma}{\phi'} &
       \jholn{\Gamma}{\Psi}{\phi'\subst{\res}{t} \Rightarrow \phi\subst{\res}{t}}}
\and
\infer[\sf LISTCASE]
  {\juhol{\Gamma}{\Psi}{\caselist{l}{u}{v}}{\sigma}{\phi}}
  {\begin{array}{c}
      \jlc{\Gamma}{l}{\listt{\tau}} \\
      \juhol{\Gamma}{\Psi, l = \nil}{u}{\sigma}{\phi} \\
      \juhol{\Gamma}{\Psi}{v}{\tau\to\listt{\tau}\to\sigma}
            {\forall h t. l = \cons{h}{t} \Rightarrow \phi\subst{\res}{\res\ h\ t}}
   \end{array}}
\and
\infer[\sf LETREC]
  {\juhol{\Gamma}{\Psi}{{\rm letrec}\ f\ x\ = e}{I\to{\sigma}}{\forall x. \phi' \Rightarrow \phi\subst{\res}{\res\ x}}}
  {\begin{array}{c}
      \wdef{f}{x}{e} \\
      \juhol{\Gamma, x:I, f:I\to{\sigma}}
         {\Psi, \phi', \forall m. |m| < |x| \Rightarrow \phi'\subst{x}{m}\Rightarrow \phi\subst{x}{m}\subst{\res}{f\ m}}{e}{\sigma}{\phi}
   \end{array}}
\and  
\text{ where }I \in \{\nat,\listt{\tau}\}
\end{mathpar}
\end{mdframed}                
\caption{Unary Higher-Order Logic rules}\label{fig:uhol}
\end{figure*}

We now discuss the main meta-theoretic results of UHOL. The following
result establishes that every HOL judgment can be proven in UHOL and
viceversa.

\begin{thm}[Equivalence with HOL] \label{thm:equiv-uhol-hol} For every context 
  $\Gamma$, simple type $\sigma$, term $t$, set of assertions $\Psi$
  and assertion $\phi$, the following are equivalent:
  \begin{itemize}
  \item
    $\juhol{\Gamma}{\Psi}{t}{\sigma}{\phi}$
  \item
    $\jhol{\Gamma}{\Psi}{\phi\subst{\res}{t}}$
  \end{itemize}
\end{thm}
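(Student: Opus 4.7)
The plan is to prove the two directions separately. For the forward direction, I would induct on the derivation of $\juhol{\Gamma}{\Psi}{t}{\sigma}{\phi}$. The rule VAR is immediate, since its side condition coincides with the HOL conclusion. For the introduction rules ABS, NIL, CONS, PAIR, and for LISTCASE, the inductive hypotheses produce HOL judgments about the subterms, which combine via $\forall_I$, $\Rightarrow_I$, and the HOL side conditions that are built into each rule; the only subtlety is the use of CONV to $\beta$-reduce expressions like $(\lambda x.t)\ x$ to $t$ that arise after substituting $\res$ by the abstraction. The elimination rules APP and PROJ$_i$ are dual, using $\forall_E$ or SUBST, and SUB reduces directly to $\Rightarrow_E$.

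The main obstacle is LETREC. Given the IH $\jhol{\Gamma,x:I,f:I\to\sigma}{\Psi,\phi',\forall m.|m|<|x|\Rightarrow\ldots}{\phi\subst{\res}{e}}$, one must derive $\jhol{\Gamma}{\Psi}{\forall x.\phi'\Rightarrow\phi\subst{\res}{(\letrec{f}{x}{e})\ x}}$. I would proceed by strong induction on $x$ in HOL (SLIST for lists, an analogous rule for $\nat$), specialize the IH by substituting $f \mapsto \letrec{f}{x}{e}$, and then identify the unfolded body $e\subst{f}{\letrec{f}{x}{e}}$ with $(\letrec{f}{x}{e})\ x$ via a CONV step justified by $\mu$-reduction. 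Since $\mu$-reduction only fires on constructor-headed arguments, this final step requires an intermediate case analysis on the constructor of $x$ (using LIST together with NC and CONS$_i$ in HOL). The side condition $\wdef{f}{x}{e}$ ensures that recursive calls inside $e$ are invoked on strictly smaller arguments, so that the hypothesis provided by strong induction lines up exactly with the extra assumption supplied in the UHOL premise.

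For the backward direction, I would first establish an auxiliary lemma: for every $\Gamma\vdash t:\sigma$ and well-formed $\Psi$, the trivial judgment $\juhol{\Gamma}{\Psi}{t}{\sigma}{\top}$ is derivable. This is proved by induction on the typing derivation of $t$: at each step one applies the corresponding UHOL rule with every refinement set to $\top$, and then uses SUB to collapse the tautologically-shaped conclusion (e.g.\ $\forall x.\top\Rightarrow\top$) back to $\top$. Standard weakening of $\Gamma$ and $\Psi$ is needed here as an admissible structural property, itself proved by a routine induction on derivations. The backward direction of the theorem then follows at once: from the hypothesis $\jhol{\Gamma}{\Psi}{\phi\subst{\res}{t}}$ together with the lemma's $\juhol{\Gamma}{\Psi}{t}{\sigma}{\top}$, apply SUB, discharging its HOL side condition $\jhol{\Gamma}{\Psi}{\top\Rightarrow\phi\subst{\res}{t}}$ by $\Rightarrow_I$ from the hypothesis.
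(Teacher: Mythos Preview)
Your proposal is correct and mirrors the paper's approach exactly: the forward direction is by induction on the UHOL derivation (with the LETREC case handled via strong induction in HOL, substituting $f$ by the recursive definition and invoking CONV for the $\mu$-unfolding, case-splitting on the head constructor so that $\mu$-reduction applies), and the reverse direction follows from the observation that $\juhol{\Gamma}{\Psi}{t}{\sigma}{\top}$ is always derivable, together with SUB. Your elaboration of the ``$\top$ is derivable'' lemma by induction on the typing derivation, collapsing structured tautologies via SUB, is precisely the mechanism the paper alludes to when it speaks of instantiating $\phi'$ with a tautology matching the shape of the type.
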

The forward implication follows by induction on the derivation of
$\juhol{\Gamma}{\Psi}{t}{\sigma}{\phi}$.  The reverse implication is
immediate from the rule \rname{SUB} and the observation that
$\juhol{\Gamma}{\Psi}{t}{\sigma}{\top}$ whenever $t$ is a term of type
$\sigma$.

We lift the HOL semantics to UHOL. Terms, types and formulas are
interpreted as before. Additionally, for every valuation $\rho$ let
$\rho\subst{x}{v}$ denote its unique extension $\rho'$ such that
$\rho'(y)=v$ if $x=y$ and $\rho'(y)=\rho(y)$ otherwise. The following
corollary states the soundness of UHOL.

\begin{cor}[Set-theoretical soundness and consistency]\label{cor:uhol:sound}
If $\juhol{\Gamma}{\Psi}{t}{\sigma}{\phi}$, then for every valuation
$\rho\models\Gamma$, $\bigwedge_{\psi\in\Psi} \semexpv{\psi}{\rho}$
implies $\semexpv{\phi}{\rho\subst{\res}{\semexpv{t}{\rho}}}$. In
particular, there is no proof of
$\juhol{\Gamma}{\emptyset}{t}{\sigma}{\bot}$ in UHOL.
\end{cor}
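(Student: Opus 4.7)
The plan is to reduce the semantic claim to the already-established soundness of HOL (the preceding theorem) via the equivalence between UHOL and HOL (Theorem~\ref{thm:equiv-uhol-hol}), and then to obtain consistency from the semantic statement by exhibiting a valuation.

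First, suppose $\juhol{\Gamma}{\Psi}{t}{\sigma}{\phi}$. By the forward direction of Theorem~\ref{thm:equiv-uhol-hol} we have $\jhol{\Gamma}{\Psi}{\phi\subst{\res}{t}}$. Applying the soundness of HOL, for every $\rho\models\Gamma$, if $\bigwedge_{\psi\in\Psi}\semexpv{\psi}{\rho}$ holds then $\semexpv{\phi\subst{\res}{t}}{\rho}$ holds. The remaining step is a standard substitution lemma on the interpretation of assertions, namely
\[
\semexpv{\phi\subst{\res}{t}}{\rho} \;=\; \semexpv{\phi}{\rho\subst{\res}{\semexpv{t}{\rho}}},
\]
which follows by induction on the structure of $\phi$, relying on the corresponding substitution lemma for terms ($\semexpv{u\subst{\res}{t}}{\rho} = \semexpv{u}{\rho\subst{\res}{\semexpv{t}{\rho}}}$), itself by induction on $u$. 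One must be careful that $\res$ does not occur in $\Gamma$ (which is part of the well-formedness of UHOL judgments) so that $\rho\subst{\res}{\semexpv{t}{\rho}}$ is a well-defined extension of $\rho$. Combining this equation with the HOL-soundness conclusion yields the first part of the corollary.

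For consistency, suppose toward a contradiction that there is a derivation $\juhol{\Gamma}{\emptyset}{t}{\sigma}{\bot}$. Applying the semantic part just proved (with $\Psi=\emptyset$, so the hypothesis is vacuously satisfied), we obtain $\semexpv{\bot}{\rho\subst{\res}{\semexpv{t}{\rho}}} = \tilde\bot$ for every valuation $\rho\models\Gamma$. To derive a contradiction it suffices to exhibit at least one such valuation. Since every simple type of our language has a non-empty set-theoretic interpretation (the constants $\tbool$, $0$, $\nil$, the constant functions, and the pair of two inhabitants witness this for $\bool$, $\nat$, $\listt{\tau}$, $\sigma\to\tau$, and $\sigma\times\tau$ respectively, by a straightforward induction on types), a valuation for any $\Gamma$ can be constructed pointwise. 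This delivers the contradiction and concludes the proof.

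The only non-mechanical ingredient is the substitution lemma relating syntactic substitution to environment extension at the level of interpretations; everything else is bookkeeping. That substitution lemma is routine, but it is where capture-avoidance and the side condition $\res\notin\Gamma$ in the well-formedness of UHOL judgments must be invoked, so it is the step deserving the most care.
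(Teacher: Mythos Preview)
Your proposal is correct and follows essentially the same approach as the paper, which simply states that the result is a direct consequence of the embedding of UHOL into HOL (Theorem~\ref{thm:equiv-uhol-hol}) and the soundness of HOL. You have usefully spelled out the two details the paper leaves implicit: the substitution lemma $\semexpv{\phi\subst{\res}{t}}{\rho} = \semexpv{\phi}{\rho\subst{\res}{\semexpv{t}{\rho}}}$ needed to bridge the HOL conclusion to the UHOL semantic statement, and the non-emptiness of type interpretations needed to produce a valuation for the consistency claim.
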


Next, we prove subject conversion for UHOL. The result follows
immediately from Theorem~\ref{thm:equiv-uhol-hol} and subject
conversion of HOL, which is itself a direct consequence of the
\rname{CONV} and \rname{SUBST} rules.

\begin{cor}[Subject conversion]
Assume that $t =_{\beta\iota\mu} t'$ and
$\juhol{\Gamma}{\Psi}{t}{\sigma}{\phi}$.  If $\Gamma\vdash t':\sigma$
then $\juhol{\Gamma}{\Psi}{t'}{\sigma}{\phi}$.
\end{cor}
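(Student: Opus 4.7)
The plan is to exploit Theorem~\ref{thm:equiv-uhol-hol} (equivalence with HOL) to reduce the problem to subject conversion for HOL itself, and then to argue the HOL result directly from the \rname{CONV} and \rname{SUBST} rules of Figure~\ref{fig:hol}. The whole argument is short, because the heavy lifting has already been done when transporting UHOL judgments into HOL: once we are in HOL, converting a term inside an assertion is a routine matter of equational reasoning.

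Concretely, I would proceed as follows. First, apply the forward direction of Theorem~\ref{thm:equiv-uhol-hol} to the hypothesis $\juhol{\Gamma}{\Psi}{t}{\sigma}{\phi}$ to obtain the HOL judgment $\jhol{\Gamma}{\Psi}{\phi\subst{\res}{t}}$. Second, establish the equality $\jhol{\Gamma}{\Psi}{t = t'}$ by a single use of the \rname{CONV} rule: its three premises are exactly $\Gamma \vdash t : \sigma$ (extracted from the well-formedness of the input UHOL judgment), $\Gamma \vdash t' : \sigma$ (an assumption of the corollary), and $t =_{\beta\iota\mu} t'$ (also assumed). Third, instantiate the \rname{SUBST} rule with the formula $\phi$ (viewed as a predicate in the distinguished variable $\res$), taking the two premises to be the just-obtained $\jhol{\Gamma}{\Psi}{\phi\subst{\res}{t}}$ and $\jhol{\Gamma}{\Psi}{t = t'}$; the conclusion is $\jhol{\Gamma}{\Psi}{\phi\subst{\res}{t'}}$. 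Finally, apply the reverse direction of Theorem~\ref{thm:equiv-uhol-hol} to conclude $\juhol{\Gamma}{\Psi}{t'}{\sigma}{\phi}$.

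The only subtlety worth checking, rather than a genuine obstacle, is the well-formedness side condition needed to invoke the equivalence theorem backwards: the judgment $\jhol{\Gamma}{\Psi}{\phi\subst{\res}{t'}}$ must be well-formed, which in turn requires $\Gamma \vdash t' : \sigma$ and $\Gamma, \res:\sigma \vdash \phi$. The first is an explicit hypothesis of the corollary, and the second is inherited from the well-formedness of the input UHOL judgment, since $\phi$ does not change along the way. Thus every side condition is discharged and the proof goes through with no induction on derivations — it is entirely a surface-level rewriting step performed inside HOL.
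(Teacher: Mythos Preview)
Your proof is correct and is exactly the argument the paper intends: transport to HOL via Theorem~\ref{thm:equiv-uhol-hol}, use \rname{CONV} then \rname{SUBST} to replace $t$ by $t'$ inside $\phi$, and transport back. The paper states this in one line (``follows immediately from Theorem~\ref{thm:equiv-uhol-hol} and subject conversion of HOL, which is itself a direct consequence of the \rname{CONV} and \rname{SUBST} rules''), and your expansion of that line, including the well-formedness check, is accurate.
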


\section{Relational Higher-Order Logic}
\label{sec:rhol}
Relational Higher-Order Logic (RHOL) extends UHOL's separation of
assertions from types to a relational setting. Formally, RHOL is a
relational type system which manipulates judgments of the form
$$\jrhol{\Gamma}{\Psi}{t_1}{\tau_1}{t_2}{\tau_2}{\phi}$$
which combine a typing judgment for a pair of PCF terms and permit
reasoning about the relation between them. We therefore require that
$t_1$ and $t_2$ respectively have types $\tau_1$ and $\tau_2$ in
$\Gamma$. Well-formedness of the judgment requires $\Psi$ to be a
valid set of assertions in $\Gamma$ and $\phi$ to be a valid assertion
in $\Gamma, \res\ltag:\tau_1, \res\rtag:\tau_2$, where the special
variables $\res\ltag$ and $\res\rtag$ are used as synomyms for $t_1$
and $t_2$ in $\phi$. The informal meaning of the judgment is the
expected one: If the variables in $\Gamma$ are related by the
assertions in $\Psi$, then the terms $t_1$ and $t_2$ are related by
the assertion $\phi$.

\subsection{Proof rules}
\label{ssec:rhol-pr}

The type system combines two-sided rules (Figure \ref{fig:twosided}),
which apply when the two terms have the same top-level constructors
and one-sided rules (Figure \ref{fig:onesided}), which analyze either
one of the two terms.
For instance, the \rname{APP} rule applies when the two terms are
applications, and requires that the functions $t\ltag$ and $t\rtag$
relate and the arguments $u_1$ and $u_2$ relate. Specifically,
$t\ltag$ and $t\rtag$ must map values related by $\phi'$ to values
related by $\phi$, and $u_1$ and $u_2$ must be related by $\phi'$. The
\rname{ABS} rule is dual.  The \rname{PAIR} rule requires that the
left and right components of a pair relate independently (a stronger
rule is discussed at the end of the section). The \rname{PROJ$_i$}
rules require in their premise an assertion that only refers to the
the first or the second component of the pair. The rules for lists
require that the two lists are either both empty, or both
non-empty. The rule \rname{CONS} requires that the two heads and the
two tails relate independently. The \rname{CASE} rule derives
judgements about two case constructs when the terms over which the
matching happens reduce to the same branch (i.e. have the same
constructor) on both sides.

In contrast, one-sided typing rules only analyze one term; therefore,
they come in two flavours: left rules (shown in
Figure~\ref{fig:onesided}) and right rules (ommitted but
similar). Rule \rname{ABS{-}L} considers the case where the left term
is a $\lambda$-abstraction, and requires the body of the abstraction
to be related to the right term $u_2$ whenever the argument (on the
left side) satisfies a non-relational assertion $\phi'$. Dually, rule
\rname{APP{-}L} considers the case where the left term is of the form
$t_1\ u_1$, and $t_1$ is related to the right term $u_2$;
specifically, $t_1$ should map every value satisfying $\phi'$ to a
value satisfying $\phi$. Moreover, $u_1$ should satisfy $\phi'$. Since
$\phi'$ is a non-relational assertion, we demand that it can be
established using UHOL, not RHOL. One-sided rules for pairs and lists
follow a similar pattern.

In addition, RHOL has structural rules
(Figure~\ref{fig:structrules}). The rule \rname{SUB} can be used for
strengthening the assumed assertions and for weakening the concluding
assertion; the ensuing side-conditions are discharged in HOL. Other
structural rules assimilate rules of HOL. For instance, if we can
prove two different assertions for the same terms we can prove the
conjunction of the assertions (\rname{$\wedge_I$}). Other logical
connectives have similar rules. Finally, the rule \rname{UHOL-L} (and
a dual rule \rname{UHOL-R}) allow falling back to UHOL in a RHOL
proof.

Rules \rname{LETREC} and \rname{LETREC-L} introduce recursive function
definitions (Figure~\ref{fig:recursionrules}). These rules allow for a
style of reasoning very similar to strong induction. If, assuming that
the function's specification holds for all smaller arguments, we can
prove that the functions specification holds, then the specification
must hold for all arguments. We require that the two functions we are
relating satisfy the predicates $\wdef{f_i}{x_i}{e_i}$, as was the
case in HOL and UHOL. The induction is performed over the order $(a,b)
< (c,d)$, which holds whenever both $a\leq b$ and $c\leq d$, and at
least one of the inequalities is strict.

 
\begin{figure*}
\begin{mdframed}
\small
\begin{mathpar}
\infer[\sf ABS]
      {\jrhol{\Gamma}{\Psi}{\lambda x_1:\tau_1. t_1}{\tau_1 \to \sigma_1}{\lambda x_2:\tau_2. t_2}{\tau_2\to \sigma_2}{\forall x_1,x_2. \phi' \Rightarrow \phi\subst{\res\ltag}{\res\ltag\ x_1}\subst{\res\rtag}{\res\rtag\ x_2}}}
      {\jrhol{\Gamma,x_1:\tau_1,x_2:\tau_2}{\Psi,\phi'}{t_1}{\sigma_1}{t_2}{\sigma_2}{\phi}}
\and
\infer[\sf APP]{\jrhol{\Gamma}{\Psi}{t_1 u_1}{\sigma_1}{t_2 u_2}{\sigma_2}{\phi\subst{x_1}{u_1}\subst{x_2}{u_2}}}
      {\begin{array}{c}
\jrhol{\Gamma}{\Psi}{t_1}{\tau_1\to \sigma_1}{t_2}{\tau_2\to \sigma_2}{
          \forall x_1,x_2. \phi'\subst{\res\ltag}{x_1}\subst{\res\rtag}{x_2}\Rightarrow \phi\subst{\res\ltag}{\res\ltag\ x_1}\subst{\res\rtag}{\res\rtag\ x_2}}\\
\jrhol{\Gamma}{\Psi}{u_1}{\tau_1}{u_2}{\tau_2}{
          \phi'}
\end{array}}
\and
\infer[\sf VAR]{\jrhol{\Gamma}{\Psi}{x_1}{\sigma_1}{x_2}{\sigma_2}{\phi}}
      {\jlc{\Gamma}{x_1}{\sigma_1} & \jlc{\Gamma}{x_2}{\sigma_2} & \jhol{\Gamma}{\Psi}{\phi\subst{\res\ltag}{x_1}\subst{\res\rtag}{x_2}}}
\and
\infer[\sf TRUE]
       {\jrhol{\Gamma}{\Psi}{\tbool}{\bool}{\tbool}{\bool}{\phi}}
       {
        \jholn{\Gamma}{\Psi}{\phi\defsubst{\tbool}{\tbool}}
        }
\and
\infer[\sf NIL]
       {\jrhol{\Gamma}{\Psi}{[]}{\listt{\sigma_1}}{[]}{\listt{\sigma_2}}{\phi}}
       {
        \jholn{\Gamma}{\Psi}{\phi\defsubst{[]}{[]}}
        }
\and
\infer[\sf CONS]
       {\jrhol{\Gamma}{\Psi}{\cons{h_1}{t_1}}{\listt{\sigma_1}}{\cons{h_2}{t_2}}{\listt{\sigma_2}}{\phi}}
       {\begin{array}{c}
        \jrhol{\Gamma}{\Psi}{h_1}{\sigma_1}{h_2}{\sigma_2}{\phi'} \hspace{2cm}
        \jrhol{\Gamma}{\Psi}{t_1}{\listt{\sigma_1}}{t_2}{\listt{\sigma_2}}{\phi''} \\
        \jholn{\Gamma}{\Psi}{\forall x_1 x_2 y_1 y_2. \phi'\defsubst{x_1}{x_2} \Rightarrow \phi''\defsubst{y_1}{y_2}
             \Rightarrow \phi\defsubst{\cons{x_1}{y_1}}{\cons{x_2}{y_2}}}
       \end{array}}
\and
\infer[\sf LISTCASE]
  {\jrhol{\Gamma}{\Psi}{\caselist{l_1}{u_1}{v_1}}{\sigma_1}{\caselist{l_2}{u_2}{v_2}}{\sigma_2}{\phi}}
  {\begin{array}{c}
      \jrhol{\Gamma}{\Psi}{l_1}{\listt{\tau_1}}{l_2}{\listt{\tau_2}}{\res\ltag = \nil \Leftrightarrow \res\rtag = \nil} \\
      \jrhol{\Gamma}{\Psi, l_1= \nil, l_2= \nil}{u_1}{\sigma_1}{u_2}{\sigma_2}{\phi} \\
      \jrhol{\Gamma}{\Psi}{v_1}{\tau_1\to\listt{\tau_1}\to\sigma_1}{v_2}{\tau_2\to\listt{\tau_2}\to\sigma_2}\\
            {\forall h_1 h_2 t_1 t_2. l_1 = \cons{h_1}{t_1} \Rightarrow l_2 = \cons{h_2}{t_2} \Rightarrow \phi\defsubst{\res\ltag\ h_1\ t_1}{\res\rtag\ h_2\ t_2}}
   \end{array}}
\and
\infer[\sf PAIR]
       {\jrhol{\Gamma}{\Psi}{\pair{t_1}{u_1}}{\sigma_1\times\tau_1}{\pair{t_2}{u_2}}{\sigma_2\times\tau_2}{\phi}}
       {\begin{array}{c}
        \jrhol{\Gamma}{\Psi}{t_1}{\sigma_1}{t_2}{\sigma_2}{\phi'} \hspace{2cm}
        \jrhol{\Gamma}{\Psi}{u_1}{\tau_1}{u_2}{\tau_2}{\phi''} \\
        \jholn{\Gamma}{\Psi}{\forall x_1 x_2 y_1 y_2. \phi'\defsubst{x_1}{x_2} \Rightarrow \phi''\defsubst{y_1}{y_2}
             \Rightarrow \phi\defsubst{\pair{x_1}{y_1}}{\pair{x_2}{y_2}}}
        \end{array}}
\and
\infer[\sf PROJ_i]
       {\jrhol{\Gamma}{\Psi}{\pi_i(t_1)}{\sigma_1}{\pi_i(t_2)}{\sigma_2}{\phi}}
       {\jrhol{\Gamma}{\Psi}{t_1}{\sigma_1\times\tau_1}{t_2}{\sigma_2\times\tau_2}
       {\phi\defsubst{\pi_i(\res\ltag)}{\pi_i(\res\rtag)}}}
\end{mathpar}
\end{mdframed}
\caption{Two-sided rules}\label{fig:twosided}
\end{figure*}

\begin{figure*}
\begin{mdframed}
\small
\begin{mathpar}
\infer[\sf SUB]{\jrhol{\Gamma}{\Psi}{t_1}{\sigma_1}{t_2}{\sigma_2}{\phi}}
      {\jrhol{\Gamma}{\Psi}{t_1}{\sigma_1}{t_2}{\sigma_2}{\phi'} & 
       \Gamma \mid \Psi \vdash_{\sf HOL} \phi'\defsubst{t_1}{t_2} \Rightarrow \phi\defsubst{t_1}{t_2}}
\and
\infer[\sf \wedge_I]
      {\jrhol{\Gamma}{\Psi}{t_1}{\sigma_2}{t_2}{\sigma_2}{\phi \wedge \phi'}}
      {\jrhol{\Gamma}{\Psi}{t_1}{\sigma_2}{t_2}{\sigma_2}{\phi}
       & \jrhol{\Gamma}{\Psi}{t_1}{\sigma_2}{t_2}{\sigma_2}{\phi'}}
\and
\infer[\sf \Rightarrow_I]
      {\jrhol{\Gamma}{\Psi}{t_1}{\sigma_2}{t_2}{\sigma_2}{\phi' \Rightarrow \phi}}
      {\jrhol{\Gamma}{\Psi,\phi'\defsubst{t_1}{t_2}}{t_1}{\sigma_2}{t_2}{\sigma_2}{\phi}}
\and
\infer[\sf UHOL-L]
      {\jrhol{\Gamma}{\Psi}{t_1}{\sigma_1}{t_2}{\sigma_1}{\phi}}
      {\juhol{\Gamma}{\Psi}{t_1}{\sigma_1}{\phi\defsubst{\res}{t_2}}}
\end{mathpar}
\end{mdframed}                
\caption{Structural rules}\label{fig:structrules}
\end{figure*}

\begin{figure*}
\begin{mdframed}
\small
\begin{mathpar}
\infer[\sf ABS{-}L]
      {\jrhol{\Gamma}{\Psi}{\lambda x_1:\tau_1. t_1}{\tau_1 \to \sigma_1}{t_2}{\sigma_2}{\forall x_1. \phi' \Rightarrow \phi \subst{\res\ltag}{\res\ltag\ x_1}}}
      {\jrhol{\Gamma,x_1:\tau_1}{\Psi, \phi'}{t_1}{\sigma_1}{t_2}{\sigma_2}{\phi}}
\and
\infer[\sf APP{-}L]
      {\jrhol{\Gamma}{\Psi}{t_1 u_1}{\sigma_1}{u_2}{\sigma_2}{\phi}}
      {\begin{array}{c}
          \jrhol{\Gamma}{\Psi}{t_1}{\tau_1\to \sigma_1}{u_2}{\sigma_2}{\forall x_1. \phi'\subst{\res\ltag}{x_1} \Rightarrow \phi\subst{\res\ltag}{\res\ltag\ x_1}}\\
          \juhol{\Gamma}{\Psi}{u_1}{\sigma_1}{\phi'\subst{x_1}{u_1}}
\end{array}}
\and
\infer[\sf VAR{-}L]
      {\jrhol{\Gamma}{\Psi}{x_1}{\sigma_1}{t_2}{\sigma_2}{\phi}}
      {\phi\subst{\res\ltag}{x_1} \in \Psi &  \res\rtag\not\in\ FV(\phi) &
       \jlc{\Gamma}{t_2}{\sigma_2}}
\and
\infer[\sf NIL{-}L]
       {\jrhol{\Gamma}{\Psi}{[]}{\listt{\sigma_1}}{t_2}{\sigma_2}{\phi}}
       {\jhol{\Gamma}{\Psi}{\phi\defsubst{[]}{t_2}} &
        \jlc{\Gamma}{t_2}{\sigma_2}}
\and
\infer[\sf CONS{-}L]
       {\jrhol{\Gamma}{\Psi}{\cons{h_1}{t_1}}{\listt{\sigma_1}}{t_2}{\sigma_2}{\phi}}
       {\begin{array}{c}
        \jrhol{\Gamma}{\Psi}{h_1}{\sigma_1}{t_2}{\sigma_2}{\phi'}\hspace{2cm}
        \jrhol{\Gamma}{\Psi}{t_1}{\listt{\sigma_1}}{t_2}{\sigma_2}{\phi''} \\
        \jholn{\Gamma}{\Psi}{\forall x_1 x_2 y_1. \phi'\defsubst{x_1}{x_2} \Rightarrow \phi''\defsubst{y_1}{x_2}
             \Rightarrow \phi\defsubst{\cons{x_1}{y_1}}{x_2}}
        \end{array}}
\and
\infer[\sf LISTCASE-L]
  {\jrhol{\Gamma}{\Psi}{\caselist{t_1}{u_1}{v_1}}{\sigma_1}{t_2}{\sigma_2}{\phi}}
  {\begin{array}{c}
      \jlc{\Gamma}{t_1}{\listt{\tau}} \\ 
      \jrhol{\Gamma}{\Psi, t_1=\nil}{u_1}{\sigma_1}{t_2}{\sigma_2}{\phi} \\
      \jrhol{\Gamma}{\Psi}{v_1}{\tau\to\listt{\tau}\to\sigma_1}{t_2}{\sigma_2}{\forall h_1 l_1. t_1 = \cons{h_1}{l_1} \Rightarrow \phi\subst{\res\ltag}{\res\ltag\ h_1\ l_1}}
   \end{array}}
\and        
\infer[\sf PAIR{-}L]
       {\jrhol{\Gamma}{\Psi}{\pair{t_1}{u_1}}{\sigma_1\times\tau_1}{t_2}{\sigma_2}{\phi}}
       {\begin{array}{c}
        \jrhol{\Gamma}{\Psi}{t_1}{\sigma_1}{t_2}{\sigma_2}{\phi'}\hspace{2cm}
        \jrhol{\Gamma}{\Psi}{u_1}{\tau_1}{t_2}{\sigma_2}{\phi''} \\
        \jholn{\Gamma}{\Psi}{\forall x_1 x_2 y_1. \phi'\defsubst{x_1}{x_2} \Rightarrow \phi''\defsubst{y_1}{x_2}
             \Rightarrow \phi\defsubst{\pair{x_1}{y_1}}{x_2}}
        \end{array}}
\and
\infer[\sf PROJ_1{-}L]
       {\jrhol{\Gamma}{\Psi}{\pi_1(t_1)}{\sigma_1}{t_2}{\sigma_2}{\phi}}
       {\jrhol{\Gamma}{\Psi}{t_1}{\sigma_1\times\tau_1}{t_2}{\sigma_2}
          {\phi\subst{\res\ltag}{\pi_1(\res\ltag)}}}
\end{mathpar}
\end{mdframed}
\caption{One-sided rules}\label{fig:onesided}
\end{figure*}

\begin{figure*}
\begin{mdframed}
\small
\begin{mathpar}
\infer[\sf LETREC]
  {\jrhol{\Gamma}{\Psi}{{\rm letrec}\ f_1\ x_1\ = e_1}{I_1\to{\sigma_2}}{{\rm letrec}\ f_2\ x_2\ = e_2}{I_2\to{\sigma_2}}{\forall x_1 x_2. \phi' \Rightarrow \phi\defsubst{\res\ltag\ x_1}{\res\rtag\ x_2}}}
  {\begin{array}{c}
      \wdef{f_1}{x_1}{e_1} \:\: \wdef{f_2}{x_2}{e_2} \\
      \jrhol{\Gamma, x_1:I_1, x_2:I_2, f_1:I_1\to{\sigma}, f_2:I_2\to{\sigma_2}}
         {\\ \Psi, \phi', \forall m_1 m_2. (|m_1|, |m_2|) < (|x_1|, |x_2|) \Rightarrow \phi'\subst{x_1}{m_1}\subst{x_2}{m_2}\Rightarrow
                  \phi\subst{x_1}{m_1}\subst{x_2}{m_2}\defsubst{f_1\ m_1}{f_2\ m_2}}{\\ e_1}{\sigma_1}{e_2}{\sigma_2}{\phi}
   \end{array}}
\and
\infer[\sf LETREC-L]
  {\jrhol{\Gamma}{\Psi}{{\rm letrec}\ f_1\ x_1\ = e_1}{I_1\to{\sigma_2}}{t_2}{\sigma_2}{\forall x_1 . \phi' \Rightarrow \phi\subst{\res\ltag}{\res\ltag\ x_1}}}
  {\begin{array}{c}
      \wdef{f_1}{x_1}{e_1} \\
      \jrhol{\Gamma, x_1:I_1, f_1:I_1\to{\sigma}}
         {\Psi, \phi', \forall m_1. |m_1| < |x_1| \Rightarrow \phi'\subst{x_1}{m_1} \Rightarrow \phi\subst{x_1}{m_1}\defsubst{f_1\ m_1}{t_2}}{\\ e_1}{\sigma_1}{t_2}{\sigma_2}{\phi}
   \end{array}}
\and
\text{ where }I_1,I_2 \in \{\nat,\listt{\tau}\}
\end{mathpar}
\end{mdframed}
\caption{Recursion rules}
\label{fig:recursionrules}
\end{figure*}

\begin{figure*}
\begin{mdframed}
\small
\begin{mathpar}
\infer[\sf APP-FUN]{\jrhol{\Gamma}{\Psi}{t_1\ u_1}{\sigma_1}{t_2\ 
    u_2}{\sigma_2}{\phi}} {\jrhol{\Gamma}{\Psi}{t_1}{\tau_1\to \sigma_1}{t_2}{\tau_2\to
      \sigma_2}{\phi\subst{\res\ltag}{\res\ltag\ u_1}\subst{\res\rtag}{\res\rtag\ u_2}}}
\and
\infer[\sf APP-ARG]{\jrhol{\Gamma}{\Psi}{t_1\ u_1}{\sigma_1}{t_2\ u_2}{\sigma_2}{\phi}}
      {\jrhol{\Gamma}{\Psi}{u_1}{\tau_1}{u_2}{\tau_2}{\phi\subst{\res\ltag}{t_1\ \res\ltag}
          \subst{\res\rtag}{t_2\ \res\rtag}}}
\and
\infer[\sf PAIR-FST]{\jrhol{\Gamma}{\Psi}{\pair{t_1}{u_1}}{\tau_1\times\sigma_1}{\pair{t_2}
      {u_2}}{\tau_2\times\sigma_2}{\phi}}
        {\jrhol{\Gamma}{\Psi}{t_1}{\tau_1}{t_2}{\tau_2}
          {\phi\defsubst{\pair{\res\ltag}{u_1}}{\pair{\res\rtag}{u_2}}}}
\and
\infer[\sf LLCASE-A]
  {\jrhol{\Gamma}{\Psi}{\caselist{t_1}{u_1}{v_1}}{\sigma_1}{\caselist{t_2}{u_2}{v_2}}{\sigma_2}{\phi}}
  {\begin{array}{c}
      \jrhol{\Gamma}{\Psi}{t_1}{\listt{\tau_1}}{t_2}{\listt{\tau_2}}{\top} \\
      \jrhol{\Gamma}{\Psi, t_1=\nil, t_2=\nil}{u_1}{\sigma_1}{u_2}{\sigma_2}{\phi} \\
      \jrhol{\Gamma}{\Psi, t_2 = \nil}{v_1}{\tau_1\to\listt{\tau_1}\to\sigma_1}{u_2}{\sigma_2}{\forall h_1 l_1. t_1 = \cons{h_1}{l_1} \Rightarrow \phi\subst{\res\ltag}{\res\ltag\ h_1\ l_1}} \\
      \jrhol{\Gamma}{\Psi,t_1 = \nil}{u_1}{\tau_1\to\listt{\tau_1}\to\sigma_1}{v_2}{\tau_2\to\listt{\tau_2}\to\sigma_2}
            {\\ \forall h_2. t_2 = \cons{h_2}{l_2} \Rightarrow \phi\subst{\res\rtag}{\res\rtag\ h_2\ l_2}} \\
      \jrhol{\Gamma}{\Psi}{v_1}{\tau_1\to\listt{\tau_1}\to\sigma_1}{v_2}{\tau_2\to\listt{\tau_2}\to\sigma_2}
            \\{\forall h_1 h_2 l_1 l_2. t_1 = \cons{h_1}{l_1} \Rightarrow t_2 = \cons{h_2}{l_2} \Rightarrow \phi\defsubst{\res\ltag\ h_1\ l_1}{\res\rtag\ h_2\ l_2}}
   \end{array}}

\end{mathpar}
\end{mdframed}
\caption{Some derived rules}\label{fig:derived}
\end{figure*}

\subsection{Discussion}
\label{ssec:rhol-dr}
Our choice of the rules is guided by the practical considerations of
being able to verify examples easily, without specifically aiming for
minimality or exhaustiveness. In fact, many of our rules can be
derived from others, or reduced to a more elementary form. For
instance:
\begin{itemize}
\item The structural rules to reason about logical connectives, such
  as \rname{$\wedge_I$}, can be derived by induction on the length of
  derivations with the help of \rname{SUB}.
\item The \rname{VAR-L} (similarly, \rname{NIL-L}) rule can be weakened, without affecting the
  strength of the system,
$$\infer[\sf VAR{-}L]
      {\jrhol{\Gamma}{\Psi}{x_1}{\sigma_1}{x_2}{\sigma_2}{\phi}}
      {\phi\subst{\res\ltag}{x_1} \in \Psi &  \res\rtag\not\in\ FV(\phi)}$$

\item The premise of the \rname{VAR} rule (and similarly, \rname{NIL})
  can be changed to $\phi\subst{\res}{x} \in \Psi$.  We can recover
  the original rule by one application of \rname{SUB}.
  
\item The rules \rname{APP-FUN} and \rname{APP-ARG} in
  Figure~\ref{fig:derived} (adapted from \citet{GNFS16}) can be
  derived from the rule \rname{APP}. To derive \rname{APP-FUN},
  instantiate $\phi'$ to $\res\ltag = u_1 \wedge \res\rtag = u_2$ in
  \rname{APP}. To derive \rname{APP-ARG}, we have to prove a trivial
  condition $\forall x_1 x_2. \phi\defsubst{t_1\ x_1}{t_2\ x_2}
  \Rightarrow \phi\defsubst{t_1\ x_1}{t_2\ x_2}$ on $t_1,t_2$.
  
\item The \rname{PAIR-FST} and \rname{PAIR-SND} rules in
  Figure~\ref{fig:derived} can be derived in a similar way. These
  rules overcome a limitation of the original \rname{PAIR} rule,
  namely, that the relations for the two components of the pair must
  be independent. \rname{PAIR-FST} and \rname{PAIR-SND} allow
  relating, for instance, pairs of integers $\langle m_1, n_1\rangle$
  and $\langle m_2, n_2\rangle$ such that $m_1+n_1=m_2+n_2$.
  
\item The \rname{LLCASE-A} rule can be used to reason about case
  constructs when the terms over which we discriminate do not
  necessarily reduce to the same branch. It is equivalent to applying
  \rname{CASE-L} followed by \rname{CASE-R}.
\end{itemize}

\subsection{Meta-theory}
\label{ssec:rhol-mt}
RHOL retains the expressiveness of HOL, as formalized in the following
theorem.
\begin{thm}[Equivalence with HOL] \label{thm:equivhol} For every context 
  $\Gamma$, simple types $\sigma_1$ and $\sigma_2$, terms $t_1$ and
  $t_2$, set of assertions $\Psi$ and assertion $\phi$, if
  $\jlc{\Gamma}{t_1}{\sigma_1}$ and $\jlc{\Gamma}{t_2}{\sigma_2}$, then
  the following are equivalent:
  \begin{itemize}
  \item
    $\jrhol{\Gamma}{\Psi}{t_1}{\sigma_1}{t_2}{\sigma_2}{\phi}$
  \item
    $\jhol{\Gamma}{\Psi}{\phi\subst{\res\ltag}{t_1}\subst{\res\rtag}{t_2}}$
  \end{itemize}
\end{thm}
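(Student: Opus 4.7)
My plan is to mirror the proof of Theorem~\ref{thm:equiv-uhol-hol} (equivalence of UHOL with HOL), treating the two directions separately and relying crucially on that earlier result to handle the one-sided and unary fall-back rules of RHOL.

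For the forward direction, I would proceed by induction on the derivation of $\jrhol{\Gamma}{\Psi}{t_1}{\sigma_1}{t_2}{\sigma_2}{\phi}$, showing in each case that $\jhol{\Gamma}{\Psi}{\phi\subst{\res\ltag}{t_1}\subst{\res\rtag}{t_2}}$ is derivable. For the two-sided rules (\rname{APP}, \rname{ABS}, \rname{PAIR}, \rname{PROJ$_i$}, \rname{CONS}, \rname{NIL}, \rname{VAR}, \rname{LISTCASE}, \ldots) the inductive hypotheses give HOL judgments about the subterms, and the HOL conclusion follows by the usual introduction, elimination, and case-analysis rules of HOL, together with the \rname{CONV} rule to justify the $\beta\iota$-reductions created by substituting concrete terms back into the refinement formula. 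For the structural rules (\rname{SUB}, \rname{$\wedge_I$}, \rname{$\Rightarrow_I$}, \ldots) the corresponding HOL rules apply almost directly. For \rname{UHOL-L} and \rname{UHOL-R}, I would invoke Theorem~\ref{thm:equiv-uhol-hol} to turn the UHOL premise into a HOL derivation of the same substituted formula. The one-sided rules \rname{APP-L}, \rname{ABS-L}, \rname{CONS-L}, etc., are then handled by combining the inductive hypothesis on the RHOL premise with an application of Theorem~\ref{thm:equiv-uhol-hol} to the unary side-condition.

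For the reverse direction, I would first establish, by a separate induction on $t_1$ and $t_2$ using the two-sided introduction rules instantiated with $\phi = \top$ (or, equivalently, by combining Theorem~\ref{thm:equiv-uhol-hol} with \rname{UHOL-L}), that $\jrhol{\Gamma}{\Psi}{t_1}{\sigma_1}{t_2}{\sigma_2}{\top}$ holds whenever $\jlc{\Gamma}{t_1}{\sigma_1}$ and $\jlc{\Gamma}{t_2}{\sigma_2}$. Then, given $\jhol{\Gamma}{\Psi}{\phi\subst{\res\ltag}{t_1}\subst{\res\rtag}{t_2}}$, the HOL formula $\top \Rightarrow \phi\defsubst{t_1}{t_2}$ is immediate, so a single application of \rname{SUB} with $\phi' = \top$ yields the desired RHOL judgment.

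I expect the main obstacle to be the \rname{LETREC} rule in the forward direction. Its RHOL premise equips the two bodies with an inductive hypothesis expressing the specification for pairs of arguments that are strictly smaller in the lexicographic order on sizes of the two inductive arguments, and I need to translate this into a corresponding HOL proof that uses the strong induction principles \rname{SLIST} and its naturals analogue. The care required is to package the lexicographic induction as a nested or product induction on the two arguments, guided by the $\wdef{f_i}{x_i}{e_i}$ predicates that guarantee a well-founded interpretation of the recursive calls, so that the syntactic induction hypothesis of RHOL lines up exactly with the HOL-level one before \rname{CONV} unfolds the \textsf{letrec} applications; once this packaging is done, the translation of the body follows directly from the inductive hypothesis, but spelling out the correspondence is the delicate step.
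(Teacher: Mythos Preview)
Your proposal is correct and follows essentially the same route as the paper: induction on the RHOL derivation for the forward direction, and \rname{SUB} applied to the trivial judgment $\jrhol{\Gamma}{\Psi}{t_1}{\sigma_1}{t_2}{\sigma_2}{\top}$ for the reverse direction. One small correction: the order in the \rname{LETREC} rule is not lexicographic but the componentwise order ($(a,b) < (c,d)$ iff $a \leq c$, $b \leq d$, and at least one inequality is strict); the paper handles this exactly as you suggest, via a derived strong double-induction principle in HOL obtained by nesting two applications of the single-variable strong induction rule, so your ``nested or product induction'' packaging is precisely what is needed.
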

The proof of the forward implication proceeds by induction on the
structure of derivations. The proof of the reverse implication is
immediate from the rule \rname{SUB} and the observation that
$\jrhol{\Gamma}{\emptyset}{t_1}{\sigma_1}{t_2}{\sigma_2}{\top}$
whenever $t_1$ and $t_2$ are typable terms of types $\sigma_1$ and
$\sigma_2$ respectively.

This immediately entails soundness of RHOL, which is expressed in the
following result:
\begin{cor}[Set-theoretical soundness and consistency]\label{cor:rhol:sound}
If $\jrhol{\Gamma}{\Psi}{t_1}{\sigma_1}{t_2}{\sigma_2}{\phi}$, then
for every valuation $\rho\models\Gamma$, $\bigwedge_{\psi\in\Psi}
\semexpv{\psi}{\rho}$ implies
$\semexpv{\phi}{\rho\subst{\res\ltag}{\semexpv{t_1}{\rho}},
  \subst{\res\rtag}{\semexpv{t_2}{\rho}}}$. In particular, there is no
proof of $\jrhol{\Gamma}{\emptyset}{t_1}{\sigma_1}{
  t_2}{\sigma_2}{\bot}$ for any $\Gamma$.
\end{cor}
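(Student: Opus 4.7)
The plan is to derive this corollary as a direct consequence of Theorem~\ref{thm:equivhol} (equivalence with HOL) together with the already-established soundness of HOL (the second theorem of \S\ref{sec:hol}). No induction on the RHOL derivation is needed; the two sides of the equivalence do all the work.

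First, I would invoke the forward direction of Theorem~\ref{thm:equivhol}: from a RHOL derivation of $\jrhol{\Gamma}{\Psi}{t_1}{\sigma_1}{t_2}{\sigma_2}{\phi}$ together with the well-formedness premises $\jlc{\Gamma}{t_1}{\sigma_1}$ and $\jlc{\Gamma}{t_2}{\sigma_2}$ (implicit in the well-formedness of the judgment), I obtain a HOL derivation $\jhol{\Gamma}{\Psi}{\phi\subst{\res\ltag}{t_1}\subst{\res\rtag}{t_2}}$. Applying the soundness theorem for HOL, for every valuation $\rho\models\Gamma$ satisfying $\bigwedge_{\psi\in\Psi} \semexpv{\psi}{\rho}$, we obtain $\semexpv{\phi\subst{\res\ltag}{t_1}\subst{\res\rtag}{t_2}}{\rho}$.

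The second step is a standard substitution lemma for the set-theoretic interpretation: for any assertion $\phi$, any well-typed terms $t_1,t_2$, and any valuation $\rho\models\Gamma$,
\[
\semexpv{\phi\subst{\res\ltag}{t_1}\subst{\res\rtag}{t_2}}{\rho} \;=\; \semexpv{\phi}{\rho\subst{\res\ltag}{\semexpv{t_1}{\rho}}\subst{\res\rtag}{\semexpv{t_2}{\rho}}}.
\]
This is proved by a routine induction on the structure of $\phi$, using the corresponding substitution lemma at the term level (which itself follows from the clauses defining $\semexpv{\cdot}{\rho}$). Combined with the previous step, this yields exactly the desired conclusion.

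Consistency then follows: suppose for contradiction that $\jrhol{\Gamma}{\emptyset}{t_1}{\sigma_1}{t_2}{\sigma_2}{\bot}$ is derivable. By the main statement just proved, for every valuation $\rho\models\Gamma$ we would have $\semexpv{\bot}{\rho\subst{\res\ltag}{\semexpv{t_1}{\rho}}\subst{\res\rtag}{\semexpv{t_2}{\rho}}} = \tilde\bot$, which is absurd as soon as some valuation exists. Since every simple type in our fragment is inhabited (the constants $0$, $\tbool$, $\nil$ and constant functions built from them witness this), any $\Gamma$ admits at least one valuation, so the contradiction goes through. The main obstacle, such as it is, is only the bookkeeping of the substitution lemma for the semantics of assertions; once that lemma is stated, the corollary is essentially immediate from Theorem~\ref{thm:equivhol} and HOL soundness.
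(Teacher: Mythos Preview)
Your proposal is correct and matches the paper's approach exactly: the paper states the corollary ``is a direct consequence of the embedding of RHOL into HOL and the soundness of HOL,'' which is precisely the route you take via Theorem~\ref{thm:equivhol} followed by HOL soundness. Your additional remarks on the substitution lemma for the semantics and the inhabitation argument for consistency are reasonable elaborations of details the paper leaves implicit.
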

The equivalence also entails subject conversion (and as special cases
subject reduction and subject expansion). This follows immediately
from subject conversion of HOL (which, as stated earlier, is itself a
direct consequence of the \rname{CONV} and \rname{SUBST} rules).

\begin{cor}[Subject conversion]
Assume that $t_1 =_{\beta\iota\mu} t_1'$ and $t_2 =_{\beta\iota\mu}
t_2'$ and $\jrhol{\Gamma}{\Psi}{t_1}{\sigma_1}{t_2}{\sigma_2}{\phi}$.
If $\Gamma\vdash t'_1:\sigma_1$ and $\Gamma\vdash t'_2:\sigma_2$ then
$\jrhol{\Gamma}{\Psi}{t_1'}{\sigma_1}{t'_2}{\sigma_2}{\phi}$.
\end{cor}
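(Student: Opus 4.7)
The plan is to leverage Theorem~\ref{thm:equivhol} (equivalence between RHOL and HOL) to reduce the statement to subject conversion for HOL, which is indicated in the text as a direct consequence of the \rname{CONV} and \rname{SUBST} rules. The whole argument is a short ``round trip'' through HOL, with no induction needed at the RHOL level.

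The steps I would carry out in order. First, apply the forward direction of Theorem~\ref{thm:equivhol} to the hypothesis $\jrhol{\Gamma}{\Psi}{t_1}{\sigma_1}{t_2}{\sigma_2}{\phi}$; this yields the HOL judgment $\jhol{\Gamma}{\Psi}{\phi\subst{\res\ltag}{t_1}\subst{\res\rtag}{t_2}}$. (Using the equivalence requires $\jlc{\Gamma}{t_1}{\sigma_1}$ and $\jlc{\Gamma}{t_2}{\sigma_2}$, which follow from the well-formedness of the RHOL judgment.) Second, I would establish subject conversion at the level of HOL: from $t_1 =_{\beta\iota\mu} t_1'$ together with $\Gamma\vdash t_1:\sigma_1$ and $\Gamma\vdash t_1':\sigma_1$, the rule \rname{CONV} derives $\jhol{\Gamma}{\Psi}{t_1 = t_1'}$, and then \rname{SUBST} (using $\phi\subst{\res\ltag}{x}\subst{\res\rtag}{t_2}$ as the schematic formula, substituting $t_1$ for $x$) produces $\jhol{\Gamma}{\Psi}{\phi\subst{\res\ltag}{t_1'}\subst{\res\rtag}{t_2}}$. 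An analogous step on the right side (using $t_2 =_{\beta\iota\mu} t_2'$, $\Gamma\vdash t_2:\sigma_2$, and the given $\Gamma\vdash t_2':\sigma_2$) yields $\jhol{\Gamma}{\Psi}{\phi\subst{\res\ltag}{t_1'}\subst{\res\rtag}{t_2'}}$. Third, apply the reverse direction of Theorem~\ref{thm:equivhol}, whose typing side-conditions $\Gamma\vdash t_1':\sigma_1$ and $\Gamma\vdash t_2':\sigma_2$ are precisely the hypotheses of the corollary, to conclude $\jrhol{\Gamma}{\Psi}{t_1'}{\sigma_1}{t_2'}{\sigma_2}{\phi}$.

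The main (minor) obstacle is the bookkeeping around \rname{SUBST}: one must pick the substituted variable and the formula schema carefully so that substituting $t_1$ and $t_1'$ in turn replaces exactly the intended occurrences and does not disturb the right-hand witness $t_2$ (and symmetrically for the second application). Choosing the schema $\phi\subst{\res\ltag}{x}\subst{\res\rtag}{t_2}$ (with $x$ fresh) on the first pass and $\phi\subst{\res\ltag}{t_1'}\subst{\res\rtag}{y}$ on the second makes both applications of \rname{SUBST} completely routine. Everything else is a direct appeal to the equivalence theorem, so no genuine difficulty remains.
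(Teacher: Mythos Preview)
Your proposal is correct and follows exactly the approach the paper indicates: transport the RHOL judgment to HOL via Theorem~\ref{thm:equivhol}, use \rname{CONV} and \rname{SUBST} there to rewrite $t_1$ to $t_1'$ and $t_2$ to $t_2'$, and transport back. The only addition over the paper's one-line justification is your explicit bookkeeping for the \rname{SUBST} schemas, which is fine and entirely routine.
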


Another useful consequence of the equivalence is the admissibility of
the transitivity rule. 
\begin{cor}[Admissibility of transitivity rule] Assume that:
  \begin{itemize}
  \item  $\jrhol{\Gamma}{\Psi}{t_1}{\sigma_1}{t_2}{\sigma_2}{\phi}$
  \item $\jrhol{\Gamma}{\Psi}{t_2}{\sigma_2}{t_3}{\sigma_3}{\phi'}$
 \end{itemize}
  Then, $\jrhol{\Gamma}{\Psi}{t_1}{\sigma_1}{t_3}{\sigma_3}{
    \phi\subst{\res\rtag}{t_2} \land \phi'\subst{\res\ltag}{t_2}}$.
\end{cor}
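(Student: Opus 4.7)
The plan is to reduce the transitivity statement to an easy lemma in HOL via Theorem~\ref{thm:equivhol}, since the equivalence already hides all the syntactic structure of RHOL derivations.

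First I would apply the forward direction of Theorem~\ref{thm:equivhol} to each hypothesis. From $\jrhol{\Gamma}{\Psi}{t_1}{\sigma_1}{t_2}{\sigma_2}{\phi}$ I obtain the HOL judgment $\jhol{\Gamma}{\Psi}{\phi\subst{\res\ltag}{t_1}\subst{\res\rtag}{t_2}}$, and from $\jrhol{\Gamma}{\Psi}{t_2}{\sigma_2}{t_3}{\sigma_3}{\phi'}$ I obtain $\jhol{\Gamma}{\Psi}{\phi'\subst{\res\ltag}{t_2}\subst{\res\rtag}{t_3}}$. Note that these two HOL judgments share the context $\Gamma$ and hypothesis set $\Psi$, so conjunction introduction in HOL yields
\[ \jhol{\Gamma}{\Psi}{\phi\subst{\res\ltag}{t_1}\subst{\res\rtag}{t_2} \wedge \phi'\subst{\res\ltag}{t_2}\subst{\res\rtag}{t_3}}. \]

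Next I would observe that, since $\res\ltag$ and $\res\rtag$ are the distinguished variables reserved for the two sides of a judgment and in particular cannot occur free in $t_1$, $t_2$ or $t_3$ (they never appear in $\Gamma$), the substitutions commute. Hence the conjunction above is syntactically equal to
\[ \bigl(\phi\subst{\res\rtag}{t_2} \wedge \phi'\subst{\res\ltag}{t_2}\bigr)\subst{\res\ltag}{t_1}\subst{\res\rtag}{t_3}, \]
which is exactly the instance of the target conclusion under the substitution dictated by Theorem~\ref{thm:equivhol}. Applying the reverse direction of Theorem~\ref{thm:equivhol} (whose only non-trivial ingredient is \rname{SUB} applied to the trivial judgment $\jrhol{\Gamma}{\emptyset}{t_1}{\sigma_1}{t_3}{\sigma_3}{\top}$) then yields the desired RHOL judgment
\[ \jrhol{\Gamma}{\Psi}{t_1}{\sigma_1}{t_3}{\sigma_3}{\phi\subst{\res\rtag}{t_2} \wedge \phi'\subst{\res\ltag}{t_2}}. \]

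The only real bookkeeping step is the commutation of substitutions in the second paragraph; I expect this to be the main thing to be careful about. Everything else is an immediate application of Theorem~\ref{thm:equivhol} in both directions together with a single use of $\wedge$-introduction in HOL, so no induction on derivations is needed.
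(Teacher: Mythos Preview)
Your proposal is correct and follows exactly the approach the paper intends: the corollary is stated as ``another useful consequence of the equivalence'' (Theorem~\ref{thm:equivhol}) without further proof, and your argument---transpose both premises to HOL, take the conjunction, rearrange the substitutions using that $\res\ltag,\res\rtag$ do not occur in $t_1,t_2,t_3$, and transpose back---is precisely the intended route. The substitution bookkeeping you flag is the only non-immediate step, and your handling of it is sound.
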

Finally, we prove an embedding lemma for UHOL. The proof can be
carried by induction on the structure of derivations, or using the
equivalence between UHOL and HOL (Theorem~\ref{thm:equiv-uhol-hol}).
\begin{lem}[Embedding lemma]\label{lem:emb-uhol-rhol} Assume that:
\begin{itemize}
\item $\juhol{\Gamma}{\Psi}{t_1}{\sigma_1}{\phi}$
\item $\juhol{\Gamma}{\Psi}{t_2}{\sigma_2}{\phi'}$
\end{itemize}
Then
$\jrhol{\Gamma}{\Psi}{t_1}{\sigma_1}{t_2}{\sigma_2}{
  \phi\subst{\res}{\res\ltag}\land \phi'\subst{\res}{\res\rtag}}$.
\end{lem}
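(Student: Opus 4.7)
The plan is to exploit the equivalence between RHOL and HOL (Theorem~\ref{thm:equivhol}) together with the equivalence between UHOL and HOL (Theorem~\ref{thm:equiv-uhol-hol}), since the result is essentially a matter of substitution bookkeeping. Alternatively, a direct syntactic proof using the one-sided embedding rule \rname{UHOL-L} (and its dual \rname{UHOL-R}) together with $\wedge_I$ is also available, and either route should succeed.

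For the semantic-style route, first I would apply Theorem~\ref{thm:equiv-uhol-hol} to each UHOL hypothesis, obtaining $\jhol{\Gamma}{\Psi}{\phi\subst{\res}{t_1}}$ and $\jhol{\Gamma}{\Psi}{\phi'\subst{\res}{t_2}}$. Next I would combine these in HOL by $\wedge$-introduction (derivable via the rules in Figure~\ref{fig:hol}) to get $\jhol{\Gamma}{\Psi}{\phi\subst{\res}{t_1}\land\phi'\subst{\res}{t_2}}$. The key observation is then that, choosing $\res\ltag$ and $\res\rtag$ to be fresh with respect to $\phi$ and $\phi'$ (the usual assumption on the relational variables), we have the syntactic identity
\[
\phi\subst{\res}{t_1}\land\phi'\subst{\res}{t_2}
\;=\;
\bigl(\phi\subst{\res}{\res\ltag}\land\phi'\subst{\res}{\res\rtag}\bigr)\subst{\res\ltag}{t_1}\subst{\res\rtag}{t_2}.
\]
Applying Theorem~\ref{thm:equivhol} in the reverse direction (which uses only \rname{SUB} together with the trivial judgment of $\top$) then yields the desired RHOL judgment.

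The alternative syntactic proof proceeds as follows. I would use $\wedge_I$ to reduce the goal to two subgoals, $\jrhol{\Gamma}{\Psi}{t_1}{\sigma_1}{t_2}{\sigma_2}{\phi\subst{\res}{\res\ltag}}$ and $\jrhol{\Gamma}{\Psi}{t_1}{\sigma_1}{t_2}{\sigma_2}{\phi'\subst{\res}{\res\rtag}}$. Each subgoal's assertion mentions only one of $\res\ltag,\res\rtag$, so \rname{UHOL-L} applies to the first: its premise requires $\juhol{\Gamma}{\Psi}{t_1}{\sigma_1}{(\phi\subst{\res}{\res\ltag})\defsubst{\res}{t_2}}$, which unfolds to $\phi\subst{\res}{\res\ltag}\subst{\res\ltag}{\res}\subst{\res\rtag}{t_2}$. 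Since $\res\rtag$ does not occur in $\phi$ and the successive substitutions $\subst{\res}{\res\ltag}\subst{\res\ltag}{\res}$ cancel on a formula originally written in terms of $\res$, this reduces to $\phi$, which is exactly the first hypothesis. The second subgoal is handled symmetrically using \rname{UHOL-R}.

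The main potential obstacle is purely bureaucratic: verifying the freshness conditions on $\res$, $\res\ltag$, $\res\rtag$ and checking that the compound substitutions $\defsubst{\cdot}{\cdot}$ commute and cancel as expected. Once the variable conventions are fixed, no genuine induction is needed — the lemma is essentially an instance of the two equivalence theorems proved in Sections~\ref{sec:uhol} and~\ref{sec:rhol}.
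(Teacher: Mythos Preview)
Your first route is exactly the paper's proof: apply Theorem~\ref{thm:equiv-uhol-hol} to each hypothesis, combine in HOL with $\wedge_I$, note the substitution identity, and invert via Theorem~\ref{thm:equivhol}. Your alternative syntactic route via \rname{UHOL-L}/\rname{UHOL-R} and $\wedge_I$ is also valid (modulo the evident typo in the paper that the two types in the conclusion of \rname{UHOL-L} should be $\sigma_1$ and $\sigma_2$, not both $\sigma_1$), and the paper explicitly notes that a direct structural argument is possible as well, so nothing is missing.
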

The embedding is reminiscent of the approach of~\citet{BeringerH07} to
encode information flow properties in Hoare logic.

\section{Embeddings}
\label{sec:emb}

In this section, we establish the expressiveness of RHOL and UHOL by
embedding several existing refinement type systems (3 relational and 1
non-relational) from a variety of domains. All embeddings share the
common idea of \emph{separating} the simple typing information from
the more fine-grained refinement information in the translation. We
use uniform notation to represent similar ideas across the different
embeddings.  In particular, we use vertical bars $|\cdot|$ to denote
the erasure of a type into a simple type, and floor bars $\lfloor
\cdot \rfloor$ to denote the embedding of the refinement of a type in
a HOL formula.

For the clarity of exposition, we often present fragments or variants
of systems that appear in the literature, notably excluding recursive
functions that do not satisfy our well-definedness predicate.
Moreover, the embeddings are given for a version of RHOL \`a la Curry,
in which $\lambda$-abstractions do not carry the type of their bound
variable.


\subsection{Refinement types}
\label{ssec:uhol-rt}

\begin{figure*}
  \begin{mdframed}
    \begin{mathpar}

    \inferrule{\Gamma \vdash \tau}{\Gamma \vdash \tau \preceq \tau} \and
    \inferrule{\Gamma \vdash \tau_1 \preceq \tau_2 \\ \Gamma \vdash \tau_2 \preceq \tau_3}{\Gamma \vdash \tau_1 \preceq \tau_3} \and
    \inferrule{\Gamma \vdash \tau_1  \preceq \tau_2}{\Gamma \vdash \listt{\tau_1} \preceq \listt{\tau_2}} \and
    \inferrule{\Gamma \vdash \reftype{x}{\tau}{\phi}}{\Gamma \vdash \reftype{x}{\tau}{\phi} \preceq \tau} \and
    \inferrule{\Gamma \vdash \tau \preceq \sigma \\ \Gamma, \res:\tau \vdash \phi}{\Gamma \vdash \tau \preceq \reftype{x}{\sigma}{\phi}} \and
    \inferrule{\Gamma \vdash \sigma_2 \preceq \sigma_1 \\ \Gamma, x:\sigma_2 \vdash \tau_1 \preceq \tau_2}
              {\Gamma \vdash \Pi (x:\sigma_1).\tau_1 \preceq \Pi (x:\sigma_2).\tau_2} \\
    \inferrule{{}}{\Gamma, x:\tau \vdash x : \tau} \and
    \inferrule{\Gamma, x:\tau \vdash t : \sigma}{\Gamma \vdash \lambda x. t : \Pi(x:\tau).\sigma} \and
    \inferrule{\jreft{\Gamma}{t_1}{\pitype{x}{\tau}{\sigma}} \\ \jreft{\Gamma}{t_2}{\tau}}{\jreft{\Gamma}{t_1\ t_2}{\sigma\subst{x}{t_2}}} \and
    \inferrule{\jreft{\Gamma}{t}{\listt{\tau}} \\ \jreft{\Gamma}{t_1}{\sigma} \\ \jreft{\Gamma}{t_2}{\pitype{h}{\tau}{\pitype{l}{\listt{\tau}}{\sigma}}}}
              {\jreft{\Gamma}{{\rm case}\ t\ {\rm of}\ \nil \mapsto t_1 ; \cons{\_}{\_} \mapsto t_2}{\sigma}} \and
    \inferrule{\Gamma \vdash \tau \preceq \sigma \\ \jreft{\Gamma}{t}{\tau}}{\jreft{\Gamma}{t}{\sigma}} \and
    \inferrule{\jreft{\Gamma, x: \tau, f:\Pi(\reftype{y}{\tau}{y<x}).\sigma\subst{x}{y}}{t}{\sigma} \\ \wdef{f}{x}{t}}
              {\jreft{\Gamma}{\letrec{f}{x}{t}}{\pitype{x}{\tau}{\sigma}}}
                                    
    \end{mathpar}
  \end{mdframed}
  \caption{Refinement types rules (subtyping and typing)}
  \label{fig:ref-types}
\end{figure*}

Refinement types \citep{FreemanP91,Swamy+16,VazouSJVJ14} are a variant
of simple types where for every basic type $\tau$, there is a type
$\{x:\tau \mid \phi\}$ which is inhabited by the elements $t$ of
$\tau$ that satisfy the logical assertion $\phi\subst{x}{t}$. This
includes dependent refinements $\pitype{x}{\tau}{\sigma}$, in which
the variable $x$ is also bound in the refinement formulas appearing in
$\sigma$. Here we present a simplified variant of these systems.
(Refined) types are defined by the grammar
\begin{align*}
  \tau \:&:=\: \bool \mid \nat \mid \listt{\tau} \mid \reftype{x}{\tau}{\phi} \mid \pitype{x}{\tau}{\tau}
\end{align*}

As usual, we shorten $\pitype{x}{\tau}{\sigma}$ to $\tau\to\sigma$ if
$x \not\in FV(\sigma)$. We also shorten bindings of the form $x
:\reftype{x}{\tau}{\phi}$ to $\reftype{x}{\tau}{\phi}$.  Typing rules
are presented in Figure~\ref{fig:ref-types}; note that the
\rname{LETREC} rule requires that recursive definitions satisfy the
well-definedness predicate. Judgments of the form $\Gamma \vdash \tau$
are well-formedness judgments. Judgments of the form $\Gamma \vdash
\phi$ are logical judgments; we omit a formal description of the
rules, but assume that the logic of assertions is consistent with HOL,
i.e.\ $\Gamma \vdash \phi$ implies
$\jhol{\erfor{\Gamma}}{\ertype{\Gamma}{}}{\phi}$, where the erasure
functions are defined below.

This system can be embedded into UHOL in a straightforward manner. The
embedding highlights the relation between these two systems,
i.e.\ between logical assertions embedded in the types (as in
refinement types) and logical assertions at the top-level, separate
from simple types (as in UHOL).  The intuitive idea behind the
embedding is therefore to separate refinement assertions from
types. Specifically, from every refinement type we can
obtain a simple type by repeteadly extracting the type $\tau$ from
$\reftype{x}{\tau}{\phi}$. We will denote this extraction by the
translation function $\erfor{\tau}$:
\begin{mathpar}
\erfor{\bool} \defeq \bool \and
\erfor{\nat} \defeq \nat \and
\erfor{\listt{\tau}} \defeq \listt{\erfor{\tau}} \and
\erfor{\reftype{x}{\tau}{\phi}} \defeq \erfor{\tau} \and
\erfor{\pitype{x}{\tau}{\sigma}} \defeq \erfor{\tau}\to\erfor{\sigma}
\end{mathpar}

Since $\erfor{\tau}$ loses refinement information, we define a second
translation that extracts the refinement as a logical predicate over a
variable $x$ that names the typed expression. This second translation
is written $\ertype{\tau}{x}$.

%
\begin{mathpar}
\ertype{\bool}{x},\ertype{\nat}{x} \defeq \top \and
\ertype{\listt{\tau}}{x} \defeq {\rm All}(x, \lambda y.\ertype{\tau}{y}) \and
\ertype{\reftype{y}{\tau}{\phi}}{x} \defeq \ertype{\tau}{x} \wedge \phi\subst{y}{x} \and
\ertype{\pitype{x}{\tau}{\sigma}}{x} \defeq \forall y. \ertype{\tau}{y} \Rightarrow \ertype{\sigma}{xy}
\end{mathpar}
The refinement of simple types is trivial.
If $t$ is an expression of type $\reftype{x}{\tau}{\phi}$, then $t$ must
satisfy both the refinement formula $\phi$ and the refinement of
$\tau$. If $t$ is an expression of type $\pitype{x}{\tau}{\sigma}$,
then it must be the case that for every $x$ satisfying the refinement
of $\tau$, $(t\ x)$ satisfies the refinement of $\sigma$.
The refinement of a list uses the predicate ${\rm All}$, which as
defined in \S\ref{sec:hol}, means that all elements of a list satisfy
a given formula.

The syntax of assertions and expressions is exactly the same as in
HOL, and therefore there is no need for a translation. 
%
%
Embedding of types can be lifted to contexts in the natural way.
%
\begin{mathpar}
\erfor{x:\tau,\Gamma} \defeq x:\erfor{\tau},\erfor{\Gamma} \and
\ertype{x:\tau,\Gamma}{} \defeq \ertype{\tau}{x}, \ertype{\Gamma}{}
\end{mathpar}

To encode judgments, all that remains is to put the previous
definitions together. The main result about embedding typing
judgments is the following:

\begin{thm}\label{thm:ref-types-typ}
If $\jlc{\Gamma}{t}{\tau}$ is derivable in the refinement type system,
then
$\juhol{\erfor{\Gamma}}{\ertype{\Gamma}{}}{t}{\erfor{\tau}}{\ertype{\tau}{\res}}$
is derivable in UHOL.
\end{thm}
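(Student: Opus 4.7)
I would prove this by structural induction on the derivation of $\jlc{\Gamma}{t}{\tau}$ in the refinement type system, translating each refinement typing rule to an application of one or more UHOL rules. The guiding intuition is that every refinement type decomposes into an erased simple type $\erfor{\tau}$ together with a logical predicate $\ertype{\tau}{\res}$, and UHOL's rules happen to separate these two pieces in exactly this way. Before starting, I would prove two companion lemmas: (i) a \emph{substitution lemma} $\ertype{\sigma\subst{x}{u}}{\res} \equiv \ertype{\sigma}{\res}\subst{x}{u}$ and $\erfor{\sigma\subst{x}{u}} = \erfor{\sigma}$, by routine induction on $\sigma$; and (ii) a \emph{subtyping lemma} stating that whenever $\Gamma \vdash \tau \preceq \sigma$, the HOL judgment $\jhol{\erfor{\Gamma},\res:\erfor{\tau}}{\ertype{\Gamma}{},\ertype{\tau}{\res}}{\ertype{\sigma}{\res}}$ holds. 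The subtyping lemma is proved by a separate induction on the subtyping derivation; reflexivity, transitivity, and the refinement-elimination/introduction rules are immediate, and the contravariant $\Pi$-rule requires only quantifier manipulation.

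With these in hand, the main induction proceeds rule by rule. For \rname{VAR}, the UHOL \rname{VAR} rule applies since $\ertype{\tau}{x}$ is among the assumptions. For \rname{ABS}, the IH on the body gives $\juhol{\erfor{\Gamma},x:\erfor{\tau}}{\ertype{\Gamma}{},\ertype{\tau}{x}}{t}{\erfor{\sigma}}{\ertype{\sigma}{\res}}$, and UHOL's \rname{ABS} directly produces the formula $\forall x.\,\ertype{\tau}{x}\Rightarrow\ertype{\sigma}{\res\,x}$, which by definition equals $\ertype{\pitype{x}{\tau}{\sigma}}{\res}$. For \rname{APP}, I apply UHOL's \rname{APP} with $\phi'\defeq\ertype{\tau}{\res}$ and $\phi\defeq\ertype{\sigma}{\res}$, and the substitution lemma identifies $\ertype{\sigma\subst{x}{t_2}}{\res}$ with $\ertype{\sigma}{\res}\subst{x}{t_2}$. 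The list constructor and case-analysis rules use the axiomatisation of $\mathrm{All}$ from \S\ref{sec:hol} to shuffle between a refined list type and a per-element predicate. The \rname{SUB} case is discharged by combining the IH with the subtyping lemma and UHOL's \rname{SUB}.

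The most delicate case is \rname{LETREC}. Here I would apply UHOL's \rname{LETREC} with $\phi'\defeq\ertype{\tau}{\res}$ and $\phi\defeq\ertype{\sigma}{\res}$; its strong-induction hypothesis then reads $\forall m.\,|m|<|x|\Rightarrow\ertype{\tau}{m}\Rightarrow\ertype{\sigma}{f\,m}\subst{x}{m}$, which, after unfolding the embedding of the refined recursive type $\Pi(\reftype{y}{\tau}{y<x}).\sigma\subst{x}{y}$, matches exactly the UHOL context available to the body. The well-definedness side-condition $\wdef{f}{x}{t}$ transfers verbatim.

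The main obstacle, in my view, is the careful bookkeeping of substitution for dependent $\Pi$-types (as in \rname{APP} and \rname{LETREC}), since the erasure $\erfor{\cdot}$ discards the dependency while the refinement $\ertype{\cdot}{\res}$ retains it; this must be reconciled by the substitution lemma. A secondary but conceptual subtlety is that the subtyping lemma must transport judgments of the refinement logic into HOL, which is precisely what the assumption stated at the start of \S\ref{ssec:uhol-rt}---that the refinement logic of assertions is consistent with HOL---enables. Without that compatibility hypothesis the subtyping lemma would fail for rules that invoke refinement-level logical judgments.
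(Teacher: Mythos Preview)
Your proposal is correct and mirrors the paper's proof: the paper proceeds by induction on the refinement-typing derivation, invoking Theorem~\ref{thm:ref-types-sub} (your subtyping lemma) for the \rname{SUB} case and an auxiliary lemma $\ertype{\tau}{e} = \ertype{\tau}{x}\subst{x}{e}$ for fresh $x$ to handle substitutions, then translates each rule to its UHOL counterpart exactly as you describe. Your substitution lemma is phrased slightly differently (you push the substitution through the type rather than through the argument position), but the two formulations serve the same purpose and the case analysis, including \rname{LETREC}, lines up with the paper's.
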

The proof is performed by induction on the structure of derivations,
using as helper result the embedding of subtyping judgments into
HOL. Since it can be proven by induction that, whenever
$\tau\preceq\sigma$, the type extractions $\erfor{\tau}$ and
$\erfor{\sigma}$ coincide, all that needs to be checked is that
$\ertype{\sigma}{}$ is a consequence of $\ertype{\tau}{}$. This is
captured by the following statement.
\begin{thm}\label{thm:ref-types-sub}
If $\Gamma \vdash \tau \preceq \sigma$ is derivable in a refinement
type system, then $\jhol{\erfor{\Gamma},
  x:\erfor{\tau}}{\ertype{\Gamma}{},
  \ertype{\tau}{x}}{\ertype{\sigma}{x}}$ is derivable in HOL.
\end{thm}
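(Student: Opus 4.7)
The plan is to proceed by induction on the derivation of $\Gamma \vdash \tau \preceq \sigma$, discharging each HOL proof obligation using the rules of Figure~\ref{fig:hol}. Before starting the induction, I would record a preliminary lemma: whenever $\Gamma \vdash \tau \preceq \sigma$, the simple-type erasures agree, i.e.\ $\erfor{\tau} = \erfor{\sigma}$. This is itself an easy induction on the subtyping derivation (the only cases that change anything are the refinement-drop and refinement-add rules, which do not alter $\erfor{\cdot}$ since $\erfor{\reftype{x}{\tau}{\phi}} \defeq \erfor{\tau}$). This lemma is what ensures that the target HOL judgment $\jhol{\erfor{\Gamma}, x:\erfor{\tau}}{\ertype{\Gamma}{}, \ertype{\tau}{x}}{\ertype{\sigma}{x}}$ is well-formed (both $\ertype{\tau}{x}$ and $\ertype{\sigma}{x}$ type-check with the same $x$).

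The routine cases are immediate. \emph{Reflexivity} gives the trivial implication $\ertype{\tau}{x}\Rightarrow\ertype{\tau}{x}$. \emph{Transitivity} composes the two implications provided by the induction hypothesis via $\Rightarrow$-elimination. For \emph{refinement drop} $\reftype{x}{\tau}{\phi} \preceq \tau$, the assumption $\ertype{\reftype{y}{\tau}{\phi}}{x}$ unfolds to $\ertype{\tau}{x}\wedge\phi\subst{y}{x}$, from which $\ertype{\tau}{x}$ follows by $\wedge$-elimination. For \emph{refinement add} $\tau \preceq \reftype{x}{\sigma}{\phi}$, I need $\ertype{\sigma}{x}\wedge\phi$; the first conjunct comes from the inductive hypothesis, and the second from the well-formedness premise $\Gamma,\res:\tau \vdash \phi$ combined with the assumption that the type-system's assertion logic is consistent with HOL.

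The two interesting cases are the dependent arrow and the list constructor. For the \emph{arrow}, after unfolding $\ertype{\pitype{x}{\sigma_1}{\tau_1}}{x}$ and $\ertype{\pitype{x}{\sigma_2}{\tau_2}}{x}$, the goal is: from $\forall y.\, \ertype{\sigma_1}{y}\Rightarrow\ertype{\tau_1}{xy}$ derive $\forall y.\, \ertype{\sigma_2}{y}\Rightarrow\ertype{\tau_2}{xy}$. Introducing a fresh $y$ with $\ertype{\sigma_2}{y}$, the contravariant inductive hypothesis $\ertype{\sigma_2}{y}\Rightarrow\ertype{\sigma_1}{y}$ yields $\ertype{\sigma_1}{y}$, hence $\ertype{\tau_1}{xy}$, after which the covariant inductive hypothesis (applied in the extended context $\Gamma,x:\sigma_2$, which is legitimate since its erasure matches) produces $\ertype{\tau_2}{xy}$. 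For the \emph{list} case $\listt{\tau_1}\preceq\listt{\tau_2}$, the goal $\ertype{\listt{\tau_1}}{x}\Rightarrow\ertype{\listt{\tau_2}}{x}$ unfolds to $\mathsf{All}(x,\lambda y.\ertype{\tau_1}{y})\Rightarrow\mathsf{All}(x,\lambda y.\ertype{\tau_2}{y})$. This requires list induction (rule \textsc{LIST}) on $x$: the base case is immediate from the first axiom of $\mathsf{All}$, and in the cons case the assumption about $\cons{h}{t}$ gives $\ertype{\tau_1}{h}$ and $\mathsf{All}(t,\lambda y.\ertype{\tau_1}{y})$, from which the inductive hypothesis of the subtyping derivation yields $\ertype{\tau_2}{h}$, the inner induction hypothesis yields $\mathsf{All}(t,\lambda y.\ertype{\tau_2}{y})$, and the second axiom of $\mathsf{All}$ closes the goal.

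The main obstacle I expect is bookkeeping rather than any deep mathematical difficulty: getting the contravariance right in the arrow case (where the context under which $\tau_1\preceq\tau_2$ is derived mentions $x:\sigma_2$, not $x:\sigma_1$), and carrying out the list induction cleanly using only the axiomatisation of $\mathsf{All}$ provided in Section~\ref{sec:hol}. Both can be handled by careful unfolding of $\ertype{\cdot}{\cdot}$ and appeal to the erasure-agreement lemma.
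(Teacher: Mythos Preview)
Your proposal is correct and follows essentially the same approach as the paper: induction on the subtyping derivation, preceded by the erasure-agreement lemma $\erfor{\tau}=\erfor{\sigma}$, with each case discharged by unfolding $\ertype{\cdot}{\cdot}$ and applying basic HOL rules. One small difference worth noting: your treatment of the list case---invoking the \textsc{LIST} induction rule and the axioms for $\mathsf{All}$---is more thorough than the paper's appendix, which dismisses that case as $\top\vdash\top$ (apparently an artefact of an earlier definition); given the definition $\ertype{\listt{\tau}}{x}\defeq\mathsf{All}(x,\lambda y.\ertype{\tau}{y})$ in the main text, your argument is the one actually required.
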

Soundness of refinement types w.r.t.\ the set-theoretic semantics
follows immediately from Theorem~\ref{thm:ref-types-typ} and the
set-theoretic soundness of UHOL (Corollary~\ref{cor:uhol:sound}).



\subsection{Relational refinement types}
\label{ssec:emb-rrt}
Relational refinement types~\citep{BFGSSZ14,BGGHRS15} are a variant of
refinement types that can be used to express relational properties via
a syntax of the form $\rtref{\res}{\tau}{\phi}$ where $\phi$ is a
relational assertion---i.e.\, it may contain a left and right copy of
$\res$, which are denoted as $\res\ltag$ and $\res\rtag$ respectively,
as well as a left and a right copy of every variable in the context.
In this section, we introduce a simple relational refinement type
system and establish a type-preserving translation to RHOL---we
compare with existing type systems at the end of the paragraph.

The syntax of relational refinement types is given by the grammar:
\begin{equation*}
\begin{array}{r@{\hspace{0.3em}}l}
    \tau ::= & \bool \mid \nat \mid \tau\rightarrow\tau \\
    T, U ::= &  \tau \vbar \listt{T} \vbar \rtprod{x}{T}{U} \vbar \rtref{x}{T}{\phi}
  \end{array}
\end{equation*}

Relational refinement types are naturally ordered by a subtyping
relation $\Gamma \vdash T \tyle U$, where $\Gamma$ is a sequence
of variables declarations of the form $x::U$.

Typing judgments are of the form $\Gamma \vdash t_1 \sim t_2 :: T$.
We present selected typing rules in Figure~\ref{fig:relty}.  Note that
the form of judgments requires that $t_1$ and $t_2$ must have the same
simple type, and the typing rules require that $t_1$ and $t_2$ have
the same structure\footnote{The typing rules displayed in the figure
  will in fact force $t_1$ and $t_2$ to be the same term modulo
  renaming. This is not the case in existing relational refinement
  type systems; however, rules that introduce different terms on the
  right and on the left are very limited; essentially, there is a rule
  similar to \rname{LLCASE-A}, and a rule for reducing in the terms of
  a judgment.}. In the \rname{CASELIST} rule, we require that both
terms reduce to the same branch; the case rule for natural numbers is
similar. The \rname{LETREC} rule uses (a straightforward adaptation
of) the $\wdef{f}{x}{t}$ predicate from our simply-typed language, and
requires that the two recursive definitions perform exactly the same
recursive calls.

Subtyping rules are
the same as in the unary case, and therefore we refer to
Figure~\ref{fig:ref-types} for them (allowing their instantiation for
relational types $T,U$ as well as unary types $\sigma, \tau$).

\begin{figure*}
  \begin{mdframed}
 \begin{mathpar}
  \inferrule*[left=VAR-RT]
   {(x:T) \in \Gamma}
   {\Gamma \vdash x_1 \sim x_2 :: T} \and
  \inferrule*[left=ABS-RT]
   {\Gamma, x :: T \vdash u_1 \sim u_2 :: U}
   {\Gamma \vdash \lambda x_1.u_1 \sim \lambda x_2.u_2 :: \rtprod{x}{T}{U}} \and
  \inferrule*[left=APP-RT]
   {\Gamma \vdash t_1 \sim t_2 :: \rtprod{x}{T}{U} \\
    \Gamma \vdash u_1 \sim u_2 :: T}
   {\Gamma \vdash t_1\ u_1 \sim t_2\ u_2 :: U \subst{x_1}{u_1}\subst{x_2}{u_2}} \and
   \inferrule*[left=NIL]
   {\Gamma \vdash T}
   {\Gamma \vdash \nil \sim \nil :: \listt{T}} \and
   \inferrule*[left=CONS]
   {\Gamma \vdash h_1 \sim h_2 :: T \\
    \Gamma \vdash t_1 \sim t_2 :: \listt{T}}
   {\Gamma \vdash \cons{h_1}{t_1} \sim \cons{h_2}{t_2} :: \listt{T}} \and
   \inferrule*[left=Sub]
   {\Gamma \vdash t_1 \sim t_2 :: T \\
    \Gamma \vdash T \tyle U}
   {\Gamma \vdash t_1 \sim t_2 :: U} \and

  \inferrule*[left=LETREC-RT]
             { \Gamma, x :: T, f :: \Pi(\rtref{y}{T}{(y_1, y_2) < (x_1, x_2)}).U\subst{x}{y} \vdash t_1 \sim t_2 :: U\\\\
     \Gamma \vdash \rtprod{x}{T}{U} \\ \wdef{f}{x}{t}}
   {\Gamma \vdash {\rm letrec}\ f_1\ x_1 = t_1 \sim {\rm letrec}\ f_2\ x_2 = t_2 :: \rtprod{x}{T}{U}} \and
 \inferrule*[left=CASELIST]
            {\Gamma \vdash t_1 \sim t_2 :: \listt{\tau} \\ \Gamma \vdash t_1 = \nil \Leftrightarrow t_2 = \nil \\
              \Gamma \vdash u_1 \sim u_2 :: T \\ \Gamma \vdash v_1 \sim v_2 :: \rtprod{h}{\tau}{\rtprod{t}{\listt{\tau}}{T}}}
            {\Gamma \vdash {\rm case}\ t_1\ {\rm of}\ \nil \mapsto u_1 ; \cons{\_}{\_} \mapsto v_1 \sim
                             {\rm case}\ t_2\ {\rm of}\ \nil \mapsto u_2 ; \cons{\_}{\_} \mapsto v_2 :: T}
 \end{mathpar}
  \end{mdframed}
  \caption{Relational Typing (Selected Rules)}\label{fig:relty} 
\end{figure*}

The embedding of refinement types into UHOL can be adapted to the
relational setting. From each relational refinement type $T$ we can
extract a simple type $\erfor{T}$. On the other hand, we can erase
every relational refinement type $T$ into a relational formula
$\errtype{T}{}$, which is parametrized by two expressions and defined
as follows:

\begin{mathpar}
\errtype{\listt{\tau}}{x_1, x_2} \defeq \bigwedge_{i\in\{1,2\}} {\rm All}(x_i, \lambda y.\ertype{\tau}{y}) \and
\errtype{\listt{T}}{x_1, x_2} \defeq {\rm All2}(x_1, x_2, \lambda y_1.\lambda y_2.\errtype{T}{y_1, y_2}) \and
\errtype{\reftype{y}{\tau}{\phi}}{x_1, x_2} \defeq \bigwedge_{i\in\{1,2\}} \ertype{\tau}{x_i} \wedge \phi\subst{y}{x_i} \and
\errtype{\rtref{y}{T}{\phi}}{x_1, x_2} \defeq \errtype{T}{x_1,x_2} \wedge \phi\subst{y_1}{x_1}\subst{y_2}{x_2} \and
\errtype{\pitype{y}{\tau}{\sigma}}{x} \defeq \bigwedge_{i\in\{1,2\}} \forall y. \ertype{\tau}{y} \Rightarrow \ertype{\sigma}{xy} \and
\errtype{\rtprod{y}{T}{U}}{x_1, x_2} \defeq \forall y_1 y_2. \errtype{T}{y_1, y_2} \Rightarrow \errtype{\sigma}{x_1 y_1,\ x_2 y_2}
\end{mathpar}

The predicate ${\rm All2}$ relates two lists elementwise and is defined axiomatically:
\begin{mathpar}
{\rm All2}([], [], \lambda x_1. \lambda x_2.\phi) \and
\forall h_1 h_2 t_1 t_2. {\rm All}(t_1, t_2, \lambda x_1. \lambda x_2.\phi) \Rightarrow \phi(h_1, h_2) \Rightarrow {\rm All}(\cons{h_1}{t_1}, \cons{h_2}{t_2}, \lambda x_1. \lambda x_2. \phi)
\end{mathpar}

To extend the embedding to contexts, we need to duplicate every variable in them:
\begin{mathpar}
  \erfor{x::T, \Gamma} \defeq x_1,x_2 : \erfor{T}, \erfor{\Gamma} \and
  \errtype{x::T, \Gamma}{} \defeq \errtype{T}{x_1,x_2}, \errtype{\Gamma}{}
\end{mathpar}

Now we state the main result:

\begin{thm}[Soundness of embedding]\label{thm:emb-rrt}
  If $\Gamma \vdash t_1 \sim t_2 :: T$, then
  $\jrhol{\erfor{\Gamma}}{\errtype{\Gamma}{}}{t_1}{\erfor{T}}{t_2}{\erfor{T}}{\errtype{T}{\res\ltag, \res\rtag}}$
  Also, if $\Gamma \vdash T \tyle U$ then
  $\jhol{\erfor{\Gamma},x_1,x_2:\erfor{T}}{\errtype{\Gamma}{}, \errtype{T}{x_1,x_2}}{\errtype{U}{x_1.x_2}}$.
\end{thm}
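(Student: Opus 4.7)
The plan is to prove both parts simultaneously by mutual induction: the typing statement by induction on the derivation of $\Gamma \vdash t_1 \sim t_2 :: T$, and the subtyping statement by induction on the derivation of $\Gamma \vdash T \tyle U$. The only point where the two recursions talk to each other is the \rname{Sub} rule of the relational type system, whose embedding calls the subtyping claim; conversely, the subtyping cases never invoke the typing claim, so the mutual recursion terminates.

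For the typing part I would proceed rule-by-rule. Cases \rname{VAR-RT}, \rname{ABS-RT}, \rname{APP-RT}, \rname{NIL}, and \rname{CONS} map directly onto the RHOL rules of the same name: I apply the inductive hypothesis to each premise and then apply the corresponding RHOL rule, observing that the definition of $\errtype{\cdot}{-,-}$ on $\Pi$-types, refined types, and $\mathsf{list}$-types produces exactly the shape of formula the RHOL rule expects (for $\Pi$-types, a pair of universally quantified arguments related by $\errtype{T}{-,-}$; for $\mathsf{list}$, an $\mathrm{All2}$ predicate that unfolds axiomatically into the $\nil/\cons{}{}$ cases). The \rname{Sub} case invokes the (inductively established) subtyping claim and RHOL's \rname{SUB}. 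The \rname{CASELIST} case translates to RHOL's \rname{LISTCASE}; the premise $t_1 = \nil \Leftrightarrow t_2 = \nil$ of \rname{CASELIST} is precisely the first premise required by \rname{LISTCASE}. Finally, \rname{LETREC-RT} maps to RHOL's \rname{LETREC}, with the refinement $\rtref{y}{T}{(y_1,y_2)<(x_1,x_2)}$ in the hypothesis on $f$ supplying the same induction hypothesis on strictly smaller arguments that \rname{LETREC} adds to $\Psi$; the well-definedness predicate is carried over unchanged.

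For the subtyping part I would induct on the subtyping derivation. Reflexivity and transitivity are immediate in HOL. The rules that introduce or eliminate a top-level refinement $\rtref{y}{T}{\phi}$ follow by unfolding $\errtype{\rtref{y}{T}{\phi}}{x_1,x_2} = \errtype{T}{x_1,x_2} \wedge \phi\subst{y_1}{x_1}\subst{y_2}{x_2}$ and applying propositional reasoning. The $\Pi$-type case uses contravariance: from the inductive hypotheses $\errtype{T_2}{-,-} \Rightarrow \errtype{T_1}{-,-}$ on the domain and $\errtype{U_1}{-,-} \Rightarrow \errtype{U_2}{-,-}$ on the codomain (in the extended context), I instantiate the universally quantified body of the embedding of $\Pi$ and chain the implications. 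The list-subtyping case requires an auxiliary HOL-internal induction on the list (obtainable from \rname{SLIST} or ordinary \rname{LIST}) to lift the inner implication through $\mathrm{All2}$.

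The main obstacle is the list case of the subtyping embedding: promoting the inductive hypothesis $\errtype{T}{-,-} \Rightarrow \errtype{U}{-,-}$ to $\mathrm{All2}(\ell_1,\ell_2,\errtype{T}{-,-}) \Rightarrow \mathrm{All2}(\ell_1,\ell_2,\errtype{U}{-,-})$ requires an induction inside HOL using the axioms defining $\mathrm{All2}$, and an auxiliary observation that $\mathrm{All2}$ implies the two lists have the same length (so that the cons-case applies symmetrically). A secondary subtlety concerns the \rname{LETREC-RT} case, where one must verify that the well-founded order on pairs used in RHOL's \rname{LETREC} matches the order used in the refinement on $f$'s argument; this is by construction but must be explicitly checked. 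All remaining cases are bookkeeping on the erasure functions and routine applications of the RHOL rules.
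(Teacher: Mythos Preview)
Your proposal is correct and follows essentially the same approach as the paper: induction on the structure of derivations, mapping each relational-refinement rule to its RHOL counterpart and each subtyping rule to a HOL implication. One small refinement: in the \rname{CASELIST} case the paper (appendix) actually uses the admissible rule \rname{LISTCASE*} rather than \rname{LISTCASE}, because the inductive hypothesis on the scrutinees $t_1\sim t_2 :: \listt{\tau}$ supplies a $\phi'$ (the embedded list predicate) that is needed to discharge the hypotheses $\errtype{\tau}{h_1,h_2}$ and $\errtype{\listt{\tau}}{t_1,t_2}$ in the cons branch---plain \rname{LISTCASE} only gives you $l_i=\cons{h_i}{t_i}$, not the relational information about the head and tail. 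This is a minor bookkeeping point and does not affect the overall structure of your argument.
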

\begin{proof} The proof proceeds by induction on the structure of derivations. 
\end{proof}
Soundness of relational refinement types w.r.t.\, set-theoretical
semantics follows immediately from Theorem~\ref{thm:emb-rrt} and the
set-theoretical soundness of RHOL (Corollary~\ref{cor:rhol:sound}).

\begin{cor}[Soundness of relational refinement types]
If $\Gamma \vdash t_1 \sim t_2 :: T$, then for every valuation
$\theta\models\Gamma$ we have
$(\rinterp{\theta}{t_1},\rinterp{\theta}{t_2})\in \rinterp{\theta}{T}$.
\end{cor}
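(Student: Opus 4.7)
The plan is to chain together the two results that the author explicitly flags just before the statement: the embedding theorem (Theorem~\ref{thm:emb-rrt}) takes the refinement-typing derivation into an RHOL derivation, and the set-theoretical soundness of RHOL (Corollary~\ref{cor:rhol:sound}) turns that RHOL derivation into a semantic statement about the interpretations of $t_1$ and $t_2$. The only real work is to match up the semantic statement produced by RHOL soundness with the asserted semantic statement $(\rinterp{\theta}{t_1},\rinterp{\theta}{t_2}) \in \rinterp{\theta}{T}$ about the relational refinement type interpretation.

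Concretely, I would proceed as follows. First, assume $\Gamma \vdash t_1 \sim t_2 :: T$ and apply Theorem~\ref{thm:emb-rrt} to obtain the RHOL derivation
\[
\jrhol{\erfor{\Gamma}}{\errtype{\Gamma}{}}{t_1}{\erfor{T}}{t_2}{\erfor{T}}{\errtype{T}{\res\ltag, \res\rtag}}.
\]
Next, given a valuation $\theta \models \Gamma$ for the refinement context, I would build a corresponding valuation $\rho \models \erfor{\Gamma}$ for the erased context by splitting each $\theta(x)$ into its two components and assigning them to $x_1$ and $x_2$ respectively (this is well-formed because $\errtype{\Gamma}{}$ reflects exactly the refinement constraints on $\theta$). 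Apply Corollary~\ref{cor:rhol:sound} to this derivation and valuation: the hypotheses $\bigwedge_{\psi \in \errtype{\Gamma}{}} \semexpv{\psi}{\rho}$ hold by construction of $\rho$ from $\theta \models \Gamma$, so we obtain
\[
\semexpv{\errtype{T}{\res\ltag, \res\rtag}}{\rho\subst{\res\ltag}{\semexpv{t_1}{\rho}}\subst{\res\rtag}{\semexpv{t_2}{\rho}}}.
\]

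The remaining step, which I expect to be the main technical obstacle, is a semantic bridging lemma: for every relational refinement type $T$, every valuation $\theta$ for a context $\Gamma$, and every pair of values $v_1, v_2 \in \sem{\erfor{T}}$,
\[
(v_1, v_2) \in \rinterp{\theta}{T} \quad\Longleftrightarrow\quad \semexpv{\errtype{T}{x_1,x_2}}{\rho\subst{x_1}{v_1}\subst{x_2}{v_2}}.
\]
This lemma is proved by induction on the structure of $T$: the base and function cases are direct from the definition of $\errtype{\cdot}{\cdot,\cdot}$; the list case uses the axiomatic characterisation of $\mathrm{All2}$ together with the inductive hypothesis; the refinement case $\rtref{y}{T'}{\phi}$ reduces to combining the inductive hypothesis with the agreement between the interpretation of $\phi$ and the refinement of $T'$. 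An analogous context-level statement links $\theta \models \Gamma$ with $\rho \models \erfor{\Gamma}$ together with the assumptions $\errtype{\Gamma}{}$.

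Once this bridging lemma is established, the conclusion is immediate: instantiating it at $T$ with $v_i = \semexpv{t_i}{\rho} = \rinterp{\theta}{t_i}$ (using that the erasure is semantics-preserving on terms, since terms are untouched by the embedding) converts the RHOL-soundness conclusion into $(\rinterp{\theta}{t_1}, \rinterp{\theta}{t_2}) \in \rinterp{\theta}{T}$, which is what we wanted. The induction in the bridging lemma is the only genuinely new reasoning; everything else is bookkeeping or a direct appeal to the already-stated theorems.
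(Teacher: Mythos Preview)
Your proposal is correct and follows exactly the route the paper indicates: the paper states (without further detail) that the corollary ``follows immediately from Theorem~\ref{thm:emb-rrt} and the set-theoretical soundness of RHOL (Corollary~\ref{cor:rhol:sound}),'' and your plan simply unpacks what ``immediately'' means here. The semantic bridging lemma you isolate---relating $\rinterp{\theta}{T}$ to the HOL interpretation of $\errtype{T}{x_1,x_2}$---is the implicit glue the paper leaves to the reader; your structural induction on $T$ is the right way to discharge it.
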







\subsection{Dependency core calculus}
\label{sec:dcc}

The Dependency Core Calculus (DCC)~\citep{AbadiBHR99} is a
higher-order calculus with a type system that tracks data
dependencies. DCC was designed as a unifying framework for dependency
analysis and it was shown that many other calculi for information flow
analysis~\citep{HeintzeR98,volpano-smith-irvine}, binding-time
analysis~\citep{binding-analysis}, and program slicing, all of which
track dependencies, can be embedded in DCC. Here, we show how a
fragment of DCC can be embedded into RHOL. Transitively, the
corresponding fragments of all the aforementioned calculi can also be
embedded in RHOL. (The fragment of DCC we consider excludes recursive
functions. DCC admits general recursive functions, while our
definition of RHOL only admits a subset of these. Extending the
embedding to recursive functions admitted by RHOL is not difficult.)

DCC is an extension of the simply typed lambda-calculus with a monadic
type family $\ttmonad{\ell}(\tau)$, indexed by \emph{labels} $\ell$,
which are elements of a lattice. Unlike other uses of monads, DCC's
monad does not encapsulate any effects. Instead, its only goal is to
track dependence. The type system forces that the result of an
expression of type $\ttmonad{\ell}(\tau)$ can \emph{depend} on an
expression of type $\ttmonad{\ell'}(\tau')$ only if $\ell' \sqsubseteq
\ell$ in the lattice. Dually, if $\ell' \not\sqsubseteq \ell$, then
even if an expression $e$ of type $\ttmonad{\ell}(\tau)$ mentions a
variable $x$ of type $\ttmonad{\ell'}(\tau')$, then $e$'s result must
be \emph{independent} of the substitution provided for $x$ during
evaluation.

For simplicity and without any loss of generality, we consider here
only a two point lattice $\{L, H\}$ with $L \sqsubset H$. The syntax
of DCC's types and expressions is shown below. We use $e$ to denote
DCC expressions, to avoid confusion with HOL's expressions.
\[\begin{array}{lll}
\tau & ::= & \bool \mid \tau \rightarrow \tau \mid \tau \times \tau \mid
\ttmonad{\ell}(\tau)\\
e & ::= & x \mid \lambda x.e \mid e_1\ e_2 \mid \tbool \mid \fbool
\mid \casebool{e}{e_t}{e_f} \mid \pair{e_1}{e_2} \mid \pi_1(e) \mid
\pi_2(e) \mid \eret{\ell}(e) \mid \ebind(e_1, x.e_2)
\end{array}
\]
Here, $\eret{\ell}(e)$ and $\ebind(e_1, x.e_2)$ are respectively the
return and bind constructs for the monad
$\ttmonad{\ell}(\tau)$. Typing rules for these two constructs are
shown below. Typing rules for the remaining constructs are the
standard ones.
\begin{mathpar}
\inferrule{\Gamma \vdash e: \tau}
      {\Gamma \vdash \eret{\ell}(e) : \ttmonad{\ell}(\tau)}
\and
\inferrule{\Gamma \vdash e_1: \ttmonad{\ell}(\tau_1) \qquad
  \Gamma, x: \tau_1 \vdash e_2: \tau_2 \qquad \tau_2 \searrow \ell}
      {\Gamma \vdash \ebind(e_1, x.e_2): \tau_2}
\end{mathpar}
The crux of the dependency tracking is the relation $\tau_2 \searrow
\ell$ in the premise of the rule for $\ebind$. The relation, read
``$\tau_2$ protected at level $\ell$'' and defined below, informally
means that all primitive (boolean) values extractable from $e_2$ are
protected by a monadic construct of the form $\ttmonad{\ell'}(\tau)$,
with $\ell \sqsubseteq \ell'$. Hence, the rule forces that the result
obtained by eliminating the type $\ttmonad{\ell}(\tau_1)$ flow only
into types protected at $\ell$ in this sense.
\begin{mathpar}
  \inferrule{\ell \sqsubseteq \ell'}{\ttmonad{\ell'}(\tau) \searrow \ell} \and
  \inferrule{\tau \searrow \ell}{\ttmonad{\ell'}(\tau) \searrow \ell} \and
  \inferrule{\tau_1 \searrow \ell \qquad \tau_2 \searrow \ell}{\tau_1 \times \tau_2 \searrow \ell}\and
  \inferrule{\tau_2 \searrow \ell}{\tau_1 \rightarrow \tau_2 \searrow \ell}
\end{mathpar}
This fragment of DCC has a relational set-theoretic
interpretation. For every type $\tau$, we define a carrier set
$\erase{\tau}$:
\begin{mathpar}
  \erase{\bool} \defeq \bool \and
  \erase{\tau_1 \rightarrow \tau_2} \defeq \erase{\tau_1} \rightarrow \erase{\tau_2} \and
  \erase{\tau_1 \times \tau_2} \defeq \erase{\tau_1} \times \erase{\tau_2} \and
  \erase{\ttmonad{\ell}(\tau)} \defeq \erase{\tau}
\end{mathpar}
Next, every type $\tau$ is interpreted as a lattice-indexed family of
relations $\dccform{\adversary}{\tau} \subseteq \erase{\tau} \times
\erase{\tau}$. The role of the lattice element $\adversary$ is that it
defines what can be \emph{observed} in the system. Specifically, an
expression of type $\ttmonad{\ell}(\tau)$ can be observed only if
$\ell \sqsubseteq \adversary$. When $\ell\not\sqsubseteq \adversary$,
expressions of type $\ttmonad{\ell}(\tau)$ look like
``black-boxes''. Technically, we force
$\dccform{\adversary}{\ttmonad{\ell}(\tau)} = \erase{\tau} \times
\erase{\tau}$ when $\ell \not\sqsubseteq \adversary$. DCC's typing
rules are sound with respect to this model. The soundness implies that
if $\ell \not \sqsubseteq \ell'$ and $x: \ttmonad{\ell}(\bool) \vdash
e: \ttmonad{\ell'}(\bool)$, then for $e_1, e_2:
\ttmonad{\ell}(\bool)$, $e[e_1/x]$ and $e[e_2/x]$ are equal booleans
in the set-theoretic model. This result, called noninterference,
formalizes that DCC's dependency tracking is correct.

To translate DCC to RHOL, we actually embed this set-theoretic model
in RHOL. We start by defining an erasing translation, $\erase{\tau}$,
from DCC's types into RHOL's simple types. This translation is exactly
the same as the definition of carrier sets shown above, except that we
treat $\times$ and $\to$ as RHOL's syntactic type constructs instead
of set-theoretic constructs. Next, we define an erasure of terms:
\begin{mathpar}
  \erase{\tbool} \defeq \tbool \and
  \erase{\fbool} \defeq \fbool \and
  \erase{\casebool{e}{e_t}{e_f}} \defeq \casebool{\erase{e}}{\erase{e_t}}{\erase{e_f}} \and 
  \erase{x} \defeq x \and
  \erase{\lambda x.e} \defeq \lambda x. \erase{e} \and
  \erase{e_1\ e_2} \defeq \erase{e_1}\ \erase{e_2} \and
  \erase{\pair{e_1}{e_2}} \defeq \pair{\erase{e_1}}{\erase{e_2}} \and
  \erase{\pi_1(e)} \defeq \pi_1(\erase{e}) \and
  \erase{\pi_2(e)} \defeq \pi_2(\erase{e}) \and
  \erase{\eret{\ell}(e)} \defeq \erase{e} \and
  \erase{\ebind(e_1, x.e_2)} \defeq (\lambda x.\erase{e_2})\ \erase{e_1}
\end{mathpar}
It is fairly easy to see that if $\vdash e: \tau$ in DCC, then $\vdash
\erase{e}: \erase{\tau}$. Next, we define the lattice-indexed family
of relations $\dccform{\adversary}{\tau}$ in HOL. For technical
convenience, we write the relations as logical assertions, indexed by
variables $x, y$ representing the two terms to be related.
%
\begin{mathpar}
  \dccform{\adversary}{\bool}(x,y) \defeq (x=\tbool \wedge y = \tbool)
  \vee (x=\fbool \wedge y=\fbool) \and
  \dccform{\adversary}{\tau_1 \rightarrow \tau_2}(x,y) \defeq
  \forall v, w.\, \dccform{\adversary}{\tau_1}(v,w) \Rightarrow
  \dccform{\adversary}{\tau_2}(x \ v, y\ w) \and
  \dccform{\adversary}{\tau_1 \times \tau_2}(x,y) \defeq
  \dccform{\adversary}{\tau_1}(\pi_1(x), \pi_1(y)) \wedge
  \dccform{\adversary}{\tau_2}(\pi_2(x), \pi_2(y)) \and 
  \dccform{\adversary}{\ttmonad{\ell}(\tau)}(x,y) \defeq
  \left\lbrace\begin{array}{ll}
  \dccform{\adversary}{\tau}(x,y) & \ell \sqsubseteq \adversary \\
  \top & \ell \not\sqsubseteq \adversary
  \end{array}
  \right.
\end{mathpar}
The most important clause is the last one: When $\ell \not\sqsubseteq
\adversary$, any two $x, y$ are in the relation
$\dccform{\adversary}{\ttmonad{\ell}(\tau)}$. This generalizes to all
protected types in the following sense.
\begin{lem}\label{lem:dcc:confinement}
If $\ell \not\sqsubseteq \adversary$ and $\tau \searrow \ell$, then
$\vdash \forall x,y. (\dccform{\adversary}{\tau}(x,y) \equiv \top)$ in HOL.
\end{lem}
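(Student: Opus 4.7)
The plan is to prove the lemma by induction on the derivation of $\tau \searrow \ell$, which has exactly four rules (one for $\ttmonad{\ell'}$ with $\ell \sqsubseteq \ell'$, one for $\ttmonad{\ell'}$ with a protected subtype, one for products, and one for functions). In each case I unfold the definition of $\dccform{\adversary}{\tau}$ and reduce the goal to a HOL tautology, invoking the induction hypothesis when the protection comes from a subtype.

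First I handle the base case, i.e.\ the rule concluding $\ttmonad{\ell'}(\tau) \searrow \ell$ from $\ell \sqsubseteq \ell'$. Since $\ell \not\sqsubseteq \adversary$, transitivity of $\sqsubseteq$ forces $\ell' \not\sqsubseteq \adversary$ as well (otherwise $\ell \sqsubseteq \ell' \sqsubseteq \adversary$, contradiction). The second clause of the definition of $\dccform{\adversary}{\ttmonad{\ell'}(\tau)}$ therefore applies, giving $\top$ directly. Next, for the rule concluding $\ttmonad{\ell'}(\tau) \searrow \ell$ from $\tau \searrow \ell$, I split on whether $\ell' \sqsubseteq \adversary$: if not, the monadic clause again gives $\top$; if so, the definition unfolds to $\dccform{\adversary}{\tau}(x,y)$, which equals $\top$ by the induction hypothesis applied to $\tau \searrow \ell$ (the side condition $\ell \not\sqsubseteq \adversary$ is unchanged).

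For the product case $\tau_1 \times \tau_2 \searrow \ell$ with premises $\tau_i \searrow \ell$, the induction hypotheses give $\dccform{\adversary}{\tau_i}(u,v) \equiv \top$ for all $u,v$; instantiating at $(\pi_i(x),\pi_i(y))$ and conjoining yields $\top \wedge \top \equiv \top$, as required by the product clause of the definition. For the function case $\tau_1 \rightarrow \tau_2 \searrow \ell$ with premise $\tau_2 \searrow \ell$, the induction hypothesis gives that the consequent of the implication inside the defining $\forall$ is universally $\top$, so the whole assertion reduces to $\forall v,w.\,\dccform{\adversary}{\tau_1}(v,w) \Rightarrow \top$, which is $\top$.

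No step is genuinely hard: the argument is a pure induction on the $\searrow$ derivation, and the only subtlety is the transitivity observation in the base case that upgrades the hypothesis $\ell \not\sqsubseteq \adversary$ to $\ell' \not\sqsubseteq \adversary$. Everything else is routine unfolding of $\dccform{\adversary}{-}$ and rewriting in HOL using the tautologies $\top \wedge \top \equiv \top$ and $\psi \Rightarrow \top \equiv \top$, all of which are immediately derivable from the HOL rules of Figure~\ref{fig:hol}.
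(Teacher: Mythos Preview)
Your proposal is correct and essentially identical to the paper's own proof: both proceed by induction on the derivation of $\tau \searrow \ell$, handle the four rules in the same way, use the same transitivity argument to obtain $\ell' \not\sqsubseteq \adversary$ in the base case, and perform the same case split on $\ell' \sqsubseteq \adversary$ in the second monad rule.
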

The translations extend to contexts as follows:
\[\begin{array}{c}
\erase{x^1: \tau_1, \ldots, x^n: \tau_n} \defeq x^1_1: \erase{\tau_1}, x^1_2: \erase{\tau_1}, \ldots, x^n_1: \erase{\tau_n}, x^n_2: \erase{\tau_n}\\
\dccform{\adversary}{x^1: \tau_1, \ldots, x^n: \tau_n} \defeq
\dccform{\adversary}{\tau_1}(x^1_1, x^1_2), \ldots,
\dccform{\adversary}{\tau_n}(x^n_1, x^n_2)
\end{array}
\]

The following theorem states that the whole translation is sound: It
preserves well-typedness. In the statement of the theorem,
$\erase{e}_1$ and $\erase{e}_2$ replace each variable $x$ in
$\erase{e}$ with $x_1$ and $x_2$, respectively.

\begin{thm}[Soundness of embedding]\label{thm:dcc:embed}
If $\Gamma \vdash e: \tau$ in DCC, then for all $\adversary \in \{L,
H\}$:
$\jrhol{\erase{\Gamma}} {\dccform{\adversary}{\Gamma}} {\erase{e}_1} {\erase{\tau}} {\erase{e}_2} {\erase{\tau}} {\dccform{\adversary}{\tau}(\res\ltag,\res\rtag)}$.
\end{thm}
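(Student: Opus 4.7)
The plan is to prove the statement by induction on the DCC typing derivation $\Gamma \vdash e : \tau$, uniformly for every adversary $\adversary \in \{L, H\}$. For the standard constructs (variables, $\lambda$-abstraction, application, boolean constants, case-analysis on booleans, pairs, and projections), the erasure $\erase{\cdot}$ commutes with the term structure, so each case reduces to applying the corresponding two-sided RHOL rule from Figure~\ref{fig:twosided} on the two copies of the erased term. The inductive hypotheses supply relational judgments on the subterms whose conclusions match the premises of these rules, because the clauses defining $\dccform{\adversary}{\bool}$, $\dccform{\adversary}{\tau_1 \times \tau_2}$ and $\dccform{\adversary}{\tau_1 \to \tau_2}$ are tailored to mirror the semantic content of \rname{TRUE}/\rname{LISTCASE}-style, \rname{PAIR}/\rname{PROJ$_i$}, and \rname{ABS}/\rname{APP} respectively. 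Note that boolean case-analysis in DCC only eliminates unprotected booleans (monadic booleans must be eliminated via \ebind), so the two branches really are synchronized, as required for a two-sided \rname{CASE} rule.

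The substantive work is in the two monadic constructs. For $\eret{\ell}(e)$, the erasure is simply $\erase{e}$, so the IH already yields a RHOL judgment whose conclusion is $\dccform{\adversary}{\tau}(\res\ltag,\res\rtag)$. When $\ell \sqsubseteq \adversary$, this is definitionally equal to $\dccform{\adversary}{\ttmonad{\ell}(\tau)}$ and we are done. When $\ell \not\sqsubseteq \adversary$, we finish by one application of \rname{SUB} that weakens the conclusion to $\top$, which by definition is $\dccform{\adversary}{\ttmonad{\ell}(\tau)}$ in that regime.

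For $\ebind(e_1, x.e_2)$, the erasure produces the application $(\lambda x.\erase{e_2})\,\erase{e_1}$ on both sides. When $\ell \sqsubseteq \adversary$, we combine the two-sided \rname{APP} and \rname{ABS} rules: the IH on $e_1$ yields $\dccform{\adversary}{\tau_1}$ on the two copies (since $\dccform{\adversary}{\ttmonad{\ell}(\tau_1)} = \dccform{\adversary}{\tau_1}$ under this assumption), and the IH on $e_2$, used inside \rname{ABS} with the added hypothesis $\dccform{\adversary}{\tau_1}(x_1,x_2)$, matches the premise of \rname{APP}. The main obstacle is the other regime, $\ell \not\sqsubseteq \adversary$: here the IH on $e_1$ yields only the trivial relation $\top$, which cannot be chained through a two-sided \rname{APP} to derive anything non-trivial about the result of the bind. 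This is precisely the role of Lemma~\ref{lem:dcc:confinement}: the DCC side-condition $\tau_2 \searrow \ell$ together with $\ell \not\sqsubseteq \adversary$ entails $\dccform{\adversary}{\tau_2}(\res\ltag,\res\rtag) \equiv \top$ in HOL. Hence we can discharge the goal by first obtaining a trivial derivation at conclusion $\top$ (for instance via Lemma~\ref{lem:emb-uhol-rhol} applied to two unary typings of the erased bind at refinement $\top$, which are immediate) and then applying \rname{SUB} with the confinement lemma as the HOL side-condition.

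Throughout, the relational simply-typed context $\erase{\Gamma}$ consists of two copies $x_1, x_2 : \erase{\tau}$ of each DCC binding $x : \tau$, and the proof implicitly uses that $\vdash \erase{e}_i : \erase{\tau}$ in this context whenever $\Gamma \vdash e : \tau$ in DCC, which is a straightforward separate induction on $e$. The only non-routine ingredient is the confinement lemma; everything else is bookkeeping that follows the definitions of $\dccform{\adversary}{\cdot}$ and the shape of the RHOL rules.
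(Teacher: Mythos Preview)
Your proposal is correct and follows the paper's proof almost exactly: induction on the DCC derivation, with each construct handled by the matching two-sided RHOL rule, and the confinement lemma (Lemma~\ref{lem:dcc:confinement}) invoked in the $\ell \not\sqsubseteq \adversary$ subcase of \ebind. The one noteworthy deviation is in that subcase: the paper still analyzes the term structurally---applying \rname{ABS} to the body, using the confinement lemma and \rname{SUB} to rewrite the function's postcondition to $\forall x_1 x_2.\,\top \Rightarrow \top$, and then closing with \rname{APP} against the trivial argument judgment---whereas you bypass the structure entirely, producing a RHOL judgment at $\top$ for the whole erased bind and then invoking \rname{SUB} once with the confinement lemma. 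Your shortcut is sound (any well-typed pair of terms admits a RHOL judgment at $\top$) and arguably cleaner; the paper's version has the minor virtue of keeping the two subcases of \ebind\ uniform.
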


DCC's noninterference theorem is a corollary of this theorem and the
soundness of RHOL in set theory.



\subsection{Relational cost}
\label{ssec:emb-rc}
RelCost~\citep{CBGGH17} is a relational refinement type-and-effect
system designed to reason about relative cost---the difference in the
execution costs of two similar programs or of two runs of the same
program on two different inputs. RelCost combines reasoning about the
maximum and minimum costs of a single program with relational
reasoning about the relative cost of two programs. RelCost is based on
the observation that relational reasoning about structurally related
expressions can improve precision in reasoning about the relative
cost, but if this approach fails one can always fall back to
establishing an upper bound on the relative cost the difference of the
maximum cost of one program and the minimum cost of the other.  Here,
we show how a fragment of RelCost can be embedded into RHOL. Similar
to what we did for DCC, to just convey the main intuition, we consider
a fragment of RelCost excluding recursive functions.  The syntax of
RelCost is based on two sorts of types:
\[
\begin{array}{rclr}
A &::=& \nat \mid \listt{A}[n] \mid A \uarr{k}{l} A \mid
        \tforall{i}{k}{l} A & \text{(unary types)}\\
\tau &::=& \nat_r \mid \listt{\tau}[n]^\alpha \mid
  \tau \tarrd{k} \tau \mid \tforalld{i}{k} \tau \mid U A \mid \tbox
           \tau 
& \text{(relational types)}
\end{array}
\]
Unary types are used to type one program and they are mostly standard
except for the effect annotation $\mbox{exec}(k,l)$ on arrow types and
universally quantified types representing the min and max cost $k$ and
$l$ of the body of the closure, respectively. Relational types ascribe
two programs, so they are interpreted as pairs of expressions.  In
relational types, arrow types and universally quantified types have an
effect annotation $\mbox{diff}(k)$ representing the relative cost $k$
of the two closures. Besides, the superscript $\alpha$ refines list
types with the number of elements that can differ in two lists. The
type $U A$ is the weakest relation over elements of the unary type
$A$, i.e.\ it can be used to type two arbitrary terms, while the type
$\tbox \tau$ is the diagonal subrelation of $\tau$, i.e.\ it can be
used to type only two terms that are equal.
\begin{figure*}
  \begin{mdframed}
 \begin{mathpar}
 \inferrule*[left=var]
 {\Omega(x) = \grt }{\octx \jtype{0}{0}{x}{\grt}}
 \and
 \inferrule*[left=const]
  {
  }
  {
    \octx \jtype{0}{0}{\econst}{\tint}
  }
  \and
  \inferrule*[left=lam]
    {\Delta; \Phi_a;  x: \grt_1, \Omega  \jtype{k}{l}{t}{\grt_2} }
    {
      \octx
      \jtype{0}{0}{ \lambda x.t }{\grt_1 \uarr{k}{l}
        \grt_2}
    }
 \and
 \inferrule*[left=app]{
   \octx \jtype{k_1}{l_1}{t_1}{\grt_1\uarr{k}{l} \grt_2} \\
   \octx \jtype{k_2}{l_2}{t_2}{\grt_1}}
 {
   \octx \jtype{k_1+k_2+k+\kapp}{l_1+l_2+l+\kapp}{ t_1 \, t_2}{
     \grt_2}}
\end{mathpar}
   \hrule
 \begin{mathpar} 
 \inferrule*[left=r-var]
  {\Gamma(x) = \tau}{\ctx \jtypediff{0}{x}{x}{\tau}}
  \and
  \inferrule*[left=r-const]{\ }
  {\ctx \jtypediff{0}{\econst}{\econst}{\trint}}
  \and
  \inferrule*[left=r-cons1]
  { \ctx \jtypediff{l_1}{t_1}{t_1'}{\tau} \\
    \ctx \jtypediff{l_2}{t_2}{t_2'}{\tlist{n}{\alpha}{\tau}}
  }
  { \ctx \jtypediff{l_1+l_2}{ \econs(t_1,t_2) }{ \econs(t_1',t_2')}{\tlist{n+1}{\alpha+1}{\tau}}}
  \and
  \inferrule*[left=r-cons2]
  { \ctx \jtypediff{l_1}{t_1}{t_1'}{\tbox{\tau}} \\
    \ctx \jtypediff{l_2}{t_2}{t_2'}{\tlist{n}{\alpha}{\tau}}}
  { \ctx \jtypediff{l_1+l_2}{ \econs(t_1,t_2) }{ \econs(t_1',t_2')}{\tlist{n+1}{\alpha}{\tau}}}
  \and
  \inferrule*[left=r-caseL]
  {\ctx \jtypediff{l}{t}{t'}{\tlist{n}{\alpha}{\tau}}\\
   \Delta; \Phi_a \wedge n = 0  ;\Gamma \jtypediff{l'}{t_1}{t_1'}{\tau'}\\
   i, \Delta; \Phi_a \wedge n = i+1;  h: \tbox{\tau},
   tl : \tlist{i}{\alpha}{\tau}, \Gamma \jtypediff{l'}{t_2}{t_2'}{\tau'}\\
   i, \beta, \Delta; \Phi_a \wedge n = i+1 \wedge \alpha = \beta +1 ;
   h: \tau, tl : \tlist{i}{\beta}{\tau}, \Gamma \jtypediff{l'}{t_2}{t_2'}{\tau'} 
   }
  {\ctx \jtypediff{l+l'}{ \ecase t \eof \enil \rightarrow t_1 ~|~ h::tl \rightarrow
    t_2}{\ecase t' \eof \enil \rightarrow t_1' ~|~ h::tl \rightarrow
    t_2'}{\tau'}}
 \and
 \inferrule*[left=r-lam]{
       \Delta; \Phi_a;  x: \tau_1, \Gamma  \jtypediff{l}{t_1}{t_2}{\tau_2}}
            {\ctx \jtypediff{0}{ \lambda x.t_1}{\lambda x.t_2}{
              \tau_1 \tarrd{l} \tau_2}}
 \and
 \inferrule*[left=r-app]
 {\ctx \jtypediff{l_1}{t_1}{t_1'}{\tau_1}
  \tarrd{l} \tau_2 \\\\
 \ctx \jtypediff{l_2}{ t_2}{t_2'}{\tau_1}}
  {\ctx \jtypediff{l_1+l_2+l}{t_1 \, t_2}{t_1' \, t_2'}{\tau_2}}
 \and
\inferrule*[left=r-iLam]{
    i::S, \ctx \jtypediff{l}{t}{t'}{\tau} \\\\
    i \not \in \fiv{\Phi_a; \Gamma} 
  }
  { \ctx \jtypediff{0}{\eLam t}{\eLam t'}{\tforalld{i}{l} \tau}}
 \and
\inferrule*[left=r-iApp]{
    \ctx \jtypediff{l}{t}{t'}{\tforalld{i}{l'} \tau}\\\\
    \Delta \vdash J : S
  }
  {\ctx \jtypediff{l+l'[J/i]}{t \eApp}{t' \eApp}{\tau\{ J/ i\}}}
 \and
  \inferrule*[left=nochange]
{
\Delta; \Phi_a; \Gamma \jtypediff{l}{t}{t}{\tau}\\\\
\forall x \in dom(\Gamma).~~
\Delta; \Phi_a  \jsubtype{\Gamma(x)}{\tbox{\Gamma(x)}}
}
{
\Delta; \Phi_a; \Gamma, \Gamma'; \Omega \jtypediff{0}{t}{t}{\tbox{\tau}}
}
 \and
    \inferrule*[left=switch]{
    \Delta; \Phi_a; \overline{\Gamma} \jtype{k_1}{l_1}{t_1}{\grt}\\
    \Delta; \Phi_a; \overline{\Gamma} \jtype{k_2}{l_2}{t_2}{\grt}
    } {\ctx \jtypediff{l_1-k_2}{t_1}{t_2}{\tcho{\grt}}}
 \end{mathpar}
  \end{mdframed}
  \caption{RelCost Unary and Relational Typing (Selected Rules)}\label{fig:relcost} 
\end{figure*}
There are two kinds of typing judgments, unary and relational:
\begin{equation*}
\Delta; \Phi; \Omega \jtype{k}{l}{t}{A}
\qquad
\Delta; \Phi; \Gamma \jtypediff{l}{t_1}{t_2}{\tau}
\end{equation*}
The unary judgment states that the execution cost of $t$ is lower
bounded by $k$ and upper bounded by $l$, and the expression $t$ has
the unary type $A$. The relational judgment states that the relative
cost of $t_1$ with respect to $t_2$ is upper bounded by $l$ and the
two expressions have the relational type $\tau$. Here $\Omega$ is a
unary type environment, $\Gamma$ is a relational type environment,
$\Delta$ is an environment for index variables and $\Phi$ for assumed
constraints over the index terms. Figure~\ref{fig:relcost} shows
selected rules.

To embed RelCost in RHOL, we define a monadic-style cost-instrumented
translation of RelCost types. The translation is given in two-steps:
First, we define an erasure of cost and size information into simple
types and then we define a cost-passing style translation of simple
types with a value-translation and an expression-translation.  The
erasure function is defined as follows:
\begin{mathpar}
\erase{\nat}\defeq \erase{\nat_r}\defeq\nat

\erase{\listt{A}[n]} \defeq \erase{\listt{A}[n]^\alpha} \defeq \listt{\erase{A}}

\erase{UA}\defeq \erase{\tbox A}\defeq\erase{A}


\erase{\tforall{i}{k}{l} A} \defeq \erase{\tforalld{i}{k} A} \defeq \nat \to \erase{A} 

\erase{A \tarrd{k} B}\defeq \erase{A \uarr{k}{l} B}\defeq\erase{A}\to \erase{B}
\end{mathpar}
The cost-passing style translation of \emph{simple} types is 
\begin{mathpar}
\cost{\nat}_v\defeq\nat

\cost{\listt{A}}_v \defeq \listt{\cost{A}_v} 

\cost{A \to  B}_v\defeq\cost{A}_v\to \cost{B}_e

\cost{A}_e \defeq \cost{A}_v\times \nat  
\end{mathpar}
Guided by the translation of types above we can provide a
cost-instrumented translation of simply-typed $\lambda$-expressions
(Figure~\ref{fig:cost-translation}). This translation maps an
expression of the simple type $\tau$ to an expression of type $\tau
\times \nat$, where the second component is the number of reduction
steps under an eager, call-by-value reduction strategy (which is the
semantics of RelCost). It is fairly easy to see that this translation
preserves typability and that it counts steps accurately.

However, this translation forgets the cost and size information in
types. To recover these, we define a HOL formula for every unary
type. But, first, we define axiomatically a predicate ${\sf
  listU}(n,l,P)$ that captures size information about lists:
\begin{mathpar}
\forall l,P.{\sf listU}(0,l,P) \equiv l=[]
\and
\forall n, l, P.{\sf listU}(n+1,l,P) \equiv
  \exists w_1, w_2. l = \cons{w_1}{w_2} \land P(w_1) \land {\sf listU}(n, w_2, P)
\end{mathpar}
We can now define a HOL formula inductively on unary types.
\begin{mathpar}
\costt{\nat}_v(x)\defeq\top

\costt{\listt{A}[n]}_v(x) \defeq {\sf listU}(n, x, \costt{A}_v)


\costt{A \uarr{k}{l} B}_v(x)\defeq\forall y.\costt{A}_v(y)\Rightarrow \costt{B}_e^{k,l}(xy)

\costt{\tforall{i}{k}{l} A}_v(x) \defeq \forall y. \top \Rightarrow \forall
i. \costt{A}_e^{k,l}(xy)

\costt{A}_e^{k,l}(x) \defeq \costt{A}_v(\pi_1 x)\land k\leq \pi_2 x\leq l
\end{mathpar}
The type translation can also be extended to type environments:
$\cost{\erase{x_1:A_1,\ldots,x_n:A_n}}=x_1:\cost{\erase{A_1}}_v,\ldots,x_n:\cost{\erase{A_n}}_v$
Similarly, we can associate to a type environment an HOL context that
we can use to recover the cost and size information:
$\costt{x_1:A_1,\ldots,x_n:A_n}=\costt{A_1}_v(x_1),\ldots,\costt{A_n}_v(x_n).$
\begin{figure*}
\begin{mdframed}
\begin{mathpar}
\cost{x} \defeq (x,0)

\cost{\lambda x.t} \defeq (\lambda x. \cost{t}, 0)

\cost{\Lambda .t} \defeq (\lambda \_. \cost{t}, 0)

\cost{t u} \defeq \mathsf{let}\, x=\cost{t} \, \mathsf{in}\, \mathsf{let}\, y=\cost{u}\, \mathsf{in}\, \mathsf{let}\, z=\pi_1(x)\  \pi_1(y)\, \mathsf{in}\, (\pi_1(z),\pi_2(x) + \pi_2(y) + \pi_2(z) +c_{app})   

\cost{t[]} \defeq \mathsf{let}\, x=\cost{t} \, \mathsf{in}\, \mathsf{let}\, y = \pi_1(x)\ 0\, \mathsf{in}\, (\pi_1(y),\pi_2(x) + \pi_2(y) + c_{iapp})     

\cost{\mathsf{nil}} \defeq (\mathsf{nil},0)

\cost{\mathsf{cons}(t_1,t_2)} \defeq \mathsf{let}\, x=\cost{t_1} \, \mathsf{in}\, \mathsf{let}\, y=\cost{t_2}\, \mathsf{in}\, (\cons{\pi_1(x)}{\pi_1(y)} , \pi_2(x) + \pi_2(y))

\cost{\ecase t' \eof \enil \rightarrow t_1' ~|~ h::tl \rightarrow
    t_2'} \defeq \left \{ \begin{array}{l}
\mathsf{let}\, x=\cost{t'} \, \mathsf{in}\, 
\ecase \pi_1(x) \eof\\
 \enil \rightarrow 
\mathsf{let}\, y=\cost{t_1'} \, \mathsf{in}\, 
(\pi_1(y), \pi_2(x)+\pi_2(y)+c_{case})\\  ~|~ h::tl \rightarrow
\mathsf{let}\, y=\cost{t_2'} \, \mathsf{in}\,
(\pi_1(y), \pi_2(x)+\pi_2(y)+c_{case})
\end{array} \right .

\end{mathpar}
\end{mdframed}
\caption{Cost-instrumented translation of expressions.}
\label{fig:cost-translation}
\end{figure*}
Now we can provide a cost-instrumented translation of unary judgments.
\begin{thm}\label{thm:unaryembed}
If $\Delta; \Phi; \Omega \jtype{k}{l}{t}{A}$, then: $
\juhol{\cost{\erase{\Omega}},\Delta}{\Phi,
  \costt{\Omega}}{\cost{t}}{\cost{\erase{A}}_e}{\costt{A}_e^{k,l}(\res)}
$
\end{thm}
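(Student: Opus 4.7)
The plan is to proceed by induction on the derivation of $\Delta; \Phi; \Omega \jtype{k}{l}{t}{A}$, matching each RelCost unary typing rule against a derived UHOL combination on the translated term $\cost{t}$. Two tools drive all cases: (i) subject conversion for UHOL, which lets us reduce $\pi_i(v,n)$ and $\beta$-redexes produced by the let-sugaring of $\cost{\cdot}$; and (ii) the rule \rname{SUB}, which discharges the arithmetic side-conditions of the form $0 \leq 0 \leq 0$ or $k_1 + k_2 + k + c_{app} \leq \pi_2(x) + \pi_2(y) + \pi_2(z) + c_{app} \leq l_1 + l_2 + l + c_{app}$ that arise from composing cost bounds.

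Before doing the induction I would establish a compositional let-lemma. Reading $\mathsf{let}\,x = e_1\,\mathsf{in}\,e_2$ as $(\lambda x.e_2)\,e_1$, the lemma says: if $\juhol{\Gamma}{\Psi}{e_1}{\cost{A}_e}{\costt{A}_e^{k_1,l_1}(\res)}$ and $\juhol{\Gamma, x:\cost{A}_v}{\Psi, \costt{A}_v(x)}{e_2}{\cost{B}_e}{\costt{B}_v(\pi_1 \res) \land k_2 \leq \pi_2 \res \leq l_2}$, then the let satisfies the composed bound $\costt{B}_v(\pi_1 \res) \land k_1 + k_2 \leq \pi_2 \res \leq l_1 + l_2$. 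This is a direct consequence of \rname{APP}, \rname{ABS}, and \rname{SUB} in UHOL, and it factors out the arithmetic bookkeeping that reappears in the application, cons, case, and indexed-application cases.

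With this in hand, I handle each rule in Figure~\ref{fig:relcost}. For \textsf{var} and \textsf{const}, \rname{VAR} of UHOL together with subject conversion on $(x,0)$ delivers the conclusion: $\costt{A}_v(x)$ is already in $\costt{\Omega}$ and $0 \leq 0 \leq 0$ is immediate. For \textsf{lam}, apply \rname{ABS} of UHOL and invoke the IH on the body; the outer cost $0 \leq \pi_2(\res) \leq 0$ holds since $\cost{\lambda x.t} = (\lambda x. \cost{t}, 0)$, and after conversion the remaining goal unfolds into $\costt{A \uarr{k}{l} B}_v(\pi_1 \res)$. For \textsf{app}, apply the let-lemma three times along $\cost{t_1}$, $\cost{t_2}$, and the inner application $\pi_1(x)\,\pi_1(y)$: the first two steps use the IH, while the cost of $\pi_1(x)\,\pi_1(y)$ is obtained by instantiating $\costt{A_1 \uarr{k}{l} A_2}_v(\pi_1 x)$ at $\pi_1 y$; a final use of \rname{SUB} absorbs the additive $c_{app}$. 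The list cases (\textsf{r-cons} and \textsf{r-case}) follow the same let-composition pattern, with the case rule using \rname{LISTCASE} of UHOL and unfolding the axioms of $\mathsf{listU}$ to mediate between the size index $n$ and the structure of the list.

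The main obstacle is the application case (and its indexed-application analogue): the cost of $t_1\,t_2$ must be pieced together additively from three independent cost bounds, and the nested let-bindings in $\cost{t_1\,t_2}$ mean that the relevant post-condition is naturally quantified over the intermediate names $x, y, z$. Organizing the proof around the let-lemma above keeps the arithmetic localized to a single \rname{SUB}, so the only nontrivial HOL proof obligation is the ordered inequality on sums, which follows from linear arithmetic on the IH bounds.
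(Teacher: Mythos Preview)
Your approach is correct and matches the paper's: both proceed by induction on the RelCost unary typing derivation, handling \textsf{var}, \textsf{const}, \textsf{lam}, \textsf{app} via the corresponding UHOL rules (\rname{VAR}, \rname{PAIR}, \rname{ABS}, \rname{APP}) together with \rname{SUB} for the arithmetic side-conditions. One minor imprecision: in your let-lemma, $x$ is bound to the whole pair $\cost{t}$, so its type should be $\cost{\erase{A}}_e$, not $\cost{\erase{A}}_v$, and the additive bound $k_1+k_2 \le \pi_2\,\res \le l_1+l_2$ does not come from the let alone but from the explicit cost-summation written into the translation---so the lemma as stated needs the body $e_2$ to have the specific shape $(\ldots,\pi_2\,x + \ldots)$ for the conclusion to hold. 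With that adjustment, your factorization is a clean repackaging of exactly what the paper does inline in the \textsf{app} case.
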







For the embedding of cost and size information in the relational case
we first define a predicate  ${\sf listR}(n,l_1,l_2,a, P)$ in HOL
axiomatically:
\begin{mathpar}
\forall l_1,l_2,a,P.{\sf listR}(0,l_1,l_2,a,P) \equiv l_1=l_2=[]
\and
\forall n, l_1,l_2,a,P.{\sf listR}(n+1,l_1,l_2,a,P) \equiv
\begin{array}{c}\exists w_1,z_1,w_2,z_2. 
l_1=w_1::w_2\land l_2=z_1::z_2\land P(w_1,z_1)\land \\
( ( (w_1=z_1)\land {\sf listR}(n,w_2,z_2,a,P) )\lor\\
(a > 0 \land \exists b.\, a = b + 1 \land {\sf listR}(n,w_2,z_2,b,P))
)
\end{array}
\end{mathpar}

Let $\overline{\tau}$ denote RelCost's erasure of the binary type
$\tau$ to a unary type.\footnote{In RelCost, this erasure is written
  $|\tau|$. We use a different notation to avoid confusion with our
  own erasure function from RelCost's types to simple types.} This
erasure maps $\listt{\tau}[n]^\alpha$ to $\listt{\overline{\tau}}[n]$,
$\tau \tarrd{l} \sigma$ to $\overline{\tau} \uarr{0}{\infty}
\overline{\sigma}$, etc. Next, we define HOL formulas for the binary
types.
\begin{mathpar}
\rcostt{\nat}_v(x,y)\defeq x=y
\and
\rcostt{UA}_v(x,y)\defeq \costt{A}_v(x)\land \costt{A}_v(y)
\and
\rcostt{\tbox \tau}_v(x,y)\defeq (x=y) \land (\rcostt{\tau}_v(x,y))
\and
\rcostt{\tau \tarrd{l} \sigma}_v(x,y)\defeq
\begin{array}{l}
  \costt{\overline{\tau} \uarr{0}{\infty} \overline{\sigma}}_v (x) \land
  \costt{\overline{\tau} \uarr{0}{\infty} \overline{\sigma}}_v (y) \land 
  (\forall z_1,z_2.\rcostt{\tau}_v(z_1,z_2)\Rightarrow \rcostt{\sigma}_e^l(xz_1,yz_2))
\end{array}
\and
\rcostt{\tforalld{i}{l} \tau}_v(x,y) \defeq
\begin{array}{l}
  \costt{\tforall{i}{0}{\infty}\overline{\tau}}_v(x) \land
  \costt{\tforall{i}{0}{\infty}\overline{\tau}}_v(y) \land (\forall
  z_1 z_2. \top \Rightarrow \forall i. \rcostt{\tau}_e^{l}(x z_1,y z_2))
\end{array}
\and
\rcostt{\listt{\tau}[n]^\alpha}_v(x,y) \defeq
{\sf listR}(n,x,y,\alpha,\rcostt{\tau}_v)
\and
\rcostt{\tau}_e^{l}(x,y) \defeq \rcostt{\tau}_v(\pi_1 x,\pi_1 y)\land (\pi_2 x-\pi_2 y \leq l)
\end{mathpar}
The type translation can also be extended to relational type
environments pointwise: $\rcost{x^1:\tau_1,\ldots,x^n:\tau_n} \defeq
x^1\ltag:\cost{\erase
  {\tau_1}}_v,x^1\rtag:\cost{\erase{\tau_1}}_v,\ldots,x^n\ltag:\cost{\erase{\tau_n}}_v,x^n\rtag:\cost{\erase{\tau_n}}_v$
We also need to derive from a type relational environment an HOL
context that remembers the cost and size information:
$\rcostt{x^1:\tau_1,\ldots,x^n:\tau_n} {\defeq}
\rcostt{\tau_1}_v(x^1\ltag,x^1\rtag),\ldots,\rcostt{\tau_n}_v(x^n\ltag,x^n\rtag).$
%
Now we can provide the translation of relational judgments. 
\begin{thm}\label{thm:relationalembed}
If $\Delta; \Phi; \Gamma \jtypediff{l}{t_1}{t_2}{\tau}$, then: $
\jrhol{\rcost{\Gamma},\Delta}{\Phi,\rcostt{\Gamma}}{\cost{t_1}\ltag}{\cost{\erase{\tau}}_e}{\cost{t_2}\rtag}{\cost{\erase{\tau}}_e}{
  \rcostt{\tau}_e^l(\res\ltag,\res\rtag)} $, where $\cost{t_i}_j$ is a
copy of $t_i$ where each variable $x$ is replaced by a variable $x_j$
for $j\in\{1,2\}$.
\end{thm}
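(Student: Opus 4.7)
The plan is to prove this by induction on the derivation of $\Delta; \Phi; \Gamma \jtypediff{l}{t_1}{t_2}{\tau}$, case-splitting on the last rule applied. The proof mirrors, in the relational setting, the proof of Theorem~\ref{thm:unaryembed} for unary judgments. For each "synchronous" relational constructor rule (\textsc{r-var}, \textsc{r-const}, \textsc{r-lam}, \textsc{r-app}, \textsc{r-cons1}, \textsc{r-cons2}, \textsc{r-iLam}, \textsc{r-iApp}), I will apply the matching two-sided rule from Figure~\ref{fig:twosided}, unfold the definitions of $\rcostt{\cdot}_v$ and $\rcostt{\cdot}_e^l$ to expose the HOL proof obligation, and discharge the arithmetic obligations on the cost component (the $\pi_2$ projection) by a final application of \rname{SUB}. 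Since the cost-instrumented translation expands every construct into a cascade of let-bindings, and since $\mathsf{let}\ x = t\ \mathsf{in}\ u$ is sugar for $(\lambda x.u)\ t$, all of these cases ultimately reduce to repeated uses of RHOL's \rname{ABS} and \rname{APP} rules together with \rname{SUB} and \rname{CONV}.

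Two rules require a bit more care. For \textsc{r-caseL}, I will apply RHOL's \rname{LISTCASE}: the side condition that both lists are simultaneously empty is discharged from the fact that they share a common length $n$ in $\listt{\tau}[n]^\alpha$, and the two sub-cases in the non-empty branch (the "equal heads" variant typed by $\tbox{\tau}$ and the "possibly different heads" variant that consumes one unit of $\alpha$) are handled by inspecting the defining clause of ${\sf listR}$ and performing the corresponding disjunction elimination in HOL. For the \textsc{nochange} rule, I will invoke the induction hypothesis, observe by a simple induction on types that $\rcostt{\tbox\tau}_v(x,y)$ entails $x = y$, and conclude $\rcostt{\tbox\tau}_e^0(\res\ltag,\res\rtag)$ via \rname{SUB}, using the subtyping premise that each free variable's type is already diagonal.

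The only case that essentially uses the interplay between the unary and relational fragments is the \textsc{switch} rule. Given its two unary premises, I will apply Theorem~\ref{thm:unaryembed} to obtain two UHOL derivations, then combine them by the embedding lemma (Lemma~\ref{lem:emb-uhol-rhol}) into a RHOL judgment with conclusion $\costt{A}_e^{k_1,l_1}(\res\ltag) \wedge \costt{A}_e^{k_2,l_2}(\res\rtag)$. One application of \rname{SUB} then weakens this to $\rcostt{U A}_e^{l_1-k_2}(\res\ltag,\res\rtag)$: the value part $\rcostt{UA}_v$ is literally $\costt{A}_v(\pi_1 \res\ltag) \wedge \costt{A}_v(\pi_1 \res\rtag)$, while the cost inequality $\pi_2\res\ltag - \pi_2\res\rtag \leq l_1-k_2$ is immediate in HOL from $\pi_2\res\ltag \leq l_1$ and $k_2 \leq \pi_2\res\rtag$. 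The main obstacle throughout is bookkeeping of the cost arithmetic: because $\cost{\cdot}$ decomposes every construct into nested let-bindings whose $\pi_2$ components must be summed together with the operational constants $c_{app}$, $c_{iapp}$, $c_{case}$, aligning the resulting cost expressions with the RelCost bounds requires carefully stated HOL arithmetic obligations at every \rname{SUB} step. This is tedious but mechanical, exactly as in the unary case.
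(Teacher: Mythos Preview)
Your overall induction structure is right, and your treatment of \textsc{switch} is exactly what the paper does. But there is a genuine gap in the cases you flag as routine. Look at the definition
\[
\rcostt{\tau \tarrd{l} \sigma}_v(x,y)\;=\;
\costt{\overline{\tau} \uarr{0}{\infty} \overline{\sigma}}_v (x)\ \land\
\costt{\overline{\tau} \uarr{0}{\infty} \overline{\sigma}}_v (y)\ \land\
(\forall z_1,z_2.\ \rcostt{\tau}_v(z_1,z_2)\Rightarrow \rcostt{\sigma}_e^l(xz_1,yz_2)),
\]
and similarly for $\tforalld{i}{l}\tau$. The first two conjuncts are \emph{unary} facts about each side in isolation. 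Applying the two-sided \rname{ABS} rule to the induction hypothesis only gives you the third conjunct; it cannot produce the unary ones, because the IH is a relational statement under relational assumptions $\rcostt{\Gamma}$, not a unary statement under $\costt{\overline{\Gamma}}$. So your claim that \textsc{switch} is ``the only case that essentially uses the interplay between the unary and relational fragments'' is wrong: \textsc{r-lam} and \textsc{r-iLam} need it too.

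The paper closes this gap with two auxiliary lemmas that you are missing. First, every relational RelCost derivation $\ctx \jtypediff{t}{e_1}{e_2}{\tau}$ yields unary RelCost derivations $\Delta;\Phi;\overline{\Gamma}\jtype{0}{\infty}{e_i}{\overline{\tau}}$ (Lemma~\ref{lem:relcost:rel-imp-unary-judgment}); feeding these into Theorem~\ref{thm:unaryembed} produces the needed UHOL judgments for the unary conjuncts. Second, to transport those UHOL judgments from the unary context $\costt{\overline{\Gamma}}$ to the relational context $\rcostt{\Gamma}$, one needs $\rcostt{\Gamma}\Rightarrow\costt{\overline{\Gamma}_1}\wedge\costt{\overline{\Gamma}_2}$ (Lemma~\ref{lem:relcost:rel-imp-unary}). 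Without these two lemmas the \textsc{r-lam}/\textsc{r-iLam} cases do not close. Your \textsc{nochange} sketch has a smaller version of the same problem: turning the RelCost subtyping premise $\Gamma(x)\sqsubseteq\tbox{\Gamma(x)}$ into the HOL fact $x_1=x_2$ requires a separate subtyping-soundness lemma (Lemma~\ref{lem:relcost:subtyping}), which you invoke only implicitly.
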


RelCost's type-soundness theorem can be derived from
Theorem~\ref{thm:relationalembed} and the soundness of RHOL in set
theory.

\section{Examples}
\label{sec:ex}

We present some illustrative examples to show how RHOL's rules work in
practice. Our first example shows the functional equivalence of two
recursive functions that are synchronous---they perform the same
number of recursive calls. The second example shows the equivalence of
two asynchronous recursive functions. Our third example illustrates
reasoning about the relative cost of two programs, using an encoding
similar to that of RelCost, but the example cannot be verified in
RelCost itself.


\subsection{First example: factorial}

We show that the two following standard implementations of factorial,
with and without an accumulator, are functionally equivalent:
\begin{align*}
{\rm fact}_1 \:&\defeq\: {\rm letrec}\ f_1 \ n_1 = \casenat{n_1}{1}{\lambda x_1. (S\ x_1)*(f_1\ x_1)} \\
{\rm fact}_2 \:&\defeq\: {\rm letrec}\ f_2\ n_2 = \lambda acc. \casenat{n_2}{acc}{\lambda x_2. f_2\ x_2\ ( (S\ x_2) * acc)}
\end{align*}
Our goal is to prove that:
\[
\jrhol{\emptyset}{\emptyset}{{\rm fact_1}}{\nat\to\nat}{{\rm fact_2}}{\nat\to\nat\to\nat}{\forall n_1 n_2. n_1 = n_2 \Rightarrow \forall acc. (\res\ltag\ n_1)* acc = \res\rtag\ n_2\ acc}
\]
%
%
The proof starts by applying \rname{LETREC} rule, which has its main premise:
\[\begin{array}{c}
     \sjrhol{\Gamma,  n_1,n_2 : \nat, f_1 : \nat\to\nat, f_2 : \nat\to\nat\to\nat}
            {\Psi}
            {\casenat{n_1}{1}{\lambda x_1. (S\ x_1)*(f_1\ x_1)}}{\nat\to\nat\to\nat}
            {\lambda acc. \casenat{n_2}{acc}{\lambda x_2. f_2\ x_2\ ( (S\ x_2) * acc)}}{\nat\to\nat\to\nat\to\nat}
            {\\ \forall acc. \res\ltag * acc = \res\rtag\ acc}
\end{array}\]
where $\Psi \defeq n_1 = n_2, \forall y_1 y_2. (y_1,y_2)<(n_1,n_2) \Rightarrow y_1 = y_2 \Rightarrow \forall a. (f_1\ y_1)* a = f_2\ y_2\ a$.

To prove this, we start by applying the one-sided \rname{ABS-R} rule,
with a trivial condition on $acc$.  Then we can apply a two-sided
\rname{CASE} rule, which has 3 premises:
\begin{itemize}
  \item $\Psi \vdash n_1 = 0 \Leftrightarrow n_2 = 0$
  \item $\sjrhol{\Gamma'}
             {\Psi, n_1 = 0, n_2 = 0}
             {1}{\nat\to\nat\to\nat}
             {acc}{\nat\to\nat\to\nat\to\nat}
             {\res\ltag * acc = \res\rtag}$

  \item $\sjrhol{\Gamma'}
             {\Psi}
             {\lambda x_1. (S\ x_1)*(f_1\ x_1)}{\nat\to\nat\to\nat}
             {\lambda x_2. f_2\ x_2\ ( (S\ x_2) * acc)}{\nat\to\nat\to\nat\to\nat}
             {\forall x_1 x_2. n_1 = S\ x_1 \Rightarrow n_2 = S\ x_2 \Rightarrow (\res\ltag\ x_1)* acc = \res\rtag\ x_2}$
\end{itemize}

Premise 1 is a direct consequence of the assertion $n_1=n_2$ in $\Psi$.
Premise 2 is a trivial arithmetic identity which can be proven in HOL
(using rule SUB or by invoking Theorem~\ref{thm:equivhol}).
To prove premise 3, we first apply the (two-sided) \rname{ABS} rule,
which leaves the following proof obligation:
\[\sjrhol{\Gamma'}
         {\Psi, n_1 = S\ x_1, n_2 = S\ x_2}
         {(S\ x_1)*(f_1\ x_1)}{\nat\to\nat\to\nat} {f_2\ x_2\ (
           (S\ x_2) * acc)}{\nat\to\nat\to\nat\to\nat} {\res\ltag *
           acc = \res\rtag}\]
This is proven in HOL by instantiating the inductive hypothesis in
$\Psi$ with $y_1\mapsto x_1, y_2\mapsto x_2, a \mapsto (S\ x_1) * acc$.

\subsection{Second example: take and map}

This example establishes the equivalence of two programs that compute
the same result, but using different number of recursive
calls. Consider the following function $take$ that takes a list $l$
and a natural number $n$ and returns the first $n$ elements of the
list (or the whole list if its length is less than $n$).
%
%
\[\begin{array}{rrl}
     take \;\defeq\; \letrec{f_1}{l_1}{\lambda n_1.~ \caselist{l_1}{[]&&\\}
                                                               {&\lambda h_1 t_1.~\casenat{n_1}{&[]\\&}{&\lambda y_1. \cons{h_1}{(f_1\ t_1\ y_1)}}}}
\end{array}
\]
Next, consider the standard function $map$ that applies a function $g$
to every element of a list $l$ pointwise.
\[\begin{array}{rl}
     map \;\defeq\; \letrec{f_2}{l_2}{\lambda g_2.~ \caselist{l_2}{&[]\\}
                                                               {&\lambda h_2 t_2. \cons{(g_2\ h_2)}{(f_2\ t_2\ g_2)}}}
\end{array}
\]
Intuitively, it should be clear that for all $g, n, l$,
$map\ (take\ l\ n)\ g = take\ (map\ l\ g)\ n$ (mapping $g$ over the
first $n$ elements of the list is the same as mapping over the whole
list and then taking the first $n$ elements). However, the
computations on the two sides of the equality are very different: For
a list $l$ of length more than $n$, $map\ (take\ l\ n)\ g$ only
examines the first $n$ elements, whereas $take\ (map\ l\ g)\ n$
traverses the whole list. In the following we formalize this property
in RHOL (Theorem~\ref{thm:take-map}) and outline the high-level idea
of the proof. The full proof is in the appendix.

\begin{thm}\label{thm:take-map}
$\begin{array}[t]{c}
    \jrhol{l_1,l_2 : \listt{\nat}, n_1,n_2 : \nat, g_1,g_2 : \nat\to\nat}{l_1 = l_2, n_1 = n_2, g_1 = g_2}
          {\\ map\ (take\ l_1\ n_1)\ g_1}{\listt{\nat}}{take\ (map\ l_2\ g_2)\ n_2}{\listt{\nat}}{\res\ltag=\res\rtag}
\end{array}$
\end{thm}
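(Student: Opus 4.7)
The plan is to leverage Theorem~\ref{thm:equivhol} (equivalence with HOL) to collapse the relational goal into a single HOL equation about $take$ and $map$, and then prove that equation by structural induction on the list argument.

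By the reverse direction of Theorem~\ref{thm:equivhol}, it suffices to establish the HOL judgment
$\jhol{\Gamma}{\Psi}{map\ (take\ l_1\ n_1)\ g_1 \,=\, take\ (map\ l_2\ g_2)\ n_2}$,
where $\Psi = \{l_1 = l_2, n_1 = n_2, g_1 = g_2\}$. Three applications of \textsf{SUBST} with the hypotheses in $\Psi$ rewrite $l_2, n_2, g_2$ to $l_1, n_1, g_1$ on the right, reducing the goal to the parametric identity
\[\forall l : \listt{\nat}.\ \forall n : \nat.\ \forall g : \nat \to \nat.\ map\ (take\ l\ n)\ g \,=\, take\ (map\ l\ g)\ n.\]
An application of the embedding lemma (Lemma~\ref{lem:emb-uhol-rhol}) to the resulting HOL fact then closes the original RHOL goal via \textsf{SUB}.

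The parametric identity is proven by induction on $l$ using the \textsf{LIST} rule, with $n$ and $g$ universally quantified in the invariant. In the base case $l = []$, both $take\ []\ n$ and $map\ []\ g$ $\beta\iota\mu$-reduce to $[]$, so each side reduces to $[]$ and the goal is discharged by \textsf{CONV}. In the inductive step $l = h :: t$, with induction hypothesis $\forall n, g.\ map\ (take\ t\ n)\ g = take\ (map\ t\ g)\ n$, I would perform a further case analysis on $n$ (the natural-number analogue of \textsf{LIST}). When $n = 0$, $take\ (h :: t)\ 0$ reduces to $[]$, so the left-hand side becomes $map\ []\ g = []$, and the right-hand side is $take\ ((g\ h) :: map\ t\ g)\ 0 = []$. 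When $n = S\ n'$, the left-hand side reduces by $\mu\iota\beta$ to $(g\ h) :: (map\ (take\ t\ n')\ g)$ and the right-hand side reduces to $(g\ h) :: (take\ (map\ t\ g)\ n')$, after which one instantiation of the induction hypothesis at $n', g$ combined with \textsf{CONS}$_i$-style congruence closes the goal.

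The main obstacle is the asynchronous shape of the two computations combined with the double pattern match inside $take$: on lists longer than $n$, the left program traverses only $n$ cells while the right program traverses the entire list, so no synchronous two-sided \textsf{LETREC} strategy applies cleanly. Routing through HOL via Theorem~\ref{thm:equivhol} sidesteps this mismatch entirely and reduces the proof to a standard structural induction with one nested case on $n$; the alternative fully-syntactic RHOL derivation via \textsf{LETREC-L} on $map$ and \textsf{LETREC-R} on $take$ with a carefully chosen invariant is feasible but considerably heavier, since it has to duplicate the nested case analysis on both sides.
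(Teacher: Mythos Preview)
Your proof is correct but takes a very different route from the paper. You immediately invoke Theorem~\ref{thm:equivhol} to collapse the RHOL goal into a single HOL equation and then discharge it by a routine nested induction on $l$ and case split on $n$; this is sound, and the $\beta\iota\mu$-reductions you describe all go through (the $\mu$-step on $take\ ((g\,h)::map\ t\ g)\ (S\,n')$ is licensed because the argument has $::$ in head position). One small remark: once you have the HOL judgment, the reverse direction of Theorem~\ref{thm:equivhol} already yields the RHOL conclusion directly; the appeal to Lemma~\ref{lem:emb-uhol-rhol} is superfluous.

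The paper, by contrast, deliberately stays inside RHOL to exercise the asynchronous machinery that this example is meant to illustrate. It strengthens the postcondition to $\res\ltag \sqsubseteq \res\rtag \wedge |\res\ltag| = \min(n_1,|l_1|) \wedge |\res\rtag| = \min(n_2,|l_2|)$, introduces an auxiliary predicate $m_1 \sqsubseteq_g m_2$ meaning $(map\ m_1\ g)\sqsubseteq m_2$, and splits the goal into two relational judgments---one for the inner $take$/$map$ pair (handled synchronously) and one for the outer $map$/$take$ pair (handled with \rname{LETREC} followed by the asynchronous \rname{LLCASE-A}, using the strengthened invariant to survive the branch where only the right side unfolds). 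Your HOL route is shorter and requires no invariant invention, but it bypasses precisely the phenomenon the example was chosen to exhibit: that RHOL's one-sided and asynchronous rules let you relate recursions of different depth without escaping to the ambient logic.
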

\begin{proof}[Proof idea]
Since the two sides make an unequal number of recursive calls, we need
to reason asynchronously on the two sides (specifically, we use the
rule \rname{LLCASE-A}). However, equality cannot be established
inductively with asynchronous reasoning: If two function applications
are to be shown equal, and a recursion step is taken in only one of
them, then the induction hypothesis cannot be applied. So, we
strengthen the induction hypothesis, replacing the assertion
$\res\ltag = \res\rtag$ in the theorem statement with $ \res\ltag
\sqsubseteq \res\rtag \wedge |\res\ltag|={\sf min}(n_1,|l_1|) \wedge
|\res\rtag|={\sf min}(n_2,|l_2|)$ where $\sqsubseteq$ denotes the
prefix ordering on lists and $|\cdot|$ is the list length
function. This assertion implies $\res\ltag = \res\rtag$ and can be
established inductively. The full proof is in the appendix, but at a
high-level, the proof requires proving two judgments, one for the
inner map-take pair and another for the outer one:
\begin{itemize}
\item
  $\sjrhol{\Gamma}{\Psi}{take\ l_1\ n_1}{\listt{\sigma}}{map\ l_2\ g_2}{\listt{\sigma}}{
  \res\ltag \sqsubseteq_{g_2} \res\rtag }$ 
\item $\sjrhol{\Gamma}{\Psi}{map}{\listt{\sigma}\to\listt{\sigma}}{take}{\listt{\sigma}\to\nat\to\listt{\sigma}}
             {\forall m_1 m_2. m_1 \sqsubseteq_{g_2} m_2 \Rightarrow (\forall g_1. g_1 = g_2 \Rightarrow \forall x_2. x_2 \geq |m_1| \Rightarrow  (\res\ltag\ m_1\ g_1) \sqsubseteq (\res\rtag\ m_2\ x_2))}$
\
\end{itemize}
where $m_1 \sqsubseteq_g m_2$ is an axiomatically defined predicate
equivalent to $({\rm map}\ m_1\ g) \sqsubseteq m_2$ and $\Psi$ are the
assumptions in the statement of the theorem (in particular, $l_1 =
l_2$).  The proof of the first premise proceeds by an analysis of
$map$ using synchronous rules. For the second premise, after applying
\rname{LETREC} we apply the asynchronous \rname{LLCASE-A} rule, and then
prove the following premises:
\begin{enumerate}
\item $\sjrhol{\Gamma}
              {\Psi, \Phi, x_2 \geq |m_1|, g_1 = g_2, m_1 = [], m_2 = []}
              {[]}{.}
              {[]}{.}
              {\res\ltag \sqsubseteq \res\rtag}$
\item $\sjrhol{\Gamma}
              {\Psi, \Phi, x_2 \geq |m_1|, g_1 = g_2, m_1 = []}
              {[]}{.}
              {\lambda h_2 t_2.
                      \casenat{x_2}{[]}
                              {\lambda y_2. \cons{h_2}{f_2\ t_2\ y_2}}}{.}
              {\\ \forall h_2 t_2. m_2 = \cons{h_2}{t_2} \Rightarrow
                  \res\ltag \sqsubseteq (\res\rtag\ h_2\ t_2)}$
\item $\sjrhol{\Gamma}
              {\Psi, \Phi, x_2 \geq |m_1|, g_1 = g_2, m_2 = []}
              {\lambda h_1 t_1. \cons{(g_1\ h_1)}{(f_1\ t_1\ g_1)}}{.}
              {[]}{.}
              {\forall h_1 t_1. m_1 = \cons{h_1}{t_1} \Rightarrow
                  (\res\ltag\ h_1\ t_1) \sqsubseteq \res\rtag}$
\item $\sjrhol{\Gamma}
              {\Psi, \Phi, x_2 \geq |m_1|, g_1 = g_2}
              {\lambda h_1 t_1. \cons{(g_1\ h_1)}{(f_1\ t_1\ g_1)}}{.}
              {\lambda h_2 t_2. \casenat{x_2}{[]}
                                  {\lambda y_2. \cons{h_2}{f_2\ t_2\ y_2}}}{.}
              {\\ \forall h_1 t_1 h_2 t_2. m_1 = \cons{h_1}{t_1} \Rightarrow
                       m_2 = \cons{h_1}{t_1} \Rightarrow
                           (\res\ltag\ h_1\ t_1) \sqsubseteq (\res\rtag\ h_2\ t_2)}$
\end{enumerate}
where $\Phi$ is the inductive hypothesis obtained from the
\rname{LETREC} application. The first two premises follow directly
from the definition of $\sqsubseteq$, while the third one follows from
the contradictory assumptions $m_1 \sqsubseteq_g m_2$, $m_1 =
\cons{h_1}{t_1}$ and $m_2 = \nil$. The last premise is proved by first
applying the \rname{NATCASE-R} rule and then applying the induction
hypothesis.
\end{proof}




\subsection{Third example: insertion sort}

Insertion sort is a standard sorting algorithm that sorts a list
$\cons{h}{t}$ by sorting the tail $t$ recursively and then inserting
$h$ at the appropriate position in the sorted tail. Consider the
following implementations of the insertion function, $\sf insert$, and
the insertion sort function, $\sf isort$, each returning a pair,
whose first element is the usual output list (inserted list for $\sf
insert$ and sorted list for $\sf isort$) and whose second element is
the \emph{number of comparisons} made during the execution (assuming
an eager, call-by-value evaluation strategy).


$\begin{array}{@{}r@{}l@{}l@{}l@{}}
{\sf insert} \defeq \lambda x.\,{\rm letrec}\ insert\ l = {\rm case}\ l\ {\rm of}\  []\; & \mapsto ([x], 0); \\
\_ :: \_\; & \mapsto \lambda h\, t.\, {\rm case}\ & x \leq h\ {\rm of}\\
& & \tbool \mapsto (\cons{x}{l}, 1); \\
& & \fbool \mapsto {\rm let}\ s = insert\ t\ {\rm in}\ (\cons{h}{(\pi_1\ s)}, 1 + (\pi_2\ s))
\end{array}$

$\begin{array}{@{}r@{}l@{}l@{}}
{\sf isort} \defeq {\rm letrec}\ isort\ l = {\rm case}\ l\ {\rm of}\ []\; & \mapsto ([], 0); \\
\_ :: \_ \; & \mapsto \lambda h\, t. \, & {\rm let}\ s = isort \ t \\
& & {\rm let}\ s' = {\sf insert}\ h\ (\pi_1\ s) \ {\rm in}\\
& & (\pi_1 \ s', (\pi_2\ s) + (\pi_2\ s'))
\end{array}$

\noindent 
Using this implementation, we prove the following interesting fact
about insertion sort: Among all lists of the same length, insertion
sort computes the fastest (with fewest comparisons) on lists that are
already sorted. This is a property about the relational cost of
insertion sort (on two different inputs), which cannot be established
in RelCost. To state the property in RHOL, we define a list predicate
${\sf sorted}(l)$ in HOL axiomatically:
\begin{mathpar}
{\sf sorted}([]) \equiv \top
\and
\forall h\, t. \, {\sf sorted}(\cons{h}{t}) \equiv ({\sf sorted}(t) \wedge h \leq {\sf lmin}(t))
\end{mathpar}
where the function ${\sf lmin}(l)$ returns the minimum element of $l$:
\[
{\sf lmin} \defeq {\rm letrec}\ f\ l = \caselist{l}{\infty}{\lambda h\, t.\, {\sf min}(h, f\ t)}
\]  
As in the previous example, let $|\cdot|$ be the standard list length
function. The property of insertion sort mentioned above is formalized
in the following theorem. In words, the theorem says that if ${\sf
  isort}$ is executed on lists $x_1$ and $x_2$ of the same length and
$x_1$ is sorted, then the number of comparisons made during the
sorting of $x_1$ is no more than the number of comparisons made during
the sorting of $x_2$.
\begin{thm}\label{thm:insertion-sort}
  Let $\tau \defeq \listt{\nat} \rightarrow \listt{\nat}$. Then,
  $\jrhol{\bullet}{\bullet}{{\sf isort}}{\tau}{{\sf isort}}{\tau}{\forall
    x_1\, x_2.\, ({\sf sorted}(x_1) \wedge |x_1| = |x_2|) \Rightarrow
    \pi_2(\res\ltag \ x_1) \leq \pi_2(\res\rtag\ x_2)}$.
\end{thm}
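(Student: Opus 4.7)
The plan is to apply the \rname{LETREC} rule to both copies of ${\sf isort}$ simultaneously with a strengthened invariant, and then to do synchronous case analysis with two-sided \rname{LISTCASE}. The stated postcondition is too weak on its own, because to bound the cost of the inner call ${\sf insert}\ h_1\ (\pi_1\ s_1)$ in the cons branch we need to know that $\pi_1\ s_1$ is sorted, that $h_1$ does not exceed its minimum, and that $|\pi_1 s_1| = |\pi_1 s_2|$. I would therefore take as \rname{LETREC}'s precondition $\phi' \equiv {\sf sorted}(x_1) \wedge |x_1| = |x_2|$ and as postcondition
\[\phi \equiv |\pi_1(\res\ltag)| = |\pi_1(\res\rtag)| = |x_1| \wedge {\sf sorted}(\pi_1(\res\ltag)) \wedge {\sf lmin}(\pi_1(\res\ltag)) = {\sf lmin}(x_1) \wedge \pi_2(\res\ltag) \leq \pi_2(\res\rtag).\]
The theorem then follows from the last conjunct of $\phi$ by \rname{SUB}.

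Since $|x_1| = |x_2|$, the two lists enter the same branch of \rname{LISTCASE}. The nil case is trivial as both sides compute $([],0)$, so every conjunct of $\phi$ holds. In the cons case $x_i = \cons{h_i}{t_i}$, the tails satisfy $\phi'$ (using that ${\sf sorted}(\cons{h_1}{t_1})$ unfolds to ${\sf sorted}(t_1) \wedge h_1 \leq {\sf lmin}(t_1)$), so the inductive hypothesis applied to $(t_1,t_2)$ yields, writing $s_i$ for the recursive result $isort\ t_i$: $|\pi_1 s_1| = |\pi_1 s_2| = |t_1|$, ${\sf sorted}(\pi_1 s_1)$, ${\sf lmin}(\pi_1 s_1) = {\sf lmin}(t_1)$, and $\pi_2 s_1 \leq \pi_2 s_2$. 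In particular, $h_1 \leq {\sf lmin}(\pi_1 s_1)$.

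The main obstacle is then to establish two unary facts about ${\sf insert}$, most cleanly proved in UHOL and injected into the RHOL derivation via Lemma~\ref{lem:emb-uhol-rhol} (or the rules \rname{UHOL-L}/\rname{UHOL-R}): (i) if $l$ is sorted and $x \leq {\sf lmin}(l)$, then ${\sf insert}\ x\ l = ([x],0)$ when $l = \nil$ and $(\cons{x}{l}, 1)$ otherwise, and in either case $\pi_1({\sf insert}\ x\ l)$ is a sorted list of length $|l|+1$ with minimum $x$; (ii) for any $x'$ and $l'$, $\pi_2({\sf insert}\ x'\ l') \geq \min(1,|l'|)$. Both follow by unfolding the definition of ${\sf insert}$; fact (i) uses that whenever $l = \cons{h}{t}$ the top-level comparison $x \leq h$ succeeds because $h \geq {\sf lmin}(l) \geq x$, so the function returns in one step with cost $1$. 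Applying (i) on the left and (ii) on the right, and using $|\pi_1 s_1| = |\pi_1 s_2|$, gives $\pi_2({\sf insert}\ h_1\ (\pi_1 s_1)) \leq \pi_2({\sf insert}\ h_2\ (\pi_1 s_2))$; combining with $\pi_2 s_1 \leq \pi_2 s_2$ yields the cost conjunct of $\phi$ in the cons case. The remaining length, sortedness and minimum conjuncts of $\phi$ follow directly from (i) applied on the left.
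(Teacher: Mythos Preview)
Your proof is correct and follows the same high-level skeleton as the paper (\rname{LETREC} with a strengthened invariant, then synchronous \rname{LISTCASE}, then a cost comparison for ${\sf insert}$ based on $h_1 \leq {\sf lmin}(\pi_1\,s_1)$), but the invariant you strengthen with is genuinely different. The paper strengthens only by adding the equalities $\res\ltag\,x_1 = {\sf isort}\,x_1$ and $\res\rtag\,x_2 = {\sf isort}\,x_2$, so that inside the induction the recursive calls $isort_i\,t_i$ can be rewritten to the global ${\sf isort}\,t_i$; the needed facts about length and ${\sf lmin}$ are then imported from separate UHOL lemmas about ${\sf isort}$ (Lemmas~\ref{lem:isort:length} and~\ref{lem:isort:lmin}). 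You instead carry the length, sortedness and ${\sf lmin}$ information directly in the relational invariant, which makes the induction self-contained and avoids the ``equals the global function'' trick, at the price of a heavier $\phi$. Both routes work; yours is arguably more in the spirit of a pure RHOL derivation, while the paper's is more modular.

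One small gap to patch: the length conjunct $|\pi_1(\res\ltag)| = |\pi_1(\res\rtag)| = |x_1|$ constrains both sides, and your fact~(i) only applies on the left (it needs the precondition $x \leq {\sf lmin}(l)$). To re-establish $|\pi_1(\res\rtag)| = |x_1|$ in the cons branch you also need the unconditional fact $|\pi_1({\sf insert}\,x\,l)| = 1 + |l|$ for arbitrary $x,l$, applied to $h_2$ and $\pi_1\,s_2$. This is an easy UHOL lemma (it is Lemma~\ref{lem:isort:length}(1) in the paper) and fits your scheme of injecting unary facts about ${\sf insert}$; just list it alongside your facts~(i) and~(ii).
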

A full proof is shown in the appendix. The proof proceeds mostly
synchronously in the two sides. Following the structure of $\sf
isort$, we apply the rules \rname{LETREC}, \rname{LISTCASE} and
\rname{APP} + \rname{ABS} (for the ${\rm let}$ binding, which, as
usual, is defined as a function application), followed by an
application of the inductive hypothesis for the recursive call to
$isort$. Eventually, we expose the call to $\sf insert$ on both
sides. At this point, the observation is that since $x_1$ is already
sorted, its head element must be no greater than all elements in its
tail, so ${\sf insert}$ must return immediately with at most $1$
comparison on the $x_1$ side. Formally, this last proof step can be
completed by switching to either UHOL or HOL and using subject
conversion; we switch to HOL in the appendix.

\section{Conclusion}
We have developed Relational Higher-Order Logic, a new formalism to
reason about relational properties of (pure) higher-order programs
written in a simply typed $\lambda$-calculus with inductive types and
recursive definitions. The system is expressive, has solid foundations
via an equivalence with Higher-Order Logic, and yet retains the (nice)
\lq\lq feel\rq\rq\ of relational refinement type systems. An important
direction for future work is to extend Relational Higher-Order Logic
to effectful programs. Natural directions include integrating the
state monad, and the Giry monad for probability sub-distributions.
One particularly exciting perspective is to broaden the scope of
relational cost analysis to probabilistic programs, and to prove
relational costs for different data structures. There are also many
potential applications to security, differential privacy, machine
learning, and probabilistic programming.

For practical purposes, it will also be interesting to build prototype
implementations of Relational Higher-Order Logic. We believe that much
of the technology developed for (relational) refinement types, and in
particular the automated generation of verification conditions (maybe
with user hints to switch between unary and binary modes of reasoning)
and the connection to SMT-solvers can be lifted without significant
hurdle to Relational Higher-Order Logic.



\bibliographystyle{ACM-Reference-Format}

\bibliography{main}

\newpage
\onecolumn
\appendix

\section{Semantics}

\subsection*{Semantics of HOL}

\subsubsection*{Types}

The interpretation for the types corresponds directly to the usual representation of pairs, lists and functions in set theory.

\begin{align*}
\sem{\bool} &\defeq \{\fbool, \tbool\} \\
\sem{\nat} &\defeq \nat \\
\sem{\listt{\tau}} &\defeq \listt{\sem{\tau}} \\
\sem{\tau_1\times\tau_2} &\defeq \sem{\tau_1}\times\sem{\tau_2}\\
\sem{\tau_1\to\tau_2} &\defeq \sem{\tau_1}\to\sem{\tau_2}
\end{align*}

\subsubsection*{Terms}

The terms are given an interpretation with respect to a valuation $\rho$ which is a partial function mapping variables to elements in the
interpretation of their type. Given $\rho$, we use the notation $\rho\subst{x}{v}$ to denote the unique extension of $\rho$ such
that if $y = x$ then $\rho\subst{x}{v}(y) = v$ and, otherwise, $\rho\subst{x}{v}(y) = \rho(y)$.

\begin{mathpar}
\semexpv{x}{\rho} \defeq \rho(x) \and \semexpv{\pair{t}{u}}{\rho} := \pair{\semexpv{t}{\rho}}{\semexpv{u}{\rho}} \and
\semexpv{\pi_i\ t}{\rho} \defeq \pi_i(\semexpv{t}{\rho}) \and
\semexpv{\lambda x : \tau. t} \defeq \lambda v : \sem{\tau}. \semexpv{x}{\rho\subst{v}{\semexpv{v}{\rho}}} \and
\semexpv{c}{\rho} \defeq c \and \semexpv{S~t}{\rho} \defeq S~\semexpv{t}{\rho} \and
\semexpv{\cons{t}{u}}{\rho} \defeq \cons{\semexpv{t}{\rho}}{\semexpv{u}{\rho}} \and
\semexpv{\caselist{t}{u}{v}}{\rho} \defeq  \caselist{\semexpv{t}{\rho}}{\semexpv{u}{\rho}}{\semexpv{v}{\rho}} \and
\semexpv{\letrec{f}{x}{t}}{\rho} \defeq {\rm fix}\ f\ (\semexpv{\lambda x. t}{\rho})\; \text{ where }\; {\rm fix}\ f\ v := v\subst{f}{({\rm fix}\ f\ v)}
\end{mathpar}

\subsubsection*{Formulas}

We assume that for predicate $P$ of arity $\tau_1\times\dots\times\tau_n$,
we have an interpretation $\llbracket P \rrbracket \in \sem{\tau_1}\times\dots\times\sem{\tau_n}$
that satisfies the axioms for P. The interpretation of a formula is defined as follows:

\begin{align*}
\semexpv{P(t_1,\dots,t_n)}{\rho}\quad &\defeq\quad (\semv{t_1}{\rho},\dots,\semv{t_n}{\rho}) \in \sem{P}\\
\semexpv{\top}{\rho} \quad &\defeq\quad \tilde\top \\
\semexpv{\bot}{\rho} \quad &\defeq\quad \tilde\bot \\
\semexpv{\phi_1 \wedge \phi_2}{\rho} \quad &\defeq\quad \semexpv{\phi_1}{\rho} \:\tilde\wedge\: \semexpv{\phi_2}{\rho, \mathcal{M}} \\
\semexpv{\phi_1 \Rightarrow \phi_2}{\rho} \quad &\defeq\quad \semexpv{\phi_1}{\rho} \:\tilde{\Rightarrow}\: \semexpv{\phi_2}{\rho} \\
\semexpv{\forall x:\tau. \phi}{\rho} \quad &\defeq\quad \tilde{\forall} v. v \in \sem{\tau} \:\tilde{\Rightarrow}\: \semexpv{\phi}{\rho\subst{x}{v}}
\end{align*}

where we use the tilde ($\sim$) to distinguish between the (R)HOL connectives and the meta-level connectives.

\subsubsection*{Soundness} We have the following result:

\begin{theorem}[Soundness of set-theoretical semantics]
If $\jhol{\Gamma}{\Psi}{\phi}$, then for every valuation
$\rho\models\Gamma$, $\bigwedge_{\psi\in\Psi} \semexpv{\psi}{\rho}$
implies $\semexpv{\phi}{\rho}$.
\end{theorem}

\begin{proof}
By induction on the lenght of the derivation of $\jhol{\Gamma}{\Psi}{\phi}$.
\end{proof}

\subsection*{Semantics of UHOL}

The intended meaning of a UHOL judgment $\juhol{\Gamma}{\Psi}{t}{\tau}{\phi}$ is:

\[
\begin{array}{c}
  \text{for all } \mathcal{M}, \rho. \text{ s.t. } \rho \models \Gamma \text{ and } \mathcal{M} \models \Psi,\phi \\
  \semexpv{\bigwedge \Psi}{\rho, \mathcal{M}} \:\text{ implies }\: \semexpv{\phi}{\rho\subst{\res}{\semexpv{t}{\rho}},\mathcal{M}}
\end{array}
\]

We have the following result:

\begin{thm}[Set-theoretical soundness and consistency of UHOL]~
If $\juhol{\Gamma}{\Psi}{t}{\sigma}{\phi}$, then for every valuation
$\rho\models\Gamma$, $\bigwedge_{\psi\in\Psi} \semexpv{\psi}{\rho}$
implies $\semexpv{\phi}{\rho\subst{\res}{\semexpv{t}{\rho}}}$. In
particular, there is no proof of
$\juhol{\Gamma}{\emptyset}{t}{\sigma}{\bot}$ in UHOL.
\end{thm}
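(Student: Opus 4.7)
The plan is to reduce everything to the soundness of HOL via the equivalence theorem (Theorem~\ref{thm:equiv-uhol-hol}). Given $\juhol{\Gamma}{\Psi}{t}{\sigma}{\phi}$, the forward direction of that theorem immediately yields the HOL judgment $\jhol{\Gamma}{\Psi}{\phi\subst{\res}{t}}$. Applying the soundness theorem for HOL to this judgment gives, for every valuation $\rho \models \Gamma$, that $\bigwedge_{\psi\in\Psi}\semexpv{\psi}{\rho}$ implies $\semexpv{\phi\subst{\res}{t}}{\rho}$.

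What remains is to bridge the gap between $\semexpv{\phi\subst{\res}{t}}{\rho}$ and $\semexpv{\phi}{\rho\subst{\res}{\semexpv{t}{\rho}}}$. For this I would prove the standard substitution lemma: for any well-typed term $u$ and formula $\psi$, the equality $\semexpv{\psi\subst{x}{u}}{\rho} = \semexpv{\psi}{\rho\subst{x}{\semexpv{u}{\rho}}}$ holds whenever both sides are well defined. The proof is by induction on the structure of $\psi$, using the analogous (simpler) fact for terms, $\semexpv{v\subst{x}{u}}{\rho} = \semexpv{v}{\rho\subst{x}{\semexpv{u}{\rho}}}$, which itself is a routine induction on $v$ (the $\lambda$- and case-binding cases require the usual care about capture-avoidance, but $\res$ is a distinguished variable that does not appear in $\Gamma$, so no clash arises). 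Specializing this lemma to $x = \res$, $u = t$ and $\psi = \phi$ converts the HOL conclusion into the desired UHOL conclusion.

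For the consistency corollary, suppose for contradiction that $\juhol{\Gamma}{\emptyset}{t}{\sigma}{\bot}$ is derivable. Then by the main part, for any valuation $\rho \models \Gamma$ the empty conjunction (meta-level $\tilde\top$) implies $\semexpv{\bot}{\rho\subst{\res}{\semexpv{t}{\rho}}} = \tilde\bot$, a meta-level contradiction. Such a valuation exists because every simple type in our fragment has a non-empty set-theoretic interpretation (booleans and naturals are non-empty, lists contain $\nil$, products and function spaces over non-empty sets are non-empty), so one can build $\rho$ pointwise on $\Gamma$.

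The only non-trivial step is the substitution lemma, and it is a completely standard induction; the real work of this theorem was already done in proving Theorem~\ref{thm:equiv-uhol-hol} and the HOL soundness theorem. I therefore expect no serious obstacle, only the need to state the substitution lemma explicitly (or invoke it as folklore) to justify the rewriting of $\semexpv{\phi\subst{\res}{t}}{\rho}$ into $\semexpv{\phi}{\rho\subst{\res}{\semexpv{t}{\rho}}}$.
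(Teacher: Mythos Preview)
Your proposal is correct and follows essentially the same approach as the paper, which simply states that the result is a direct consequence of the UHOL-to-HOL equivalence (Theorem~\ref{thm:equiv-uhol-hol}) together with the soundness of HOL. You additionally spell out the (standard) substitution lemma needed to rewrite $\semexpv{\phi\subst{\res}{t}}{\rho}$ as $\semexpv{\phi}{\rho\subst{\res}{\semexpv{t}{\rho}}}$ and the non-emptiness argument for consistency, both of which the paper leaves implicit.
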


\begin{proof}
It is a direct consequence of the embedding from UHOL into HOL and the soundness of HOL.
\end{proof}

\subsection*{Semantics of RHOL}

The intended meaning of a RHOL judgment $\jrhol{\Gamma}{\Psi}{t_1}{\tau_1}{t_2}{\tau_2}{\phi}$ is:

\[
\begin{array}{c}
  \text{for all} \mathcal{M}, \rho. \text{ s.t. } \rho \models \Gamma \text{ and } \mathcal{M} \models \Psi,\phi \\
  \semexpv{\bigwedge \Psi}{\rho, \mathcal{M}} \:\text{ implies }\: \semexpv{\phi}{\rho\subst{\res\ltag}{\semexpv{t_1}{\rho}}\subst{\res\rtag}{\semexpv{t_2}{\rho}},\mathcal{M}}
\end{array}
\]

We have the following result:

\begin{thm}[Set-theoretical soundness and consistency of RHOL]~
If $\jrhol{\Gamma}{\Psi}{t_1}{\sigma_1}{t_2}{\sigma_2}{\phi}$, then
for every valuation $\rho\models\Gamma$, $\bigwedge_{\psi\in\Psi}
\semexpv{\psi}{\rho}$ implies
$\semexpv{\phi}{\rho\subst{\res\ltag}{\semexpv{t_1}{\rho}},
  \subst{\res\rtag}{\semexpv{t_2}{\rho}}}$. In particular, there is no
proof of $\jrhol{\Gamma}{\emptyset}{t_1}{\sigma_1}{
  t_2}{\sigma_2}{\bot}$ for any $\Gamma$.
\end{thm}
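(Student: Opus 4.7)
The plan is to reduce this result to two previously established pieces: Theorem~\ref{thm:equivhol} (equivalence between RHOL and HOL) and the soundness theorem for HOL. Concretely, given a derivation of $\jrhol{\Gamma}{\Psi}{t_1}{\sigma_1}{t_2}{\sigma_2}{\phi}$, I would first invoke the forward direction of Theorem~\ref{thm:equivhol} to obtain a HOL derivation $\jhol{\Gamma}{\Psi}{\phi\subst{\res\ltag}{t_1}\subst{\res\rtag}{t_2}}$. Then, fixing an arbitrary valuation $\rho \models \Gamma$ and assuming $\bigwedge_{\psi \in \Psi} \semexpv{\psi}{\rho}$ holds, soundness of HOL gives $\semexpv{\phi\subst{\res\ltag}{t_1}\subst{\res\rtag}{t_2}}{\rho}$.

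The remaining step is to rewrite this as the interpretation of $\phi$ under the extended valuation $\rho' \defeq \rho\subst{\res\ltag}{\semexpv{t_1}{\rho}}\subst{\res\rtag}{\semexpv{t_2}{\rho}}$. This is a standard substitution lemma for the set-theoretic semantics: for any assertion $\phi$, term $t$, and variable $y$ not free in $\rho$'s domain, $\semexpv{\phi\subst{y}{t}}{\rho} = \semexpv{\phi}{\rho\subst{y}{\semexpv{t}{\rho}}}$. Iterating this for $\res\ltag$ and $\res\rtag$ yields $\semexpv{\phi}{\rho'}$, which is exactly what we need. The substitution lemma itself is proved by a routine induction on the structure of $\phi$, mirroring the corresponding lemma for terms.

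For the consistency clause, I would argue by contradiction. Suppose $\jrhol{\Gamma}{\emptyset}{t_1}{\sigma_1}{t_2}{\sigma_2}{\bot}$ were derivable. Every simple type in our fragment has at least one closed inhabitant ($\tbool$, $0$, $\nil$, $\pair{\cdot}{\cdot}$, $\lambda$-abstraction), so a valuation $\rho \models \Gamma$ exists. The first part of the theorem, applied with the (vacuously true) empty set of hypotheses, would then yield $\semexpv{\bot}{\rho'} = \tilde\bot$, which is absurd.

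The only step with any real content is the substitution lemma for the semantics, and even that is entirely routine, so I do not anticipate any genuine obstacle; the heavy lifting was done in Theorem~\ref{thm:equivhol} and in the soundness of HOL. The mild care needed is simply to check that the renaming conventions and the semantic clause for the two distinguished variables $\res\ltag, \res\rtag$ line up with the way substitution is performed syntactically in the equivalence theorem.
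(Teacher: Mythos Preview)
Your proposal is correct and matches the paper's approach exactly: the paper simply states that the result ``is a direct consequence of the embedding of RHOL into HOL and the soundness of HOL,'' which is precisely your reduction via Theorem~\ref{thm:equivhol}. Your write-up is in fact more explicit than the paper's, spelling out the semantic substitution lemma and the inhabitation argument for consistency that the paper leaves implicit.
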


\begin{proof}
It is a direct consequence of the embedding of RHOL into HOL and the soundness of HOL.
\end{proof}

\section{Additional rules}

For reasons of space, we have omited some derivable and admissible rules in HOL, UHOL and RHOL. 
These are useful to prove some theorems and examples.
We now discuss the most interesting among them:

\subsection*{HOL}

The following rules are derivable in HOL:

\begin{itemize}

\item A cut rule can be derived from \rname{$\Rightarrow_I$} and \rname{$\Rightarrow_E$}:
\[
\infer[\sf CUT]
  {\jhol{\Gamma}{\Psi}{\phi}}
  {\jhol{\Gamma}{\Psi,\phi'}{\phi} &
   \jhol{\Gamma}{\Psi}{\phi'}}\]

\item A rule for case analysis can be derived from \rname{LIST}:

\[
\infer[\sf DESTR-LIST]
  {\jhol{\Gamma}{\Psi}{\phi}}
  {\jlc{\Gamma}{l}{\listt{\tau}}
    &
   \jhol{\Gamma}{\Psi, l=[]}{\phi}
    &
   \jhol{\Gamma, h:\tau, t:\listt{\tau}}{\Psi, l=\cons{h}{t}}{\phi}}\]\\
  
\item A rule for strong induction can be derived from \rname{LIST}:

\[
\infer[\sf S-LIST]
  {\jhol{\Gamma}{\Psi}{\forall t : \listt{\tau}. \phi}}
  {\jhol{\Gamma}{\Psi}{\phi\subst{t}{\nil} &
   \jhol{\Gamma, h:\tau, t:\listt{\tau}}{\Psi, \forall u : \listt{\tau}. |u| \leq |t| \Rightarrow \phi\subst{t}{u}}{\phi\subst{t}{\cons{h}{t}}}}}
\]

\item A rule for (weak) double induction can be derived by applying \rname{LIST} twice:
\[\infer[\sf D-LIST]
  {\jhol{\Gamma}{\Psi}{\forall l_1 l_2. \phi}}
  {\begin{array}{c}
      \jhol{\Gamma}{\Psi}{\phi\subst{l_1}{\nil}\subst{l_2}{\nil}} \\
      \jhol{\Gamma, h_1:\tau_1, t_1:\listt{\tau_1}}{\Psi, \phi\subst{l_1}{t_1}\subst{l_2}{\nil}}{\phi\subst{l_1}{\cons{h_1}{t_1}}\subst{l_2}{\nil}} \\
      \jhol{\Gamma, h_2:\tau_2, t_2:\listt{\tau_2}}{\Psi, \phi\subst{l_1}{\nil}\subst{l_2}{t_2}}{\phi\subst{l_1}{\nil}\subst{l_2}{\cons{h_2}{t_2}}} \\
      \jhol{\Gamma, h_1:\tau_1, t_2:\listt{\tau_2}, h_2:\tau_2, t_2:\listt{\tau_2}}{\Psi, \phi\subst{l_1}{t_1}\subst{l_2}{t_2}}{\phi\subst{l_1}{\cons{h_1}{t_1}}\subst{l_2}{\cons{h_2}{t_2}}}
   \end{array}}
\]

\item A rule for strong double induction can be derived from \rname{D-LIST}:
\[\infer[\sf S-D-LIST]
  {\jhol{\Gamma}{\Psi}{\forall l_1 l_2. \phi}}
  {\begin{array}{c}
      \jhol{\Gamma}{\Psi}{\phi\subst{l_1}{\nil}\subst{l_2}{\nil}} \\
      \jhol{\Gamma, h_1:\tau_1, t_1:\listt{\tau_1}}{\Psi, \forall m_1. |m_1|\leq|t_1| \Rightarrow \phi\subst{l_1}{m_1}\subst{l_2}{\nil}}{\phi\subst{l_1}{\cons{h_1}{t_1}}\subst{l_2}{\nil}} \\
      \jhol{\Gamma, h_2:\tau_2, t_2:\listt{\tau_2}}{\Psi, \forall m_2. |m_2|\leq|t_2| \Rightarrow \phi\subst{l_1}{\nil}\subst{l_2}{m_2}}{\phi\subst{l_1}{\nil}\subst{l_2}{\cons{h_2}{t_2}}} \\
      \jhol{\Gamma, h_1:\tau_1, t_1:\listt{\tau_1}, h_2:\tau_2, t_2:\listt{\tau_2}}
           {\\ \Psi, \forall m_1 m_2. (|m_1|,|m_2|) < (|\cons{h_1}{t_1}|,|\cons{h_2}{t_2}|) \Rightarrow
                 \phi\subst{l_1}{m_1}\subst{l_2}{m_2}}{\phi\subst{l_1}{\cons{h_1}{t_1}}\subst{l_2}{\cons{h_2}{t_2}}}
   \end{array}}
\]

\end{itemize}

\subsection*{RHOL}

The following version of the case rule is admissible:

\[
\infer[\sf NATCASE*]
  {\jrhol{\Gamma}{\Psi}{\casenat{t_1}{u_1}{v_1}}{\sigma_1}{\casenat{t_2}{u_2}{v_2}}{\sigma_2}{\phi}}
  {\begin{array}{c}
      \jrhol{\Gamma}{\Psi}{t_1}{\listt{\tau_1}}{t_2}{\listt{\tau_2}}{\phi' \wedge (\res\ltag = 0 \Leftrightarrow \res\rtag = 0)} \\
      \jrhol{\Gamma}{\Psi, \phi'\defsubst{0}{0}}{u_1}{\sigma_1}{u_2}{\sigma_2}{\phi} \\
      \jrhol{\Gamma}{\Psi}{v_1}{\nat\to\sigma_1}{v_2}{\nat\to\sigma_2}
            {\forall x_1 x_2. \phi'\defsubst{S x_1}{S x_2} \Rightarrow \phi\defsubst{\res\ltag\ x_1}{\res\rtag\ x_2}}
   \end{array}}
\]
and the one sided version:
\[
\infer[\sf NATCASE*-L]
  {\jrhol{\Gamma}{\Psi}{\casenat{t_1}{u_1}{v_1}}{\sigma_1}{t_2}{\sigma_2}{\phi}}
  {\begin{array}{c}
      \jrhol{\Gamma}{\Psi}{t_1}{\listt{\tau_1}}{t_2}{\sigma_2}{\phi'} \\
      \jrhol{\Gamma}{\Psi, \phi'\defsubst{0}{t_2}}{u_1}{\sigma_1}{t_2}{\sigma_2}{\phi} \\
      \jrhol{\Gamma}{\Psi}{v_1}{\nat\to\sigma_1}{t_2}{\sigma_2}
            {\forall x_1.\phi'\subst{\res\ltag}{S x_1} \Rightarrow \phi\subst{\res\ltag}{\res\ltag\ x_1}}
   \end{array}}
\]

Notice that we can always recover the initial version of the rule by instantiating $\phi'$ as $t_1 = \res\ltag \wedge t_2 = \res\rtag$.\\

\section{Proofs}

\subsection*{Proof of Theorem~\ref{thm:ref-types-sub}}

We will use without proof the following results:

\begin{lem}
  If $\Gamma \vdash \tau \preceq \sigma$ in refinement types, then $\erfor{\tau} \equiv \erfor{\sigma}$.
\end{lem}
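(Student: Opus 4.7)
The plan is to prove the lemma by a straightforward structural induction on the derivation of the subtyping judgment $\Gamma \vdash \tau \preceq \sigma$, inspecting each of the six rules in Figure~\ref{fig:ref-types}. The key observation is that $\erfor{\cdot}$ is designed to discard exactly the refinement content that subtyping manipulates, so in each case the equality $\erfor{\tau} \equiv \erfor{\sigma}$ follows either directly from the definition of $\erfor{\cdot}$ or from the inductive hypotheses.

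More concretely, the reflexivity rule is immediate. For transitivity, the inductive hypotheses give $\erfor{\tau_1} \equiv \erfor{\tau_2}$ and $\erfor{\tau_2} \equiv \erfor{\tau_3}$, which compose. For the list congruence rule, apply the inductive hypothesis to the premise and use $\erfor{\listt{\tau}} \defeq \listt{\erfor{\tau}}$. The two refinement rules (dropping $\reftype{x}{\tau}{\phi} \preceq \tau$ and adding $\tau \preceq \reftype{x}{\sigma}{\phi}$) both reduce to checking that $\erfor{\reftype{x}{\tau}{\phi}} \equiv \erfor{\tau}$, which holds by definition (together with the inductive hypothesis $\erfor{\tau} \equiv \erfor{\sigma}$ in the second case). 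Finally, for the $\Pi$ rule, the inductive hypotheses yield $\erfor{\sigma_1} \equiv \erfor{\sigma_2}$ and $\erfor{\tau_1} \equiv \erfor{\tau_2}$, and the definition $\erfor{\pitype{x}{\sigma}{\tau}} \defeq \erfor{\sigma}\to\erfor{\tau}$ closes the case.

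There is no real obstacle: the lemma is essentially a sanity check that the erasure $\erfor{\cdot}$ is well-defined up to subtyping. The only minor subtlety is that the inductive hypothesis for the $\Pi$ rule on the codomains is applied under an extended context ($\Gamma, x:\sigma_2$), but since $\erfor{\cdot}$ does not depend on the context this is harmless.
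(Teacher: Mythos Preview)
Your proposal is correct and follows exactly the approach the paper takes: the paper's proof is simply ``By induction on the derivation,'' and you have spelled out each case of that induction accurately. There is nothing to add.
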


\begin{proof}
  By induction on the derivation.
\end{proof}

\begin{lem}
  For every type $\tau$ and expression $e$ and variable $x\not\in FV(\tau,e)$, $\ertype{\tau}{e} = \ertype{\tau}{x}\subst{x}{e}$
\end{lem}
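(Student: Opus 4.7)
The plan is to prove this by straightforward structural induction on $\tau$, with the induction being purely syntactic and relying only on the defining equations of $\ertype{\cdot}{\cdot}$ and the standard behavior of capture-avoiding substitution. In every inductive case the equation should hold on the nose (not just up to logical equivalence), so no appeal to the HOL rules is necessary.

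First, the base cases $\tau = \bool$ and $\tau = \nat$ are immediate, since $\ertype{\tau}{e} = \top = \ertype{\tau}{x}\subst{x}{e}$. For $\tau = \listt{\sigma}$, by definition $\ertype{\listt{\sigma}}{e} = {\rm All}(e, \lambda y. \ertype{\sigma}{y})$, while $\ertype{\listt{\sigma}}{x}\subst{x}{e} = ({\rm All}(x, \lambda y. \ertype{\sigma}{y}))\subst{x}{e}$; using $x \notin FV(\tau)$ (so $x \notin FV(\sigma)$) and alpha-renaming $y$ so that $y \neq x$ and $y \notin FV(e)$, the substitution only affects the first argument of ${\rm All}$, giving the desired equality.

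For $\tau = \reftype{y}{\sigma}{\phi}$ (choosing $y \neq x$ and $y \notin FV(e)$ by alpha-conversion), we compute
\[
\ertype{\reftype{y}{\sigma}{\phi}}{x}\subst{x}{e}
= \bigl(\ertype{\sigma}{x} \wedge \phi\subst{y}{x}\bigr)\subst{x}{e}
= \ertype{\sigma}{x}\subst{x}{e} \wedge \phi\subst{y}{x}\subst{x}{e}.
\]
The first conjunct equals $\ertype{\sigma}{e}$ by the IH. For the second, since $x \notin FV(\tau)$ and $x \neq y$, we have $x \notin FV(\phi)$, so $\phi\subst{y}{x}\subst{x}{e} = \phi\subst{y}{e}$; together these give $\ertype{\reftype{y}{\sigma}{\phi}}{e}$. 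The case $\tau = \pitype{y}{\sigma_1}{\sigma_2}$ is analogous: after alpha-renaming $y$ to be fresh for $x$ and $e$, substitution commutes with the universal quantifier, and the IH on $\sigma_1$ and $\sigma_2$ closes the case (noting that the implicit application $x\,y$ in $\ertype{\sigma_2}{x\,y}$ becomes $e\,y$ under $\subst{x}{e}$).

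The only subtlety, and the one thing that needs explicit care, is the side-condition management for bound variables: in each case with a binder ($\reftype{y}{\sigma}{\phi}$, $\pitype{y}{\sigma_1}{\sigma_2}$, and the implicit $\lambda y$ inside ${\rm All}$), one must alpha-rename the bound variable to avoid clashing with $x$ and with the free variables of $e$, so that the capture-avoiding substitution $\subst{x}{e}$ distributes inside the binder. This is routine but is the only real content of the argument.
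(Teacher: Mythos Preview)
Your proposal is correct and follows exactly the approach the paper takes: the paper's proof is the single line ``By structural induction,'' and your write-up simply spells out the cases of that induction in detail, including the routine alpha-renaming needed for the binder cases.
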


\begin{proof}
  By structural induction.
\end{proof}

\noindent
Now we proceed with the proof of the theorem. We do it by induction on the derivation:\\

\noindent \textbf{Case.} $\inferrule{\Gamma \vdash \tau}{\Gamma \vdash \tau \preceq \tau}$\\
To show: $\erfor{\Gamma}, x:\erfor{\tau} \mid \ertype{\tau}{x} \vdash \ertype{\tau}{x}$. Directly by \rname{AX}.\\

\noindent \textbf{Case.} $\inferrule{\Gamma \vdash \tau_1 \preceq \tau_2 \\ \Gamma \vdash \tau_2 \preceq \tau_3}{\Gamma \vdash \tau_1 \preceq \tau_3}$\\
To show: $\erfor{\Gamma}, x:\erfor{\tau_1} \mid \ertype{\Gamma}{}, \ertype{\tau_1}{x} \vdash \ertype{\tau_3}{x}$.\\
By induction hypothesis on the premises,\\
$\erfor{\Gamma}, x:\erfor{\tau_1} \mid \ertype{\Gamma}{}, \ertype{\tau_1}{x} \vdash \ertype{\tau_2}{x}$\\
and\\
$\erfor{\Gamma}, x:\erfor{\tau_2} \mid \ertype{\Gamma}{}, \ertype{\tau_2}{x} \vdash \ertype{\tau_3}{x}$.\\
We complete the proof by \rname{CUT}. Notice that $\erfor{\tau_1} \equiv \erfor{\tau_2} \equiv \erfor{\tau_3}$.\\

\noindent \textbf{Case.} $\inferrule{\Gamma \vdash \tau_1  \preceq \tau_2}{\Gamma \vdash \listt{\tau_1} \preceq \listt{\tau_2}}$\\
To show: $\erfor{\Gamma}, x:\erfor{\listt{\tau_1}} \mid \ertype{\Gamma}{}, \ertype{\listt{\tau_1}}{x} \vdash \ertype{\listt{\tau_2}}{\res}$\\
Expanding the definitions: $\erfor{\Gamma}, x:\listt{\erfor{\tau_1}} \mid \ertype{\Gamma}{}, \top \vdash \top$,\\
which is trivial.\\

\noindent \textbf{Case.} $\inferrule{\Gamma \vdash \reftype{}{\tau}{\phi}}{\Gamma \vdash \reftype{}{\tau}{\phi} \preceq \tau}$\\
To show: $\erfor{\Gamma}, x:\erfor{\reftype{}{\tau}{\phi}} \mid \ertype{\reftype{}{\tau}{\phi}}{x} \vdash \ertype{\tau}{x}$.\\
Expanding the definitions: $\erfor{\Gamma}, x:\erfor{\reftype{}{\tau}{\phi}} \mid \ertype{\tau}{x} \wedge \emfor{\phi}{x} \vdash \ertype{\tau}{x}$\\
and now the proof is completed trivially by \rname{$\wedge_E$} and \rname{AX}.\\

\noindent \textbf{Case.} $\inferrule{\Gamma \vdash \tau \preceq \sigma \\ \Gamma, \res:\tau \vdash \phi}{\Gamma \vdash \tau \preceq \reftype{}{\sigma}{\phi}}$\\
To show: $\erfor{\Gamma}, \res:\erfor{\tau} \vdash \ertype{\Gamma}{}, \ertype{\tau}{\res} \vdash \ertype{\reftype{}{\sigma}{\phi}}{\res}$\\
Expanding the definition: $\erfor{\Gamma}, \res:\erfor{\tau} \mid \ertype{\Gamma}{}, \ertype{\tau}{\res} \vdash \ertype{\sigma}{\res} \wedge \phi$\\
By induction hypothesis on the premises we have:\\
$\erfor{\Gamma}, \res:\erfor{\tau} \mid \ertype{\Gamma}{}, \ertype{\tau}{\res} \vdash \ertype{\sigma}{\res}$\\
and:\\
$\erfor{\Gamma}, \res:\erfor{\tau} \mid \ertype{\Gamma}{}, \ertype{\tau}{\res} \vdash \phi$\\
We complete the proof by applying the \rname{$\wedge_I$} rule.\\

\noindent \textbf{Case.} $\inferrule{\Gamma \vdash \sigma_2 \preceq \sigma_1 \\ \Gamma, x:\sigma_2 \vdash \tau_1 \preceq \tau_2}
                                    {\Gamma \vdash \Pi (x:\sigma_1).\tau_1 \preceq \Pi (x:\sigma_2).\tau_2}$\\
To show: $\erfor{\Gamma}, f:\erfor{\Pi (x:\sigma_1).\tau_1} \mid \ertype{\Gamma}{},\ertype{\Pi (x:\sigma_1).\tau_1}{f}
             \vdash \ertype{\Pi (x:\sigma_2).\tau_2}{f}$\\
Expanding the definitions:\\
$\erfor{\Gamma}, f:\erfor{\Pi (x:\sigma_1).\tau_1} \mid \ertype{\Gamma}{}, \forall x. \ertype{\sigma_1}{x} \Rightarrow \ertype{\tau_1}{f x}
     \vdash \forall x. \ertype{\sigma_2}{x} \Rightarrow \ertype{\tau_2}{f x}$\\
By the rules \rname{$\forall_I$} and \rname{$\Rightarrow_I$} it suffices to prove:\\
$\erfor{\Gamma}, f:\erfor{\Pi (x:\sigma_1).\tau_1}, x:\erfor{\sigma_2} \mid \ertype{\Gamma}{},\forall x. \ertype{\sigma_1}{x} \Rightarrow \ertype{\tau_1}{f x}, \ertype{\sigma_2}{x}
     \vdash \ertype{\tau_2}{f x}$\hfill{}(1)\\
On the other hand, by induction hypothesis on the premises:\\
$\erfor{\Gamma}, x:\erfor{\sigma_2} \mid \ertype{\Gamma}{},\ertype{\sigma_2}{x} \vdash \ertype{\sigma_1}{x}$\hfill{}(2)\\
and\\
$\erfor{\Gamma}, x:\erfor{\sigma_2}, y:\erfor{\tau_1} \mid \ertype{\Gamma}{},\ertype{\sigma_2}{x}, \ertype{\tau_1}{y} \vdash \ertype{\tau_2}{y}$\hfill{}(3)\\
which we can weaken respectively to:\\
$\erfor{\Gamma}, x:\erfor{\sigma_2}, f:\erfor{\Pi (x:\sigma_1).\tau_1} \mid \erfor{\Gamma}{},\ertype{\sigma_2}{x}, \forall x. \ertype{\sigma_1}{x} \Rightarrow \ertype{\tau_1}{f x} \vdash \ertype{\sigma_1}{x}$\hfill{}(4)\\
and\\
$\erfor{\Gamma}, x:\erfor{\sigma_2},  y:\erfor{\tau_1}, f:\erfor{\Pi (x:\sigma_1).\tau_1} \mid \erfor{\Gamma}{},\ertype{\sigma_2}{x}, \ertype{\tau_1}{y}, \forall x. \ertype{\sigma_1}{x} \Rightarrow \ertype{\tau_1}{f x}
  \vdash \ertype{\tau_2}{y}$\hfill{}(5)\\
From (4), by doing a cut with its own premise $\forall x. \ertype{\sigma_1}{x} \Rightarrow \ertype{\tau_1}{f x}$, we derive:\\
$\erfor{\Gamma}, x:\erfor{\sigma_2}, f:\erfor{\Pi (x:\sigma_1).\tau_1} \mid \ertype{\Gamma}{},\ertype{\sigma_2}{x}, \forall x. \ertype{\sigma_1}{x} \Rightarrow \ertype{\tau_1}{f x} \vdash
\ertype{\tau_1}{f x}$\hfill{}(6)\\
From (5), by \rname{$\Rightarrow_I$} and \rname{$\forall_I$} we can derive:\\
$\erfor{\Gamma}, x:\erfor{\sigma_2}, f:\erfor{\Pi (x:\sigma_1).\tau_1} \mid \ertype{\Gamma}{},\ertype{\sigma_2}{x}, , \forall x. \ertype{\sigma_1}{x} \Rightarrow \ertype{\tau_1}{f x}
  \vdash \forall y. \ertype{\tau_1}{y} \Rightarrow \ertype{\tau_2}{y}$\hfill{}\\
And by \rname{$\forall_E$} \\
$\erfor{\Gamma}, x:\erfor{\sigma_2}, f:\erfor{\Pi (x:\sigma_1).\tau_1} \mid \ertype{\Gamma}{},\ertype{\sigma_2}{x}, , \forall x. \ertype{\sigma_1}{x} \Rightarrow \ertype{\tau_1}{f x}
  \vdash \ertype{\tau_1}{fx} \Rightarrow \ertype{\tau_2}{fx}$\hfill{}(7)\\
Finally, from (6) and (7) by \rname{$\Rightarrow_E$} we get:\\
$\erfor{\Gamma}, x:\erfor{\sigma_2}, f:\erfor{\Pi (x:\sigma_1).\tau_1} \mid \ertype{\Gamma}{},\ertype{\sigma_2}{x}, \forall x. \ertype{\sigma_1}{x} \Rightarrow \ertype{\tau_1}{f x} \vdash \ertype{\tau_2}{f x}$\\
and by one last application of \rname{$\Rightarrow_I$} we get what we wanted to prove.

\subsection*{Proof of Theorem~\ref{thm:ref-types-typ}}

By induction on the derivation:\\

\noindent \textbf{Case.} $\inferrule{}{x: \tau, \Gamma \vdash x : \tau}$\\
To prove : $x: \erfor{\tau} , \erfor{\Gamma} \vdash \ertype{\tau}{x}, \ertype{\Gamma}{} \vdash x : \erfor{\tau} \mid \ertype{\tau}{\res}$.
Directly by \rname{VAR}.\\

\noindent \textbf{Case.} $\inferrule{\Gamma, x:\tau \vdash t : \sigma}{\Gamma \vdash \lambda x. t : \Pi(x:\tau).\sigma}$\\
To prove: $\emtyj{\Gamma}{\lambda x.t}{\Pi(x:\tau).\sigma}$.\\
Expanding the definitions:\\
$\erfor{\Gamma} \mid \ertype{\Gamma}{} \vdash \lambda x. \emterm{t}{} : \erfor{\tau}\to\erfor{\sigma} \mid \forall x. \ertype{\tau}{x} \Rightarrow \ertype{\sigma}{\res x}$\\
By induction hypothesis on the premise:\\
$\erfor{\Gamma}, x:\erfor{\tau} \mid \ertype{\Gamma}{}, \ertype{\tau}{x} \vdash \emterm{t}{} : \erfor{\sigma} \mid \ertype{\sigma}{\res}$\\
Directly by \rname{ABS}.\\

\noindent \textbf{Case.} $\inferrule{\jreft{\Gamma}{t}{\pitype{x}{\tau}{\sigma}} \\ \jreft{\Gamma}{u}{\tau}}{\jreft{\Gamma}{t\ u}{\sigma\subst{x}{u}}}$\\
To prove: $\emtyj{\Gamma}{t\ u}{\sigma\subst{x}{u}}$.\\
Expanding the definitions:\\
$\erfor{\Gamma} \mid \ertype{\Gamma}{} \vdash \emterm{t}{}\ \emterm{e_2}{} : \erfor{\sigma} \mid \ertype{\sigma}{\res}\subst{x}{u}$\\
By induction hypothesis on the premise:\\
$\erfor{\Gamma} \mid \ertype{\Gamma}{} \vdash \emterm{t}{} : \erfor{\tau}\to\erfor{\sigma} \mid \forall x. \ertype{\tau}{x} \Rightarrow \ertype{\sigma}{\res x}$\\
and\\
$\erfor{\Gamma} \mid \ertype{\Gamma}{} \vdash \emterm{u}{} : \erfor{\tau} \mid \ertype{\tau}{\res}$\\
We get the result directly by \rname{APP}.\\

\noindent \textbf{Case.} $\inferrule{\jreft{\Gamma}{t}{\listt{\tau}} \\ \jreft{\Gamma}{u}{\sigma} \\ \jreft{\Gamma}{v}{\tau\to\listt{\tau}\to\sigma}}
                                    {\jreft{\Gamma}{{\rm case}\ t\ {\rm of}\ \nil \mapsto u ; \cons{\_}{\_} \mapsto v}{\sigma}}$\\
To prove: $\emtyj{\Gamma}{{\rm case}\ t\ {\rm of}\ \nil \mapsto u ; \cons{\_}{\_} \mapsto v}{\sigma}$\\
By induction hypothesis on the premises:\\
$\erfor{\Gamma} \mid \ertype{\Gamma}{} \vdash \emterm{t}{} : \erfor{\listt{\tau}} \mid \ertype{\listt{\tau}}{\res}$,\hfill{}(1)\\
$\erfor{\Gamma} \mid \ertype{\Gamma}{} \vdash \emterm{u}{} : \erfor{\sigma} \mid \ertype{\sigma}{\res}$,\hfill{}(2)\\
and \\
$\erfor{\Gamma} \mid \ertype{\Gamma}{} \vdash
    \emterm{v}{} : \erfor{\tau\to\listt{\tau}\to\sigma} \mid \ertype{\tau\to\listt{\tau}\to\sigma}{\res}$\hfill{}(3)\\
Expanding the definitions on (3) we get:\\
$\erfor{\Gamma} \mid \ertype{\Gamma}{} \vdash
    \emterm{v}{} : \erfor{\tau}\to\erfor{\listt{\tau}}\to\erfor{\sigma} \mid
       \forall x. \ertype{\tau}{x} \Rightarrow \forall y. \ertype{\listt{\tau}}{y} \Rightarrow \ertype{\sigma}{\res\ x\ y} $\hfill{}(4)\\
And from (1), (2) and (4) we apply \rname{LISTCASE*} and we get the result. Notice that (2) and (4) are stronger than the premises of the rule,
so we will first need to apply \rname{SUB} to weaken them\\

\noindent \textbf{Case.} $\inferrule{\Gamma \vdash \tau}{\Gamma \vdash \nil : \listt{\tau}}$ \\
To prove: $\emtyj{\Gamma}{\nil}{\listt{\tau}}$\\
Expanding the definitions: $\erfor{\Gamma} \mid \ertype{\Gamma}{} \vdash \nil : \listt{\erfor{\tau}} \mid {\rm All}(\res, x, \ertype{\tau}{x})$\\
And by the definition of ${\rm All}$ for the empty case, trivially ${\rm All}(\nil, x, \ertype{\tau}{x})$, so we apply the
\rname{NIL} rule and we get the result.\\ 

\noindent \textbf{Case.} $\inferrule{\Gamma \vdash h : \tau \\ \Gamma \vdash t : \listt{\tau}}{\Gamma \vdash \cons{h}{t} : \listt{\tau}}$ \\
To prove: $\emtyj{\Gamma}{\cons{h}{t}}{\listt{\tau}}$. \\
Expanding the definitions: $\erfor{\Gamma} \mid \ertype{\Gamma}{} \vdash \cons{h}{t} : \listt{\erfor{\tau}} \mid {\rm All}(\res, \lambda x. \ertype{\tau}{x})$.\\
By induction hypothesis on the premises, we have:\\
$\erfor{\Gamma} \mid \ertype{\Gamma}{} \vdash h : \erfor{\tau} \mid \ertype{\tau}{\res}$\\
and\\
$\erfor{\Gamma} \mid \ertype{\Gamma}{} \vdash t : \listt{\erfor{\tau}} \mid {\rm All}(\res, \lambda x. \ertype{\tau}{x})$.\\
We complete the proof by the \rname{CONS} rule and the definition of ${\rm All}$ in the inductive case.\\

\noindent \textbf{Case.} $\inferrule{\Gamma \vdash \tau \preceq \sigma \\ \jreft{\Gamma}{t}{\tau}}{\jreft{\Gamma}{t}{\sigma}}$\\
To prove: $\emtyj{\Gamma}{t}{\sigma}$\\
and, since $\erfor{\sigma} \equiv \erfor{\tau}$, it is the same as writing\\
$\emtyj{\Gamma}{t}{\tau}$\\
By induction hypothesis on the premises:\\
$\erfor{\Gamma}, x:\erfor{\tau} \mid \ertype{\Gamma}{}, \ertype{\tau}{x} \vdash \ertype{\sigma}{x}$\\
and\\
$\emtyj{\Gamma}{t}{\tau}$\\
The proof is completed by applying \rname{$\Rightarrow_I$} to the first premise, and then \rname{SUB}.\\

\noindent \textbf{Case.} $\inferrule{\jreft{\Gamma, x: \tau, f:\pitype{y}{\reftype{}{\tau}{y<x}}{\sigma\subst{x}{y}}}{t}{\sigma} \\ \wdef{f}{x}{t}}
                                    {\jreft{\Gamma}{\letrec{f}{x}{t}}{\pitype{x}{\tau}{\sigma}}}$\\
To prove: $\emtyj{\Gamma}{\letrec{f}{x}{t}}{\pitype{x}{\tau}{\sigma}}$\\
By induction hypothesis on the premise:\\
$\erfor{\Gamma},x:\erfor{\tau}, f:\erfor{\tau}\to\erfor{\sigma} \mid \ertype{\Gamma}{}, \ertype{\tau}{x}, \forall y. \ertype{\tau}{y} \wedge y < x \Rightarrow \ertype{\sigma\subst{x}{y}}{f y}
    \vdash \emterm{t}{} : \erfor{\sigma} \mid \ertype{\sigma}{\res}$\\
Directly by \rname{LETREC}.

\subsection*{Proof of Theorem~\ref{thm:emb-rrt}}

We can recover the lemma from the unary case:

\begin{lem}
  For every type $\tau$, expressions $t_1,t_2$ and variables $x_1,x_2\not\in FV(\tau,t_1,t_2)$, $\errtype{\tau}{t_1,t_2} = \errtype{\tau}{x_1,x_2}\subst{x_1}{t_1}\subst{x_2}{t_2}$
\end{lem}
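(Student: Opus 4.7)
The plan is a straightforward structural induction on the relational refinement type $\tau$, mirroring the proof of the preceding unary lemma. The freshness hypothesis $x_1, x_2 \notin FV(\tau, t_1, t_2)$ is used to guarantee that the outer substitution $\subst{x_1}{t_1}\subst{x_2}{t_2}$ does not interact badly with any of the bound variables introduced by the clauses defining $\errtype{\cdot}{\cdot, \cdot}$. In each case I unfold the definition of the erasure on both sides of the equation and either invoke the unary lemma (for components that are simple types) or appeal to the induction hypothesis (for components that are themselves relational refinement types).

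I would first dispatch the base cases, where $\tau$ is a simple type embedded into the relational language (the $\listt{\tau}$ clause with unary $\tau$, and the $\reftype{y}{\tau}{\phi}$ clause). Here the defining clauses reduce to applying the unary erasure $\ertype{\cdot}{\cdot}$ to each of $x_1$ and $x_2$ separately, and commutation with substitution follows from the unary lemma applied twice (plus the obvious fact that substitution commutes with conjunction). The inductive cases $\listt{T}$, $\rtref{y}{T}{\phi}$ and $\rtprod{y}{T}{U}$ are handled by unfolding the definition, distributing the outer substitution inwards, and applying the induction hypothesis to each inner occurrence of $\errtype{\cdot}{\cdot, \cdot}$. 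For instance, in the $\rtprod{y}{T}{U}$ case the left-hand side is $\forall y_1\, y_2.\, \errtype{T}{y_1, y_2} \Rightarrow \errtype{U}{t_1\, y_1,\, t_2\, y_2}$, and after $\alpha$-renaming the bound $y_1, y_2$ away from $\{x_1, x_2\}$ the induction hypothesis rewrites the consequent as $\errtype{U}{x_1\, y_1,\, x_2\, y_2}\subst{x_1}{t_1}\subst{x_2}{t_2}$, whence the whole formula coincides with $\errtype{\rtprod{y}{T}{U}}{x_1, x_2}\subst{x_1}{t_1}\subst{x_2}{t_2}$.

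The main obstacle, as is typical for substitution lemmas, is the careful management of bound variables in the cases that introduce new binders — notably $\rtref{y}{T}{\phi}$ and $\rtprod{y}{T}{U}$, where the defining clause either contains $\phi\subst{y_1}{x_1}\subst{y_2}{x_2}$ directly or quantifies over fresh $y_1, y_2$. This requires working modulo $\alpha$-equivalence, renaming the bound variables so that they avoid $\{x_1, x_2\} \cup FV(t_1) \cup FV(t_2)$; the freshness assumption in the statement of the lemma makes this always possible, so the two substitutions genuinely commute and the inductive step goes through without complications.
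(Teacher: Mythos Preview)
Your proposal is correct and matches the paper's approach: the paper gives no detailed proof for this lemma, merely remarking that it is the relational analogue of the unary substitution lemma (which itself is proved by structural induction). Your structural induction on the relational type, invoking the unary lemma for the simple-type sub-cases and the induction hypothesis for the relational sub-cases, is exactly the intended argument.
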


Most cases are very similar to the unary case, so we will only show the most
interesting ones:

\noindent \textbf{Case.} $\inferrule{\Gamma \vdash T}
                                    {\Gamma \vdash \nil \sim \nil :: \listt{T}}$ \\
To show: $\jrhol{\erfor{\Gamma}}{\errtype{\Gamma}{}}{\nil}{\erfor{\listt{T}}}{\nil}{\erfor{\listt{T}}}{\errtype{\listt{T}}{\res\ltag,\res\rtag}}$.\\
There are two options. If $T$ is a unary type, we have to prove:\\
$\jrhol{\erfor{\Gamma}}{\errtype{\Gamma}{}}{\nil}{\erfor{\listt{T}}}{\nil}{\erfor{\listt{T}}}
       {\bigwedge_{i\in\{1,2\}} {\rm All}(\res_i, \lambda x.\ertype{\tau}{x})}$\\
And by the definition of ${\rm All}$ we can directly prove:\\
$\jhol{\emptyset}{\emptyset}{{\rm All}(\nil, \lambda x.\ertype{\tau}{x}) \wedge {\rm All}(\nil, \lambda x.\ertype{\tau}{x})}$\\
If $T$ is a relational type, we have to prove:\\
$\jrhol{\erfor{\Gamma}}{\errtype{\Gamma}{}}{\nil}{\erfor{\listt{T}}}{\nil}{\erfor{\listt{T}}}
       {{\rm All2}(\res\ltag, \res\rtag, \lambda x_1.\lambda x_2.\errtype{T}{x_1, x_2})}$\\
And by the definition of ${\rm All2}$ we can directly prove:\\
$\jhol{\emptyset}{\emptyset}{{\rm All2}(\nil, \nil, \lambda x_1.\lambda x_2.\errtype{T}{x_1,x_2})}$\\

\noindent \textbf{Case.}$\inferrule{\Gamma \vdash h_1 \sim h_2 :: T \\ \Gamma \vdash t_1 \sim t_2 :: \listt{T}}
                                   {\Gamma \vdash \cons{h_1}{t_1} \sim \cons{h_2}{t_2} :: \listt{T}}$\\
To show: $\jrhol{\erfor{\Gamma}}{\errtype{\Gamma}{}}{\cons{h_1}{t_2}}{\erfor{\listt{T}}}{\cons{h_2}{t_2}}{\erfor{\listt{T}}}{\listt{T}}$.\\
There are two options. If $T$ is a unary type, we have to prove:\\
$\jrhol{\erfor{\Gamma}}{\errtype{\Gamma}{}}{\cons{h_1}{t_1}}{\erfor{\listt{T}}}{\cons{h_2}{t_2}}{\erfor{\listt{T}}}
       {\bigwedge_{i\in\{1,2\}} {\rm All}(\res_i, \lambda x.\ertype{T}{x})}$\\
By induction hypothesis we have:\\
$\jrhol{\erfor{\Gamma}}{\errtype{\Gamma}{}}{h_1}{\erfor{T}}{\cons{h_2}{t_2}}{\erfor{T}}
       {\bigwedge_{i\in\{1,2\}} \ertype{T}{\res_i}}$\\
and\\
$\jrhol{\erfor{\Gamma}}{\errtype{\Gamma}{}}{t_1}{\erfor{\listt{T}}}{t_2}{\erfor{\listt{T}}}
       {\bigwedge_{i\in\{1,2\}} {\rm All}(\res_i, \lambda x.\ertype{T}{x})}$\\
And by the definition of ${\rm All}$ we can directly prove:\\
$\bigwedge_{i\in\{1,2\}} \ertype{T}{h_i} \Rightarrow
    {\bigwedge_{i\in\{1,2\}} {\rm All}(t_i, \lambda x.\ertype{T}{x})} \Rightarrow
       {\bigwedge_{i\in\{1,2\}} {\rm All}(\cons{h_i}{t_i}, \lambda x.\ertype{T}{x})}$\\
So by the \rname{CONS} rule, we prove the result.
If $T$ is a relational type, we have to prove:\\
$\jrhol{\erfor{\Gamma}}{\errtype{\Gamma}{}}{\cons{h_1}{t_1}}{\erfor{\listt{T}}}{\cons{h_2}{t_2}}{\erfor{\listt{T}}}
       {{\rm All2}(\res\ltag, \res\rtag, \lambda x_1. \lambda x_2.\errtype{T}{x_1, x_2})}$\\
By induction hypothesis we have:\\
$\jrhol{\erfor{\Gamma}}{\errtype{\Gamma}{}}{h_1}{\erfor{T}}{\cons{h_2}{t_2}}{\erfor{T}}
       {\errtype{T}{\res\ltag, \res\rtag}}$\\
and\\
$\jrhol{\erfor{\Gamma}}{\errtype{\Gamma}{}}{t_1}{\erfor{\listt{T}}}{t_2}{\erfor{\listt{T}}}
       {{\rm All2}(\res\ltag, \res\rtag, \lambda x_1.\lambda x_2.\errtype{T}{x_1,x_2})}$\\
And by the definition of ${\rm All2}$ we can directly prove:\\
$\errtype{T}{h_1, h_2} \Rightarrow
    {{\rm All2}(t_1, t_2, \lambda x_1.\lambda x_2.\errtype{T}{x_1, x_2})} \Rightarrow
       {\rm All}(\cons{h_1}{t_1}, \cons{h_1}{h_2}, \lambda x_1.\lambda x_2.\errtype{T}{x_1,x_2})$\\
So by the \rname{CONS} rule, we prove the result.\\

\noindent \textbf{Case.}$\inferrule{\Gamma \vdash t_1 \sim t_2 :: \listt{T} \\ \Gamma \vdash t_1 = \nil \Leftrightarrow t_2 = \nil \\
                                    \Gamma \vdash u_1 \sim u_2 :: U \\ \Gamma \vdash v_1 \sim v_2 :: \rtprod{h}{T}{\rtprod{t}{\listt{T}}{U}}}
                                   {\Gamma \vdash {\rm case}\ t_1\ {\rm of}\ \nil \mapsto u_1 ; \cons{\_}{\_} \mapsto v_1 \sim
                                                    {\rm case}\ t_2\ {\rm of}\ \nil \mapsto u_2 ; \cons{\_}{\_} \mapsto v_2 :: U}$\\
To show:\\
$\jrhol{\erfor{\Gamma}}{\errtype{\Gamma}{}}{\caselist{t_1}{u_1}{v_1}}{\erfor{U}}{\caselist{t_2}{u_2}{r_2}}{\erfor{U}}{\errtype{U}{\res\ltag,\res\rtag}}$\\
By induction hypothesis we have:\\
$\jhol{\erfor{\Gamma}}{\errtype{\Gamma}{}}{t_1 = \nil \Leftrightarrow t_2 = \nil}$,\\
$\jrhol{\erfor{\Gamma}}{\errtype{\Gamma}{}}{u_1}{\erfor{U}}{u_2}{\erfor{U}}{\errtype{U}{\res\ltag,\res\rtag}}$ \\
and\\
$\jrhol{\erfor{\Gamma}}{\errtype{\Gamma}{}}{v_1}
       {T\to\listt{T}\to U}{v_2}{T\to\listt{T}\to U}
       {\forall h_1 h_2. \errtype{T}{h_1,h_2} \Rightarrow \forall t_1 t_2. \errtype{\listt{T}}{t_1,t_2} \Rightarrow \errtype{U}{\res\ltag h_1 t_1,\ h_2 t_2 \res\rtag}}$ \\
By applying the \rname{LISTCASE*} rule to the three premises we get the result.\\

\noindent \textbf{Case.}$\inferrule{ \Gamma, x :: T, f :: \rtprod{y}{\rtref{y}{T}{(y_1,y_2) < (x_1,x_2)}}{U\subst{x}{y}} \vdash t_1 \sim t_2 :: U\\\\
                                      \Gamma \vdash \rtprod{x}{T}{U} \\ \wdef{f_1}{x_1}{t_1} \\ \wdef{f_2}{x_2}{t_2}}
                                   {\Gamma \vdash {\rm letrec}\ f_1\ x_1 = t_1 \sim {\rm letrec}\ f_2\ x_2 = t_2  :: \rtprod{x}{T}{U}}$\\
To show:\\
$\jrhol{\erfor{\Gamma}}{\errtype{\Gamma}{}}{\letrec{f_1}{x_1}{t_1}}{\erfor{\rtprod{x}{T}{U}}}
       {\letrec{f_2}{x_2}{t_2}}{\erfor{\rtprod{x}{T}{U}}}{\errtype{\rtprod{x}{T}{U}}{\res\ltag,\res\rtag}}$\\
Expanding the definitions:\\
$\jrhol{\erfor{\Gamma}}{\errtype{\Gamma}{}}{\letrec{f_1}{x_1}{t_1}}{\erfor{T}\to\erfor{U}}
       {\letrec{f_2}{x_2}{t_2}}{\erfor{T}\to\erfor{U}}{\forall x_1 x_2. \errtype{T}{x_1,x_2} \Rightarrow \errtype{U}{\res\ltag x_1,\ \res\rtag x_2}}$\\
By induction hypothesis on the premise:\\
$\jrhol{\erfor{\Gamma}, x_1,x_2 : \erfor{T}, f_1,f_2 : \erfor{T}\to\erfor{U}}
       {\errtype{\Gamma}{}, \errtype{T}{x_1,x_2}, \forall y_1,y_2. (\errtype{T}{y_1,y_2} \wedge (y_1,y_2) < (x_1, x_2)) \Rightarrow \errtype{U}{f_1 x_1,\ f_2 x_2}}
       {t_1}{\erfor{U}}{t_2}{\erfor{U}}{\errtype{U}{\res\ltag, \res\rtag}}$\\
And we apply the \rname{LETREC} rule to get the result.

\subsection*{Proof of Theorem~\ref{thm:equivhol}}

The easier direction is the reverse implication. To prove it, one just
notices that we can trivially apply \rname{SUB} instantiating $\phi'$
as a tautology that matches the structure of the types. For instance,
for a base type $\nat$ we would use $\top$, for an arrow type
$\nat\to\nat$ we would use $\forall x. \bot \Rightarrow \top$, and so
on.

We now prove the direct implication by induction on the derivation of
$\jrhol{\Gamma}{\Psi}{t_1}{\sigma_1}{t_2}{\sigma_2}{\phi}$. Suppose
the last rule is:

\noindent \textbf{Case.} \rname{VAR} (similarly, \rname{NIL} and \rname{PROJ})

\noindent
The premise of the rule is already the judgment we want to prove.\\

\noindent \textbf{Case.} \rname{ABS} $ \inferrule
      {\jrhol{\Gamma,x_1:\tau_1,x_2:\tau_2}{\Psi,\phi'}{t_1}{\sigma_1}{t_2}{\sigma_2}{\phi}}
      {\jrhol{\Gamma}{\Psi}{\lambda x_1. t_1}{\tau_1 \to \sigma_1}{\lambda x_2. t_2}{\tau_2\to \sigma_2}
             {\forall x_1,x_2. \phi' \Rightarrow \phi\subst{\res\ltag}{\res\ltag\ x_1}\subst{\res\rtag}{\res\rtag\ x_2}}}$

\noindent
By applying the induction hypothesis on the premise:
\\
$\jhol{\Gamma, x_1:\tau_{1}, x_2:\tau_{2}}{\Psi, \phi'}{\phi\defsubst{t_1}{t_2}}$ \hfill{} (1)
\\
By applying \rname{$\Rightarrow_I$} on (1):
\\
$\jhol{\Gamma, x_1:\tau_{1}, x_2:\tau_{2}}{\Psi}{\phi' \Rightarrow \phi\defsubst{t_1}{t_2}}$
\\
By applying \rname{$\forall_I$} twice on (2):
\\
$\jhol{\Gamma}{\Psi}{\forall x_1 x_2. \phi' \Rightarrow \phi\defsubst{t_1}{t_2}}$ \hfill{} (3)
\\
Finally, by applying CONV on (3):
\\
$\jhol{\Gamma}{\Psi}{\forall x_1 x_2. \phi' \Rightarrow \phi\defsubst{(\lambda x_1. t_1)\ x_1}{(\lambda x_2. t_2)\ x_2}}$
\\
Proof for \rname{ABS-L} (and \rname{ABS-R}) is analogous. \\

\noindent \textbf{Case.} \rname{APP} $\inferrule
{\jrhol{\Gamma}{\Psi}{t_1}{\tau_1\to \sigma_1}{t_2}{\tau_2\to \sigma_2}
               {\forall x_1,x_2. \phi'\subst{\res\ltag}{x_1}\subst{\res\rtag}{x_2}\Rightarrow \phi\subst{\res\ltag}{\res\ltag\ x_1}\subst{\res\rtag}{\res\rtag\ x_2}} \\
         \jrhol{\Gamma}{\Psi}{u_1}{\tau_1}{u_2}{\tau_2}{\phi'}}
{\jrhol{\Gamma}{\Psi}{t_1 u_1}{\sigma_1}{t_2 u_2}{\sigma_2}{\phi\subst{x_1}{u_1}\subst{x_2}{u_2}}}$\\
 
\noindent
By applying the induction hypothesis on the premises we have:
\\
$\jhol{\Gamma}{\Psi}{\forall x_1 x_2. \phi'\defsubst{x_1}{x_2}\Rightarrow 
          \phi\defsubst{t_1\ x_1}{t_2\ x_2}}$ \hfill{} (1)
\\
and
\\
$\jhol{\Gamma}{\Psi}{\phi'\defsubst{u_1}{u_2}}$ \hfill{} (2)
\\
By applying twice \rname{$\forall_E$} to (1) with $u_1,u_2$:
\\
$\jhol{\Gamma}{\Psi}{\phi'\defsubst{u_1}{u_2}\Rightarrow 
          \phi\defsubst{t_1\ u_1}{t_2\ u_2}}$ \hfill{} (3)
\\
and by applying \rname{$\Rightarrow_E$} to (3) and (2):
\\
$\jhol{\Gamma}{\Psi}{\phi\defsubst{t_1\ u_1}{t_2\ u_2}}$
\\
Proof for \rname{APP-L} (and APP-R) is analogous, and it uses the UHOL embedding for the
premise about the argument.
Proofs for \rname{CONS} and \rname{PAIR} are very similar as well.\\

\noindent \textbf{Case.} \rname{SUB}
      $\inferrule
      {\jrhol{\Gamma}{\Psi}{t_1}{\sigma_1}{t_2}{\sigma_2}{\phi'} \\
       \Gamma\mid\Psi\vdash_{\sf HOL}\phi'\defsubst{t_1}{t_2} \Rightarrow \phi\defsubst{t_1}{t_2}}
      {\jrhol{\Gamma}{\Psi}{t_1}{\sigma_1}{t_2}{\sigma_2}{\phi}}$\\

\noindent
Applying the inductive hypothesis on the premises we have:\\
$\jhol{\Gamma}{\Psi}{\phi'\defsubst{t_1}{t_2}}$\\
and\\
$\jhol{\Gamma}{\Psi}{\phi'\defsubst{t_1}{t_2} \Rightarrow \phi\defsubst{t_1}{t_2}}$\\
The proof is simply applying \rname{$\Rightarrow_E$}.\\


\noindent \textbf{Case.} \rname{LETREC} 
 $\inferrule
     {\wdef{f_1}{x_1}{e_1} \:\: \wdef{f_2}{x_2}{e_2} \\
      \begin{array}{c}
        \jrhol{\Gamma, x_1:I_1, x_2:I_2, f_1:I_1\to{\sigma}, f_2:I_2\to{\sigma_2}}
              {\\ \Psi, \phi', \forall m_1 m_2. (|m_1|, |m_2|) < (|x_1|, |x_2|) \Rightarrow \phi'\subst{x_1}{m_1}\subst{x_2}{m_2}\Rightarrow
                  \phi\subst{x_1}{m_1}\subst{x_2}{m_2}\defsubst{f_1\ m_1}{f_2\ m_2}}{\\ e_1}{\sigma_1}{e_2}{\sigma_2}{\phi}
      \end{array}
     }
     {\jrhol{\Gamma}{\Psi}{{\rm letrec}\ f_1\ x_1\ = e_1}{I_1\to{\sigma_2}}{{\rm letrec}\ f_2\ x_2\ = e_2}{I_2\to{\sigma_2}}
            {\forall x_1 x_2. \phi' \Rightarrow \phi\defsubst{\res\ltag\ x_1}{\res\rtag\ x_2}}}$\\

\noindent
As an example, we prove the list and nat case, but for other datatypes the proof
is similar.
Applying the inductive hypothesis on the premise we have:

\[\jhol{\Gamma, l_1, n_2, f_1, f_2}{\Psi, \forall m_1 m_2. (|m_1|,|m_2|) < (|l_1|, |n_2|) \Rightarrow \phi\defsubst{f_1 m_1}{f_2 m_2}}{\phi\defsubst{e_1}{e_2}}\]
By \rname{$\forall_I$} we derive:
\[\jhol{\Gamma}{\Psi}{\forall f_1, f_2, l_1, n_2 .(\forall m_1 m_2. (|m_1|,|m_2|) < (|l_1|, |n_2|) \Rightarrow \phi\defsubst{f_1 m_1}{f_2 m_2}) \Rightarrow \phi\defsubst{e_1}{e_2}}.
  \tag{$\Phi$} \]

We want to prove
\[\jhol{\Gamma}{\Psi}{\forall l_1 n_2. \phi\defsubst{F_1\ l_1}{F_2\ n_2}}\]
where we use the abbreviations
\begin{align*}
F_1 \quad&:=\quad \letrec{f_1}{x_1}{e_1} \\
F_2 \quad&:=\quad \letrec{f_2}{x_2}{e_2}
\end{align*}

We will use strong double induction over natural numbers and lists. We need to prove four premises. Since we can prove $(\Phi)$ from $\Gamma, \Psi$, we can add it to the axioms:

\begin{enumerate}[label=(\Alph*)]
  \item $\jhol{\Gamma}{\Psi, \Phi}{\phi\defsubst{F_1\ \nil}{F_2\ 0}}$
  \item $\jhol{\Gamma, h_1, t_1}{\Psi, \Phi, \forall m_1. |m_1| \leq |t_1| \Rightarrow \phi\defsubst{F_1\ m_1}{F_2\ 0}}{\phi\defsubst{F_1\ (\cons{h_1}{t_1})}{F_2\ 0}}$
  \item $\jhol{\Gamma, x_2}{\Psi, \Phi, \forall m_2. |m_2| \leq |x_2| \Rightarrow \phi\defsubst{F_1\ \nil}{F_2\ m_2}}{\phi\defsubst{F_1\ \nil}{F_2\ (S x_2)}}$
  \item $\jhol{\Gamma, h_1, t_1, x_2}{\Psi, \Phi, \forall m_1 m_2. (|m_1|, |m_2|)<(|\cons{h_1}{t_1}|, |S x_2|) \Rightarrow \\
         \phi\defsubst{F_1\ m_1}{F_2\ m_2}}{\phi\defsubst{F_1\ (\cons{h_1}{t_1})}{F_2\ (S x_2)}}$
\end{enumerate}

To prove them, we will instantiate the quantifiers in $\Phi$ with the appropriate variables.

To prove (A), we instantiate $\Phi$ at $F_1, F_2, \nil,0$:
\[(\forall m_1 m_2. (|m_1|,|m_2|) < (|[]|, |0|) \Rightarrow \phi\defsubst{F_1 m_1}{F_2 m_2}) \Rightarrow \phi\defsubst{e_1}{e_2}\subst{l_1}{\nil}\subst{n_2}{0}\subst{f_1}{F_1}\subst{f_2}{F_2}\] 
and, since $(|m_1|,|m_2|) < (|[]|, |0|)$ is trivially false, then
\[\phi\defsubst{e_1}{e_2}\subst{l_1}{\nil}\subst{n_2}{0}\subst{f_1}{F_1}\subst{f_2}{F_2}\]
and by beta-expansion and \rname{CONV}:
\[\phi\defsubst{F_1\ \nil}{F_2\ 0}\].\\

To prove (B), we instantiate $\Phi$ at $F_1, F_2, \cons{h_1}{t_1}, 0$
\[(\forall m_1 m_2. (|m_1|,|m_2|) < (|\cons{h_1}{t_1}|, |0|) \Rightarrow \phi\defsubst{F_1 m_1}{F_2 m_2}) \Rightarrow
         \phi\defsubst{e_1}{e_2}\subst{l_1}{\cons{h_1}{t_1}}\subst{n_2}{0}\subst{f_1}{F_1}\subst{f_2}{F_2}\]
by beta-expansion:
\[(\forall m_1 m_2. (|m_1|,|m_2|) < (|\cons{h_1}{t_1}|, |0|) \Rightarrow \phi\defsubst{F_1 m_1}{F_2 m_2}) \Rightarrow
         \phi\defsubst{F_1\ \cons{h_1}{t_1}}{F_2\ 0}\]
Since $(|m_1|,|m_2|) < (|\cons{h_1}{t_1}|, |0|)$ is only satisfied if $|m_1|\leq|t_1| \wedge m_2 = 0$, we can write it as:
\[(\forall m_1 m_2. (|m_1| \leq |t_1| \wedge m_2 = 0 ) \Rightarrow \phi\defsubst{F_1 m_1}{F_2 m_2}) \Rightarrow
         \phi\defsubst{F_1\ \cons{h_1}{t_1}}{F_2\ 0}\]
On the other hand, one of the antecedents of (B) is $\forall m_1. |m_1| \leq |t_1| \Rightarrow \phi\defsubst{F_1\ m_1}{F_2\ 0}$,
so by \rname{$\Rightarrow_E$} we prove $\phi\defsubst{F_1\ \cons{h_1}{t_1}}{F_2\ 0}$, which is the consequent of (B).

The proof of (C) is symmetrical to the proof of (B).
 
To prove (D), we instantiate $\Phi$ at $F_1, F_2, \cons{h_1}{t_1}, S x_2$
\[
\begin{array}{c}
   (\forall m_1 m_2. (|m_1|,|m_2|) < (|\cons{h_1}{t_1}|, |S x_2|) \Rightarrow \phi\defsubst{F_1 m_1}{F_2 m_2}) \Rightarrow \\
         \phi\defsubst{e_1}{e_2}\subst{l_1}{\cons{h_1}{t_1}}\subst{n_2}{S x_2}\subst{f_1}{F_1}\subst{f_2}{F_2}
\end{array}
\]
by beta-expansion:
\[(\forall m_1 m_2. (|m_1|,|m_2|) < (|\cons{h_1}{t_1}|, |S x_2|) \Rightarrow \phi\defsubst{F_1 m_1}{F_2 m_2}) \Rightarrow
         \phi\defsubst{F_1\ \cons{h_1}{t_1}}{F_2\ (S x_2)}\]
One of the antecedents of (D) is exactly
$\forall m_1 m_2. (|m_1|, |m_2|)<(|\cons{h_1}{t_1}|, |S x_2|) \Rightarrow \phi\defsubst{F_1\ m_1}{F_2\ m_2}$,
so by \rname{$\Rightarrow_E$} we prove $\phi\defsubst{F_1\ \cons{h_1}{t_1}}{F_2\ (S x_2)}$, which is the 
consequent of (D).

Proof of \rname{LETREC-L} (and \rname{LETREC-R}) is analogous, and uses simple strong induction.\\

\noindent \textbf{Case.} \rname{CASE}
$\inferrule
      {\jrhol{\Gamma}{\Psi}{l_1}{\listt{\tau_1}}{l_2}{\listt{\tau_2}}{\res\ltag = \nil \Leftrightarrow \res\rtag = \nil} \\
      \jrhol{\Gamma}{\Psi, l_1= \nil, l_2= \nil}{u_1}{\sigma_1}{u_2}{\sigma_2}{\phi} \\\\
      \begin{array}{c}
         \jrhol{\Gamma}{\Psi}{v_1}{\tau_1\to\listt{\tau_1}\to\sigma_1}{v_2}{\tau_2\to\listt{\tau_2}\to\sigma_2}
               {\\ \forall h_1 h_2 t_1 t_2. l_1 = \cons{h_1}{t_1} \Rightarrow l_2 = \cons{h_2}{t_2} \Rightarrow 
                      \phi\defsubst{\res\ltag\ h_1\ t_1}{\res\rtag\ h_2\ t_2}}
      \end{array}
      }
  {\jrhol{\Gamma}{\Psi}{\caselist{l_1}{u_1}{v_1}}{\sigma_1}{\caselist{l_2}{u_2}{v_2}}{\sigma_2}{\phi}}$
  
We prove the rule for natural numbers. Applying the induction hypothesis to the premises of the rule, we have:

\begin{enumerate}[label=(\Alph*)]
  \item $\jhol{\Gamma}{\Psi}{t_1 = 0 \Leftrightarrow t_2 = 0}$
  \item $\jhol{\Gamma}{\Psi, t_1 = 0, t_2 = 0}{\phi\defsubst{u_1}{u_2}}$
  \item $\jhol{\Gamma}{\Psi}{\forall x_1, x_2. t_1 = S x_1 \Rightarrow t_2 = S x_2 \Rightarrow \phi\defsubst{v_1\ x_1}{v_2\ x_2}}$
\end{enumerate}

We want to prove:
\[\jhol{\Gamma}{\Psi}{\phi\defsubst{(\casenat{t_1}{u_1}{v_1})}{(\casenat{t_2}{u_2}{v_2})}}\]

By applying \rname{DESTR-NAT} twice, we get four premises:

\begin{enumerate}
\item $\jhol{\Gamma}{\Psi, t_1 = 0, t_2 = 0}{\phi\defsubst{(\casenat{t_1}{u_1}{v_1})}{(\casenat{t_2}{u_2}{v_2})}}$
\item $\jhol{\Gamma, m_2}{\Psi, t_1 = 0, t_2 = S m_2}{\phi\defsubst{(\casenat{t_1}{u_1}{v_1})}{(\casenat{t_2}{u_2}{v_2})}}$
\item $\jhol{\Gamma, m_1}{\Psi, t_1 = S m_1, t_2 = 0}{\phi\defsubst{(\casenat{t_1}{u_1}{v_1})}{(\casenat{t_2}{u_2}{v_2})}}$
\item $\jhol{\Gamma, m_1,m_2}{\Psi, t_1 = S m_1, t_2 = S m_2}{\phi\defsubst{(\casenat{t_1}{u_1}{v_1})}{(\casenat{t_2}{u_2}{v_2})}}$
\end{enumerate}

We can prove (2) and (3) by deriving a contradiction with \rname{NC} and (A). After beta-reducing in (1) and (4) we can easily
derive them from (B) and (C) respectively.

Proof of \rname{CASE-L} (and \rname{CASE-R}) is analogous.

\subsection*{Proof of Lemma~\ref{lem:emb-uhol-rhol}}

By the embedding into HOL, we have:
\begin{itemize}
  \item $\jhol{\Gamma}{\Psi}{\phi\subst{\res}{t_1}}$
  \item $\jhol{\Gamma}{\Psi}{\phi'\subst{\res}{t_2}}$
\end{itemize}
and by the \rname{$\wedge_I$} rule,
\[\jhol{\Gamma}{\Psi}{\phi\subst{\res}{t_1} \wedge \phi'\subst{\res}{t_2}}.\]
Finally, by undoing the embedding:
\[\jrhol{\Gamma}{\Psi}{t_1}{\sigma_1}{t_2}{\sigma_2}{\phi}.\]



\subsection*{Proof of Lemma~\ref{lem:dcc:confinement}}

By induction on the derivation of $\tau \searrow \ell$.\\

\noindent \textbf{Case.} $\inferrule{\ell \sqsubseteq
  \ell'}{\ttmonad{\ell'}(\tau) \searrow \ell}$

Since $\ell \not\sqsubseteq \adversary$ (given) and $\ell \sqsubseteq
\ell'$ (premise), it must be the case that $\ell' \not\sqsubseteq
\adversary$. Hence, by definition,
$\dccform{\adversary}{\ttmonad{\ell'}(\tau)}(x,y) = \top$.\\

\noindent \textbf{Case.} $\inferrule{\tau \searrow
  \ell}{\ttmonad{\ell'}(\tau) \searrow \ell}$

We consider two cases: \\

If $\ell' \not\sqsubseteq \adversary$, then
$\dccform{\adversary}{\ttmonad{\ell'}(\tau)}(x,y) = \top$ by
definition. \\

If $\ell' \sqsubseteq \adversary$, then
$\dccform{\adversary}{\ttmonad{\ell'}(\tau)}(x,y) =
\dccform{\adversary}{\tau}(x,y)$ by definition. By i.h.\ on the
premise, we have $\dccform{\adversary}{\tau}(x,y) \equiv
\top$. Composing, $\dccform{\adversary}{\ttmonad{\ell'}(\tau)}(x,y)
\equiv \top$.\\

\noindent \textbf{Case.} $\inferrule{\tau_1 \searrow \ell \\ \tau_2
  \searrow \ell}{\tau_1 \times \tau_2 \searrow \ell}$

By i.h.\ on the premises, we have $\dccform{\adversary}{\tau_i}(x,y)
\equiv \top$ for $i = 1,2$ and all $x, y$. Therefore,
$\dccform{\adversary}{\tau_1 \times \tau_2}(x,y) \defeq
\dccform{\adversary}{\tau_1}(\pi_1(x), \pi_1(y)) \wedge
\dccform{\adversary}{\tau_2}(\pi_2(x), \pi_2(y)) \equiv \top \wedge
\top \equiv \top$.\\

\noindent \textbf{Case.} $\inferrule{\tau_2 \searrow \ell}{\tau_1
  \rightarrow \tau_2 \searrow \ell}$

By i.h.\ on the premise, we have $\dccform{\adversary}{\tau_2}(x, y)
\equiv \top$ for all $x, y$. Hence, $\dccform{\adversary}{\tau_1
  \rightarrow \tau_2}(x, y) \defeq (\forall v, w.\,
\dccform{\adversary}{\tau_1}(v,w) \Rightarrow
\dccform{\adversary}{\tau_2}(x\ v, y\ w)) \equiv (\forall v, w.\,
\dccform{\adversary}{\tau_1}(v,w) \Rightarrow \top) \equiv \top$.



\subsection*{Proof of Theorem~\ref{thm:dcc:embed}}

By induction on the given derivation of $\Gamma \vdash e: \tau$.\\

\noindent \textbf{Case.} $\inferrule{ }{\Gamma \vdash \tbool: \bool}$

\noindent
To show:
$\jrhol{\erase{\Gamma}}
{\dccform{\adversary}{\Gamma}}
{\tbool}{\bool}
{\tbool}{\bool}
{(\res\ltag = \tbool \wedge \res\rtag = \tbool)
  \vee (\res\ltag = \fbool \wedge \res\rtag = \fbool)}
$.\\
By rule TRUE, it suffices to show $(\tbool = \tbool \wedge \tbool =
\tbool) \vee (\tbool = \fbool \wedge \tbool = \fbool)$ in HOL, which
is trivial. \\

\noindent \textbf{Case.}  $\inferrule{\Gamma \vdash e: \bool \\ \Gamma
  \vdash e_t: \tau \\ \Gamma \vdash e_f: \tau} {\Gamma \vdash
  \casebool{e}{e_t}{e_f} : \tau}$

\noindent
To show:
$\jrhol{\erase{\Gamma}}
{\dccform{\adversary}{\Gamma}}
{(\casebool{\erase{e}_1}{\erase{e_t}_1}{\erase{e_f}_1})}{\erase{\tau}}
{(\casebool{\erase{e}_2}{\erase{e_t}_2}{\erase{e_f}_2})}{\erase{\tau}}
{\dccform{\adversary}{\tau}(\res\ltag, \res\rtag)}$.\\
By i.h.\ on the first premise:\\
$\jrhol{\erase{\Gamma}}
{\dccform{\adversary}{\Gamma}}
{\erase{e}_1}{\bool}
{\erase{e}_2}{\bool}
{(\res\ltag = \tbool \wedge \res\rtag = \tbool)
  \vee (\res\ltag = \fbool \wedge \res\rtag = \fbool)}$\\
By i.h.\ on the second premise:\\
$\jrhol{\erase{\Gamma}}
{\dccform{\adversary}{\Gamma}}
{\erase{e_t}_1}{\erase{\tau}}
{\erase{e_t}_2}{\erase{\tau}}
{\dccform{\adversary}{\tau}(\res\ltag, \res\rtag)}$\\
By i.h.\ on the third premise:\\
$\jrhol{\erase{\Gamma}}
{\dccform{\adversary}{\Gamma}}
{\erase{e_f}_1}{\erase{\tau}}
{\erase{e_f}_2}{\erase{\tau}}
{\dccform{\adversary}{\tau}(\res\ltag, \res\rtag)}$\\
Applying rule BOOLCASE to the past three statements yields the
required result.\\

\noindent \textbf{Case.} $\inferrule{ }{\Gamma, x: \tau \vdash x:
  \tau}$

\noindent
To show: $\jrhol{\erase{\Gamma}, x_1: \erase{\tau}, x_2: \erase{\tau}}
{\dccform{\adversary}{\Gamma}, \dccform{\adversary}{\tau}(x_1, x_2)}
{x_1}{\erase{\tau}}{x_2}{\erase{\tau}}{\dccform{\adversary}{\tau}(\res\ltag,
  \res\rtag)}$.\\
This follows immediately from rule VAR.\\

\noindent \textbf{Case.} $\inferrule{\Gamma, x: \tau_1 \vdash e:
  \tau_2}{\Gamma \vdash \lambda x.e: \tau_1 \rightarrow \tau_2}$

\noindent
To show:
$\jrhol{\erase{\Gamma}}
{\dccform{\adversary}{\Gamma}}
{\lambda x_1. \erase{e}\ltag}{\erase{\tau_1} \rightarrow \erase{\tau_2}}
{\lambda x_2. \erase{e}\rtag}{\erase{\tau_1} \rightarrow \erase{\tau_2}}
{\forall x_1, x_2.\, \dccform{\adversary}{\tau_1}(x_1, x_2) \Rightarrow \dccform{\adversary}{\tau_2}(\res\ltag\ x_1, \res\rtag\ x_2)}$.\\
By i.h.\ on the premise:
$\jrhol{\erase{\Gamma}, x_1: \erase{\tau_1}, x_2: \erase{\tau_2}}
{\dccform{\adversary}{\Gamma}, \dccform{\adversary}{\tau_1}(x_1, x_2)}
{\erase{e}\ltag}{\erase{\tau_2}}
{\erase{e}\rtag}{\erase{\tau_2}}
{\dccform{\adversary}{\tau_2}(\res\ltag, \res\rtag)}$.\\
Applying rule ABS immediately yields the required result.\\ 
  
\noindent \textbf{Case.}
$\inferrule{\Gamma \vdash e: \tau_1 \rightarrow \tau_2 \\ \Gamma
  \vdash e': \tau_1}{\Gamma \vdash e\ e': \tau_2}$

\noindent
To show:
$\jrhol{\erase{\Gamma}}
{\dccform{\adversary}{\Gamma}}
{\erase{e}_1 \ \erase{e'}_1}{\erase{\tau_2}}
{\erase{e}_2 \ \erase{e'}_2}{\erase{\tau_2}}
{\dccform{\adversary}{\tau_2}(\res\ltag, \res\rtag)}
$.\\
By i.h.\ on the first premise:\\
$\jrhol{\erase{\Gamma}}
{\dccform{\adversary}{\Gamma}}
{\erase{e}_1}{\erase{\tau_1} \rightarrow \erase{\tau_2}}
{\erase{e}_2}{\erase{\tau_1} \rightarrow \erase{\tau_2}}
{\forall x_1, x_2.\, \dccform{\adversary}{\tau_1}(x_1,x_2) \Rightarrow \dccform{\adversary}{\tau_2}(\res\ltag\ x_1, \res\rtag\ x_2)}
$\\
By i.h.\ on the second premise:\\
$\jrhol{\erase{\Gamma}}
{\dccform{\adversary}{\Gamma}}
{\erase{e'}_1}{\erase{\tau_1}}
{\erase{e'}_2}{\erase{\tau_1}}
{\dccform{\adversary}{\tau_1}(\res\ltag, \res\rtag)}
$\\
Applying rule APP immediately yields the required result.\\

\noindent \textbf{Case.}  $\inferrule{\Gamma \vdash e: \tau
  \\ \Gamma \vdash e': \tau'} {\Gamma \vdash \pair{e}{e'}: \tau
  \times \tau'}$

\noindent
To show:
$\jrhol{\erase{\Gamma}}
{\dccform{\adversary}{\Gamma}}
{\pair{\erase{e}_1}{\erase{e'}_1}}{\erase{\tau} \times \erase{\tau'}}
{\pair{\erase{e}_2}{\erase{e'}_2}}{\erase{\tau} \times \erase{\tau'}}
{\dccform{\adversary}{\tau}(\pi_1(\res\ltag), \pi_1(\res\rtag)) \wedge \dccform{\adversary}{\tau'}(\pi_2(\res\ltag), \pi_2(\res\rtag))}
$.\\
By i.h.\ on the first premise:\\
$\jrhol{\erase{\Gamma}}
{\dccform{\adversary}{\Gamma}}
{\erase{e}_1}{\erase{\tau}}
{\erase{e}_2}{\erase{\tau}}
{\dccform{\adversary}{\tau}(\res\ltag, \res\rtag)}
$\\
By i.h.\ on the second premise:\\
$\jrhol{\erase{\Gamma}}
{\dccform{\adversary}{\Gamma}}
{\erase{e'}_1}{\erase{\tau'}}
{\erase{e'}_2}{\erase{\tau'}}
{\dccform{\adversary}{\tau'}(\res\ltag, \res\rtag)}
$\\
The required result follows from the rule PAIR. We only need to show
the third premise of the rule, i.e., the following in HOL:
\[\forall x_1 x_2 y_1 y_2. \dccform{\adversary}{\tau}(x_1, x_2) \Rightarrow 
\dccform{\adversary}{\tau'}(y_1, y_2) \Rightarrow
(\dccform{\adversary}{\tau}(\pi_1 \pair{x_1}{y_1},
\pi_1\pair{x_2}{y_2}) \wedge
\dccform{\adversary}{\tau'}(\pi_2 \pair{x_1}{y_1},
\pi_2 \pair{x_2}{y_2}))\]
Since $\pi_1 \pair{x_1}{y_1} = x_1$, etc.,
this implication simplifies to:
\[\forall x_1 x_2 y_1 y_2. \dccform{\adversary}{\tau}(x_1, x_2) \Rightarrow 
\dccform{\adversary}{\tau'}(y_1, y_2) \Rightarrow
(\dccform{\adversary}{\tau}(x_1, x_2) \wedge
\dccform{\adversary}{\tau'}(y_1, y_2))\]
which is an obvious tautology.\\

\noindent \textbf{Case.}  $\inferrule{\Gamma \vdash e: \tau \times
  \tau'}{\Gamma \vdash \pi_1(e): \tau}$

\noindent
To show:
$\jrhol{\erase{\Gamma}}
{\dccform{\adversary}{\Gamma}}
{\pi_1(\erase{e}_1)}{\erase{\tau}}
{\pi_1(\erase{e}_2)}{\erase{\tau}}
{\dccform{\adversary}{\tau}(\res\ltag, \res\rtag)}
$.\\
By i.h.\ on the premise:\\
$\jrhol{\erase{\Gamma}}
{\dccform{\adversary}{\Gamma}}
{\erase{e}_1}{\erase{\tau} \times \erase{\tau'}}
{\erase{e}_2}{\erase{\tau} \times \erase{\tau'}}
{\dccform{\adversary}{\tau}(\pi_1(\res\ltag), \pi_1(\res\rtag)) \wedge
\dccform{\adversary}{\tau'}(\pi_2(\res\ltag), \pi_2(\res\rtag))}
$\\
By rule SUB:\\
$\jrhol{\erase{\Gamma}}
{\dccform{\adversary}{\Gamma}}
{\erase{e}_1}{\erase{\tau} \times \erase{\tau'}}
{\erase{e}_2}{\erase{\tau} \times \erase{\tau'}}
{\dccform{\adversary}{\tau}(\pi_1(\res\ltag), \pi_1(\res\rtag))}
$\\
By rule PROJ$_1$, we get the required result.\\

\noindent \textbf{Case.}  $\inferrule{\Gamma \vdash e: \tau} {\Gamma
  \vdash \eret{\ell}(e): \ttmonad{\ell}(\tau)}$

\noindent
To show:
$\jrhol{\erase{\Gamma}}
{\dccform{\adversary}{\Gamma}}
{\erase{e}_1}{\erase{\tau}}
{\erase{e}_2}{\erase{\tau}}
{\dccform{\adversary}{\ttmonad{\ell}(\tau)}(\res\ltag, \res\rtag)}
$.\\
By i.h.\ on the premise:
$\jrhol{\erase{\Gamma}}
{\dccform{\adversary}{\Gamma}}
{\erase{e}_1}{\erase{\tau}}
{\erase{e}_2}{\erase{\tau}}
{\dccform{\adversary}{\tau}(\res\ltag, \res\rtag)}
$\hfill{(1)} \\
If $\ell \sqsubseteq \adversary$, then
$\dccform{\adversary}{\ttmonad{\ell}(\tau)}(\res\ltag, \res\rtag)
\defeq \dccform{\adversary}{\tau}(\res\ltag, \res\rtag)$, so the
required result is the same as (1). \\
If $\ell \not\sqsubseteq \adversary$, then
$\dccform{\adversary}{\ttmonad{\ell}(\tau)}(\res\ltag, \res\rtag)
\defeq \top$ and the required result follows from rule SUB applied to
(1). \\

\noindent \textbf{Case.}  $\inferrule{\Gamma \vdash e:
  \ttmonad{\ell}(\tau) \\ \Gamma, x: \tau \vdash e': \tau' \\ \tau'
  \searrow \ell} {\Gamma \vdash \ebind(e, x.e'): \tau'}$

\noindent
To show:
$\jrhol{\erase{\Gamma}}
{\dccform{\adversary}{\Gamma}}
{(\lambda x. \erase{e'}_1)\ \erase{e}_1}{\erase{\tau'}}
{(\lambda x. \erase{e'}_2)\ \erase{e}_2}{\erase{\tau'}}
{\dccform{\adversary}{\tau'}(\res\ltag, \res\rtag)}
$.\\
By i.h.\ on the first premise:\\
$\jrhol{\erase{\Gamma}}
{\dccform{\adversary}{\Gamma}}
{\erase{e}_1}{\erase{\tau}}
{\erase{e}_2}{\erase{\tau}}
{\dccform{\adversary}{\ttmonad{\ell}(\tau)}(\res\ltag, \res\rtag)}
$ \hfill (1)\\
By i.h.\ on the second premise:\\
$\jrhol{\erase{\Gamma}, x_1: \erase{\tau}, x_2: \erase{\tau}}
{\dccform{\adversary}{\Gamma}, \dccform{\adversary}{\tau}(x_1, x_2)}
{\erase{e'}_1}{\erase{\tau'}}
{\erase{e'}_2}{\erase{\tau'}}
{\dccform{\adversary}{\tau'}(\res\ltag, \res\rtag)}
$ \hfill (2)\\
We consider two cases:\\
\textbf{Subcase.} $\ell \sqsubseteq \adversary$. Here,
$\dccform{\adversary}{\ttmonad{\ell}(\tau)}(\res\ltag, \res\rtag)
\defeq \dccform{\adversary}{\tau}(\res\ltag, \res\rtag)$, so (1) can
be rewritten to:\\
$\jrhol{\erase{\Gamma}}
{\dccform{\adversary}{\Gamma}}
{\erase{e}_1}{\erase{\tau}}
{\erase{e}_2}{\erase{\tau}}
{\dccform{\adversary}{\tau}(\res\ltag, \res\rtag)}
$ \hfill (3)\\
Applying rule ABS to (2) yields:\\
$\jrhol{\erase{\Gamma}}
{\dccform{\adversary}{\Gamma}}
{\lambda x_1. \erase{e'}_1}{\erase{\tau} \rightarrow \erase{\tau'}}
{\lambda x_2. \erase{e'}_2}{\erase{\tau} \rightarrow \erase{\tau'}}
{\forall x_1 x_2. \dccform{\adversary}{\tau}(x_1, x_2) \Rightarrow
  \dccform{\adversary}{\tau'}(\res\ltag \ x_1, \res\rtag\ x_2)}
$ \hfill (4)\\
Applying rule APP to (4) and (3) yields:\\
$\jrhol{\erase{\Gamma}}
{\dccform{\adversary}{\Gamma}}
{(\lambda x_1. \erase{e'}_1) \ \erase{e}_1}{\erase{\tau'}}
{(\lambda x_2. \erase{e'}_2) \ \erase{e}_2}{\erase{\tau'}}
{\dccform{\adversary}{\tau'}(\res\ltag, \res\rtag)}
$ \\
which is what we wanted to prove.\\
\textbf{Subcase.} $\ell \not\sqsubseteq \adversary$. Here,
$\dccform{\adversary}{\ttmonad{\ell}(\tau)}(\res\ltag, \res\rtag)
\defeq \dccform{\adversary}{\tau}(\res\ltag, \res\rtag)$, so (1) can
be rewritten to:\\
$\jrhol{\erase{\Gamma}}
{\dccform{\adversary}{\Gamma}}
{\erase{e}_1}{\erase{\tau}}
{\erase{e}_2}{\erase{\tau}}
{\top}
$ \hfill (5)\\
Applying rule ABS to (2) yields:\\
$\jrhol{\erase{\Gamma}}
{\dccform{\adversary}{\Gamma}}
{\lambda x_1. \erase{e'}_1}{\erase{\tau} \rightarrow \erase{\tau'}}
{\lambda x_2. \erase{e'}_2}{\erase{\tau} \rightarrow \erase{\tau'}}
{\forall x_1 x_2. \dccform{\adversary}{\tau}(x_1, x_2) \Rightarrow
  \dccform{\adversary}{\tau'}(\res\ltag \ x_1, \res\rtag\ x_2)}
$ \\
By Lemma~\ref{lem:dcc:confinement} applied to the subcase
assumption $\ell \not\sqsubseteq \adversary$ and the premise $\tau'
\searrow \ell$, we have $\dccform{\adversary}{\tau'}(\res\ltag \ x_1,
\res\rtag\ x_2) \equiv \top$. So, by rule SUB:\\
$\jrhol{\erase{\Gamma}}
{\dccform{\adversary}{\Gamma}}
{\lambda x_1. \erase{e'}_1}{\erase{\tau} \rightarrow \erase{\tau'}}
{\lambda x_2. \erase{e'}_2}{\erase{\tau} \rightarrow \erase{\tau'}}
{\forall x_1 x_2. \dccform{\adversary}{\tau}(x_1, x_2) \Rightarrow \top}
$ \\
Since $(\forall x_1 x_2. \dccform{\adversary}{\tau}(x_1, x_2)
\Rightarrow \top) \equiv \top \equiv (\forall x_1, x_2. \top
\Rightarrow \top)$, we can use SUB again to get:\\
$\jrhol{\erase{\Gamma}}
{\dccform{\adversary}{\Gamma}}
{\lambda x_1. \erase{e'}_1}{\erase{\tau} \rightarrow \erase{\tau'}}
{\lambda x_2. \erase{e'}_2}{\erase{\tau} \rightarrow \erase{\tau'}}
{\forall x_1, x_2. \top \Rightarrow \top}
$ \hfill (6) \\
Applying rule APP to (6) and (5) yields:\\
$\jrhol{\erase{\Gamma}}
{\dccform{\adversary}{\Gamma}}
{(\lambda x_1. \erase{e'}_1) \ \erase{e}_1}{\erase{\tau'}}
{(\lambda x_2. \erase{e'}_2) \ \erase{e}_2}{\erase{\tau'}}
{\top}
$ \\
which is the same as our goal since
$\dccform{\adversary}{\tau'}(\res\ltag, \res\rtag) \equiv \top$.\\



\subsection*{Proof of Theorem~\ref{thm:unaryembed}}
By induction on the derivation of $\Delta; \Phi; \Omega
\jtype{k}{l}{t}{A}$. We will show few cases.

\noindent \textbf{Case.}
 $\inferrule
 {\ }{\octx, x:\grt \jtype{0}{0}{x}{\grt}}
$

\noindent We can conclude by the following derivation: 
\begin{mathpar}
\inferrule*[right=PAIR-L]
{
\inferrule*[right=VAR]
{\ }
{\juhol{\cost{\erase{\Omega}},x:\cost{\erase{\grt}}_v,\Delta}{\Phi_a,\costt{\Omega}, \costt{\grt}_v(x)}{x}{\cost{\erase{\grt}}_v}{\costt{\grt}_v(\res)}
}
\and
\inferrule*[right=Nat]{\ }
{\juhol{\cost{\erase{\Omega}},x:\cost{\erase{\grt}}_v,\Delta}{\Phi_a,\costt{\Omega}, \costt{\grt}_v(x)}{0}{\nat}{ 0\leq \res\leq 0}}
}
{\juhol{\cost{\erase{\Omega}},x:\cost{\erase{\grt}}_v,\Delta}{\Phi_a,\costt{\Omega}, \costt{\grt}_v(x)}{(x,0)}{\cost{\erase{\grt}}_v\times \nat}{\costt{\grt}_v(\pi_1 \res)\land 0\leq\pi_2\res\leq 0 }}  
\end{mathpar}
where the additional proof conditions that is needed for the [PAIR-L] rule can be easily proved in HOL.
\medskip

\noindent \textbf{Case.}
$
 \inferrule
  {
  }
  {
    \octx \jtype{0}{0}{\econst}{\tint}
  }
$

\noindent Then we can conclude by the following derivation:
\begin{mathpar}
\inferrule*[right=PAIR-L]
{
\inferrule*[right=Nat]
{\ }
{\juhol{\cost{\erase{\Omega}},\Delta}{\Phi_a,\costt{\Omega}}{\econst}{\nat}{\top}
}
\and
\inferrule*[right=Nat]{\ }{\juhol{\cost{\erase{\Omega}},\Delta}{\Phi_a,\costt{\Omega}}{0}{\nat}{ 0\leq \res\leq 0}}
}
{\juhol{\cost{\erase{\Omega}},\Delta}{\Phi_a,\costt{\Omega}}{(\econst,0)}{\nat\times\nat}{0\leq\pi_2\res\leq 0 }}  
\end{mathpar}
where the additional proof conditions that is needed for the [PAIR-L] rule can be easily proved in HOL.
\medskip

\noindent\textbf{Case.}
$
  \inferrule
    {\Delta; \Phi_a;  x: \grt_1, \Omega  \jtype{k}{l}{t}{\grt_2} }
    {
      \octx
      \jtype{0}{0}{ \lambda x.t }{\grt_1 \uarr{k}{l}
        \grt_2}
    }
$

\noindent By induction hypothesis we have 
$\juhol{\cost{\erase{\Omega}},x:\cost{\erase{\grt_1}}_v,\Delta}{\Phi,\costt{\Omega},\costt{\grt_1}_v(x) }{\cost{t}}{\cost{\erase{\grt_2}}_e}{\costt{A}^{k,l}_e(\res)}$
and we can conclude by the following derivation:
\begin{mathpar}
\inferrule*[right=PAIR-L]
{\inferrule*[right=ABS]
{\juhol{\cost{\erase{\Omega}},x:\cost{\erase{\grt_1}}_v,\Delta}{\Phi,\costt{\Omega},\costt{\grt_1}_v(x) }{\cost{t}}{\cost{\erase{\grt_2}}_e}{\costt{\grt_2}^{k,l}_e(\res)} 
}
{\juhol{\cost{\erase{\Omega}},\Delta}{\Phi,\costt{\Omega}}{\lambda x. \cost{t}}{\cost{\erase{\grt_1}}_v\to \cost{\erase{\grt_2}}_e}{\forall x. \costt{\grt_1}_v(x) \Rightarrow \costt{\grt_2}^{k,l}_e(\res x)}} 
\and
\juhol{\cost{\erase{\Omega}},\Delta}{\Phi,\costt{\Omega}}{0}{\nat}{ 0\leq \res\leq 0}
}
{\juhol{\cost{\erase{\Omega}},\Delta}{\Phi,\costt{\Omega}}{(\lambda x. \cost{t},0)}{(\cost{\erase{\grt_1}}_v\to \cost{\erase{\grt_2}}_e) \times \nat}{\forall x. \costt{\grt_1}_v(x) \Rightarrow \costt{\grt_2}^{k,l}_e((\pi_1 \res) x)} \land 0\leq\pi_2 \res\leq 0}
\end{mathpar}
where the additional proof conditions that is needed for the [PAIR-L] rule can be easily proved in HOL.
\medskip

\noindent\textbf{Case}
$
 \inferrule{
   \octx \jtype{k_1}{l_1}{t_1}{\grt_1\uarr{k}{l} \grt_2} \\
   \octx \jtype{k_2}{l_2}{t_2}{\grt_1}}
 {
   \octx \jtype{k_1+k_2+k+\kapp}{l_1+l_2+l+\kapp}{ t_1 \, t_2}{
     \grt_2}}
$

\noindent By induction hypothesis and unfolding some some definitions we have 
\begin{multline*}
\juhol{\cost{\erase{\Omega}},\Delta}{\Phi_a,\costt{\Omega}}{\cost{t_1}}{(\cost{\erase{\grt_1}}_v\to (\cost{\erase{\grt_2}}_v\times \nat))\times \nat}{}
\\
\forall h. \costt{\grt_1}_v(h)\Rightarrow ( \costt{\grt_2}_v(\pi_1 ((\pi_1 (\res)) h)) \land k\leq  \pi_2 ((\pi_1 (\res)) h) \leq l) \land k_1\leq \pi_2(\res) \leq l_1
\end{multline*}
and
$\juhol{\cost{\erase{\Omega}},\Delta}{\Phi_a,\costt{\Omega}}{\cost{t_2}}{\cost{\erase{\grt_1}}_v\times \nat }{\costt{\grt_1}_v(\pi_1(\res))\land k_2\leq \pi_2(\res) \leq l_2 }$. So, we can prove:
\begin{multline*}
\juhol{\cost{\erase{\Omega}},\Delta}{\Phi_a,\costt{\Omega}}{\mathsf{let}\, x=\cost{t_1} \, \mathsf{in}\, \mathsf{let}\, y=\cost{t_2}\, \mathsf{in}\, \pi_1(x)\,  \pi_1(y)}{\cost{\erase{\grt_2}}_v\times \nat }{}\\
\costt{\grt_2}_v(\pi_1 (\res))
\land k\leq \pi_2(\res) \leq l 
\land k_1\leq \pi_2(x) \leq l_1 
\land k_2\leq \pi_2(y)\res \leq l_2
\end{multline*}

This combined with  the definition of the cost-passing translation  $\cost{t_1\,  t_2} \defeq \mathsf{let}\, x=\cost{t_1} \, \mathsf{in}\, \mathsf{let}\, y=\cost{t_2}\, \mathsf{in}\, \mathsf{let}\, z=\pi_1(x)\  \pi_1(y)\, \mathsf{in}\, (\pi_1(z),\pi_2(x) + \pi_2(y) + \pi_2(z) +c_{app})  $ allows us to prove as required the following:
\[
\juhol{\cost{\erase{\Omega}},\Delta}{\Phi_a,\costt{\Omega}}{\cost{t_1\, t_2}}{\cost{\erase{\grt_2}}_v\times \nat }{}\\
\costt{\grt_2}_v(\pi_1 (\res))
\land k+k_1+k_2+c_{app} \leq \pi_2(\res) \leq l +l_1+l_2+c_{app}.
\]



\subsection*{Proof of Theorem~\ref{thm:relationalembed}}

To prove Theorem~\ref{thm:relationalembed}, we need three lemmas.

\begin{lemma}\label{lem:relcost:rel-imp-unary}
Suppose $\Delta; \Phi \vdash \tau \mathrel{\sf wf}$.\footnote{This
  judgment simply means that $\tau$ is well-formed in the context
  $\Delta; \Phi$. It is defined in the original RelCost
  paper~\citep{CBGGH17}.} Then, the following hold:
\begin{enumerate}
  \item
    $\jhol{\Delta}{\Phi}{\forall x y. \, \rcostt{\tau}_v(x,y)
    \Rightarrow \costt{\overline{\tau}}_v(x) \wedge
    \costt{\overline{\tau}}_v(y)}$
  \item
    $\jhol{\Delta}{\Phi}{\forall x y. \, \rcostt{\tau}_e^t(x,y)
    \Rightarrow \costt{\overline{\tau}}_e^{0,\infty}(x) \wedge
    \costt{\overline{\tau}}_e^{0,\infty}(y)}$
\end{enumerate}
Also, (3) ${\rcostt{\Gamma}} \mathrel{\Rightarrow}
\costt{\overline{\Gamma}_1} \wedge \costt{\overline{\Gamma}_2}$ where
$\overline{\Gamma}_1$ and $\overline{\Gamma}_2$ are obtained by
replacing each variable $x$ in $\overline{\Gamma}$ with $x_1$ and
$x_2$, respectively.
\end{lemma}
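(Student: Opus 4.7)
The plan is to prove (1) by induction on the structure of $\tau$, then derive (2) as an easy corollary, and finally obtain (3) by pointwise application of (1). Throughout, we work inside HOL (via Theorem~\ref{thm:equivhol} when convenient) and rely only on the defining equations of $\rcostt{\cdot}_v$, $\costt{\cdot}_v$, and the erasure $\overline{\tau}$.

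For (1), the base cases $\nat_r$, $U A$, and $\tbox{\tau}$ are discharged by unfolding the definitions: $\rcostt{\nat_r}_v(x,y) = (x = y)$ and $\costt{\overline{\nat_r}}_v = \costt{\nat}_v \equiv \top$; for $U A$ and $\tbox{\tau}$ the required conjuncts are literally among the conjuncts of $\rcostt{\cdot}_v$ (in the $\tbox{}$ case one applies the inductive hypothesis to the inner $\rcostt{\tau}_v(x,y)$ conjunct). The function cases $\tau \tarrd{l} \sigma$ and $\tforalld{i}{l}\tau$ are immediate, since the definition of $\rcostt{\cdot}_v$ explicitly carries $\costt{\overline{\tau} \uarr{0}{\infty} \overline{\sigma}}_v(x)$ and $\costt{\overline{\tau} \uarr{0}{\infty} \overline{\sigma}}_v(y)$ (resp.\ $\costt{\tforall{i}{0}{\infty}\overline{\tau}}_v$) as conjuncts. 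Observe that this is no accident: these conjuncts were included in the definition precisely so that the lemma goes through without reasoning under the binders.

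The main work is the list case $\listt{\tau}[n]^\alpha$, where we must show
\[
{\sf listR}(n,x,y,\alpha,\rcostt{\tau}_v) \Rightarrow {\sf listU}(n,x,\costt{\overline{\tau}}_v) \wedge {\sf listU}(n,y,\costt{\overline{\tau}}_v).
\]
Since $n$ is an arbitrary index term, this cannot be checked by unfolding; instead, I would first prove in HOL, by induction on $n \in \nat$ using the axiomatizations of ${\sf listR}$ and ${\sf listU}$, the schema
\[
\forall x\, y\, \alpha.\ {\sf listR}(n,x,y,\alpha,P) \Rightarrow (\forall u v.\ P(u,v) \Rightarrow Q(u) \wedge Q(v)) \Rightarrow {\sf listU}(n,x,Q) \wedge {\sf listU}(n,y,Q).
\]
The base case $n=0$ forces $x=y=\nil$ and both ${\sf listU}$ conjuncts hold by the first ${\sf listU}$-axiom. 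The step case $n+1$ destructs the existentials in the ${\sf listR}$-axiom to obtain heads $w_1,z_1$ with $P(w_1,z_1)$ and tails $w_2,z_2$ with ${\sf listR}(n,w_2,z_2,\cdot,P)$ (irrespective of which disjunct on $\alpha$ holds); applying the inner IH to the tails and the assumed pointwise implication to the heads rebuilds the two ${\sf listU}$ conjuncts via the second ${\sf listU}$-axiom. Instantiating the schema with $P = \rcostt{\tau}_v$ and $Q = \costt{\overline{\tau}}_v$ and discharging the pointwise premise by the outer structural IH on $\tau$ finishes this case. This is the only step that genuinely requires reasoning about inductively defined predicates, and hence is the main obstacle.

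Having established (1), claim (2) follows by unfolding: $\rcostt{\tau}_e^t(x,y)$ implies $\rcostt{\tau}_v(\pi_1 x,\pi_1 y)$, to which (1) applies, yielding $\costt{\overline{\tau}}_v(\pi_1 x) \wedge \costt{\overline{\tau}}_v(\pi_1 y)$; combined with the trivial bounds $0 \leq \pi_2 x$ and $\pi_2 y \leq \infty$ (which are vacuous), we get $\costt{\overline{\tau}}_e^{0,\infty}(x) \wedge \costt{\overline{\tau}}_e^{0,\infty}(y)$. Claim (3) is obtained by a straightforward induction on the length of $\Gamma = x^1:\tau_1,\dots,x^n:\tau_n$: the environment predicate $\rcostt{\Gamma}$ is a conjunction of $\rcostt{\tau_i}_v(x^i_1,x^i_2)$, each of which implies $\costt{\overline{\tau_i}}_v(x^i_1) \wedge \costt{\overline{\tau_i}}_v(x^i_2)$ by (1), and regrouping the conjuncts by side yields $\costt{\overline{\Gamma}_1} \wedge \costt{\overline{\Gamma}_2}$.
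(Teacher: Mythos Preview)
Your proof is correct and follows essentially the same approach as the paper, which simply states that (1) and (2) follow by simultaneous induction on the well-formedness judgment and that (3) follows from (1). Your version is more detailed and in fact slightly sharper: you observe that (2) need not be proved simultaneously with (1) but follows from it directly by unfolding $\rcostt{\tau}_e^t$ and $\costt{\overline{\tau}}_e^{0,\infty}$, which is indeed the case given that the unary conjuncts $\costt{\overline{\cdot}}_v$ are baked into the definitions of $\rcostt{\tau \tarrd{l} \sigma}_v$ and $\rcostt{\tforalld{i}{l}\tau}_v$.
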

\begin{proof}
(1) and (2) follow by a simultaneous induction on the given
  judgment. (3) follows immediately from~(1).
\end{proof}


\begin{lemma}\label{lem:relcost:rel-imp-unary-judgment}
If $\ctx \jtypediff{t}{e_1}{e_2}{\tau}$ in RelCost, then $\Delta;
\Phi; \overline{\Gamma} \jtype{0}{\infty}{e_i}{\overline{\tau}}$ for
$i \in \{1,2\}$ in RelCost.
\end{lemma}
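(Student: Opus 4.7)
The plan is to proceed by induction on the derivation of $\ctx \jtypediff{t}{e_1}{e_2}{\tau}$, case-splitting on the last rule. For each case I will apply the induction hypothesis to each relational premise to obtain unary typings for the two sides separately (with some cost bounds), then recompose them with the corresponding \emph{unary} RelCost rule, finally relaxing the bounds to $[0, \infty]$ via RelCost's cost-subsumption (which permits lowering a lower bound to $0$ and raising an upper bound to $\infty$). Throughout, I will use the facts that $\overline{\tau_1 \tarrd{l} \tau_2} = \overline{\tau_1} \uarr{0}{\infty} \overline{\tau_2}$, $\overline{\tforalld{i}{l}\tau} = \tforall{i}{0}{\infty}\overline{\tau}$, $\overline{U A} = A$, $\overline{\square \tau} = \overline{\tau}$, and that $\overline{\cdot}$ is idempotent and commutes with the usual type formers; these let me align the erasures produced by the premises with the conclusion.

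Most rules (\rname{r-var}, \rname{r-const}, \rname{r-lam}, \rname{r-app}, \rname{r-cons1}, \rname{r-cons2}, \rname{r-caseL}, \rname{r-iLam}, \rname{r-iApp}) are structurally symmetric: $e_1$ and $e_2$ share a top-level constructor, and the relational premises mirror what the unary system needs. For instance, in \rname{r-app}, the induction hypothesis on the two premises gives $\Delta;\Phi;\overline{\Gamma} \jtype{0}{\infty}{e_i}{\overline{\tau_1} \uarr{0}{\infty} \overline{\tau_2}}$ and $\Delta;\Phi;\overline{\Gamma} \jtype{0}{\infty}{e_i'}{\overline{\tau_1}}$ for $i \in \{1,2\}$, after which the unary \rname{app} rule produces $\jtype{0+0+0+\kapp}{\infty+\infty+\infty+\kapp}{e_i\, e_i'}{\overline{\tau_2}}$, and the bound is then relaxed to $[0,\infty]$.

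The genuinely non-structural cases are \rname{switch} and \rname{nochange}, but both are easy once unfolded. In \rname{switch}, the premises are already unary judgments $\jtype{k_i}{l_i}{t_i}{A}$ with $\tau = \tcho{A}$ and $\overline{\tcho{A}} = A$, so only cost-weakening is needed. In \rname{nochange}, $e_1 = e_2 = t$ and $\overline{\square\tau} = \overline{\tau}$, so the induction hypothesis applied to the first premise directly yields $\jtype{0}{\infty}{t}{\overline{\tau}}$, and the enlargement of the context by $\Gamma'$ is handled by unary weakening. The \rname{r-sub} rule (subtyping on relational types) is dispatched by the corresponding fact on the erasure $\overline{T \sqsubseteq U} \Rightarrow \overline{T} \sqsubseteq \overline{U}$, which is itself a straightforward induction on subtyping derivations and can be stated as a side lemma.

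The main obstacle, and the only place that requires genuine care, is bookkeeping the cost arithmetic: rules like \rname{r-app}, \rname{r-cons$_i$} and \rname{r-iApp} add costs in the conclusion, and I must verify that using $[0,\infty]$ on the premises of the \emph{unary} rule produces a conclusion whose bounds can still be relaxed to $[0,\infty]$. This uses that $0+0+\cdots = 0$ (so the lower bound is $\geq 0$, hence can be relaxed to $0$) and that $\infty$ absorbs finite additions (so the upper bound stays $\infty$). With this observation the entire induction goes through uniformly.
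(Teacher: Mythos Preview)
Your proposal is correct and follows exactly the approach the paper takes: the paper's proof is the single line ``By induction on the given derivation,'' and what you have written is a faithful expansion of that induction with the expected case analysis and cost-weakening bookkeeping.
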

\begin{proof}
By induction on the given derivation.
\end{proof}


\begin{lemma}\label{lem:relcost:subtyping}
  If $\Delta; \Phi \jsubtype{\tau_1}{\tau_2}$, then $\Delta; \Phi
  \vdash \forall xy.\, \rcostt{\tau_1}_v(x,y) \Rightarrow
         {\rcostt{\tau_2}_v(x,y)}$.
\end{lemma}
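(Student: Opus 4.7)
The plan is to proceed by induction on the derivation of $\Delta; \Phi \jsubtype{\tau_1}{\tau_2}$, unfolding the definition of $\rcostt{\cdot}_v$ on both sides for each subtyping rule of RelCost and discharging the resulting implication in HOL. For the reflexivity, transitivity, and congruence rules (e.g.\ subtyping that lifts through $\rtprod{}{}{}$, $\listt{\cdot}$, or $\tforalld{}{}{}$), the argument is purely structural: after unfolding, the required implication follows by applying the induction hypothesis, possibly contravariantly for argument positions of arrows.

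The interesting cases are those that mediate between the different type formers. For $\tbox \tau \tyle \tau$, the required implication holds because $\rcostt{\tbox \tau}_v(x,y) = (x=y) \land \rcostt{\tau}_v(x,y)$ already contains $\rcostt{\tau}_v(x,y)$ as a conjunct. For $\tau \tyle U \overline{\tau}$, we need $\rcostt{\tau}_v(x,y) \Rightarrow (\costt{\overline{\tau}}_v(x) \land \costt{\overline{\tau}}_v(y))$, which is precisely Lemma~\ref{lem:relcost:rel-imp-unary}(1). For subtyping rules that relax the cost effect ($l_1 \leq l_2$) on arrows or universal types, the implication follows by unfolding $\rcostt{\tau}_e^{l}$ and using the arithmetic fact $\pi_2 x - \pi_2 y \leq l_1 \leq l_2$. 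For subtyping of list types that allows weakening the differential index ($\alpha_1 \leq \alpha_2$) or the length index, we rely on the axiomatic definition of ${\sf listR}$ and prove a monotonicity lemma in HOL by induction on $n$: if ${\sf listR}(n,x,y,\alpha,P)$ and $\alpha \leq \alpha'$, then ${\sf listR}(n,x,y,\alpha',P)$.

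The main obstacle will be the cases involving arrows and universal types, because the bodies of their interpretations contain both a unary component (the $\costt{\overline{\tau} \uarr{0}{\infty} \overline{\sigma}}_v$ conjunct) and a relational component. When deriving the implication on bodies, the unary components need to be transported from the hypothesis to the conclusion through the unary subtyping judgment of RelCost; this requires invoking the corresponding unary subtyping soundness (which is analogous to this lemma, and can be proved simultaneously or cited from the unary development), together with Lemma~\ref{lem:relcost:rel-imp-unary}(1) to supply the unary premises from the given relational hypotheses. Once this bookkeeping is in place, the rest of each case is just $\forall$-elimination, $\Rightarrow$-introduction, and an application of the induction hypothesis.

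Finally, since ${\tyle}$ may be defined to include a subsumption rule folding in constraint entailments from $\Phi$, the corresponding HOL derivations discharge the side conditions using the assumptions in $\Phi$ that are also in scope on the HOL side. The overall result then follows by bundling these case analyses into a single induction.
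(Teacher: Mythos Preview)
Your proposal is correct and follows exactly the approach the paper takes: the paper's proof is the single sentence ``By induction on the given derivation of $\Delta; \Phi \jsubtype{\tau_1}{\tau_2}$,'' and what you have written is a faithful elaboration of how that induction goes case by case, including the right auxiliary facts (Lemma~\ref{lem:relcost:rel-imp-unary} for the $U$-case and a companion unary subtyping lemma for the unary conjuncts in the arrow and $\forall$ interpretations).
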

\begin{proof}
By induction on the given derivation of $\Delta; \Phi
\jsubtype{\tau_1}{\tau_2}$.
\end{proof}


\begin{proof}[Proof of Theorem~\ref{thm:relationalembed}]
The proof is by induction on the given derivation of $\Delta; \Phi;
\Gamma \jtypediff{k}{t_1}{t_2}{\tau}$. We show only a few
representative cases here.\\

\noindent \textbf{Case:}
$\inferrule*[right=r-iLam]{ i::S, \ctx \jtypediff{t}{e}{e'}{\tau}
  \\ i \not \in \fiv{\Phi_a; \Gamma} } { \ctx \jtypediff{0}{\eLam
    e}{\eLam e'}{\tforalld{i}{t} \tau}} $

\noindent
To show: $\jrhol{\rcost{\Gamma}, \Delta}{\Phi_a,
  \rcostt{\Gamma}}{(\lambda \_. \cost{e}\ltag, 0)}{(\nat \to
  \cost{\erase{\tau}}_e) \times \nat} {(\lambda\_. \cost{e'}\rtag,
  0)}{(\nat \to \cost{\erase{\tau}}_e) \times
  \nat}{\rcostt{\tforalld{i}{t} \tau}_e^0(\res\ltag,\res\rtag)}$. \\
Expand $\rcostt{\tforalld{i}{t} \tau}_e^0(\res\ltag,\res\rtag)$ to
$\rcostt{\tforalld{i}{t} \tau}_v(\pi_1 \ \res\ltag,\pi_1 \ \res\rtag)
\wedge \pi_2 \res\ltag - \pi_2\ \res\rtag \leq 0$, and apply the rule
\rname{PAIR} to reduce to two proof obligations:\\
(A) $\jrhol{\rcost{\Gamma}, \Delta}{\Phi_a, \rcostt{\Gamma}}{\lambda
  \_. \cost{e}\ltag}{\nat \to \cost{\erase{\tau}}_e}
      {\lambda\_. \cost{e'}\rtag}{\nat \to
        \cost{\erase{\tau}}_e}{\rcostt{\tforalld{i}{t}
          \tau}_v(\res\ltag,\res\rtag)}$ \\
(B) $\jrhol{\rcost{\Gamma}, \Delta}{\Phi_a, \rcostt{\Gamma}}{0}{\nat}
      {0}{\nat}{\res\ltag - \res\rtag \leq 0}$

(B) follows immediately by rule \rname{ZERO}. To prove (A), expand
$\rcostt{\tforalld{i}{t} \tau}_v(\res\ltag,\res\rtag)$ and apply rule
\rname{$\sf \wedge_I$}. We get three proof obligations.\\
(C) $\jrhol{\rcost{\Gamma}, \Delta}{\Phi_a, \rcostt{\Gamma}}{\lambda
  \_. \cost{e}\ltag}{\nat \to \cost{\erase{\tau}}_e}
      {\lambda\_. \cost{e'}\rtag}{\nat \to
        \cost{\erase{\tau}}_e}{\costt{\tforall{i}{0}{\infty}\overline{\tau}}_v(\res\ltag)}$\\
(D) $\jrhol{\rcost{\Gamma}, \Delta}{\Phi_a, \rcostt{\Gamma}}{\lambda
  \_. \cost{e}\ltag}{\nat \to \cost{\erase{\tau}}_e}
      {\lambda\_. \cost{e'}\rtag}{\nat \to
        \cost{\erase{\tau}}_e}{\costt{\tforall{i}{0}{\infty}\overline{\tau}}_v(\res\rtag)}$\\
(E) $\jrhol{\rcost{\Gamma}, \Delta}{\Phi_a, \rcostt{\Gamma}}{\lambda
  \_. \cost{e}\ltag}{\nat \to \cost{\erase{\tau}}_e}
      {\lambda\_. \cost{e'}\rtag}{\nat \to
        \cost{\erase{\tau}}_e}{\forall
        z_1 z_2. \top \Rightarrow \forall i. \rcostt{\tau}_e^{t}(\res\ltag\ z_1,\res\rtag\ z_2)}$ 

To prove (C), apply Lemma~\ref{lem:relcost:rel-imp-unary-judgment} to
the given derivation (not just the premise), obtaining a RelCost
derivation for $\Delta; \Phi_a; \overline{\Gamma}
\jtype{0}{\infty}{\eLam e}{(\tforall{i}{0}{\infty}
  \overline{\tau})}$. Applying Theorem~\ref{thm:unaryembed} to this
yields $\juhol{\cost{\overline{\Gamma}},\Delta}{\Phi_a,
  \costt{\overline{\Gamma}}}{(\lambda\_. \cost{e}, 0)}{(\nat \to
  \cost{\erase{\overline{\tau}}}_e) \times
  \nat}{\costt{\tforall{i}{0}{\infty}\overline{\tau}}_e^{0,\infty}(\res)}$
in UHOL, which is the same as
$\juhol{\cost{\overline{\Gamma}},\Delta}{\Phi_a,
  \costt{\overline{\Gamma}}}{(\lambda\_. \cost{e}, 0)}{(\nat \to
  \cost{\erase{\overline{\tau}}}_e) \times
  \nat}{\costt{\tforall{i}{0}{\infty}\overline{\tau}}_v(\pi_1\ \res)
  \wedge 0 \leq \pi_2\ \res \leq \infty}$. Applying rule
\rname{PROJ$_1$}, we get
$\juhol{\cost{\overline{\Gamma}},\Delta}{\Phi_a,
  \costt{\overline{\Gamma}}}{\pi_1(\lambda\_. \cost{e}, 0)}{\nat \to
  \cost{\erase{\overline{\tau}}}_e}{\costt{\tforall{i}{0}{\infty}\overline{\tau}}_v(\res)
}$. By subject conversion,
$\juhol{\cost{\overline{\Gamma}},\Delta}{\Phi_a,
  \costt{\overline{\Gamma}}}{\lambda\_. \cost{e}}{\nat \to
  \cost{\erase{\overline{\tau}}}_e}{\costt{\tforall{i}{0}{\infty}\overline{\tau}}_v(\res)
}$. Renaming variables, we get
$\juhol{\cost{\overline{\Gamma}}_1,\Delta}{\Phi_a,
  \costt{\overline{\Gamma}_1}}{\lambda\_. \cost{e}_1}{\nat \to
  \cost{\erase{\overline{\tau}}}_e}{\costt{\tforall{i}{0}{\infty}\overline{\tau}}_v(\res)
}$.

Now note that by definition, $\rcost{\Gamma} \supseteq
\cost{\overline{\Gamma}}_1$ and by
Lemma~\ref{lem:relcost:rel-imp-unary}(3), ${\rcostt{\Gamma}}
\mathrel{\Rightarrow} \costt{\overline{\Gamma}_1}$. Hence, we also get
$\juhol{\rcost{\Gamma},\Delta}{\Phi_a,
  \rcostt{\Gamma}}{\lambda\_. \cost{e}_1}{\nat \to
  \cost{\erase{\overline{\tau}}}_e}{\costt{\tforall{i}{0}{\infty}\overline{\tau}}_v(\res)
}$. (C) follows immediately by rule \rname{UHOL-L}.

(D) has a similar proof.

To prove (E), apply the rule \rname{ABS}, getting the obligation:\\
$\jrhol{\rcost{\Gamma}, \Delta, z_1,z_2: \nat}{\Phi_a,
  \rcostt{\Gamma}}{\cost{e}\ltag}{\cost{\erase{\tau}}_e}
{\cost{e'}\rtag}{\cost{\erase{\tau}}_e}{\forall
  i. \rcostt{\tau}_e^{t}(\res\ltag,\res\rtag)}$\\
Since $z_1,z_2$ do not appear anywhere else, we can strengthen the
context to remove them, thus reducing to:
$\jrhol{\rcost{\Gamma}, \Delta}{\Phi_a,
  \rcostt{\Gamma}}{\cost{e}\ltag}{\cost{\erase{\tau}}_e}
{\cost{e'}\rtag}{\cost{\erase{\tau}}_e}{\forall
  i. \rcostt{\tau}_e^{t}(\res\ltag,\res\rtag)}$\\
Next, we transpose to HOL using Theorem~\ref{thm:equivhol}. We get the
obligation:\\
$\jhol{\rcost{\Gamma}, \Delta}{\Phi_a,
  \rcostt{\Gamma}}{\forall
  i. \rcostt{\tau}_e^{t}(\cost{e}\ltag,\cost{e'}\rtag)}$\\
This is equivalent to: \\
$\jhol{\rcost{\Gamma}, \Delta, i:S}{\Phi_a,
  \rcostt{\Gamma}}{\rcostt{\tau}_e^{t}(\cost{e}\ltag,\cost{e'}\rtag)}$\\
The last statement follows immediately from i.h.\ on the premise,
followed by transposition to HOL using Theorem~\ref{thm:equivhol}.\\

\noindent \textbf{Case:}
$  \inferrule*[right=nochange]
{
\Delta; \Phi_a; \Gamma \jtypediff{t}{e}{e}{\tau}\\
\forall x \in dom(\Gamma).~~
\Delta; \Phi_a  \jsubtype{\Gamma(x)}{\tbox{\Gamma(x)}}
}
{
\Delta; \Phi_a; \Gamma, \Gamma'; \Omega \jtypediff{0}{e}{e}{\tbox{\tau}}
}
$

\noindent To show: $\jrhol{\rcost{\Gamma}, \Delta}{\Phi_a,
  \rcostt{\Gamma}}{\cost{e}_1}{\cost{\erase{\tau}}_e}
          {\cost{e}_2}{\cost{\erase{\tau}}_e}
          {\rcostt{\tbox{\tau}}_e^0(\res\ltag,\res\rtag)}$. \\
Expanding the definition of $\rcostt{\tbox{\tau}}_e^0$, this is
  equivalent to:\\
$\jrhol{\rcost{\Gamma}, \Delta}{\Phi_a,
    \rcostt{\Gamma}}{\cost{e}_1}{\cost{\erase{\tau}}_e}
  {\cost{e}_2}{\cost{\erase{\tau}}_e} {\rcostt{\tau}_v(\pi_1
    \ \res\ltag,\pi_2\ \res\rtag) \wedge (\pi_1\ \res\ltag =
    \pi_1\ \res\rtag) \wedge (\pi_2\ \res\ltag - \pi_2\ \res\rtag \leq
    0)}$\\ 
Using rule \rname{$\sf \wedge_I$}, we reduce this to two obligations:\\
(A) $\jrhol{\rcost{\Gamma}, \Delta}{\Phi_a,
    \rcostt{\Gamma}}{\cost{e}_1}{\cost{\erase{\tau}}_e}
  {\cost{e}_2}{\cost{\erase{\tau}}_e} {\rcostt{\tau}_v(\pi_1
    \ \res\ltag,\pi_2\ \res\rtag)}$\\
(B) $\jrhol{\rcost{\Gamma}, \Delta}{\Phi_a,
    \rcostt{\Gamma}}{\cost{e}_1}{\cost{\erase{\tau}}_e}
  {\cost{e}_2}{\cost{\erase{\tau}}_e} {(\pi_1\ \res\ltag =
    \pi_1\ \res\rtag) \wedge (\pi_2\ \res\ltag - \pi_2\ \res\rtag \leq
    0)}$\\

\noindent  
By i.h.\ on the first premise,   \\
$\jrhol{\rcost{\Gamma}, \Delta}{\Phi_a,
  \rcostt{\Gamma}}{\cost{e}_1}{\cost{\erase{\tau}}_e}
{\cost{e}_2}{\cost{\erase{\tau}}_e} {\rcostt{\tau}_v(\pi_1
  \ \res\ltag,\pi_2\ \res\rtag) \wedge (\pi_2\ \res\ltag -
  \pi_2\ \res\rtag \leq t)}$\\
By rule \rname{SUB},\\
$\jrhol{\rcost{\Gamma}, \Delta}{\Phi_a,
  \rcostt{\Gamma}}{\cost{e}_1}{\cost{\erase{\tau}}_e}
{\cost{e}_2}{\cost{\erase{\tau}}_e} {\rcostt{\tau}_v(\pi_1
  \ \res\ltag,\pi_2\ \res\rtag)}$\\
which is the same as (A).\\

\noindent
To prove (B), apply Lemma~\ref{lem:relcost:subtyping} to the second
premise to get for every $x\in dom(\Gamma)$ that
$\jhol{\Delta}{\Phi_a}{\rcostt{\Gamma(x)}_v(x_1,x_2) \Rightarrow
  \rcostt{\tbox{\Gamma(x)}}_v(x_1,x_2)}$. Since
$\rcostt{\tbox{\Gamma(x)}}_v(x_1,x_2) \Rightarrow x_1 = x_2$ and from
$\rcostt{\Gamma}$ we know that $\rcostt{\Gamma(x)}_v(x_1,x_2)$, it
follows that $\jhol{\rcost{\Gamma}, \Delta}{\Phi_a,
  \rcostt{\Gamma}}{x_1 = x_2}$. Since this holds for every $x \in
dom(\Gamma)$, it follows immediately that $\jhol{\rcost{\Gamma},
  \Delta}{\Phi_a, \rcostt{\Gamma}}{\cost{e}_1 = \cost{e}_2}$. By
Theorem~\ref{thm:equivhol}, $\jrhol{\rcost{\Gamma}, \Delta}{\Phi_a,
  \rcostt{\Gamma}}{\cost{e}_1}{\cost{\erase{\tau}}_e}
{\cost{e}_2}{\cost{\erase{\tau}}_e}{\res\ltag = \res\rtag}$. (B)
follows immediately by rule \rname{SUB}.

\end{proof}



\section{Examples}

\subsection*{Factorial}

This example shows that the two following implementations of factorial, with and without accumulator, are equivalent:
\begin{align*}
{\rm fact}_1 \:&\defeq\: {\rm letrec}\ f_1 \ n_1 = \casenat{n_1}{1}{\lambda x_1. S x_1*(f_1\ x_1)} \\
{\rm fact}_2 \:&\defeq\: {\rm letrec}\ f_2\ n_2 = \lambda acc. \casenat{n_2}{acc}{\lambda x_2. f_2\ x_2\ ( S x_2 * acc)}
\end{align*}
Our goal is to prove that:
\[
\jrhol{\emptyset}{\emptyset}{{\rm fact_1}}{\nat\to\nat}{{\rm fact_2}}{\nat\to\nat\to\nat}{\forall n_1 n_2. n_1 = n_2 \Rightarrow \forall acc. (\res\ltag\ n_1)* acc = \res\rtag\ n_2\ acc}
\]
Since both programs do the same number of iterations, we can do
synchronous reasoning for the recursion at the head of the
programs. However, the bodies of the functions have different types
since ${\rm fact}_2$ receives an extra argument, the
accumulator. Therefore, we will need a one-sided application of
\rname{ABS-R}, before we can go back to reasoning synchronously. We
will then apply the \rname{CASE} rule, knowing that both terms reduce
to the same branch, since $n_1 = n_2$. On the zero branch, we will
need to prove the trivial equality $1*acc=acc$.  On the succesor
branch, we will need to prove that $Sx * ({\rm fact}\ x) * acc = {\rm
  fact}_2\ x_2\ (Sx_2 * acc)$, knowing by induction hypothesis that
such a property holds for every $m$ less that $x$.

Now we will expand on the details. We start the proof applying the
\rname{LETREC} rule, which has 2 premises:
\begin{enumerate}
\item Both functions are well-defined
\item $\sjrhol{\Gamma,  n_1,n_2 : \nat, f_1 : \nat\to\nat, f_2 : \nat\to\nat\to\nat}
              {n_1 = n_2, \forall y_1 y_2. (y_1,y_2)<(n_1,n_2) \Rightarrow y_1 = y_2 \Rightarrow \forall acc. (f_1\ y_1)* acc = f_2\ y_2\ acc  }
              {\casenat{n_1}{1}{\lambda x_1. S x_1*(f_1\ x_1)}}{\nat\to\nat\to\nat}
              {\lambda acc. \casenat{n_2}{acc}{\lambda x_2. f_2\ x_2\ ( S x_2 * acc)}}{\nat\to\nat\to\nat\to\nat}
              {n_1 = n_2 \Rightarrow \forall acc. \res\ltag * acc = \res\rtag\ acc}$
\end{enumerate}

We assume that the first premise is provable.

To prove the second premise, we start by applying ABS-R, which leaves the following proof obligation:
\[\begin{array}{c}
      \sjrhol{\Gamma,  n_1,n_2 : \nat, f_1 : \nat\to\nat, f_2 : \nat\to\nat\to\nat, acc: \nat}
             {n_1 = n_2, \forall y_1 y_2. (y_1,y_2)<(n_1,n_2) \Rightarrow y_1 = y_2 \Rightarrow \forall acc. (f_1\ y_1)* acc = f_2\ y_2\ acc, n_1 = n_2}
             {\\ \casenat{n_1}{1}{\lambda x_1. S x_1*(f_1\ x_1)}}{\nat\to\nat\to\nat}
             {\casenat{n_2}{acc}{\lambda x_2. f_2\ x_2\ ( S x_2 * acc)}}{\nat\to\nat\to\nat\to\nat}
             {\res\ltag * acc = \res\rtag}
  \end{array}\]

Now we can apply \rname{CASE}, and we have 3 premises, where $\Psi$ denotes the axioms of the previous judgment:
\begin{itemize}
  \item $\sjrhol{\Gamma}{\Psi}{n_1}{}{n_2}{}{\res\ltag = 0 \Leftrightarrow \res\rtag = 0}$
  \item $\sjrhol{\Gamma'}
             {\Psi, n_1 = 0, n_2 = 0}
             {1}{\nat\to\nat\to\nat}
             {acc}{\nat\to\nat\to\nat\to\nat}
             {\res\ltag * acc = \res\rtag}$

  \item $\sjrhol{\Gamma'}
             {\Psi}
             {\lambda x_1. S x_1*(f_1\ x_1)}{\nat\to\nat\to\nat}
             {\lambda x_2. f_2\ x_2\ ( S x_2 * acc)}{\nat\to\nat\to\nat\to\nat}
             {\forall x_1 x_2. n_1 = S x_1 \Rightarrow n_2 = S x_2 \Rightarrow (\res\ltag\ x_1)* acc = \res\rtag\ x_2}$
\end{itemize}

Premise 1 is a direct consequence of $n_1=n_2$.
Premise 2 is a trivial arithmetic identity.
To prove premise 3, we first apply the ABS rule:
\[\sjrhol{\Gamma'}
         {\Psi, n_1 = S x_1, n_2 = S x_2}
         {S x_1*(f_1\ x_1)}{\nat\to\nat\to\nat}
         {f_2\ x_2\ ( S x_2 * acc)}{\nat\to\nat\to\nat\to\nat}
         {\res\ltag * acc = \res\rtag}\]

and then by Theorem~\ref{thm:equivhol} we can finish the proof in HOL by deriving.
\[\Psi, n_1 = S x_1, n_2 = S x_2 \vdash S x_1*(f_1\ x_1) * acc = f_2\ x_2\ ( S x_2 * acc)\]

From the premises we can first prove that $(x_1,x_2)<(n_1,n_2)$ so by the inductive
hypothesis from the \rname{LETREC} rule, and the \rname{$\Rightarrow_E$} rule, 
we get
\[\forall acc. (f_1\ x_1)* acc = f_2\ x_2\ acc,\]
which we then instantiate with $S x_1 * acc$  to get
\[(f_1\ x_1)* S x_1 * acc = f_2\ x_2\ (S x_1 * acc).\]
On the other hand, from the hypotheses we also have $x_1 = x_2$, so by
\rname{CONV} we finally prove
\[(f_1\ x_1)* S x_1 * acc = f_2\ x_2\ (S x_2 * acc)\].

\subsection*{List reversal}

A related example for lists is the equivalence of reversal with and without accumulator. The structure of the proof is the same as in the factorial
example, but we will briefly show it to illustrate how the LISTCASE rule is used.
The functions are written:
\[
\begin{array}{rl}
  {\rm rev}_1 \:&\defeq\: \letrec{f_1}{l_1}{\caselist{l_1}{[]}{\lambda h_1. \lambda t_1. (f_1\ t_1) +\!+ [x_1]}} \\
  {\rm rev}_2 \:&\defeq\: \letrec{f_2}{l_2}{\lambda acc. \caselist{l_2}{acc}{ \lambda h_2. \lambda t_2. f_2\ t_2 \ (h_2 :: acc)}}
\end{array}
\]
We want to prove they are related by the following judgment:
\[
\jrhol{\emptyset}{\emptyset}{{\rm rev}_1}{\listt{\tau}\to\listt{\tau}}{{\rm rev}_2}{\listt{\tau}\to\listt{\tau}}
      {\forall l_1, l_2. l_1 = l_2 \Rightarrow \forall acc .\ (\res\ltag\ l_1) ++ acc = \res\rtag\ l_2\ acc}
\]
By the \rname{LETREC} rule, we have to prove 2 premises: 
\begin{enumerate}
\item Both functions are well-defined.
\item $\sjrhol{\Gamma, \dots}{l_1 = l_2, \forall m_1 m_2. (|m_1|, |m_2|) < (|l_1|, |l_2|) \Rightarrow m_1 = m_2 \Rightarrow \forall acc .(f_1\ m_1) ++ acc = f_2\ m_2\ acc}
              {\caselist{l_1}{[]}{\lambda h_1. \lambda t_1. (f_1\ t_1) +\!+ [x_1]}}
              {\tau\to{\sf list}_{\tau}\to{\sf list}_\tau\to{\sf list}_\tau}
              {\lambda acc. \caselist{l_2}{acc}{ \lambda h_2. \lambda t_2. f_2\ t_2 \ (h_2 :: acc)}}
              {\tau\to{\sf list}_{\tau}\to{\sf list}_{\tau}\to{\sf list}_{\tau}\to{\sf list}_{\tau}}
              {\forall acc .\ \res\ltag ++ acc = \res\rtag\ acc}$
\end{enumerate}
For the second premise, similarly as in factorial, we apply ABS-R. We have the following premise, where $\Psi$ denotes the axioms in the previous judgment:
\[\begin{array}{c}
              \sjrhol{\Gamma, \dots}{\Psi}
              {\caselist{l_1}{[]}{\lambda h_1. \lambda t_1. (f_1\ t_1) +\!+ [x_1]}}
              {\tau\to{\sf list}_{\tau}\to{\sf list}_\tau\to{\sf list}_\tau}
              {\caselist{t_2}{acc}{ \lambda h_2. \lambda t_2. f_2\ t_2 \ (h_2 :: acc)}}
              {\tau\to{\sf list}_{\tau}\to{\sf list}_{\tau}\to{\sf list}_{\tau}\to{\sf list}_{\tau}}
              {\\ \res\ltag ++ acc = \res\rtag}
  \end{array}\]
and then LISTCASE, which has three premises:
\begin{itemize}
  \item $\sjrhol{\Gamma, \dots}{\Psi}
              {l_1}
              {{\sf list}_\tau}
              {l_2}
              {{\sf list}_{\tau}}
              {\res\ltag = \nil \Leftrightarrow \res\rtag = \nil}$

  \item $\sjrhol{\Gamma, \dots}{\Psi, l_1 = [], l_2 = []}
              {[]}
              {{\sf list}_\tau}
              {acc}
              {{\sf list}_{\tau}}
              {\res\ltag ++ acc = \res\rtag}$

  \item $\sjrhol{\Gamma, \dots}{\Psi}
              {\lambda h_1. \lambda t_1. (f_1\ t_1) +\!+ [x_1]}
              {\tau\to{\sf list}_\tau\to{\sf list}_\tau}
              {\lambda h_2. \lambda t_2. f_2\ t_2 \ (h_2 :: acc)}
              {\tau\to{\sf list}_{\tau}\to{\sf list}_{\tau}}
              {\\ \forall h_1 t_1 h_2 t_2. l_1 = \cons{h_1}{t_1} \Rightarrow l_2 = \cons{h_2}{t_2} \Rightarrow \res\ltag ++ acc = \res\rtag}$

\end{itemize}
We complete the proof in a similar way as in the factorial example.

\subsection*{Proof of Theorem~\ref{thm:take-map}}

We will use without proof two unary lemmas:

\begin{lem}\label{lem:take-length}
  $\juhol{\bullet}{\bullet}{take}{\listt{\nat}\to\nat\to\listt{\nat}}{\forall l n. |r\ l\ n| = min(n,|l|)}$
\end{lem}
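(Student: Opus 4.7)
The plan is to use the UHOL \rname{LETREC} rule with the inductive invariant $\phi \defeq \forall n.\ |\res\ n| = \min(n, |l_1|)$ (and trivial precondition $\phi' = \top$), followed by structural case analysis matching the shape of the body. Formally, we treat $take$ as a recursive function on its first (list) argument, with codomain $\nat \to \listt{\nat}$. Applying \rname{LETREC} reduces the goal to showing, in a context extended with $l_1:\listt{\nat}$ and $f_1:\listt{\nat}\to\nat\to\listt{\nat}$ and with the induction hypothesis
\[ \forall m.\, |m| < |l_1| \Rightarrow \forall n.\, |f_1\ m\ n| = \min(n, |m|) \]
in $\Psi$, that the body $\lambda n_1.\, \caselist{l_1}{\nil}{\lambda h_1\, t_1.\, \casenat{n_1}{\nil}{\lambda y_1.\, \cons{h_1}{(f_1\ t_1\ y_1)}}}$ satisfies $\forall n.\, |\res\ n| = \min(n, |l_1|)$.

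Next, I would apply \rname{ABS} to move $n_1$ into the context, reducing the obligation to proving $|\res| = \min(n_1, |l_1|)$ for the inner $\caselist$ expression. Then \rname{LISTCASE} splits on $l_1$: in the $l_1 = \nil$ branch, the expression reduces to $\nil$, so by \rname{NIL} and HOL we discharge $|\nil| = 0 = \min(n_1, 0) = \min(n_1, |\nil|)$ via \rname{CONV}. In the $l_1 = \cons{h_1}{t_1}$ branch, I would then apply \rname{LISTCASE}'s cons premise, reaching the $\casenat$ term, and split again with \rname{NATCASE} on $n_1$. The $n_1 = 0$ sub-branch is trivial: $|\nil| = 0 = \min(0, |\cons{h_1}{t_1}|)$. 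The $n_1 = S\ y_1$ sub-branch requires the \rname{CONS} rule on $\cons{h_1}{(f_1\ t_1\ y_1)}$, which reduces the obligation to showing $|f_1\ t_1\ y_1| = \min(y_1, |t_1|)$; since $|t_1| < |\cons{h_1}{t_1}| = |l_1|$, the induction hypothesis applied at $m := t_1$ and $n := y_1$ yields exactly this. The final identity $1 + \min(y_1, |t_1|) = \min(S\ y_1, S\ |t_1|)$ is a routine HOL fact about $\min$, discharged via \rname{SUB}.

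The proof is largely mechanical once the invariant is chosen correctly, and the main subtlety, rather than an obstacle, is getting the invariant in the right form: we quantify $n$ universally inside $\phi$ so that the inductive hypothesis can be reinstantiated on the smaller argument $t_1$ at a possibly different natural number $y_1$. Attempting instead to fix $n$ outside the letrec would fail, because the recursive call $f_1\ t_1\ y_1$ is made at $y_1 \neq n_1$. With the invariant in this form, every remaining side-condition is a \rname{CONV}/HOL equation about $\min$, $|\cdot|$, and list constructors, which can be discharged via Theorem~\ref{thm:equiv-uhol-hol} and basic axioms of arithmetic and the list-length function.
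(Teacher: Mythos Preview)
Your proof is correct and follows the natural UHOL derivation one would expect: \rname{LETREC} with the invariant $\forall n.\,|\res\ n| = \min(n,|l_1|)$, then \rname{ABS}, \rname{LISTCASE}, and a nat-case split, discharging the leaves in HOL via the induction hypothesis and elementary facts about $\min$ and list length. Your observation that $n$ must be universally quantified \emph{inside} the invariant so the hypothesis can be reinstantiated at $y_1$ is exactly the right one.

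There is nothing to compare against: the paper explicitly states this lemma ``without proof'' (it is introduced only as a helper for Theorem~\ref{thm:take-map}). Your derivation is the expected one and would be accepted as the missing proof.
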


\begin{lem}\label{lem:map-length}
  $\juhol{\bullet}{\bullet}{map}{\listt{\nat}\to(\nat\to\nat)\to\listt{\nat}}{\forall l f. |r\ l\ f| = |l|}$
\end{lem}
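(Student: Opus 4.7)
The plan is to apply the \rname{LETREC} rule of UHOL directly to the definition of $map$. Taking the trivial precondition $\phi' = \top$ and the postcondition $\phi \;=\; \forall f.\, |\res\ f| = |l_2|$, where $l_2$ is the recursion variable and $\res$ is the body result of type $(\nat\to\nat)\to\listt{\nat}$, the LETREC conclusion $\forall l.\, \top \Rightarrow \forall f.\, |\res\ l\ f| = |l|$ is (up to renaming) exactly the required statement. The remaining proof obligation is to type-check the body $\lambda g_2.\; \caselist{l_2}{[]}{\lambda h_2 t_2.\; \cons{(g_2\ h_2)}{(f_2\ t_2\ g_2)}}$ against $\phi$ in a context that includes the inductive hypothesis
\[
\forall m.\, |m| < |l_2| \;\Rightarrow\; \forall f.\, |f_2\ m\ f| = |m|.
\]

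I would then apply \rname{ABS} to strip $\lambda g_2$, reducing to showing $|\res| = |l_2|$ for the underlying case expression. An application of \rname{LISTCASE} splits the goal into a nil branch and a cons branch. The nil branch is immediate: under the assumption $l_2 = \nil$ we have $|l_2| = 0 = |\nil|$, so the goal for $\res = \nil$ is discharged by \rname{NIL} followed by \rname{SUB}. The cons branch requires showing $\forall h\, t.\, l_2 = \cons{h}{t} \Rightarrow |\res\ h\ t| = |l_2|$ for $\lambda h_2 t_2.\; \cons{(g_2\ h_2)}{(f_2\ t_2\ g_2)}$. Two applications of \rname{ABS} move $h_2, t_2$ and the equation $l_2 = \cons{h_2}{t_2}$ into the context, from which $|l_2| = |t_2| + 1$ by \rname{CONV}. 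Then \rname{CONS} reduces the goal to $|f_2\ t_2\ g_2| = |t_2|$, which follows by instantiating the inductive hypothesis at $m := t_2$ (the side condition $|t_2| < |l_2|$ holds, since $|l_2| = |t_2| + 1$) and then at $f := g_2$, and concluding with \rname{SUB}.

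No step here is a genuine obstacle; the argument is a standard list induction that mirrors the recursive structure of $map$. The only subtlety lies in the choice of $\phi$: the quantifier $\forall f$ must sit \emph{inside} $\phi$ rather than be folded into the outer LETREC-introduced quantifier, so that the inductive hypothesis, once unfolded, can be instantiated with the particular function $g_2$ bound by the outer $\lambda$-abstraction in the body of $map$. With this framing the proof goes through mechanically by appeal to the rules of Figure~\ref{fig:uhol} and trivial HOL side conditions about list length.
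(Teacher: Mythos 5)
Your proof is correct. Note that the paper states Lemma~\ref{lem:map-length} explicitly \emph{without proof} (it is one of the two ``unary lemmas'' assumed at the start of the take-and-map example), so there is no official argument to compare against; your derivation fills that gap in the natural way. The skeleton --- \rname{LETREC} with $\phi' = \top$ and $\phi = \forall f.\,|\res\ f| = |l_2|$, then \rname{ABS}, \rname{LISTCASE}, \rname{CONS}, and an appeal to the inductive hypothesis at $m := t_2$, $f := g_2$ --- mirrors exactly how the paper handles its worked recursive examples (factorial, list reversal), and your observation that the $\forall f$ must live inside $\phi$ rather than in the outer \rname{LETREC}-introduced quantifier is the one genuinely load-bearing choice, since the recursive call reuses the same function argument $g_2$. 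Two cosmetic points: the rule conclusions produce vacuous guards of the form $\top \Rightarrow \cdots$ and a curried shape $\forall h.\,\top\Rightarrow\forall t.\,(l_2 = \cons{h}{t})\Rightarrow\cdots$ that must be massaged into the stated postconditions by \rname{SUB}; and the final step from the instantiated hypothesis $|f_2\ t_2\ g_2| = |t_2|$ to a UHOL judgment about the term $f_2\ t_2\ g_2$ is most cleanly done by \rname{SUB} from the trivial postcondition $\top$ (equivalently, by Theorem~\ref{thm:equiv-uhol-hol}). Neither affects validity. An equally legitimate shortcut, in keeping with how the paper discharges similar obligations elsewhere, would be to prove $\forall l\,f.\,|map\ l\ f| = |l|$ directly in HOL by the \rname{LIST} induction rule and transfer it to UHOL via Theorem~\ref{thm:equiv-uhol-hol}; your syntax-directed derivation is more in the spirit of the system but proves the same thing.
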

We want to prove
\[
\sjrhol{\Gamma}{l_1 = l_2, n_1 = n_2, g_1 = g_2}{map\ (take\ l_1\ n_1)\ g_1 }{\listt{\sigma}}
                                    {take\ (map\ l_2\ g_2)\ n_2}{\listt{\sigma}}
                                    {\res\ltag \sqsubseteq \res\rtag \wedge |\res\ltag|={\sf min}(n_1,|l_1|) \wedge |\res\rtag|={\sf min}(n_2,|l_2|)}
\]
where $\res\ltag \sqsubseteq \res\rtag$ is the prefix ordering and is
defined as an inductive predicate:
\begin{mathpar}
  \forall l. [] \sqsubseteq l \and \forall h l_1 l_2. l_1 \sqsubseteq l_2 \Rightarrow \cons{h}{l_1} \sqsubseteq \cons{h}{l_2}
\end{mathpar}
By the helping lemmas and Lemma \ref{lem:emb-uhol-rhol}, it suffices to prove just the first conjunct:
\[
\sjrhol{\Gamma}{l_1 = l_2, n_1 = n_2, g_1 = g_2}{map\ (take\ l_1\ n_1)\ g_1 }{\listt{\sigma}}
                                    {take\ (map\ l_2\ g_2)\ n_2}{\listt{\sigma}}
                                    {\res\ltag \sqsubseteq \res\rtag}
\]
The derivation begins by applying the APP{-}R rule. We get the
following judgment on $n_2$:
\[l_1 = l_2, n_1 = n_2, g_1 = g_2 \vdash n_2 \mid \res \geq |take\ l_1\ n_1|\tag{1}\]
and a main premise:
\[
\sjrhol{\Gamma }{l_1 = l_2, n_1 = n_2, g_1 = g_2}{map\ (take\ l_1\ n_1)\ g_1}{\listt{\sigma}}{take\ (map\ l_2\ g_2)}{\nat\to\listt{\sigma}}
       { \forall x_2. x_2 \geq |take\ l_1\ n_1| \Rightarrow \res\ltag \sqsubseteq (\res\rtag\ x_2)}
\tag{2}\]
Notice that we have chosen the premise $x_2 \geq |take\ l_1\ n_1|$
because we are trying to prove $\res\ltag \sqsubseteq (\res\rtag\ x_2)$,
which is only true if we take a larger prefix on the right than on the
left.  The judgment (1) is easily proven from the fact that
$|take\ l_1\ n_1| = min(n_1, |l_1|) \leq n_1 = n_2$, which we get from
the lemmas.
To prove (2) we first apply APP{-}L with a trivial condition $g_1 =
g_2$ on $g_1$. Then we apply APP and we have two premises:
\begin{enumerate}[label=(\Alph*)]
\item $\sjrhol{\Gamma}{\Psi}{take\ l_1\ n_1}{\listt{\sigma}}{map\ l_2\ g_2}{\listt{\sigma}}{ \res\ltag \sqsubseteq_{g_2} \res\rtag }$
\item $\sjrhol{\Gamma}{\Psi}{map}{\listt{\sigma}\to\listt{\sigma}}{take}{\listt{\sigma}\to\nat\to\listt{\sigma}}
             {\forall m_1 m_2. m_1 \sqsubseteq_{g_2} m_2 \Rightarrow (\forall g_1. g_1 = g_2 \Rightarrow \forall x_2. x_2 \geq |m_1| \Rightarrow  (\res\ltag\ m_1\ g_1) \sqsubseteq (\res\rtag\ m_2\ x_2))}$
\end{enumerate}
where $\sqsubseteq_g$ is defined as an inductive predicate parametrized by $g$:
\begin{mathpar}
  \forall l. [] \sqsubseteq_g l \and \forall h l_1 l_2. l_1 \sqsubseteq_g l_2 \Rightarrow \cons{h}{l_1} \sqsubseteq_g \cons{(g h)}{l_2}
\end{mathpar}

We first show how to prove (A).
We start by applying APP with a trivial condition for the arguments
to get:
\[
\sjrhol{\Gamma}{\Psi}{take\ l_1}{\nat\to\listt{\sigma}}{map\ l_2}{\listt{\sigma}}
      {\forall x_1 g_2. (\res\ltag\ x_1) \sqsubseteq_{g_2} (\res\rtag\ g_2) }
\]
We then apply APP, which has two premises, one of them
equating $l_1$ and $l_2$. The other one is:
\[
\sjrhol{\Gamma}{\Psi}{take}{\listt{\sigma}\to\nat\to\listt{\sigma}}{map}{\listt{\sigma}\to\listt{\sigma}}
       {\forall m_1 m_2. m_1 = m_2 \Rightarrow 
          \forall x_1 g_2. (\res\ltag\ m_1\ x_1) \sqsubseteq_{g_2} (\res\rtag\ m_2\ g_2) }
\]

To complete this branch of the proof, we apply LETREC. We need to
prove the following premise:
\[\sjrhol{\Gamma}{\Psi, m_1 = m_2, \forall k_1 k_2. (k_1 , k_2) < (m_1, m_2) \Rightarrow k_1 = k_2 \Rightarrow 
          \forall x_1 g_2. (f_1\ k_1\ x_1) \sqsubseteq_{g_2} (f_2\ k_2\ g_2)}{\lambda n_1. e_1}{.}{\lambda g_2. e_2}{.}
          {\forall x_1 g_2. (\res\ltag\ x_1) \sqsubseteq_{g_2} (\res\rtag\ g_2)}\]

Where $e_1,e_2$ abbreviate the bodies of the functions:
\[\begin{array}{rrl}
     e_1 \;\defeq\; \caselist{m_1}{[]&&\\}
                                  {&\lambda h_1 t_1. \casenat{x_1}{&[]\\&}
                                                                 {&\lambda y_1. \cons{h_1}{f_1\ t_1\ y_1}}}
\end{array}
\]
\[\begin{array}{rl}
    e_2 \;\defeq\; \caselist{m_2}{&[]\\}
                                {&\lambda h_2 t_2. \cons{(g_2\ h_2)}{(f_2\ t_2\ g_2)}}
\end{array}
\]

If we apply ABS we get a premise: 
\[\sjrhol{\Gamma}{\Psi, m_1 = m_2, \forall k_1 k_2. (k_1 , k_2) < (m_1, m_2) \Rightarrow k_1 = k_2 \Rightarrow 
          \forall x_1 g_2. (f_1\ k_1\ x_1) \sqsubseteq_{g_2} (f_2\ k_2\ g_2)}{e_1}{.}{e_2}{.}
          {\res\ltag \sqsubseteq_f \res\rtag}\]

And now we can apply a synchronous CASE rule, since we have a premise
$m_1 = m_2$. This yields 3 proof obligations, where $\Psi'$ is the set of axioms in the previous judgment:

\begin{enumerate}[label=(A.\arabic*)]
  \item $\sjrhol{\Gamma}{\Psi'}{m_1}{}{m_2}{}{\res\ltag = \nil \Leftrightarrow \res\rtag = \nil}$
  \item $\sjrhol{\Gamma'}{\Psi'}{[]}{.}{[]}{.}{\res\ltag \sqsubseteq_f \res\rtag}$
  \item $\sjrhol{\Gamma'}{\Psi'}
          {\lambda h_1 t_1. \casenat{x_1}{[]}{\lambda y_1. \cons{h_1}{f_1\ t_1\ y_1}}}{.}
          {\\ \lambda h_2 t_2. \cons{(g_2\ h_2)}{(f_2\ t_2\ g_2)}}{.}
          {\forall h_1 t_1 h_2 t_2. m_1 = \cons{h_1}{t_1} \Rightarrow m_2 = \cons{h_2}{t_2} \Rightarrow (\res\ltag\ h_1\ t_1) \sqsubseteq_{g_2} (\res\rtag\ h_2\ t_2)}$

\end{enumerate}    

Premises (A.1) and (A.2) are trivial. To prove (A.3) we first apply ABS twice:
\[\begin{array}{c}
    \sjrhol{\Gamma'}{\Psi', m_1 = \cons{h_1}{t_1} , m_2 = \cons{h_2}{t_2}}
          {\casenat{n_1}{[]}{\lambda y_1. \cons{h_1}{f_1\ t_1\ y_1}}}{.}
          {\cons{(g_2\ h_2)}{(f_2\ t_2\ g_2)}}{.}
          {\res\ltag \sqsubseteq_{g_2} \res\rtag}
\end{array}\]

Next, we apply CASE{-}L, which has the following two premises:
\begin{enumerate}[label=(A.3.\roman*)]
  \item $\sjrhol{\Gamma'}{\Psi', m_1 = \cons{h_1}{t_1} , m_2 = \cons{h_2}{t_2}, n_1 = 0}
          {[]}{.}
          {\cons{(g_2\ h_2)}{(f_2\ t_2\ g_2)}}{.}
          {\res\ltag \sqsubseteq_{g_2} \res\rtag}$
  \item $\sjrhol{\Gamma'}{\Psi', m_1 = \cons{h_1}{t_1} , m_2 = \cons{h_2}{t_2}}
          {\lambda y_1. \cons{h_1}{f_1\ t_1\ y_1}}{.}
          {\cons{(g_2\ h_2)}{(f_2\ t_2\ g_2)}}{.}
          {\forall y_1. n_1 = S y_1 \Rightarrow (\res\ltag\ y_1) \sqsubseteq_{g_2} \res\rtag}$
\end{enumerate}

Premise (A.3.i) can be directly derived in HOL from the definition of
$\sqsubseteq_{g_2}$. To prove (A.3.ii) we need to make use of our
inductive hypothesis:
\[\forall k_1 k_2. (k_1 , k_2) < (m_1, m_2) \Rightarrow k_1 = k_2 \Rightarrow 
          \forall x_1 g_2. (f_1\ k_1\ x_1) \sqsubseteq_{g_2} (f_2\ k_2\ g_2)\]
In particular, from the premises $m_1 = \cons{h_1}{t_1}$ and $m_2 =\cons{h_2}{t_2}$
we can deduce $(t_1 , t_2) < (m_1,m_2)$.
Aditionally, from the premise $m_1 = m_2$ we prove $t_1 =t_2$.
Therefore, from the inductive hypothesis we derive $\forall
x_1 g_2. (f_1\ t_1\ x_1) \sqsubseteq_{g_2} (f_2\ t_2\ g_2)$, and by
definition of $\sqsubseteq_{g_2}$, and the fact that $h_1 = h_2$,
for every $y$ we can prove 
$\cons{h_1}{(f_1\ t_1\ y)} \:\sqsubseteq_{g_2}\:
\cons{(g_2\ h_2)}{f_2\ t_2}$. By Theorem~\ref{thm:equivhol}, we can prove (A.3.ii).
\\

We will now show how to prove (B) :
\[\sjrhol{\Gamma}{\Psi}{map}{\listt{\sigma}\to\listt{\sigma}}{take}{\listt{\sigma}\to\nat\to\listt{\sigma}}
             {\forall m_1 m_2. m_1 \sqsubseteq_{g_2} m_2 \Rightarrow (\forall g_1. g_1 = g_2 \Rightarrow \forall x_2. x_2 \geq |m_1| \Rightarrow (\res\ltag\ m_1\ g_1) \sqsubseteq (\res\rtag\ m_2\ x_2))}\]
On this branch we will also use LETREC. We have to prove a premise:

\[\sjrhol{\Gamma}{\Psi, \Phi}{\lambda g_1. e_2}{.}{\lambda x_2. e_1}{.}
         {\forall g_1. g_1 = g_2 \Rightarrow \forall x_2. x_2 \geq |m_1| \Rightarrow (\res\ltag\ g_1) \sqsubseteq (\res\rtag\ x_2)}\]
where
\[ \Phi \:\defeq\:  \begin{array}{c}
                      m_1 \sqsubseteq_{g_2} m_2,\\
                      \forall k_1 k_2. (k_1,k_2) < (m_1, m_2) \Rightarrow k_1 \sqsubseteq_{g_2} k_2 \Rightarrow
                         (\forall g_1. g_1 = g_2 \Rightarrow \forall x_2. x_2 \geq |k_1| \Rightarrow (\res\ltag\ k_1\ g_1) \sqsubseteq (\res\rtag\ k_2\ x_2))
                    \end{array}\]

We start by applying ABS. Our goal is to prove:
\[\sjrhol{\Gamma}{\Psi, \Phi, x_2 \geq |m_1|, g_1 = g_2}
                      {\begin{array}{l}
                          \caselist{m_1}{[] \\}
                                   {\lambda h_1 t_1. \cons{(g_1\ h_1)}{(f_1\ t_1\ g_1)}}
                       \end{array}}{.}
                      {\begin{array}{l}
                          \caselist{m_2}{[] \\}
                          {\lambda h_2 t_2. \casenat{x_2}{[] \\}{\lambda y_2. \cons{h_2}{f_2\ t_2\ y_2}}}
                       \end{array}}{.}
         {\res\ltag \sqsubseteq \res\rtag}\]

Notice that we have $\alpha$-renamed the variables to have the
appropriate subscript.
Now we want to apply a CASE rule, but the lists over which we are
matching are not necessarily of the same length. Therefore, we use the
asynchronous LISTCASE{-}A rule. We have to prove four premises:
\begin{enumerate}[label = (B.\arabic*)]
\item $\sjrhol{\Gamma}
              {\Psi, \Phi, x_2 \geq |m_1|, g_1 = g_2, m_1 = [], m_2 = []}
              {[]}{.}
              {[]}{.}
              {\res\ltag \sqsubseteq \res\rtag}$
\item $\sjrhol{\Gamma}
              {\Psi, \Phi, x_2 \geq |m_1|, g_1 = g_2, m_1 = []}
              {[]}{.}
              {\\ \lambda h_2 t_2.
                      \casenat{x_2}{[]}
                              {\lambda y_2. \cons{h_2}{f_2\ t_2\ y_2}}}{.}
              {\forall h_2 t_2. m_2 = \cons{h_2}{t_2} \Rightarrow
                  \res\ltag \sqsubseteq (\res\rtag\ h_2\ t_2)}$
\item $\sjrhol{\Gamma}
              {\Psi, \Phi, x_2 \geq |m_1|, g_1 = g_2, m_2 = []}
              {\lambda h_1 t_1. \cons{(g_1\ h_1)}{(f_1\ t_1\ g_1)}}{.}
              {[]}{.}
              {\forall h_1 t_1. m_1 = \cons{h_1}{t_1} \Rightarrow
                  (\res\ltag\ h_1\ t_1) \sqsubseteq \res\rtag}$
\item $\sjrhol{\Gamma}
              {\Psi, \Phi, x_2 \geq |m_1|, g_1 = g_2}
              {\lambda h_1 t_1. \cons{(g_1\ h_1)}{(f_1\ t_1\ g_1)}}{.}
              {\\ \lambda h_2 t_2. \casenat{x_2}{[]}
                                  {\lambda y_2. \cons{h_2}{f_2\ t_2\ y_2}}}{.}
              {\\ \forall h_1 t_1 h_2 t_2. m_1 = \cons{h_1}{t_1} \Rightarrow
                       m_2 = \cons{h_1}{t_1} \Rightarrow
                           (\res\ltag\ h_1\ t_1) \sqsubseteq (\res\rtag\ h_2\ t_2)}$
\end{enumerate}

Premises (B.1) and (B.2) are trivially derived from the definition of
the $\sqsubseteq$ predicate. To prove premise (B.3) we see that we
have premises $m_1 \sqsubseteq_{g_2} m_2$, $m_2 = []$, and $m_1 =
h_1::t_2$, from which we can derive a contradiction.

It remains to prove (B.4). To do so, we apply ABS twice and then
NATCASE{-}R, which has two premises:
\begin{enumerate}[label = (B.4.\roman*)]
  \item $\sjrhol{\Gamma}{\Psi, \Phi, x_2 \geq |m_1|, g_1 = g_2, m_1 = \cons{h_1}{t_1}, m_2 = \cons{h_1}{t_1}, x_2 = 0}
                      {\cons{(g_1\ h_1)}{(f_1\ t_1\ g_1)}}{.}
                      {[]}{.}
         {\res\ltag \sqsubseteq \res\rtag}$
  \item $\sjrhol{\Gamma}{\Psi, \Phi, x_2 \geq |m_1|, g_1 = g_2, m_1 = \cons{h_1}{t_1}, m_2 = \cons{h_1}{t_1}}
                      {\cons{(g_1\ h_1)}{(f_1\ t_1\ g_1)}}{.}
                      {\lambda y_2. \cons{h_2}{f_2\ t_2\ y_2}}{.}
         {\\ \forall y_2. x_2 = S y_2 \Rightarrow \res\ltag \sqsubseteq (\res\rtag\ y_2)}$
\end{enumerate}

To prove (B.4.i) we derive a contradiction between the premises. From $x_2 = 0$ and the 
premise $x_2 \geq |m_1|$ we derive
$m_1 = []$ and, together with $m_1 = \cons{h_1}{t_1}$ we arrive at a
contradiction by applying NC. 

To prove (B.4.ii) we need to use the
induction hypothesis. From $m_1 = \cons{h_1}{t_1}, m_2 = \cons{h_1}{t_1}$
we can prove that $|t_1|<|m_1|$
and $|t_2|<|m_2|$, so we can do a CUT with the i.h. and derive:
\[t_1 \sqsubseteq_{g_2} t_2 \Rightarrow
                         (\forall g_1. g_1 = g_2 \Rightarrow \forall x_2. x_2 \geq |t_1| \Rightarrow (f_1\ t_1\ g_1) \sqsubseteq (f_2\ t_2\ x_2))\]
By assumption, $m_1 \sqsubseteq_{g_2} m_2$, so $t_1 \sqsubseteq_{g_2}
t_2$.  Moreover, also by assumption $g_1 = g_2$, and $S y_2 = x_2 \geq
|m_1| = S |t_1|$, so $y_2 \geq |t_1|$.  So if we instantiate the
i.h. with $g_1$ and $y_2$, and apply CUT again, we can prove:
\[(f_1\ t_1\ g_1) \sqsubseteq (f_2\ t_2\ y_2)\]
On the other hand, since $\cons{h_1}{t_1} \sqsubseteq_{g_2}
\cons{h_2}{t_2}$, then (by elimination of $\sqsubseteq_{g_2}$) we can
derive $g_1 h_1 = h_2$ and by definition of $\sqsubseteq$,
$\cons{(g_1\ h_1)}{(f_1\ t_1\ g_1)} \sqsubseteq
\cons{h_2}{(f_2\ t_2\ y_2)}$. So we can apply Theorem~\ref{thm:equivhol} and
prove (B.4.ii). This ends the proof.\qed



\subsection*{Proof of Theorem~\ref{thm:insertion-sort}}

We need two straightforward lemmas in UHOL. The lemmas state that
sorting preserves the length and minimum element of a list.

\begin{lem}\label{lem:isort:length}
Let $\tau \defeq \listt{\nat} \rightarrow \listt{\nat}$. Then,
(1)~$\juhol{\bullet}{\bullet}{{\sf insert}}{\nat \rightarrow
  \tau}{\forall x\, l.\, |\pi_1(\res\ x\ l)| = 1 + |l|}$, and
(2)~$\juhol{\bullet}{\bullet}{{\sf isort}}{\tau}{\forall x.\, |\pi_1
  (\res\ x)| = |x|}$.
\end{lem}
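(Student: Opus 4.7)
The plan is to prove the two parts sequentially by induction over the structure of the recursive definitions, using rule \rname{LETREC} in UHOL. Part~(2) will invoke part~(1) at the key step where \texttt{isort} performs the call to \texttt{insert}. Both proofs are structural and require no creative invariants; the induction hypothesis supplied by \rname{LETREC} is exactly the body of the universal quantifier in the statement.

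For part~(1), I would first apply \rname{ABS} to bind $x$, reducing the goal to a judgment about the inner ${\rm letrec}$ with conclusion $\forall l.\, |\pi_1(\res\ l)| = 1 + |l|$. Then \rname{LETREC} (with $\phi' = \top$) introduces the induction hypothesis $\forall m.\, |m| < |l| \Rightarrow |\pi_1(insert\ m)| = 1 + |m|$ and reduces the goal to proving the property of the body at an arbitrary $l$. Applying \rname{LISTCASE} splits on $l$: the nil branch returns $([x], 0)$ and by \rname{SUB} with a HOL arithmetic fact we get $|\pi_1([x],0)| = |[x]| = 1 = 1 + 0$. In the cons branch we apply \rname{ABS} twice (binding $h, t$) and then a boolean case rule on $x \leq h$. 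The $\tbool$ branch returns $(\cons{x}{l}, 1)$ and a HOL step gives $|\cons{x}{l}| = 1 + |l|$. The $\fbool$ branch contains ${\rm let}\ s = insert\ t\ {\rm in}\ (\cons{h}{\pi_1 s}, 1 + \pi_2 s)$, which we treat as an application $(\lambda s.\ldots)(insert\ t)$. Using \rname{APP} with precondition $|\pi_1(s)| = 1 + |t|$ on the argument (discharged by the induction hypothesis instantiated at $t$, since $|t| < |l|$) together with \rname{ABS} on the function, we are left with proving $|\cons{h}{\pi_1(s)}| = 1 + |\cons{h}{t}|$ in HOL, which is immediate.

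For part~(2), I would apply \rname{LETREC} and then \rname{LISTCASE} on $l$. The nil branch is trivial: $|\pi_1([],0)| = 0 = |[]|$. In the cons branch, after two \rname{ABS} applications, the body contains two nested let-bindings, which I again treat as function applications. The first, ${\rm let}\ s = isort\ t$, yields by the induction hypothesis that $|\pi_1(s)| = |t|$. The second, ${\rm let}\ s' = {\sf insert}\ h\ (\pi_1 s)$, yields by part~(1) (invoked via \rname{APP} and the UHOL-into-UHOL instantiation) that $|\pi_1(s')| = 1 + |\pi_1(s)| = 1 + |t| = |\cons{h}{t}|$. The final HOL step amounts to simple arithmetic on list lengths.

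The main obstacle, such as it is, is purely bookkeeping: the let-bindings in the source syntax must be unfolded as function applications so that \rname{APP} can be used to thread the inductive/auxiliary conclusions through the projection operations, and subject conversion (via \rname{SUB}) is repeatedly needed to $\beta$-reduce projections of explicit pairs like $\pi_1([x],0) = [x]$ and $\pi_2(\pair{h}{\pi_1 s}, 1+\pi_2 s)$ so that the HOL side-conditions become visibly arithmetic identities rather than opaque projection terms. No induction beyond the one provided by \rname{LETREC} is required.
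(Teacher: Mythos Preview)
The paper does not actually give a proof of this lemma: it is introduced as one of two ``straightforward lemmas in UHOL'' and used without proof in the insertion-sort example. Your proposal is correct and is exactly the structural induction one would expect---\rname{ABS} to peel off the outer $\lambda x$ in part~(1), then \rname{LETREC} with trivial precondition $\phi'=\top$, \rname{LISTCASE}, and in the cons/$\fbool$ branch the desugaring of ${\rm let}$ as application so that \rname{APP}+\rname{ABS} thread the induction hypothesis through. The bookkeeping you flag (using \rname{SUB}/subject conversion to simplify projections of explicit pairs) is real but routine, and your chaining of part~(1) into part~(2) via \rname{APP} on the ${\sf insert}$ call is the intended dependency.
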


\begin{lem}\label{lem:isort:lmin}
Let $\tau \defeq \listt{\nat} \rightarrow \listt{\nat}$. Then,
(1)~$\juhol{\bullet}{\bullet}{{\sf insert}}{\nat \rightarrow
  \tau}{\forall x\, l.\, {\sf lmin}(\pi_1 (\res\ x\ l)) = {\sf min}(x,
  {\sf lmin}(l))}$, and (2)~$\juhol{\bullet}{\bullet}{{\sf
    isort}}{\tau}{\forall x.\, {\sf lmin}(\pi_1 (\res\ x)) = {\sf
    lmin}(x)}$.
\end{lem}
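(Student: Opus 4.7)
The plan is to prove both claims by list induction using the \rname{LETREC} rule of UHOL, with the residual arithmetic reasoning discharged in HOL via Theorem~\ref{thm:equiv-uhol-hol}. Throughout, I will rely on the defining equalities ${\sf lmin}(\nil) = \infty$ and ${\sf lmin}(\cons{h}{t}) = {\sf min}(h, {\sf lmin}(t))$ (which follow from the definition of ${\sf lmin}$ by \rname{CONV}), together with standard HOL facts about ${\sf min}$: associativity, commutativity, and ${\sf min}(x,\infty) = x$.

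For part~(1), first I would apply \rname{ABS} to quantify over $x$, then \rname{LETREC} on $insert$, producing an induction hypothesis $\forall l'.\, |l'| < |l| \Rightarrow {\sf lmin}(\pi_1(insert\ l')) = {\sf min}(x, {\sf lmin}(l'))$. A \rname{LISTCASE} on $l$ gives two subgoals. In the $\nil$ branch, the body reduces (by subject conversion) to $(\cons{x}{\nil}, 0)$, and the goal ${\sf lmin}(\cons{x}{\nil}) = {\sf min}(x, {\sf lmin}(\nil))$ is immediate. In the $\cons{h}{t}$ branch I would further case-split on the boolean $x \leq h$: in the $\tbool$ subbranch, the first projection of the result is $\cons{x}{(\cons{h}{t})}$ and the conclusion is direct; in the $\fbool$ subbranch, the first projection is $\cons{h}{(\pi_1(insert\ t))}$, and invoking the induction hypothesis at $t$ (using $|t| < |\cons{h}{t}|$) rewrites ${\sf lmin}(\pi_1(insert\ t))$ to ${\sf min}(x, {\sf lmin}(t))$, reducing the goal to the identity ${\sf min}(h, {\sf min}(x, {\sf lmin}(t))) = {\sf min}(x, {\sf min}(h, {\sf lmin}(t)))$, a HOL triviality.

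Part~(2) follows the same pattern. After \rname{LETREC} on $isort$ with hypothesis $\forall x'.\, |x'| < |x| \Rightarrow {\sf lmin}(\pi_1(isort\ x')) = {\sf lmin}(x')$, \rname{LISTCASE} on $x$ gives a trivial $\nil$ branch and a $\cons{h}{t}$ branch where, after unfolding the \textbf{let}-bindings (which are sugar for $\lambda$-applications, handled by subject conversion), the first projection of the result is $\pi_1({\sf insert}\ h\ (\pi_1(isort\ t)))$. Applying the induction hypothesis gives ${\sf lmin}(\pi_1(isort\ t)) = {\sf lmin}(t)$, and then applying part~(1) with $x := h$ and $l := \pi_1(isort\ t)$ yields ${\sf lmin}(\pi_1({\sf insert}\ h\ (\pi_1(isort\ t)))) = {\sf min}(h, {\sf lmin}(t)) = {\sf lmin}(\cons{h}{t})$, closing the goal. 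The main obstacle is not mathematical but bookkeeping: juggling the $\pi_1$ projections introduced by the cost-instrumented return values, unfolding \textbf{let}-bindings by subject conversion, and properly discharging the size side-condition $|t| < |\cons{h}{t}|$ needed to fire each induction hypothesis.
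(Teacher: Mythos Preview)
Your proposal is correct and follows the natural inductive argument. Note, however, that the paper does not actually prove this lemma: it is introduced as one of two ``straightforward lemmas in UHOL'' and then used without proof in the verification of Theorem~\ref{thm:insertion-sort}. Your sketch therefore supplies precisely the detail the paper omits, and the approach you take---\rname{ABS} then \rname{LETREC} for part~(1), \rname{LETREC} for part~(2), followed by \rname{LISTCASE} and discharging the arithmetic in HOL via Theorem~\ref{thm:equiv-uhol-hol}---is exactly what the paper's proof style elsewhere (e.g.\ the factorial and take/map examples) would suggest.
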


\begin{proof}[Proof of Theorem~\ref{thm:insertion-sort}]
We prove the theorem using LETREC. We actually show the following
stronger theorem, which yields a stronger induction hypothesis in the
proof.
\[
\jrhol{\bullet}{\bullet}{{\sf isort}}{\tau}{{\sf isort}}{\tau}{\forall
  x_1\, x_2.\, ({\sf sorted}(x_1) \wedge |x_1| = |x_2|) \Rightarrow
  (\pi_2(\res\ltag \ x_1) \leq \pi_2(\res\rtag\ x_2)) \wedge
  \underline{(\res\ltag \ x_1 = {\sf isort}\ x_1) \wedge (\res\rtag
    \ x_2 = {\sf isort}\ x_2)}}
\]
Let $\iota$ denote the inductive hypothesis:
\[
\begin{array}{@{}l@{}l@{}}
\iota \defeq \forall m_1\, m_2.\,(|m_1|, |m_2|) < (|x_1|, |x_2|)
\; & \Rightarrow ({\sf sorted}(m_1) \wedge |m_1| = |m_2|) \\
& \Rightarrow \pi_2(isort_1 \ m_1) \leq \pi_2(isort_2\ m_2) \wedge
(isort_1\ m_1 = {\sf isort}\ m_1) \wedge (isort_2\ m_2 = {\sf
  isort}\ m_2)
\end{array}
\]
and $e$ denote the body of the function ${\sf isort}$:
\[e \defeq \begin{array}[t]{@{}r@{}l@{}l@{}}
{\rm case}\ l\ {\rm of}\ []\; & \mapsto ([], 0); \\
\_ :: \_ \; & \mapsto \lambda h\, t. \, & {\rm let}\ s = isort \ t \\
& & {\rm let}\ s' = {\sf insert}\ h\ (\pi_1\ s) \ {\rm in}\\
& & (\pi_1 \ s', (\pi_2\ s) + (\pi_2\ s'))
\end{array}\]

By LETREC, it suffices to prove the following (we omit simple types
for easier reading; they play no essential role in the proof).
\[
\jrholu{isort_1, isort_2, x_1, x_2}{{\sf sorted}(x_1), |x_1| = |x_2|, \iota}
       {e[isort_1/isort][x_1/l]}{e[isort_2/isort][x_2/l]}
{
\left(  
\begin{array}{l}
  \pi_2\ \res\ltag \leq \pi_2\ \res\rtag \\
  \wedge\; \res\ltag = {\sf isort}\ x_1  \\
  \wedge\; \res\rtag = {\sf isort}\ x_2
\end{array}
\right)
}
\]

Following the structure of $e$, we next apply the rule LISTCASE. This
yields the following two main proof obligations, corresponding to the
two case branches (the third proof obligation, $x_1 = []
\Leftrightarrow x_2 = []$ follows immediately from the assumption
$|x_1| = |x_2|$).

\begin{equation}\label{goal:isort:nil}
 \jrholu{isort_1, isort_2, x_1, x_2}{{\sf sorted}(x_1), |x_1| = |x_2|,
   \iota, x_1= x_2 = []} {([], 0)}{([], 0)}{(\pi_2\ \res\ltag \leq
   \pi_2\ \res\rtag) \wedge (\res\ltag = {\sf isort}\ x_1) \wedge
   (\res\rtag = {\sf isort}\ x_2)}
\end{equation}

\begin{equation}\label{goal:isort:cons}
\begin{array}{l}
isort_1, isort_2, x_1, x_2,h_1,t_1,h_2,t_2 \mid\\
{\sf sorted}(x_1), |x_1| = |x_2|, \\
\iota, \underline{x_1 = \cons{h_1}{t_1}, x_2 = \cons{h_2}{t_2}}
\end{array}
\vdash
\begin{array}{l}
{\rm let}\ s = isort_1 \ t_1 \\
{\rm let}\ s' = {\sf insert}\ h_1\ (\pi_1\ s) \ {\rm in}\\
(\pi_1 \ s', (\pi_2\ s) + (\pi_2\ s'))
\end{array}
\sim
\begin{array}{l}
{\rm let}\ s = isort_2 \ t_2 \\
{\rm let}\ s' = {\sf insert}\ h_2\ (\pi_1\ s) \ {\rm in}\\
(\pi_1 \ s', (\pi_2\ s) + (\pi_2\ s'))
\end{array}
\Bigg |
\begin{array}{l}
  \pi_2 \ \res\ltag \leq \pi_2 \ \res\rtag \\
  \wedge\; \res\ltag = {\sf isort}\ x_1 \\
  \wedge\; \res\rtag = {\sf isort}\ x_2
\end{array}
\end{equation}

(\ref{goal:isort:nil}) is immediate: By Theorem~\ref{thm:equivhol}, it
suffices to show that $(\pi_2 ([], 0) \leq \pi_2 ([], 0)) \wedge (([],
0) = {\sf isort}\ x_1) \wedge (([], 0) = {\sf isort}\ x_2)$. Since
$x_1 = x_2 = []$ by assumption here, this is equivalent to $(\pi_2
([], 0) \leq \pi_2 ([], 0)) \wedge (([], 0) = {\sf isort}\ []) \wedge
(([], 0) = {\sf isort}\ [])$, which is trivial by direct computation.

To prove (\ref{goal:isort:cons}), we expand the outermost occurences
of ${\rm let}$ in both to function applications using the definition
${\rm let}\ x = e_1\ {\rm in}\ e_2 \defeq (\lambda x.e_2)
\ e_1$. Applying the rules APP and ABS, it suffices to prove the
following for any $\phi$ of our choice.

\begin{equation}\label{goal:isort:cons:1}
isort_1, isort_2, x_1, x_2,h_1,t_1,h_2,t_2\ \Bigg|
\begin{array}{l}
{\sf sorted}(x_1), |x_1| = |x_2|, \\
\iota, x_1 = \cons{h_1}{t_1}, x_2 = \cons{h_2}{t_2}
\end{array}
\vdash
\begin{array}{l}
isort_1\ t_1
\end{array}
\sim
\begin{array}{l}
isort_2\ t_2
\end{array}
\Bigg|
\ \phi
\end{equation}

\begin{equation}\label{goal:isort:cons:2}
\begin{array}{l}
isort_1, isort_2, x_1, x_2,h_1,t_1,h_2,t_2,s_1,s_2 \mid\\
{\sf sorted}(x_1), |x_1| = |x_2|, \\
\iota, x_1 = \cons{h_1}{t_1}, x_2 = \cons{h_2}{t_2}\\
\underline{\phi[s_1/\res\ltag][s_2/\res\rtag]}
\end{array}
\vdash
\begin{array}{l}
{\rm let}\ s' = {\sf insert}\ h_1\ (\pi_1\ s_1) \ {\rm in}\\
(\pi_1 \ s', (\pi_2\ s_1) + (\pi_2\ s'))
\end{array}
\sim
\begin{array}{l}
{\rm let}\ s' = {\sf insert}\ h_2\ (\pi_1\ s_2) \ {\rm in}\\
(\pi_1 \ s', (\pi_2\ s_2) + (\pi_2\ s'))
\end{array}
\Bigg|
\begin{array}{l}
  \pi_2 \ \res\ltag \leq \pi_2 \ \res\rtag \\
  \wedge\; \res\ltag = {\sf isort}\ x_1 \\
  \wedge\; \res\rtag = {\sf isort}\ x_2
\end{array}
\end{equation}

We choose $\phi$ as follows:
\[ \phi \defeq \pi_2 \ \res\ltag \leq \pi_2\ \res\rtag \wedge
\res\ltag = {\sf isort}(t_1) \wedge \res\rtag = {\sf isort}(t_2)
\wedge |\pi_1 \ \res\ltag| = |\pi_1 \ \res\rtag| \wedge {\sf
  lmin}(t_1) = {\sf lmin}(\pi_1\ \res\ltag)\]

\noindent
\underline{Proof of (\ref{goal:isort:cons:1})}: By
Theorem~\ref{thm:equivhol}, it suffices to prove the following five
statements in HOL under the context of
(\ref{goal:isort:cons:1}). These statements correspond to the five
conjuncts of $\phi$.

\begin{equation}\label{goal:isort:cons:3}
  \pi_2 (isort_1\ t_1) \leq \pi_2(isort_2\ t_2)
\end{equation}

\begin{equation}\label{goal:isort:cons:4}
  isort_1\ t_1 = {\sf isort}\ t_1
\end{equation}

\begin{equation}\label{goal:isort:cons:5}
  isort_1\ t_2 = {\sf isort}\ t_2
\end{equation}

\begin{equation}\label{goal:isort:cons:6}
  |\pi_1(isort_1\ t_1)| = |\pi_1(isort_2\ t_2)|
\end{equation}

\begin{equation}\label{goal:isort:cons:7}
  {\sf lmin}(t_1) = {\sf lmin}(\pi_1 (isort_1\ t_1))
\end{equation}

(\ref{goal:isort:cons:3})--(\ref{goal:isort:cons:5}) follow from the
induction hypothesis $\iota$ instantiated with $m_1 := t_1, m_2 :=
t_2$. Note that because $x_1 = \cons{h_1}{t_1}$ and $x_2 =
\cons{h_2}{t_2}$, we can prove (in HOL) that $(|t_1|, |t_2|) < (|x_1|,
|x_2|)$. Since, $|x_1| = |x_2|$, $x_1 = \cons{h_1}{t_1}$ and $x_2 =
\cons{h_2}{t_2}$, we can also prove that $|t_1| = |t_2|$. Finally,
from the axiomatic definition of ${\sf sorted}$ and the assumption
${\sf sorted}(x_1)$ it follows that ${\sf sorted}(t_1)$. These
together allow us to instantiate the i.h.\ $\iota$ and immediately
derive (\ref{goal:isort:cons:3})--(\ref{goal:isort:cons:5}).

To prove (\ref{goal:isort:cons:6}), we use (\ref{goal:isort:cons:4})
and (\ref{goal:isort:cons:5}), which reduces (\ref{goal:isort:cons:6})
to $|\pi_1 ({\sf isort}\ t_1)| = |\pi_1 ({\sf isort}\ t_2)|$. To prove
this, we apply Theorem~\ref{thm:equiv-uhol-hol} to
Lemma~\ref{lem:isort:length}, yielding $\forall x.\, |\pi_1({\sf
  isort}\ x)| = |x|$. Hence, we can further reduce our goal to proving
$|t_1| = |t_2|$, which we already did above.

To prove (\ref{goal:isort:cons:7}), we use (\ref{goal:isort:cons:4}),
which reduces (\ref{goal:isort:cons:7}) to ${\sf lmin}(t_1) = {\sf
  lmin}(\pi_1 ({\sf isort}\ t_1))$. This follows immediately from
Theorem~\ref{thm:equiv-uhol-hol} applied to
Lemma~\ref{lem:isort:lmin}.

\noindent
This proves (\ref{goal:isort:cons:1}).\\

\noindent
\underline{Proof of (\ref{goal:isort:cons:2})}: We expand the
definition of ${\rm let}$ and apply the rules APP and ABS to reduce
(\ref{goal:isort:cons:2}) to proving the following for any $\phi'$.

\begin{equation}\label{goal:isort:let2:1}
\begin{array}{l}
  isort_1, isort_2, x_1, x_2,\\
  h_1,t_1,h_2,t_2,s_1,s_2\
\end{array}
\Bigg|
\begin{array}{l}
{\sf sorted}(x_1), |x_1| = |x_2|, \\
\iota, x_1 = \cons{h_1}{t_1}, x_2 = \cons{h_2}{t_2},\\
\phi\defsubst{s_1}{s_2}
\end{array}
\vdash
\begin{array}{l}
{\sf insert}\ h_1\ (\pi_1\ s_1)
\end{array}
\sim
\begin{array}{l}
{\sf insert}\ h_2\ (\pi_1\ s_2)
\end{array}
\Bigg|
\ \phi'
\end{equation}

\begin{equation}\label{goal:isort:let2:2}
\begin{array}{l}
isort_1, isort_2, x_1, x_2,h_1,t_1,h_2,t_2,s_1,s_2,s_1',s_2' \mid\\
{\sf sorted}(x_1), |x_1| = |x_2|, \\
\iota, x_1 = \cons{h_1}{t_1}, x_2 = \cons{h_2}{t_2}\\
\phi[s_1/\res\ltag][s_2/\res\rtag], \phi'\defsubst{s_1'}{s_2'}
\end{array}
\vdash
\begin{array}{l}
(\pi_1 \ s_1', (\pi_2\ s_1) + (\pi_2\ s_1'))
\end{array}
\sim
\begin{array}{l}
(\pi_1 \ s_2', (\pi_2\ s_2) + (\pi_2\ s_2'))
\end{array}
\Bigg|
\begin{array}{l}
  \pi_2 \ \res\ltag \leq \pi_2 \ \res\rtag \\
  \wedge\; \res\ltag = {\sf isort}\ x_1 \\
  \wedge\; \res\rtag = {\sf isort}\ x_2
\end{array}
\end{equation}

We pick the following $\phi'$:

\[
\phi' \defeq \pi_2\ \res\ltag \leq \pi_2\ \res\rtag \wedge \res\ltag =
      {\sf insert}\ h_1\ (\pi_1\ s_1) \wedge \res\rtag = {\sf
        insert}\ h_2\ (\pi_1\ s_2)
\]

\noindent
\underline{Proof of (\ref{goal:isort:let2:1})}: We start by applying
Theorem~\ref{thm:equivhol}. This yields three subgoals in HOL,
corresponding to the three conjuncts in $\phi'$:

\begin{equation}\label{goal:isort:let2:1:1}
\pi_2({\sf insert}\ h_1\ (\pi_1\ s_1)) \leq \pi_2({\sf
  insert}\ h_2\ (\pi_1\ s_2))
\end{equation}

\begin{equation}\label{goal:isort:let2:1:2}
{\sf insert}\ h_1\ (\pi_1\ s_1) = {\sf insert}\ h_1\ (\pi_1\ s_1)
\end{equation}

\begin{equation}\label{goal:isort:let2:1:3}
{\sf insert}\ h_2\ (\pi_1\ s_2) = {\sf insert}\ h_2\ (\pi_1\ s_2)
\end{equation}

(\ref{goal:isort:let2:1:2}) and (\ref{goal:isort:let2:1:3}) are
trivial, so we only have to prove (\ref{goal:isort:let2:1:1}). Since
$s_1 = {\sf isort}\ t_1$ and $s_2 = {\sf isort}\ t_2$ are conjuncts in
the assumption $\phi\defsubst{s_1}{s_2}$, (\ref{goal:isort:let2:1:1})
is equivalent to:

\begin{equation}\label{goal:isort:let2:1:1.1}
\pi_2({\sf insert}\ h_1\ (\pi_1({\sf isort}\ t_1)))
\leq \pi_2({\sf insert}\ h_2\ (\pi_1({\sf isort}\ t_2)))
\end{equation}

To prove this, we split cases on the shapes of $\pi_1({\sf
  isort}\ t_1)$ and $\pi_1({\sf isort}\ t_2)$. From the conjuncts in
$\phi\defsubst{s_1}{s_2}$, it follows immediately that $|\pi_1({\sf
  isort}\ t_1)| = |\pi_1({\sf isort}\ t_2)|$. Hence, only two cases
apply:

\noindent
Case: $\pi_1({\sf isort}\ t_1) = \pi_1({\sf isort}\ t_2) = []$. In
this case, by direct computation, $\pi_2({\sf
  insert}\ h_1\ (\pi_1({\sf isort}\ t_1))) = \pi_2({\sf
  insert}\ h_1\ []) = \pi_2([h_1], 0) = 0$. Similarly, and $\pi_2({\sf
  insert}\ h_2\ (\pi_1({\sf isort}\ t_2))) = 0$. So, the result
follows trivially.

\noindent Case: $\pi_1({\sf isort}\ t_1) = \cons{h_1'}{t_1'}$ and
$\pi_1({\sf isort}\ t_2)=\cons{h_2'}{t_2'}$. We first argue that $h_1
\leq h_1'$. Note that from the second and fifth conjuncts in
$\phi\defsubst{s_1}{s_2}$, it follows that ${\sf lmin}(t_1) = {\sf
  lmin}(\pi_1({\sf isort}\ t_1))$. Since $\pi_1({\sf isort}\ t_1) =
\cons{h_1'}{t_1'}$, we further get ${\sf lmin}(t_1) = {\sf
  lmin}(\pi_1({\sf isort}\ t_1)) = {\sf lmin}(\cons{h_1'}{t_1'}) =
     {\sf min}(h_1', {\sf lmin}(t_1')) \leq h_1'$. Finally, from the
     axiomatic definition of ${\sf sorted}(x_1)$ and $x_1 =
     \cons{h_1}{t_1}$, we derive $h_1 \leq {\sf
       lmin}(t_1)$. Combining, we get $h_1 \leq {\sf lmin}(t_1) \leq
     h_1'$.

Next, $\pi_2({\sf insert}\ h_1\ (\pi_1({\sf isort}\ t_1))) =
\pi_2({\sf insert}\ h_1\ (\cons{h_1'}{t_1'}))$. Expanding the
definition of ${\sf insert}$ and using $h_1 \leq h_1'$, we immediately
get $\pi_2({\sf insert}\ h_1\ (\pi_1({\sf isort}\ t_1))) = \pi_2({\sf
  insert}\ h_1\ (\cons{h_1'}{t_1'})) =
\pi_2(\cons{h_1}{\cons{h_1'}{t_1'}}, 1) = 1$. On the other hand, it is
fairly easy to prove (by case analyzing the result of $h_2 \leq h_2'$)
that $\pi_2({\sf insert}\ h_2\ (\pi_1({\sf isort}\ t_2))) = \pi_2({\sf
  insert}\ h_2\ (\cons{h_2'}{t_2'})) \geq 1$. Hence, $\pi_2({\sf
  insert}\ h_1\ (\pi_1({\sf isort}\ t_1))) = 1 \leq \pi_2({\sf
  insert}\ h_2\ (\pi_1({\sf isort}\ t_2)))$. 

\noindent
This proves (\ref{goal:isort:let2:1:1.1}) and, hence,
(\ref{goal:isort:let2:1:1}) and (\ref{goal:isort:let2:1}).\\

\noindent
\underline{Proof of (\ref{goal:isort:let2:2})}: By
Theorem~\ref{thm:equivhol}, it suffices to show the following in HOL,
under the assumptions of (\ref{goal:isort:let2:2}): \\

\begin{equation}\label{goal:isort:let2:2:1}
\pi_2(\pi_1 \ s_1', (\pi_2\ s_1) + (\pi_2\ s_1')) \leq \pi_2(\pi_1
\ s_2', (\pi_2\ s_2) + (\pi_2\ s_2'))
\end{equation}

\begin{equation}\label{goal:isort:let2:2:2}
(\pi_1 \ s_1', (\pi_2\ s_1) + (\pi_2\ s_1')) = {\sf isort}\ x_1
\end{equation}

\begin{equation}\label{goal:isort:let2:2:3}
(\pi_1 \ s_2', (\pi_2\ s_2) + (\pi_2\ s_2')) = {\sf isort}\ x_2
\end{equation}

By computation, (\ref{goal:isort:let2:2:1}) is equivalent to
$(\pi_2\ s_1) + (\pi_2\ s_1') \leq (\pi_2\ s_2) +
(\pi_2\ s_2')$. Using the definition of $\phi$, it is easy to see that
$\pi_2\ s_1 \leq \pi_2\ s_2$ is a conjunct in the assumption
$\phi\defsubst{s_1}{s_2}$. Similarly, using the definition of $\phi'$,
$\pi_2\ s_1' \leq \pi_2\ s_2'$ is a conjunct in the assumption
$\phi'\defsubst{s_1'}{s_2'}$. (\ref{goal:isort:let2:2:1}) follows
immediately from these.

To prove (\ref{goal:isort:let2:2:2}), note that since $x_1 =
\cons{h_1}{t_1}$, expanding the definition of ${\sf isort}$, we get
\[{\sf isort}\ x_1 =
(\pi_1({\sf insert}\ h_1\ (\pi_1 ({\sf isort}\ t_1))),
\pi_2({\sf isort}\ t_1) + \pi_2({\sf insert}\ h_1\ (\pi_1({\sf isort}\ t_1))))
\]
Matching with the left side of (\ref{goal:isort:let2:2:2}), it
suffices to show that $s_1' = {\sf insert}\ h_1\ (\pi_1 ({\sf
  isort}\ t_1))$ and $s_1 = {\sf isort}\ t_1$. These are immediate:
$s_1 = {\sf isort}\ t_1$ is a conjunct in the assumption
$\phi\defsubst{s_1}{s_2}$, while $s_1' = {\sf insert}\ h_1\ (\pi_1
({\sf isort}\ t_1))$ follows trivially from this and the conjunct
$s_1' = {\sf insert}\ h_1\ (\pi_1\ s_1)$ in
$\phi'\defsubst{s_1'}{s_2'}$. This proves (\ref{goal:isort:let2:2:2}).

The proof of (\ref{goal:isort:let2:2:3}) is similar to that of
(\ref{goal:isort:let2:2:2}). 

\noindent
This proves (\ref{goal:isort:let2:2}) and, hence,
(\ref{goal:isort:cons:2}).
\end{proof}



\section{Full RHOL rules}

The full set of RHOL rules is in the following figures:

\begin{figure*}[h]
\begin{mdframed}
\small
\centering
$
\infer[\sf ABS]
      {\jrhol{\Gamma}{\Psi}{\lambda x_1. t_1}{\tau_1 \to \sigma_1}{\lambda x_2. t_2}{\tau_2\to \sigma_2}{\forall x_1,x_2. \phi' \Rightarrow \phi\subst{\res\ltag}{\res\ltag\ x_1}\subst{\res\rtag}{\res\rtag\ x_2}}}
      {\jrhol{\Gamma,x_1:\tau_1,x_2:\tau_2}{\Psi,\phi'}{t_1}{\sigma_1}{t_2}{\sigma_2}{\phi}}
$
\\ \vspace{5mm}
$
\infer[\sf APP]{\jrhol{\Gamma}{\Psi}{t_1 u_1}{\sigma_1}{t_2 u_2}{\sigma_2}{\phi\subst{x_1}{u_1}\subst{x_2}{u_2}}}
      {\begin{array}{c}
\jrhol{\Gamma}{\Psi}{t_1}{\tau_1\to \sigma_1}{t_2}{\tau_2\to \sigma_2}{
          \forall x_1,x_2. \phi'\subst{\res\ltag}{x_1}\subst{\res\rtag}{x_2}\Rightarrow \phi\subst{\res\ltag}{\res\ltag\ x_1}\subst{\res\rtag}{\res\rtag\ x_2}}\\
\jrhol{\Gamma}{\Psi}{u_1}{\tau_1}{u_2}{\tau_2}{
          \phi'}
\end{array}}
$
\\ \vspace{5mm}
 $
 \hfill
 \infer[\sf ZERO]
        {\jrhol{\Gamma}{\Psi}{0}{\nat}{0}{\nat}{\phi}}
        {
         \jholn{\Gamma}{\Psi}{\phi\defsubst{0}{0}}
         }
 $
 \hfill
 $
 \infer[\sf SUCC]
        {\jrhol{\Gamma}{\Psi}{S t_1}{\nat}{S t_2}{\nat}{\phi}}
        {\begin{array}{c}
         \jrhol{\Gamma}{\Psi}{t_1}{\nat}{t_2}{\nat}{\phi'} \\
         \jholn{\Gamma}{\Psi}{\forall x_1 x_2 \phi'\defsubst{x_1}{x_2}
              \Rightarrow \phi\defsubst{S x_1}{S x_2}}
         \end{array}}
 $
 \\ \vspace{5mm}
$
\infer[\sf VAR]{\jrhol{\Gamma}{\Psi}{x_1}{\sigma_1}{x_2}{\sigma_2}{\phi}}
      {\jhol{\Gamma}{\Psi}{\phi\subst{\res\ltag}{x_1}\subst{\res\rtag}{x_2}}
       &
       \jlc{\Gamma}{x_1}{\sigma_1}
       &
       \jlc{\Gamma}{x_1}{\sigma_1}}
$
\hspace{2cm}
      $
\infer[\sf TRUE]
       {\jrhol{\Gamma}{\Psi}{\tbool}{\bool}{\tbool}{\bool}{\phi}}
       {
        \jholn{\Gamma}{\Psi}{\phi\defsubst{\tbool}{\tbool}}
        }
$
\\ \vspace{5mm}
$
\infer[\sf FALSE]
       {\jrhol{\Gamma}{\Psi}{\fbool}{\bool}{\fbool}{\bool}{\phi}}
       {
        \jholn{\Gamma}{\Psi}{\phi\defsubst{\fbool}{\fbool}}
        }
$
\hspace{4cm}
$
\infer[\sf NIL]
       {\jrhol{\Gamma}{\Psi}{[]}{\listt{\sigma_1}}{[]}{\listt{\sigma_2}}{\phi}}
       {
        \jholn{\Gamma}{\Psi}{\phi\defsubst{[]}{[]}}
        }
$
\\ \vspace{5mm}
$
\infer[\sf CONS]
       {\jrhol{\Gamma}{\Psi}{\cons{h_1}{t_1}}{\listt{\sigma_1}}{\cons{h_2}{t_2}}{\listt{\sigma_2}}{\phi}}
       {\begin{array}{c}
        \jrhol{\Gamma}{\Psi}{h_1}{\sigma_1}{h_2}{\sigma_2}{\phi'} \hspace{2cm}
        \jrhol{\Gamma}{\Psi}{t_1}{\listt{\sigma_1}}{t_2}{\listt{\sigma_2}}{\phi''} \\
        \jholn{\Gamma}{\Psi}{\forall x_1 x_2 y_1 y_2. \phi'\defsubst{x_1}{x_2} \Rightarrow \phi''\defsubst{y_1}{y_2}
             \Rightarrow \phi\defsubst{\cons{x_1}{y_1}}{\cons{x_2}{y_2}}}
       \end{array}}
$
\\ \vspace{5mm}
$
\infer[\sf PAIR]
       {\jrhol{\Gamma}{\Psi}{\pair{t_1}{u_1}}{\sigma_1\times\tau_1}{\pair{t_2}{u_2}}{\sigma_2\times\tau_2}{\phi}}
       {\begin{array}{c}
        \jrhol{\Gamma}{\Psi}{t_1}{\sigma_1}{t_2}{\sigma_2}{\phi'} \hspace{2cm}
        \jrhol{\Gamma}{\Psi}{u_1}{\tau_1}{u_2}{\tau_2}{\phi''} \\
        \jholn{\Gamma}{\Psi}{\forall x_1 x_2 y_1 y_2. \phi'\defsubst{x_1}{x_2} \Rightarrow \phi''\defsubst{y_1}{y_2}
             \Rightarrow \phi\defsubst{\pair{x_1}{y_1}}{\pair{x_2}{y_2}}}
        \end{array}}
$
\\ \vspace{5mm}
$
\infer[\sf PROJ_i]
       {\jrhol{\Gamma}{\Psi}{\pi_i(t_1)}{\sigma_1}{\pi_i(t_2)}{\sigma_2}{\phi}}
       {\jrhol{\Gamma}{\Psi}{t_1}{\sigma_1\times\tau_1}{t_2}{\sigma_2\times\tau_2}
       {\phi\defsubst{\pi_i(\res\ltag)}{\pi_i(\res\rtag)}}} 
$
\end{mdframed}
\caption{Core two-sided rules}
\end{figure*}

\begin{figure*}
\begin{mdframed}
\[
\infer[\sf SUB]{\jrhol{\Gamma}{\Psi}{t_1}{\sigma_1}{t_2}{\sigma_2}{\phi}}
      {\jrhol{\Gamma}{\Psi}{t_1}{\sigma_1}{t_2}{\sigma_2}{\phi'} &
       \Gamma \mid \Psi \vdash_{\sf HOL} \phi'\defsubst{t_1}{t_2} \Rightarrow \phi\defsubst{t_1}{t_2}}
\]
\\
\[
\infer[\sf \wedge_I]
      {\jrhol{\Gamma}{\Psi'}{t_1}{\sigma_2}{t_2}{\sigma_2}{\phi \wedge \phi'}}
      {\jrhol{\Gamma}{\Psi'}{t_1}{\sigma_2}{t_2}{\sigma_2}{\phi}
       & \jrhol{\Gamma}{\Psi'}{t_1}{\sigma_2}{t_2}{\sigma_2}{\phi'}}
\]
\\
\[
\infer[\sf \Rightarrow_I]
      {\jrhol{\Gamma}{\Psi'}{t_1}{\sigma_2}{t_2}{\sigma_2}{\phi' \Rightarrow \phi}}
      {\jrhol{\Gamma}{\Psi',\phi'\defsubst{t_1}{t_2}}{t_1}{\sigma_2}{t_2}{\sigma_2}{\phi}}
\]
\\
\[
\infer[\sf UHOL-L]
      {\jrhol{\Gamma}{\Psi}{t_1}{\sigma_1}{t_2}{\sigma_1}{\phi}}
      {\juhol{\Gamma}{\Psi}{t_1}{\sigma_1}{\phi\defsubst{\res}{t_2}}}
\]
\end{mdframed}                
\caption{Structural rules}
\end{figure*}

\begin{figure*}
\begin{mdframed}
\small
\centering
$
\infer[\sf ABS{-}L]
      {\jrhol{\Gamma}{\Psi}{\lambda x_1. t_1}{\tau_1 \to \sigma_1}{t_2}{\sigma_2}{\forall x_1. \phi' \Rightarrow \phi \subst{\res\ltag}{\res\ltag\ x_1}}}
      {\jrhol{\Gamma,x_1:\tau_1}{\Psi, \phi'}{t_1}{\sigma_1}{t_2}{\sigma_2}{\phi}}
$
\\ \vspace{5mm}
$
\infer[\sf APP{-}L]
      {\jrhol{\Gamma}{\Psi}{t_1 u_1}{\sigma_1}{u_2}{\sigma_2}{\phi}}
      {\begin{array}{c}
          \jrhol{\Gamma}{\Psi}{t_1}{\tau_1\to \sigma_1}{u_2}{\sigma_2}{\forall x_1. \phi'\subst{\res\ltag}{x_1} \Rightarrow \phi\subst{\res\ltag}{\res\ltag\ x_1}}\\
          \juhol{\Gamma}{\Psi}{u_1}{\sigma_1}{\phi'\subst{x_1}{u_1}}
\end{array}}
$
\\ \vspace{5mm}
 $
 \hfill
 \infer[\sf ZERO{-}L]
        {\jrhol{\Gamma}{\Psi}{0}{\nat}{t_2}{\sigma_2}{\phi}}
        {\begin{array}{c}
           \jlc{\Gamma}{t_2}{\sigma_2} \\
           \jholn{\Gamma}{\Psi}{\phi\defsubst{0}{t_2}}
         \end{array}
        }
 $
 \hfill
 $
 \infer[\sf SUCC{-}L]
        {\jrhol{\Gamma}{\Psi}{S t_1}{\nat}{t_2}{\sigma_2}{\phi}}
        {\begin{array}{c}
        \jrhol{\Gamma}{\Psi}{t_1}{\nat}{t_2}{\sigma_2}{\phi'} \\
         \jholn{\Gamma}{\Psi}{\forall x_1 x_2 \phi'\defsubst{x_1}{x_2}
              \Rightarrow \phi\defsubst{S x_1}{x_2}}
         \end{array}}
 $
 \\ \vspace{5mm}
$
\infer[\sf TRUE-L]
       {\jrhol{\Gamma}{\Psi}{\tbool}{\bool}{t_2}{\sigma_2}{\phi}}
       {
        \jholn{\Gamma}{\Psi}{\phi\defsubst{\tbool}{t_2}} &
        \jlc{\Gamma}{t_2}{\sigma_2}
        }
$
\hspace{1cm}
$
\infer[\sf FALSE-L]
       {\jrhol{\Gamma}{\Psi}{\fbool}{\bool}{t_2}{\sigma_2}{\phi}}
       {
        \jholn{\Gamma}{\Psi}{\phi\defsubst{\fbool}{t_2}} &
        \jlc{\Gamma}{t_2}{\sigma_2}
        }
$
\\ \vspace{5mm}
$
\infer[\sf VAR{-}L]
      {\jrhol{\Gamma}{\Psi}{x_1}{\sigma_1}{t_2}{\sigma_2}{\phi}}
      {\phi\subst{\res\ltag}{x_1} \in \Psi &  \res\rtag\not\in\ FV(\phi) &
       \jlc{\Gamma}{t_2}{\sigma_2}}
$
\hspace{1cm}
$
\infer[\sf NIL{-}L]
       {\jrhol{\Gamma}{\Psi}{[]}{\listt{\sigma_1}}{t_2}{\sigma_2}{\phi}}
       {
        \jhol{\Gamma}{\Psi}{\phi\defsubst{[]}{t_2}} &
        \jlc{\Gamma}{t_2}{\sigma_2}
        }
$
\\ \vspace{5mm}
$
\infer[\sf CONS{-}L]
       {\jrhol{\Gamma}{\Psi}{\cons{h_1}{t_1}}{\listt{\sigma_1}}{t_2}{\sigma_2}{\phi}}
       {\begin{array}{c}
        \jrhol{\Gamma}{\Psi}{h_1}{\sigma_1}{t_2}{\sigma_2}{\phi'}\hspace{2cm}
        \jrhol{\Gamma}{\Psi}{t_1}{\listt{\sigma_1}}{t_2}{\sigma_2}{\phi''} \\
        \jholn{\Gamma}{\Psi}{\forall x_1 x_2 y_1. \phi'\defsubst{x_1}{x_2} \Rightarrow \phi''\defsubst{y_1}{x_2}
             \Rightarrow \phi\defsubst{\cons{x_1}{y_1}}{x_2}}
        \end{array}}
$
\\ \vspace{5mm}
$
\infer[\sf PAIR{-}L]
       {\jrhol{\Gamma}{\Psi}{\pair{t_1}{u_1}}{\sigma_1\times\tau_1}{t_2}{\sigma_2}{\phi}}
       {\begin{array}{c}
        \jrhol{\Gamma}{\Psi}{t_1}{\sigma_1}{t_2}{\sigma_2}{\phi'}\hspace{2cm}
        \jrhol{\Gamma}{\Psi}{u_1}{\tau_1}{t_2}{\sigma_2}{\phi''} \\
        \jholn{\Gamma}{\Psi}{\forall x_1 x_2 y_1. \phi'\defsubst{x_1}{x_2} \Rightarrow \phi''\defsubst{y_1}{x_2}
             \Rightarrow \phi\defsubst{\pair{x_1}{y_1}}{x_2}}
        \end{array}}
$
\\ \vspace{5mm}
$
\infer[\sf PROJ_1{-}L]
       {\jrhol{\Gamma}{\Psi}{\pi_1(t_1)}{\sigma_1}{t_2}{\sigma_2}{\phi}}
       {\jrhol{\Gamma}{\Psi}{t_1}{\sigma_1\times\tau_1}{t_2}{\sigma_2}
          {\phi\subst{\res\ltag}{\pi_1(\res\ltag)}}}
$
\end{mdframed}
\caption{Core one-sided rules}
\end{figure*}

\begin{figure*}
  \begin{mdframed}
\[
\infer[\sf BOOLCASE]
  {\jrhol{\Gamma}{\Psi}{\casebool{t_1}{u_1}{v_1}}{\sigma_1}{\casebool{t_2}{u_2}{v_2}}{\sigma_2}{\phi}}
  {\begin{array}{c}
      \jrhol{\Gamma}{\Psi}{t_1}{\bool}{t_2}{\bool}{(\res\ltag = \tbool \wedge \res\rtag = \tbool) \vee (\res\ltag = \fbool \wedge \res\rtag = \fbool)} \\
      \jrhol{\Gamma}{\Psi, t_1=\tbool, t_2=\tbool}{u_1}{\sigma_1}{u_2}{\sigma_2}{\phi} \\
      \jrhol{\Gamma}{\Psi,t_1 = \fbool, t_2 = \fbool}{v_1}{\sigma_1}{v_2}{\sigma_2}{\phi}
   \end{array}}
\]

\[
\infer[\sf NATCASE]
  {\jrhol{\Gamma}{\Psi}{\casenat{t_1}{u_1}{v_1}}{\sigma_1}{\casenat{t_2}{u_2}{v_2}}{\sigma_2}{\phi}}
  {\begin{array}{c}
      \jrhol{\Gamma}{\Psi}{t_1}{\nat}{t_2}{\nat}{\res\ltag = 0 \Leftrightarrow \res\rtag = 0} \\
      \jrhol{\Gamma}{\Psi, t_1=0, t_2=0}{u_1}{\sigma_1}{u_2}{\sigma_2}{\phi} \\
      \jrhol{\Gamma}{\Psi}{v_1}{\nat\to\sigma_1}{v_2}{\nat\to\sigma_2}{\forall x_1 x_2. t_1 = S x_1 \Rightarrow t_2 = S x_2 \Rightarrow \phi\defsubst{\res\ltag\ x_1}{\res\rtag\ x_2}}
   \end{array}}
\]

\[
\infer[\sf LISTCASE]
  {\jrhol{\Gamma}{\Psi}{\caselist{t_1}{u_1}{v_1}}{\sigma_1}{\caselist{t_2}{u_2}{v_2}}{\sigma_2}{\phi}}
  {\begin{array}{c}
      \jrhol{\Gamma}{\Psi}{t_1}{\listt{\tau_1}}{t_2}{\listt{\tau_2}}{\res\ltag = \nil \Leftrightarrow \res\rtag = \nil} \\
      \jrhol{\Gamma}{\Psi, t_1= \nil, t_2= \nil}{u_1}{\sigma_1}{u_2}{\sigma_2}{\phi} \\
      \jrhol{\Gamma}{\Psi}{v_1}{\tau_1\to\listt{\tau_1}\to\sigma_1}{v_2}{\tau_2\to\listt{\tau_2}\to\sigma_2}\\
            {\forall h_1 h_2 l_1 l_2. t_1 = \cons{h_1}{l_1} \Rightarrow t_2 = \cons{h_2}{l_2} \Rightarrow \phi\defsubst{\res\ltag\ h_1\ l_1}{\res\rtag\ h_2\ l_2}}
   \end{array}}
\]

  \end{mdframed}
\caption{Synchronous case rules}
\end{figure*}

\begin{figure*}
  \begin{mdframed}

\[
\infer[\sf BOOLCASE-L]
  {\jrhol{\Gamma}{\Psi}{\casebool{t_1}{u_1}{v_1}}{\sigma_1}{t_2}{\sigma_2}{\phi}}
  {\begin{array}{c}
      \jlc{\Gamma}{t_1}{\bool} \\
      \jrhol{\Gamma}{\Psi, t_1=\tbool}{u_1}{\sigma_1}{t_2}{\sigma_2}{\phi} \\
      \jrhol{\Gamma}{\Psi, t_1=\fbool}{v_1}{\sigma_1}{t_2}{\sigma_2}{\phi}
   \end{array}}
\]

\[
\infer[\sf NATCASE-L]
  {\jrhol{\Gamma}{\Psi}{\casenat{t_1}{u_1}{v_1}}{\sigma_1}{t_2}{\sigma_2}{\phi}}
  {\begin{array}{c}
      \jlc{\Gamma}{t_1}{\nat} \\
      \jrhol{\Gamma}{\Psi, t_1=0}{u_1}{\sigma_1}{t_2}{\sigma_2}{\phi} \\
      \jrhol{\Gamma}{\Psi}{v_1}{\nat\to\sigma_1}{t_2}{\sigma_2}{\forall x_1. t_1 = S x_1 \Rightarrow \phi\subst{\res\ltag}{\res\ltag\ x_1}}
   \end{array}}
\]

\[
\infer[\sf LISTCASE-L]
  {\jrhol{\Gamma}{\Psi}{\caselist{t_1}{u_1}{v_1}}{\sigma_1}{t_2}{\sigma_2}{\phi}}
  {\begin{array}{c}
      \jlc{\Gamma}{t_1}{\listt{\tau}} \\
      \jrhol{\Gamma}{\Psi, t_1=\nil}{u_1}{\sigma_1}{t_2}{\sigma_2}{\phi} \\
      \jrhol{\Gamma}{\Psi}{v_1}{\tau\to\listt{\tau}\to\sigma_1}{t_2}{\sigma_2}{\forall h_1 l_1. t_1 = \cons{h_1}{l_1} \Rightarrow \phi\subst{\res\ltag}{\res\ltag\ h_1\ l_1}}
   \end{array}}
\]

  \end{mdframed}
\caption{One-sided case rules}
\end{figure*}

\begin{figure*}
  \begin{mdframed}

\[
\infer[\sf BBCASE-A]
  {\jrhol{\Gamma}{\Psi}{\casebool{t_1}{u_1}{v_1}}{\sigma_1}{\casebool{t_2}{u_2}{v_2}}{\sigma_2}{\phi}}
  {\begin{array}{c}
      \jrhol{\Gamma}{\Psi}{t_1}{\bool}{t_2}{\bool}{\top} \\
      \jrhol{\Gamma}{\Psi, t_1=\tbool, t_2=\tbool}{u_1}{\sigma_1}{u_2}{\sigma_2}{\phi} \\
      \jrhol{\Gamma}{\Psi,t_1\neq \tbool, t_2 = \tbool}{v_1}{\sigma_1}{u_2}{\sigma_2}{\phi} \\
      \jrhol{\Gamma}{\Psi,t_1 = \tbool, t_2\neq \tbool}{u_1}{\sigma_1}{v_2}{\sigma_2}{\phi} \\
      \jrhol{\Gamma}{\Psi,t_1\neq \tbool, t_2\neq \tbool}{v_1}{\sigma_1}{v_2}{\sigma_2}{\phi}
   \end{array}}
\]

\[
\infer[\sf BNCASE-A]
  {\jrhol{\Gamma}{\Psi}{\casebool{t_1}{u_1}{v_1}}{\sigma_1}{\casenat{t_2}{u_2}{v_2}}{\sigma_2}{\phi}}
  {\begin{array}{c}
      \jrhol{\Gamma}{\Psi}{t_1}{\bool}{t_2}{\nat}{\top} \\
      \jrhol{\Gamma}{\Psi, t_1=\tbool, t_2=0}{u_1}{\sigma_1}{u_2}{\sigma_2}{\phi} \\
      \jrhol{\Gamma}{\Psi,t_1\neq \tbool, t_2 = 0}{v_1}{\sigma_1}{u_2}{\sigma_2}{\phi} \\
      \jrhol{\Gamma}{\Psi,t_1 = \tbool}{u_1}{\sigma_1}{v_2}{\nat\to\sigma_2}{\forall x_2. t_2 = S x_2 \Rightarrow \phi\subst{\res\rtag}{\res\rtag\ x_2}} \\
      \jrhol{\Gamma}{\Psi,t_1\neq \tbool}{v_1}{\sigma_1}{v_2}{\nat\to\sigma_2}{\forall x_2. t_2 = S x_2 \Rightarrow \phi\subst{\rtag}{\res\rtag\ x_2}}
   \end{array}}
\]

\[
\infer[\sf BLCASE-A]
  {\jrhol{\Gamma}{\Psi}{\casebool{t_1}{u_1}{v_1}}{\sigma_1}{\caselist{t_2}{u_2}{v_2}}{\sigma_2}{\phi}}
  {\begin{array}{c}
      \jrhol{\Gamma}{\Psi}{t_1}{\bool}{t_2}{\listt{\tau_2}}{\top} \\
      \jrhol{\Gamma}{\Psi, t_1=\tbool, t_2=\nil}{u_1}{\sigma_1}{u_2}{\sigma_2}{\phi} \\
      \jrhol{\Gamma}{\Psi,t_1\neq \tbool, t_2 = \nil}{v_1}{\sigma_1}{u_2}{\tau_2\to\listt{\tau_2}\to\sigma_2}{\phi} \\
      \jrhol{\Gamma}{\Psi,t_1 = \tbool}{u_1}{\sigma_1}{v_2}{\tau_2\to\listt{\tau_2}\to\sigma_2}{\forall h_2 l_2. t_2 = \cons{h_2}{l_2} \Rightarrow \phi\subst{\res\rtag}{\res\rtag\ h_2\ l_2}} \\
      \jrhol{\Gamma}{\Psi,t_1\neq \tbool}{v_1}{\sigma_1}{v_2}{\tau_2\to\listt{\tau_2}\to\sigma_2}{\forall h_2 l_2. t_2 = \cons{h_2}{l_2} \Rightarrow \phi\subst{\res\rtag}{\res\rtag\ h_2\ l_2}}
   \end{array}}
\]

\[
\infer[\sf NNCASE-A]
  {\jrhol{\Gamma}{\Psi}{\casenat{t_1}{u_1}{v_1}}{\sigma_1}{\casenat{t_2}{u_2}{v_2}}{\sigma_2}{\phi}}
  {\begin{array}{c}
      \jrhol{\Gamma}{\Psi}{t_1}{\nat}{t_2}{\nat}{\top} \\
      \jrhol{\Gamma}{\Psi, t_1=0, t_2=0}{u_1}{\sigma_1}{u_2}{\sigma_2}{\phi} \\
      \jrhol{\Gamma}{\Psi, t_2 = 0}{v_1}{\nat\to\sigma_1}{u_2}{\sigma_2}{\forall x_1. t_1 = S x_1 \Rightarrow \phi\subst{\res\ltag}{\res\ltag\ x_1}} \\
      \jrhol{\Gamma}{\Psi,t_1 = 0}{u_1}{\sigma_1}{v_2}{\nat\to\sigma_2}{\forall x_2. t_2 = S x_2 \Rightarrow \phi\subst{\res\rtag}{\res\rtag\ x_2}} \\
      \jrhol{\Gamma}{\Psi}{v_1}{\nat\to\sigma_1}{v_2}{\nat\to\sigma_2}{\forall x_1 x_2. t_1 = S x_1 \Rightarrow t_2 = S x_2 \Rightarrow \phi\defsubst{\res\ltag\ x_1}{\res\rtag\ x_2}}
   \end{array}}
\]

\[
\infer[\sf LLCASE-A]
  {\jrhol{\Gamma}{\Psi}{\caselist{t_1}{u_1}{v_1}}{\sigma_1}{\caselist{t_2}{u_2}{v_2}}{\sigma_2}{\phi}}
  {\begin{array}{c}
      \jrhol{\Gamma}{\Psi}{t_1}{\listt{\tau_1}}{t_2}{\listt{\tau_2}}{\top} \\
      \jrhol{\Gamma}{\Psi, t_1=\nil, t_2=\nil}{u_1}{\sigma_1}{u_2}{\sigma_2}{\phi} \\
      \jrhol{\Gamma}{\Psi, t_2 = \nil}{v_1}{\tau_1\to\listt{\tau_1}\to\sigma_1}{u_2}{\sigma_2}{\forall h_1 l_1. t_1 = \cons{h_1}{l_1} \Rightarrow \phi\subst{\res\ltag}{\res\ltag\ h_1\ l_1}} \\
      \jrhol{\Gamma}{\Psi,t_1 = \nil}{u_1}{\tau_1\to\listt{\tau_1}\to\sigma_1}{v_2}{\tau_2\to\listt{\tau_2}\to\sigma_2}
            {\\ \forall h_2. t_2 = \cons{h_2}{l_2} \Rightarrow \phi\subst{\res\rtag}{\res\rtag\ h_2\ l_2}} \\
      \jrhol{\Gamma}{\Psi}{v_1}{\tau_1\to\listt{\tau_1}\to\sigma_1}{v_2}{\tau_2\to\listt{\tau_2}\to\sigma_2}
            \\{\forall h_1 h_2 l_1 l_2. t_1 = \cons{h_1}{l_1} \Rightarrow t_2 = \cons{h_2}{l_2} \Rightarrow \phi\defsubst{\res\ltag\ h_1\ l_1}{\res\rtag\ h_2\ l_2}}
   \end{array}}
\]
  \end{mdframed}
\caption{Asynchronous case rules (selected)}
\end{figure*}

\begin{figure*}
  \begin{mdframed}

\[
\infer[\sf NATCASE*]
  {\jrhol{\Gamma}{\Psi}{\casenat{t_1}{u_1}{v_1}}{\sigma_1}{\casenat{t_2}{u_2}{v_2}}{\sigma_2}{\phi}}
  {\begin{array}{c}
      \jrhol{\Gamma}{\Psi}{t_1}{\nat}{t_2}{\nat}{\phi' \wedge \res\ltag = 0 \Leftrightarrow \res\rtag = 0} \\
      \jrhol{\Gamma}{\Psi, \phi'\defsubst{0}{0}}{u_1}{\sigma_1}{u_2}{\sigma_2}{\phi} \\
      \jrhol{\Gamma}{\Psi}{v_1}{\nat\to\sigma_1}{v_2}{\nat\to\sigma_2}{\forall x_1 x_2. \phi'\defsubst{S x_1}{S x_2} \Rightarrow \phi\defsubst{\res\ltag\ x_1}{\res\rtag\ x_2}}
   \end{array}}
\]

\[
\infer[\sf LISTCASE*]
  {\jrhol{\Gamma}{\Psi}{\caselist{t_1}{u_1}{v_1}}{\sigma_1}{\caselist{t_2}{u_2}{v_2}}{\sigma_2}{\phi}}
  {\begin{array}{c}
      \jrhol{\Gamma}{\Psi}{t_1}{\listt{\tau_1}}{t_2}{\listt{\tau_2}}{\phi' \wedge \res\ltag = \nil \Leftrightarrow \res\rtag = \nil} \\
      \jrhol{\Gamma}{\Psi, \phi'\defsubst{\nil}{\nil}}{u_1}{\sigma_1}{u_2}{\sigma_2}{\phi} \\
      \jrhol{\Gamma}{\Psi}{v_1}{\tau_1\to\listt{\tau_1}\to\sigma_1}{v_2}{\tau_2\to\listt{\tau_2}\to\sigma_2}\\
            {\forall h_1 h_2 l_1 l_2. \phi'\defsubst{\cons{h_1}{l_1}}{\cons{h_2}{l_2}} \Rightarrow \phi\defsubst{\res\ltag\ h_1\ l_1}{\res\rtag\ h_2\ l_2}}
   \end{array}}
\]
  \end{mdframed}
  \caption{Alternative case rules}
\end{figure*}

\begin{figure*}
  \begin{mdframed}

\[
\infer[\sf LETREC]
  {\jrhol{\Gamma}{\Psi}{{\rm letrec}\ f_1\ x_1\ = e_1}{I_1\to{\sigma_2}}{{\rm letrec}\ f_2\ x_2\ = e_2}{I_2\to{\sigma_2}}{\forall x_1 x_2. \phi' \Rightarrow \phi\defsubst{\res\ltag\ x_1}{\res\rtag\ x_2}}}
  {\begin{array}{c}
      \wdef{f_1}{x_1}{e_1} \:\: \wdef{f_2}{x_2}{e_2} \\
      \jrhol{\Gamma, x_1:I_1, x_2:I_2, f_1:I_1\to{\sigma}, f_2:I_2\to{\sigma_2}}
         {\Psi, \phi',\\ \forall m_1 m_2. (|m_1|, |m_2|) < (|x_1|, |x_2|) \Rightarrow \phi'\subst{x_1}{m_1}\subst{x_2}{m_2}\Rightarrow
                     \phi\subst{x_1}{m_1}\subst{x_2}{m_2}\defsubst{f_1\ m_1}{f_2\ m_2}}{\\ e_1}{\sigma_1}{e_2}{\sigma_2}{\phi}
   \end{array}}
\]

\[
\infer[\sf LETREC-L]
  {\jrhol{\Gamma}{\Psi}{{\rm letrec}\ f_1\ x_1\ = e_1}{I_1\to{\sigma_2}}{t_2}{\sigma_2}{\forall x_1 . \phi' \Rightarrow \phi\subst{\res\ltag}{\res\ltag\ x_1}}}
  {\begin{array}{c}
      \wdef{f_1}{x_1}{e_1} \\
      \jrhol{\Gamma, x_1:I_1, f_1:I_1\to{\sigma}}
         {\Psi, \phi', \\ \forall m_1. |m_1| < |x_1| \Rightarrow \phi'\subst{x_1}{m_1} \Rightarrow
               \phi\subst{x_1}{m_1}\subst{x_2}{m_2}\defsubst{f_1\ m_1}{t_2}}{e_1}{\sigma_1}{t_2}{\sigma_2}{\phi}
   \end{array}}
\]

\[\text{ where }I_1,I_2 \in \{\nat,\listt{\tau}\}\]

  \end{mdframed}
\caption{Recursion rules}
\end{figure*}

\end{document}